\documentclass[12pt]{article}
 
\usepackage[margin=1in]{geometry} 
\usepackage{amsmath,amsthm,amssymb}
\usepackage{graphicx}
\usepackage{bm}
\usepackage{algorithm}
\usepackage{algpseudocode}
\usepackage{comment}
\usepackage{xr-hyper}
\usepackage[colorlinks=true, linkcolor=blue]{hyperref} 
\usepackage{enumitem}
\usepackage{bbm}
\usepackage{tabularx}
\usepackage{booktabs}
\usepackage{multirow}
\usepackage{threeparttable}
\usepackage[
  backend=biber,
  style=authoryear,
  maxcitenames=1, mincitenames=1,
  doi=true, url=true, 
  backref=true  
]{biblatex}
\usepackage{enumitem}
\theoremstyle{plain}
\newtheorem{theorem}{Theorem}[section]
\newtheorem{lemma}[theorem]{Lemma}

\newtheorem{definition}[theorem]{Definition}
\newtheorem{proposition}{Proposition}
\newtheorem{assumption}{Assumption}
\newtheorem{remark}{Remark}[section]

\begin{document}
 
 
\title{A case study of causal mediation using Bayesian nonparametrics and semiparametric corrections 
} 
\author{Yuhua Zhang, Michael J. Daniels} 
\maketitle
\begin{abstract}
    We propose a Bayesian nonparametric approach using a truncated Enriched Dirichlet Process mixture (EDPM) model to estimate natural direct (NDE) and indirect (NIE) effects in causal mediation analyses in the presence of  post-treatment confounders. We introduce an efficient cluster reallocation Metropolis-Hasting algorithm to improve mixing in the blocked Gibbs sampler. We implement a one-step posterior correction based on the efficient influence function for our setting. This post-processing step solves a critical problem in Bayesian nonparametrics: how to obtain reliable estimates and posteriors for a specific causal estimand of interest (the NDE and NIE) with excellent frequentist properties, such as correct coverage, from a  model designed for complex joint distributions.  We conduct simulation studies to assess our method's performance and apply it to evaluate causal mediation effects in a weight management clinical trial.  
\end{abstract}
\section{Introduction}

In public health, behavioral science, and economics, it is often insufficient to know if an intervention works; it is critical to understand how it works \parencite{imai2011unpacking}. Causal mediation analysis provides a formal framework for this, seeking to disentangle the mechanisms by which a treatment (Z) affects an outcome (Y). This is achieved by decomposing the total effect into the Natural Direct Effect (NDE)—the effect of the treatment while holding the mediator (M) at its counterfactual non-treatment level—and the Natural Indirect Effect (NIE)—the effect operating through the mediator \parencite{robins1992identifiability, pearl2022direct}.\par
A significant challenge to this framework arises with the presence of post-treatment confounders ($V$). These are variables affected by the treatment ($Z \to V$) that subsequently confound the relationship between the mediator and the outcome ($V \to M$ and $V \to Y$). Standard sequential ignorability assumptions, which are foundational to many mediation methods, are violated in this setting \parencite{hong2023posttreatment}. Failing to account for such confounders, or incorrectly adjusting for them as if they were baseline covariates, can lead to severely biased estimates of the NDE and NIE.\par
The Rural LITE trial \parencite{perri2014comparative}, as analyzed in \textcite{bae2024bayesian}, provides a canonical example of this problem. The trial aimed to assess the effect of high-dose (HD) versus low-dose (LD) behavioral weight-loss counseling (Z) on 24-month weight change (Y). A key hypothesized mechanism was the patient's attendance rate (M). However, early weight change (at 6 months) serves as a critical post-treatment confounder (V). Early weight loss (V) is clearly affected by the treatment dose (Z); it is also a powerful motivator that likely influences subsequent attendance (M), and it is a direct component of the final outcome (Y). Standard mediation analysis would fail to disentangle the true effect of attendance from the confounding effect of this early success.\par
To address this challenge, frequentist strategies have included specific weighting methods, such as the Ratio of Mediator Probability Weighting (RMPW) \parencite{hong2023posttreatment}, or focused on alternative estimands that avoid the NDE/NIE decomposition \parencite{vanderweele2014effect, tchetgen2012semiparametric}. Bayesian approaches offer a powerful alternative, particularly for flexibly modeling the complex, high-dimensional data structures inherent in these problems. However, existing Bayesian nonparametric (BNP) frameworks for mediation, such as the Dirichlet Process Mixture of Normals (DPMN) used by \parencite{kim2017framework}, were not designed to accommodate post-treatment confounders.\par
A significant advancement in this domain was recently proposed by \textcite{bae2024bayesian}. They introduced a flexible BNP framework using an Enriched Dirichlet Process Mixture (EDPM) model to capture the joint distribution $P(Y, M, V, Z, C)$. This model is well-suited for the problem, as its nested structure allows for greater flexibility in modeling the confounder distributions separately from the outcome and mediator. For causal identification, \textcite{bae2024bayesian} adopt the extended sequential ignorability assumptions from \textcite{hong2023posttreatment}, which relies on a Gaussian copula model to link the unobserved joint potential outcomes of the post-treatment confounder, $F_{V_z,V_{z'}}(v_z,v_{z'}\mid C=c)$.\par
While the framework of \textcite{bae2024bayesian} provides a powerful estimation tool for this complex problem, the inferential properties of standard BNP posteriors for specific, smooth functionals like the NDE and NIE are not automatically guaranteed. It is well-known that BNP posteriors can suffer from regularization bias, which can prevent them from satisfying the semiparametric Bernstein-von Mises (BvM) theorem \parencite{bickel2012semiparametric}. Consequently, the resulting credible sets may not serve as asymptotically valid frequentist confidence intervals, a crucial property for reliable scientific reporting. The work of \textcite{bae2024bayesian} provides the essential modeling and estimation strategy, but it does not formally establish the semiparametric efficiency and frequentist validity of the posterior inference for the causal estimands.\par
To address the regularization bias inherent in estimating nuisance parameters with flexible models, the frequentist literature has developed a suite of asymptotically efficient methodologies. Two prominent frameworks are Targeted Maximum Likelihood Estimation (TMLE) \parencite{laanvan2006targeted,van2011targeted} and Double/Debiased Machine Learning (DML) \parencite{chernozhukov2018double}. Central to these approaches is the utilization of the Efficient Influence Function (EIF) \parencite{hines2022demystifying} to construct estimators. By exploiting the smoothness of the target functional, these methods ensure that the estimator for the causal effect achieves $\sqrt n$-consistency and asymptotic normality, even when nuisance parameters are estimated at slower rates, such as when employing black-box machine learning algorithms. These frequentist advancements underscore the critical role of bias correction in semiparametric modeling and serve as a theoretical benchmark for analogous corrections within the Bayesian framework.

\par 
This paper fills this crucial theoretical and methodological gap. We build directly on the EDPM and copula-based identification framework proposed by \textcite{bae2024bayesian} and \textcite{hong2023posttreatment}. Our primary contribution is to apply the one-step posterior correction methodology of \textcite{yiu2025semiparametric} to the NDE and NIE functionals derived from the EDPM model. We formally derive the efficient influence function (EIF) for these estimands under the given identification strategy. We then use this EIF to construct a corrected posterior distribution and prove that it is asymptotically Gaussian, centered at a semiparametrically efficient estimator, and satisfies the BvM theorem. This provides a method for this complex mediation problem that is both highly flexible in its modeling (via EDPM) and theoretically robust in its inference (via one-step correction).\par

The remainder of this paper is structured as follows. Section 2 reviews the causal estimands, the identification assumptions, and the EDPM modeling framework. Section 3 details the one-step posterior correction methodology and the derivation of the efficient influence function. TBA

\section{Causal Estimands, Identification, and Modeling Framework}

This section formalizes the causal quantities of interest, details the specific identification challenges posed by the Rural LITE data, presents the assumptions required to overcome these challenges, and introduces the modeling framework.

\subsection{Causal Estimands and Full Data Notation}

Suppose that we observe a binary treatment $Z_i\in \{0,1\}, \ p_C\times 1$ baseline confounders $C_i$, a post-treatment confounder $V_i$, a mediator $M_i$ and an outcome $Y_i$ for each subject $1,\cdots, n$. This complete (or "full") data vector, $\mathcal O_{f,i} = (Y_i,M_i,V_i,Z_i,C_i)$ , is assumed to be a draw from a full-data distribution $P_f$. \par
To define the causal mediation effects, we use the potential
outcome framework \parencite{rubin1974estimating}. Let $\mathbf{Z} = (Z_1, \dots, Z_n)$ be the vector of treatment assignments for all $n$ subjects. For the $i$th subject, we denote $V_i(\mathbf Z)$ as the potential post-treatment confounder under any treatment $\mathbf Z$ and $M_i(\mathbf Z,\mathbf V)$ the potential mediator under any treatments $\mathbf Z$ and any post-treatment confounder values $\mathbf V$. We similarly denote $Y_i(\mathbf M,\mathbf V,\mathbf Z)$ as the potential outcome that could be defined for any treatment vector $\mathbf Z$, any vector of post-treatment confounder values $\mathbf V$ and any of mediator values $\mathbf M$. Throughout the paper, we adopt the stable unit treatment value assumption (SUTVA) \parencite{rubin1980randomization}, which implies two conditions: (1) No interference, meaning a subject's potential outcomes are not affected by the treatment assignment of other subjects (i.e., $V_i(\mathbf Z) = V_i(Z_i) = V_i,\ M_i(\mathbf Z, \mathbf V) = M_i(Z_i,V_i) = M_i, \ Y_i(\mathbf Z, \mathbf V, \mathbf M) = Y_i(Z_i,V_i,M_i) = Y_i$); (2) there are “no multiple versions”
of treatments such that $V_i(Z_i) = V_i(Z_i'), \ M_i(Z_i, V_i(Z_i)) = M_i(Z_i', V_i(Z_i')), \ Y_i(Z_i,V_i(Z_i), M_i(Z_i,V_i(Z_i))) = Y_i(Z'_i,V_i(Z'_i), M_i(Z'_i,V_i(Z'_i))) $ if $Z_i = Z_i'$; (3) there are “no multiple versions” of post-treatment confounders, which states $M_i(Z_i,V_i) = M_i(Z_i, V_i')$ and $Y_i(Z_i, V_i,M_i(Z_i,V_i)) = Y_i(Z_i,V_i',M_i(Z_i, V_i'))$ if $V_i = V_i'$; and (4) finally, we assume that there are “no multiple versions” of mediators, which states $Y_i(Z_i,V_i,M_i) = Y_i(Z_i,V_i,M_i')$ if $M_i = M_i'$.\par
To simplify notation, the subject index $i$ is omitted throughout the manuscript \parencite{robins1992identifiability, pearl2022direct}. We use the subscript notation as follows: $V_z = V(z), \ M_{z,V_z} = M(z, V(z)),\ Y_{z,V_z,M_{z',V_{z'}}} = Y(z,V(z), M(z',V(z')))$, and we succinctly
write $M_z$ and $Y_{z,M_{z'}}$ instead of using $M_{z,V_z}$ and $Y_{z,V_z,M_{z'V_{z'}}}$ for $z,z'\in \{0,1\}$.\par
The observed post-treatment confounder, mediator, and outcome are defined as follows: $V = ZV_1+(1-Z)V_0, \ M = ZM_1 + (1-Z)M_0, \ Y = ZY_{1,M_1} + (1-Z)Y_{0,M_0}$. We define NIE and NDE conditional on baseline confounders $C=c$ as NIE$(c) = \mathbb E[Y_{1, M_1} - Y_{1,M_0}\mid C=c]$ and NDE$(c) = \mathbb E[Y_{1,M_0} - Y_{0,M_0}\mid C=c]$. The  NIE$(c)$ quantifies the effect of
the treatment through the mediator for a fixed value of confounders $C = c$ and NDE$(c)$ quantifies the effect of
the treatment on the outcome by setting the mediator $M$ to its
natural value $M_0$ (the value of the mediator in the absence of
the treatment) given a fixed value of the baseline confounders
$C = c$. The total effect conditional on baseline confounders $C =
c$ is the sum of the $2$ effects ATE$(c) =$NIE$(c) + $NDE$(c) =\mathbb E[Y_{1,M_1} - Y_{0,M_0}\mid C=c]$. After integrating
out the baseline confounders, we obtain the marginal causal effects, NIE, NDE and ATE. Our goal is to identify causal effects from the observed data and identifying the assumptions. \par

\subsection{Observed Data and Missingness Framework}
The full data $\mathcal O_f$ defined above represents a theoretical ideal. In many practical applications, this complete vector is not fully observed. Our motivating example, the Rural LITE trial \parencite{perri2014comparative, bae2024bayesian}, highlights this exact challenge: a significant portion of subjects dropped out, leading to incomplete information. Specifically, 121 subjects were missing the final outcome ($Y$, weight at month 24) and 56 subjects were missing the post-treatment confounder ($V$, weight at month 6). Therefore, it is crucial to introduce a formal framework to handle this missingness, since $\mathcal O_f$ is not fully observed.\par
We focus on the settings where $W = (M, Z, C)$, is always observed as in Rural LITE Trial, while  $(Y, V)$, are partially observed. Let $(R_Y, R_V) \in \{(1,1), (0,1), (1, 0), (0,0)\}$ be a $2$-dimensional vector of binary indicators for observing $Y$ and $V$. The observed data, consisting of $n$ replicates of $\mathcal O = \{R_YY,M,R_VV,Z,C,R_Y,R_V\}$ are taken to be $n$ i.i.d draws from $P_O$. Note that $P_O$ is determined jointly by $P_f$ and the missingness mechanism that underpins $(R_Y,R_V)$. \par
Then let $S = R_YR_V$ be an indicator of complete data or not and consider the coarsened observed data, consisting if $n$ replicates of $\mathcal O' = \{SY,M, SV, Z,C, S\}$ from the joint distribution $P_{O'}$. Note, as with $P_{O}$, the joint distribution $P_{O'}$ is determined jointly by $P_f$ and the missingness mechanism that underpins $S$. \par
The complete observed data is represented by $\mathcal O$. Given the complexity of the two-dimensional missingness mechanism $(R_Y, R_V)$, we follow the notation of a simplified, coarsened observed data $ \mathcal O'$ in this paper, where $S = R_YR_V$ is the indicator of complete observation. This $\mathcal O'$ represents the core structure used to establish the missing data assumptions. These assumptions, along with the causal identification assumptions are required to link the distribution of the observed data back to the target full-data causal estimands (NDE and NIE).

\subsection{Causal Identification Assumptions}
Our methodology relies on a set of standard assumptions that allow for the identification of the target parameter $\chi(P_f)$ as a causal quantity, and the correction for missing data bias. To assess the causal effects of interest given the observed data distribution with a post-treatment confounder, we first adopt Assumptions \ref{assump1}--\ref{assump4} established by \textcite{bae2024bayesian}. 
\begin{assumption}[Ignorable treatment assignment]\label{assump1}
\begin{align*}
    \{Y_{z,m}, M_z, M_{z'}, V_z, V_{z'}\}\perp Z\mid C=c
\end{align*} 
\end{assumption}
Assumption~\ref{assump1} is guaranteed in situations where the treatment is randomized, which is the case in the Rural LITE application.

\begin{assumption}[Ignorable mediator value assignment]\label{assump2}
    \begin{align*}
        Y_{z,m}\perp \{M_z, M_{z'}\}\mid V_z= v, Z = z, C=c
    \end{align*}
\end{assumption}
In the Rural LITE application, the 6-month weight loss observed under treatment is defined as the post-treatment covariate, denoted as $V_1$. Individuals in the HD group exhibit greater weight reductions compared to those in the LD group; furthermore, within the HD group, significant post-treatment weight loss is associated with greater total weight change over the entire study duration. Crucially, under Assumption~\ref{assump2}, we posit that the potential 24-month weight loss is independent of potential attendance status, conditional on baseline confounders $C=c$ and the post-treatment covariate. This constitutes the key untestable assumption that enables the separation of causal pathways, conditional on $C$ and $V$.

\begin{assumption}[Conditional cross-world independence]\label{assump3}
    \begin{align*}
        M_{z'}\perp V_z\mid V_{z'} = v', Z=z', C=c
    \end{align*}
\end{assumption}
This assumption basically implies that parallel universe outcomes do not affect real-world behavior. Specifically, it assumes that if we know a patient's actual weight loss in their assigned group, their decision to attend classes is not influenced by how much weight they would have lost had they been in the other group. The potential outcome in the alternative world provides no extra information about their behavior in the real world.

\begin{assumption}[Gaussian Copula]\label{assump4}The joint distribution of 2 potential post-treatment confounders conditional on baseline confounders is assumed to follow a Gaussian copula model \parencite{nelsen2006introduction}.
    \begin{align*}
        F_{V_z, V_{z'}}(v_z, v_{z'}\mid C=c) = \Phi_2\left[\Phi_1^{-1}\{F_{V_z}(v_z\mid C=c)\}, \Phi^{-1}_1\{F_{V_{z}}(v_{z'}\mid C=c)\}; \rho\right] ,
    \end{align*}
\end{assumption}
In this assumption, $\Phi_1$ is the univariate standard normal cumulative distribution function (CDF) and $\Phi_2$ is the bivariate normal CDF with mean $0$, variance $1$, and correlation $\rho \in (-1, 1)$. Importantly, this formulation of the joint distribution for the potential post-treatment confounders imposes no constraints on the models, ${V_z\mid C}$ and ${V_{z'}\mid C}$, which we estimate using the EDPM.\\
The following theorem provides identification of the NIE and NDE.
\begin{theorem}
Under Assumption \ref{assump1}$\sim$\ref{assump4}, NIE and NDE can be identified: for $z\neq z'$ and $v \neq v'$,
\begin{align*}
    \mathbb E\left[Y_{z,M_{z'}}\mid V_z = v, C = c\right] = \iint&\mathbb E\left[Y_{z,m'}\mid M_{z} = m',V_z = v, Z=z, C=c \right] \\ 
    &\mathrm {dF}_{M_{z'}\mid V_{z'} = v', Z=z',C=c}(m')\mathrm{dF}_{V_{z'}\mid V_z=v,Z=z,C=c}(v').
\end{align*}
    
\end{theorem}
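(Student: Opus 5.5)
The plan is to derive the identification formula by three successive conditioning steps, each justified by one of Assumptions~\ref{assump1}--\ref{assump4}. First, condition on the cross-world quantities $M_{z'}$ and $V_{z'}$ and apply iterated expectations to the counterfactual outcome $Y_{z,M_{z'}}$. By SUTVA (no multiple versions of the mediator), on the event $M_{z'}=m'$ we have $Y_{z,M_{z'}}=Y_{z,m'}$. This gives
\begin{align*}
\mathbb E\left[Y_{z,M_{z'}}\mid V_z=v,C=c\right]=\iint \mathbb E\left[Y_{z,m'}\mid M_{z'}=m',V_{z'}=v',V_z=v,C=c\right]\mathrm{dF}_{M_{z'}\mid V_{z'}=v',V_z=v,C=c}(m')\,\mathrm{dF}_{V_{z'}\mid V_z=v,C=c}(v').
\end{align*}
By Assumption~\ref{assump1}, I can insert $Z=z$ freely into conditioning sets that involve only potential outcomes. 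This lets every factor be indexed by a treatment arm.

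Next, the plan treats the three factors one at a time.

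For the outer factor, $F_{V_{z'}\mid V_z=v,C=c}$ is cross-world and is not a conditional of the observed data. Assumption~\ref{assump4} identifies it. Each margin satisfies $F_{V_z\mid C}=F_{V\mid Z=z,C}$ by Assumption~\ref{assump1}, and the conditional is then obtained by differentiating the Gaussian copula in its first argument. That conditional is what the theorem writes as $\mathrm{dF}_{V_{z'}\mid V_z=v,Z=z,C=c}$.

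For the middle factor, Assumption~\ref{assump3} drops $V_z$ from the conditioning. Assumption~\ref{assump1} then lets me add $Z=z'$. Consistency then turns $M_{z'}\mid V_{z'}=v',Z=z',C=c$ into the observed $M\mid V=v',Z=z',C=c$.

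For the inner regression, I use Assumption~\ref{assump2}, read as joint independence of $Y_{z,m'}$ from $(M_z,M_{z'})$. I would also want $Y_{z,m'}$ independent of $V_{z'}$ given $(V_z,Z=z,C)$, so that the extra conditioning on $V_{z'}=v'$ can be dropped. This gives $\mathbb E[Y_{z,m'}\mid V_z=v,Z=z,C=c]$. Applying Assumption~\ref{assump2} in reverse with $M_z$ then gives $\mathbb E[Y_{z,m'}\mid M_z=m',V_z=v,Z=z,C=c]$. By consistency this equals the observed regression $\mathbb E[Y\mid M=m',V=v,Z=z,C=c]$, which is the integrand in the theorem.

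Finally, integrating over $V_z$ given $C$ (identified by $F_{V\mid Z=z,C}$) and then over $C$ yields NIE and NDE. The case $z=z'$ is just the observed mean.

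The main obstacle is the inner regression. There, conditioning on the other-world $V_{z'}$ must be removed, and the stated assumptions do not literally include $Y_{z,m}\perp V_{z'}$. My first attempt would be to interpret Assumption~\ref{assump2} as a statement conditional on the full set $(V_z,V_{z'})$, in the spirit of \textcite{hong2023posttreatment}, and to make that interpretation explicit. A second, lesser point is positivity: $m'$ must lie in the support of $M\mid V=v,Z=z,C=c$ for the observed regression to be defined, and I would state this as a regularity condition.
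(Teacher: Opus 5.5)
The step you flag as the main obstacle is a real gap in your route, and your proposed fix does not close it under the stated hypotheses. Reading Assumption~\ref{assump2} as conditional on $(V_z,V_{z'})$, or adding $Y_{z,m}\perp V_{z'}$, proves the result only under a strictly stronger set of assumptions. Because you condition the outcome on $V_{z'}$ as well as $M_{z'}$ at the first step, you need the pointwise identity $\mathbb E[Y_{z,m'}\mid M_{z'}=m',V_{z'}=v',V_z=v,Z=z,C=c]=\mathbb E[Y_{z,m'}\mid M_z=m',V_z=v,Z=z,C=c]$ for every $v'$. Nothing in Assumptions~\ref{assump1}--\ref{assump4} restricts how $Y_{z,m}$ depends on $V_{z'}$ given $V_z$:
Assumption~\ref{assump1} concerns only $Z$.
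Assumption~\ref{assump2} involves only the mediators.
Assumption~\ref{assump3} involves only $(M_{z'},V_z,V_{z'})$.
Assumption~\ref{assump4} involves only the pair $(V_z,V_{z'})$.
So that step does not go through as stated. The obstacle is self-inflicted, however.

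The missing idea, which is what the paper does, is to marginalize $V_{z'}$ inside the mediator law rather than inside the outcome regression. First apply iterated expectations over $M_{z'}$ alone, inserting $Z=z$ by Assumption~\ref{assump1}: $\mathbb E[Y_{z,M_{z'}}\mid V_z=v,C=c]=\int \mathbb E[Y_{z,m'}\mid M_{z'}=m',V_z=v,Z=z,C=c]\,\mathrm{dF}_{M_{z'}\mid V_z=v,Z=z,C=c}(m')$. Under your joint-independence reading of Assumption~\ref{assump2}, both $\mathbb E[Y_{z,m'}\mid M_{z'}=m',V_z=v,Z=z,C=c]$ and $\mathbb E[Y_{z,m'}\mid M_z=m',V_z=v,Z=z,C=c]$ equal $\mathbb E[Y_{z,m'}\mid V_z=v,Z=z,C=c]$, so the swap from $M_{z'}$ to $M_z$ is legitimate. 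Only then expand the mediator law as $F_{M_{z'}\mid V_z=v,Z=z,C=c}=\int F_{M_{z'}\mid V_{z'}=v',V_z=v,Z=z,C=c}\,\mathrm{dF}_{V_{z'}\mid V_z=v,Z=z,C=c}(v')$. Next use Assumption~\ref{assump1} to replace $Z=z$ by $Z=z'$, and only afterwards invoke Assumption~\ref{assump3} to drop $V_z$, since it is stated conditional on $Z=z'$. Your middle-factor step applies these two in the reverse order. This expansion is exactly the Hong et al.\ lemma recorded as Theorem~\ref{athm1}. Equivalently, you can rescue your own double integral by integrating $v'$ out first against $\mathrm{dF}_{V_{z'}\mid M_{z'}=m',V_z=v,C=c}$. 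That collapses the $v'$-dependent regression back to $\mathbb E[Y_{z,m'}\mid M_{z'}=m',V_z=v,C=c]$ by the tower property. The rest of your outline is fine: copula identification of $F_{V_{z'}\mid V_z,C}$ for a given $\rho$, consistency to reach the observed regression, and the support condition on $m'$.
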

\noindent See supplementary materials Section \ref{Aidentification} for the detailed proof.
\subsection{Overlap and Missingness Assumptions}
Furthermore, to complete the causal identification and address the incomplete records, we explicitly state the standard propensity score overlap condition \parencite{rosenbaum1983central}, the missing at random (MAR) assumption \parencite{rubin1976inference}, and the strict positivity constraint for the missing data mechanism \parencite{robins1994estimation} in Assumptions \ref{assump5}--\ref{assump7}.
\begin{assumption}[Propensity score positivity]\label{assump5}
\begin{align*}
    \exists \delta >0 \text{ s.t. } P_f(Z=z\mid C=c) \in (\delta, 1-\delta)\quad \forall  z,c .   
\end{align*}
\end{assumption}
To ensure that comparisons between treatment groups are empirically possible across the entire population, we impose this standard overlap assumption required for fair comparison. It ensures that for any patient with a given set of baseline characteristics, there is always a probability in $(0,1)$ of being assigned to either the treatment group or the control group. This guarantees that we have comparable people in both groups to estimate the effects.

\begin{assumption}[Missing at random]\label{assump6}
 \begin{align*}
        S\perp (Y,V)\mid M,Z,C.
    \end{align*} 
\end{assumption}
We rely on the missing at random (MAR) condition regarding the missingness mechanism to identify the causal effects in the presence of incomplete records. We posit that the missing data mechanism is ignorable (Missing at Random). Specifically, conditional on the observed baseline characteristics and intermediate variables (such as attendance records), the probability of missingness in the outcome or post-treatment confounder is assumed to be independent of the unobserved values themselves. In the context of the trial, we assume that subject attrition is primarily explained by the rich set of observed history (i.e., baseline traits and intermediate records), rather than the unmeasured outcome itself. We acknowledge that this is a strong assumption, as attrition in practice could still be influenced by unmeasured life events or unrecorded adverse experiences. However, conditioning on $\{M,Z,C\}$ makes the MAR condition a reasonable working assumption to proceed with causal identification.

\begin{assumption}[MDM positivity]\label{assump7}
\begin{align*}
    \pi(W) = P(S=1\mid M,Z,C)  > \delta_\pi \quad \text{a.s.}
\end{align*}
\end{assumption}
In addition to the overlap in treatment assignment, we require a positivity condition for the observation process to ensure the full data distribution is identifiable. We impose a strict positivity constraint on the missing data mechanism, requiring that the probability of observing complete data is bounded away from 0 for all subjects. This condition precludes the existence of any sub-population—defined by treatment, mediator, or baseline traits—that is systematically and entirely unobserved. Without this guarantee, the full data distribution would remain unidentifiable even under the missing-at-random assumption, as there would be no observed counterparts from which to impute the missing values.

\subsection{A Bayesian nonparametric model for the observed data}

To estimate the causal effects, we
first need to estimate the joint distribution of full data $P_f$ from the observed data $\mathcal O = (R_YY, M,R_VV,Z,C, R_Y, R_V)$. This requires us to invoke the Missing At Random (MAR) assumption \ref{assump6}. We specifically assume that the joint distribution of the missingness indicators $(R_Y, R_V)$ depends only on the variables that are always observed, $(M,Z,C)$:
\begin{align*}
    \pi(R_Y,R_V\mid Y,M,V,Z,C) = \pi(R_Y,R_V\mid M,Z,C),
\end{align*}
This assumption, combined with parameter separability, ensures that the missingness mechanism is ignorable for the parameters defining $P_f$. Consequently, the observed data likelihood can be factorized into the missingness mechanism part and the marginal distribution of the observed components as follows:
\begin{align*}
    f(\mathcal O) = P(R_Y,R_V\mid M,Z,C) \cdot &  P(Y,M,V,Z,C)^{R_YR_V}\\
    \cdot & P (Y,M,Z,C)^{R_Y(1-R_V)}\\
    \cdot & P(M,V,Z,C)^{(1-R_Y)R_V}\\
    \cdot & P(M,Z,C)^{(1-R_Y)(1-R_V)}.
\end{align*}
Accordingly, we propose modeling the full-data joint distribution $P_f$ using an Enriched Dirichlet Process Mixture (EDPM) Model \parencite{wade2011enriched, wade2014improving, bae2024bayesian} without requiring explicit modeling of $(R_Y, R_V)$. The missing values $(Y_{mis}, V_{mis})$ are implicitly handled within the MCMC sampling process through Data Augmentation (imputation) steps. The nested structure is specified as follows:
\begin{equation}\label{eq:edpm}
    \begin{aligned}
    Y_i\mid M_i, V_i, Z_i, C_i ; \ \bm\theta_i^y &\sim f(y_i\mid m_i,v_i,z_i,c_i; \bm\theta^y_{i})\\
M_i\mid V_i, Z_i, C_i ;\  \bm\theta_i^m &\sim f(m_i\mid v_i, z_i, c_i; \bm\theta_i^m),\\
V_i\mid Z_i, C_i ; \ \bm\psi_i^v &\sim f(v_i\mid z_i, c_i; \bm\psi_i^v) \\
Z_i ;\ \bm\psi_{i}^z&\sim f(z_i; \bm\psi^z_i),\\\
C_{i,q};\ \bm\psi_{i,q}^c&\sim f(c_{i,q}\mid \bm\psi_{i,q}^c), \  q=1,\dots, p_{c}\\
(\bm\theta_i, \bm \psi_i)\mid G&\sim G\\
G&\sim \text{EDP}(\alpha^\theta, \alpha^{\psi|\theta}, G_0),
\end{aligned}
\end{equation}
where $\bm\theta_i = (\bm\theta_i^y, \bm\theta_i^m)$ and $\bm\psi = (\bm\psi^v_i,\bm\psi^z_i,\bm\psi^c_{i,q},)$. The notation $G\sim \mathrm{EDP}(\alpha^\theta, \alpha^{\psi|\theta}, G_0)$ means that $G^\theta \sim \mathrm{DP}(\alpha^\theta, G^\theta_0)$ and $G^{\psi|\theta}\sim\mathrm{DP}(\alpha^\theta, G_0^{\psi|\theta})$ with base measure $G_0 = G_0^{\theta}\times G_0^{\psi|\theta}$.\par
In the EDPM specification, each subject $i$ has its own parameter $(\bm\theta_i, \bm\psi_i)$, but subjects in the same cluster share the same parameter values, due to the discreteness of $G$ \parencite{ferguson1973bayesian}. Baseline covariates $C$ and the treatment $Z$ are assumed to be independent within clusters, while local dependence is allowed
for the post-treatment confounder $V$. There are 2 concentration parameters, $(\alpha^\theta, \alpha^{\psi|\theta})$ in the EDPM, where we assume $\alpha^{\psi|\theta} = \alpha^\psi$ for all $\theta$-clusters as in \textcite{roy2018bayesian}. The number of global clusters containing $(Y,M)$ depends on the concentration parameter
$\alpha^\theta$, and the number of nested clusters, containing $(V,Z,C)$ given the global clusters, depends on the concentration parameter $\alpha^{\theta|\psi}$. Lower values of the concentration parameters $(\alpha^\theta,\alpha^{\psi|\theta})$ indicate fewer clusters. This allows for more nested clusters than
the global clusters, which is crucial because the dimension of $(V,Z,C)$ is
significantly larger than that of $(Y,M)$. \par
Within each of the global clusters, we assume two generalized linear models, one for $f(y_i\mid m_i,v_i,z_i,c_i; \ \bm\theta^y_i)$ and the other for $f(m_i\mid v_i, z_i,c_i;\ \bm\beta_i^m)$. If $Y$ is continues, we specify $Y\mid \mathbb M;\ \bm\beta^y,\sigma^{y,2}\sim \mathcal N(\mathbb M\bm\beta^y, \sigma^{y,2}) $ with $\bm\theta^y= (\bm\beta^y,\sigma^{y,2})$. The notation $\mathbb M = (1,M,V,Z,C^T)^T$ represents a design matrix involving $M,V,Z,C$ and intercept. If $Y$ is binary, we specify $Y\mid \mathbb M;\ \bm\theta^y\sim \mathrm{Bernoulli}(\mathrm{probit}^{-1}(\mathbb M\bm\theta^y))$. Similarly, we assume a GLM for $f (m|v, z,c; \bm\theta^m)$ within each of the global clusters. For a post-treatment confounder $V$, we specify a local GLM given treatment $Z$ and baseline confounder $C$. Similar to \textcite{roy2018bayesian}
, baseline confounder $C$ are assumed to be locally independent. It is worth noting that all variables are globally dependent and may have nonlinear relationships even though we assume a parametric model for each $y, m, v, z$ and $c$ within each
cluster.\par
The EDPM model~(\ref{eq:edpm}) has a square-breaking representation \parencite{wade2011enriched},
\begin{align*}
    P(Y,M,V,Z,C\mid G) =&\sum_{k=1}^\infty\omega_k\  f(y\mid m,v,z,c;\ \bm\theta^y_k) f(m\mid v,z ,c,; \  \bm\theta^m_k) \\ \times&\sum_{j=1}^\infty\omega_{j|k} \ f(v\mid z, v;\ \bm\psi^v_{j|k}) f(z;\ \bm\psi^z_{j|k})\prod_{q=1}^{p_c} f(c_q; \ \bm\psi^c_{j|k,q}),
\end{align*}
where $k$ indexes the global cluster and $f(\cdot)$ is the corresponding distribution. The weights have priors $\omega_k\sim \mathrm{Beta}(1,\alpha^\theta)$ and $\omega_{j|k}'\sim \mathrm{Beta}(1,\alpha^{\psi|\theta})$, where $\omega_k = \omega_k'\prod_{h=1}^{k-1}$ and $\omega_{j|k} = \omega_{j|k}'\prod_{l=1}^{j-1}\omega_{l|k}'$.\par
For the joint density, we can derive the following conditional distributions:
\begin{equation}
    \begin{aligned}
        P(Y\mid M,V,Z,C) &= \frac{\sum_{k=1}^\infty \omega_k  f(y| m,v,z,c;\bm\theta^y_k)f(m|v,z,c;\bm\theta^m_k)\sum_{j=1}^\infty\omega_{j|k}\ f(v| z,c ;\bm\psi^v_{j|k})f(z;\bm\psi^z_{j|k})f(c;\bm\psi^c_{j|k})}{\sum_{k=1}^\infty \omega_k \ f(m| v,z,c;\bm\theta^m_k)\sum_{j=1}^\infty\omega_{j|k} f(v|z,c ;\bm\psi^v_{j|k})f(z;\bm\psi^z_{j|k})f(c;\bm\psi^c_{j|k})},\\
        P(M\mid V,Z,C) & = \frac{\sum_{k=1}^\infty \omega_k f(m|v,z,c;\bm\theta^m_k)\sum_{j=1}^\infty\omega_{j|k}\ f(v| z,c ;\bm\psi^v_{j|k})f(z;\bm\psi^z_{j|k})f(c;\bm\psi^c_{j|k})}{\sum_{k=1}^\infty \omega_k \sum_{j=1}^\infty\omega_{j|k}\ f(v| z,c;\bm\psi^v_{j|k})f(z;\bm\psi^z_{j|k})f(c;\bm\psi^c_{j|k})},
        \\
        P(V\mid Z,C) &= \frac{\sum_{k=1}^\infty \omega_k \sum_{j=1}^\infty\omega_{j|k}\ f(v| z,c ;\bm\psi^v_{j|k})f(z;\bm\psi^z_{j|k})f(c;\bm\psi^c_{j|k})}{\sum_{k=1}^\infty \omega_k  \sum_{j=1}^\infty\omega_{j|k}\ f(z;\bm\psi^z_{j|k})f(c;\bm\psi^c_{j|k})},\\
        P(Z) &= \sum_{k=1}^\infty\omega_k\sum_{j=1}^\infty\omega_{j|k} \ f(z; \bm\psi^z_{j|k}),\\
        p(C) & = \sum_{k=1}^\infty\omega_k\sum_{j=1}^\infty\omega_{j|k}\ f(c;\bm\psi^c_{j|k}).
    \end{aligned}
\end{equation}
While the Enriched Dirichlet Process Mixture (EDPM) model, given its square-breaking representation, provides the theoretical foundation for a flexible, infinite mixture distribution of $P_f$, its direct implementation is computationally intractable. Furthermore, MCMC techniques often used with these infinite models, such as the Polya Urn Sampler, simplify computation by integrating out the infinite measure $G$. However, this marginalization means we primarily gain access to the posterior predictive distribution, rather than the joint posterior of the specific cluster parameters. \par
We therefore follow standard practice for Dirichlet Process mixture models and implement a finite approximation by truncating the stick-breaking process at a sufficiently large number of components \parencite{ishwaran2001gibbs}. This approach is motivated by the need to explicitly construct the joint distribution of $(Y, M, V, Z, C)$. By accessing the full posterior distribution of the model parameters (weights and atoms), we can mathematically formulate this joint distribution, which is essential for performing the G-computation steps required to estimate the causal effects.\par
Specifically, we fix an upper bound $K$ for the number of global $(Y,M)$ clusters and an upper bound $J$ for the number of nested $(V,Z,C)$ clusters within each global cluster as suggested in \textcite{burns2023truncation}.\par
This is achieved by modifying the weight generation:
1. Global Weights ($\omega_k$): we generate $\omega_k' \sim \mathrm{Beta}(1, \alpha^\theta)$ for $k = 1, \dots, K-1$ and set the final stick-break $\omega_K' = 1$. This deterministically stops the process, yielding exactly $K$ weights $\omega_k = \omega_k' \prod_{h=1}^{k-1}(1-\omega_h')$ for $k=1, \dots, K$.
2. Nested Weights ($\omega_{j|k}$): similarly, for each global cluster $k$, we generate $\omega_{j|k}' \sim \mathrm{Beta}(1, \alpha^{\psi|\theta})$ for $j = 1, \dots, J-1$ and set $\omega_{J|k}' = 1$, yielding exactly $J$ nested weights for that cluster. \par
This truncation turns the infinite square-breaking representation into a finite mixture model:
\begin{align*}
P(Y,M,V,Z,C\mid G) \approx &\sum_{k=1}^K\omega_k\  f(y\mid m,v,z,c;\ \bm\theta^y_k) f(m\mid v,z ,c,; \   \bm\theta^m_k) \\ \times&\sum_{j=1}^J\omega_{j|k} \ f(v\mid z, v;\ \bm\psi^v_{j|k}) f(z;\ \bm\psi^z_{j|k})\prod_{q=1}^{p_c} f(c_q; \ \bm\psi^c_{j|k,q}).
\end{align*}
This finite approximation allows for posterior inference using a blocked Gibbs sampler. The truncation levels $K$ and $J$ are set to be sufficiently large so that the posterior estimates are not sensitive to their specific values, ensuring the finite model provides a good approximation to the full theoretical model \parencite{burns2023truncation}.

\section{Posterior Computation and Estimation of Causal Effects}

\subsection{MCMC Sampling via Blocked Gibbs Sampler}
To conduct posterior inference, we employ a Blocked Gibbs Sampler (BGS) based on a finite truncation approximation of the Enriched Dirichlet Process, as described by \textcite{burns2023truncation}. Unlike marginal sampling methods that integrate out the random measure, the BGS explicitly samples the mixing measure at each iteration, which is essential for reconstructing the joint distribution required for G-computation.

Specifically, we approximate the infinite mixture models by fixing sufficiently large truncation levels, $K_\theta$ for the global clusters (joint space of $(Y, M)$) and $K_\psi$ for the local clusters (nested space of $(V, Z, C)$). At the $b$-th MCMC iteration, we draw the full set of parameters $\{\bm\omega^{(b)}, \bm\theta^{(b)}, \bm\psi^{(b)}\}$ denote the global and local stick-breaking weights, and $\boldsymbol{\theta}$ denotes the component-specific atoms (means and covariances).

We use the full dataset of size $n_0$ for posterior inference here, including subjects with missing outcomes and post-treatment confounders. Under the Missing at Random (MAR) assumption (Assumption~\ref{assump6}), we implement data augmentation within the Gibbs sampler steps to impute the missing values of $Y$ and $V$. 

To accommodate incomplete observations within our Bayesian framework, we incorporate a data augmentation strategy directly into the Blocked Gibbs Sampler. Under the Missing at Random (MAR) assumption, this approach allows us to validly infer causal effects without discarding subjects with incomplete records. We address missingness in the outcome $Y$ and the post-treatment confounder $V$ by deriving their respective contributions to the joint likelihood and updating the unobserved values conditional on the current model parameters.

For observations where the primary outcome $Y_i$ is missing, we exploit the hierarchical structure of the causal model. Since $Y_i$ serves as a terminal node conditional on the mediator, post-treatment confounder, and baseline covariates, integrating it out of the joint distribution is mathematically equivalent to omitting its likelihood contribution. Therefore, rather than explicitly imputing missing outcomes, we simply integrate out $Y$ from the likelihood calculation during the parameter update steps. This strategy applies regardless of whether $V_i$ is observed or missing.

In contrast, missing values of the post-treatment confounder $V_i$ must be explicitly imputed because $V_i$ acts as a covariate for both the mediator $M_i$ and the outcome $Y_i$. We draw the missing $V_i$ from its full conditional posterior distribution, which is derived by combining the information from its own component-specific conditional prior given $Z_i$ and $C_i$ and the likelihoods of $M$ model and $Y$ model. Let the component-specific models for cluster $k$ and sub-cluster $j$ be defined as $V_i \sim \mathcal{N}(\mu_V, \tau_v^{-1})$, $M_i \mid V_i \sim \mathcal{N}(\beta_{k,0}^m + \beta_{k,V}^mV_i, \tau^{m,-1}_k)$, and $Y_i \mid V_i \sim \mathcal{N}(\beta_{k,0}^y + \beta_{k,V}^yV_i, \tau^{y,-1}_k)$, where $\tau$ denotes the precision.

By applying Bayes' theorem, the full conditional distribution for a missing $V_i$ serves as the imputation step and follows a normal distribution $V_i \mid \cdot \sim \mathcal{N}(\mu_{V}^*, (\tau_{V}^*)^{-1})$. The posterior precision $\tau_{V}^*$ represents the total information available for $V_i$ and is obtained by summing the precisions from its prior and the likelihoods of its conditional descendants:
\begin{equation}
    \tau_{V}^* = \tau^v_{kj} + \left(\beta_{k,V}^m\right)^2 \tau^m_k + \mathbb{I}(Y_i \text{ is observed}) \cdot \left(\beta_{k,V}^y\right)^2 \tau_k^y,
\end{equation}
where $\mathbb{I}(\cdot)$ is the indicator function. This formula highlights that the precision increases with the number of observed descendants.

The corresponding posterior mean $\mu_{V}^*$ is a precision-weighted average of the information sources:
\begin{equation}
    \mu_{V}^* = \frac{1}{\tau_{V}^*} \left[ \mu_V \tau^v_{kj} + (M_i - \beta_{k,0}^m)\beta_{k,V}^m \tau^m_k + \mathbb{I}(Y_i \text{ is observed}) \cdot (Y_i - \beta_{k,0}^y)\beta_{k,V}^k \tau^y_k \right].
\end{equation}
When $Y_i$ is observed, the indicator function is 1, and the imputation utilizes all three terms. While when $Y_i$ is missing, the indicator is 0, and the imputation naturally reduces to conditioning solely on the mediator $M_i$ and the baseline priors. This unified formulation ensures that the uncertainty associated with missing values is properly quantified and propagated throughout the posterior inference.

\subsection{G-Computation for Plug-in Estimation}
We estimate the target causal effects $\chi(P) = \mathbb E\left[Y_{z,M_{z'}}\right]$ using a Monte Carlo integration procedure based on the posterior samples obtained from the Blocked Gibbs Sampler. Let $\{\Theta^{(1)},\dots,\Theta^{(B)}\}$ denote the $B$ MCMC posterior samples, where each $\Theta^{(b)} = \{\bm\omega, \bm\theta,\bm\psi,\rho\}^{(b)}$ contains the full set of model parameters and the Gaussian copula correlation coefficient.  \par
For each posterior iteration $b=1,\dots,B$, we approximate the integral defining the causal effect by generating $T$ synthetic subjects. The algorithm proceeds as follows:\par
\begin{enumerate}
\item Simulate baseline covariates:
    For each Monte Carlo replicate $i = 1, \dots, T$, sample the baseline covariates $c_i^{(b)}$ from the marginal mixture distribution $P( c\mid \Theta^{(b)})$.

    \item Simulate factual post-treatment confounder:
    Set the treatment assignment to the factual level $Z = z$. Sample the factual post-treatment confounder $v_i^{(b)}$ from the conditional density $P(v \mid Z=z, c_i^{(b)}; \Theta^{(b)})$.

    \item Simulate Counterfactual Post-treatment Confounder (via Copula): To generate the counterfactual value $v'^{(b)}$
  under intervention $Z=z'$
 , we utilize the Gaussian copula to model the dependence between $V_z$ and $V_{z'}$ This step involves three operations:
 \begin{enumerate}
     \item Compute the cumulative probability $u = F_{V_z}\left(v_i^{(b)}\mid z, c_{i}^{(b)};\Theta^{(b)}\right)$, where $F_{V_z}$ is the cumulative distribution function implied by the EDPM.
     \item Sample the latent variable $w'$ from the conditional normal distribution $w'\sim\mathcal N\left(\rho^{(b)}\Phi^{-1}(u), 1-\left(\rho^{(b)}\right)^2\right)$.
     \item Compute the counterfactual value: $v'^{(b)}_i = F_{V'}^{-1}\left(\Phi(w')\mid z', c_i^{(b)};\Theta^{(b)}\right)$. 
 \end{enumerate}
 \item Simulate counterfactual mediator: Sample the mediator $m_i'^{(b)}$ from the conditional density $P(m_i'^{(b)}\mid v'^{(b)}, Z_i = z',c_i^{(b)};\Theta^{(b)})$.
\item Calculate Conditional Expectation of Outcome:
    Evaluate the conditional expectation $\mu_{y,i}^{(b)} = \mathbb{E}[Y \mid m_i'^{(b)}, v_i^{(b)}, z, \mathbf{c}_i^{(b)}]$ by computing the weighted average of the component-specific linear predictors. The expectation is given by the ratio:
    \begin{equation*}
        \mu_{y,i}^{(b)} = \frac{\sum_{k=1}^M \sum_{j=1}^{N_k} W_{kj, i}^{(b)} \cdot \mathbb{M}\beta_k^{y,(b)}}{\sum_{k=1}^M \sum_{j=1}^{N_k} W_{kj, i}^{(b)}}
    \end{equation*}
    where $\mathbb{M}\beta_k^{y,(b)}$ is the linear predictor for the outcome in component $k$:
    \begin{equation*}
        \mathbb{M}\beta_k^{y,(b)} = \beta_{k,0}^{y,(b)} + \beta_{k,1}^{y,(b)} m_i^{'(b)} + \beta_{k,2}^{y,(b)} v_i^{(b)} + \beta_{k,3}^{y,(b)} z + \sum_{q=1}^{p_c}\beta_{k,q+4}^{y,(b)} c_{q,i}^{(b)}
    \end{equation*}
    and $W_{kj, i}^{(b)}$ represents the unnormalized posterior weight for component $(k,j)$, defined as the product of the mixing weights and the component-specific densities evaluated at the generated values:
    \begin{equation*}
        W_{kj, i}^{(b)} = \omega_k^{\theta,(b)}\omega_{j|k}^{\psi,(b)} \cdot \mathcal{N}\left(m_i'^{(b)} \mid \mathbb{V}\beta_k^{m,(b)}, \sigma_k^{2,m,(b)}\right) \cdot \mathcal{N}\left(v_i^{(b)} \mid \mathbb{X}\beta_{kj}^{v,(b)}, \sigma_{kj}^{2,v,(b)}\right) \cdot P( z^{(b)}_i,{c}_i^{(b)} \mid \Psi_{kj}^{(b)})
    \end{equation*}
\item Aggregate Estimates: The plug-in estimate for the $b$-th MCMC iteration is the average over the $T$ Monte Carlo replicates:
\begin{equation*}
    E^{(b)}\left[Y_{z,M_{z'}}\right] = \frac{1}{T} \sum_{i=1}^{T} \mathbb{E}\left[Y \mid m_i'^{(b)}, v_i^{(b)}, z, c_i^{(b)}; \Theta^{(b)}\right].
\end{equation*}
The sequence of estimates $\{ E^{(1)}, \dots, E^{(B)} \}$ constitutes the uncorrected posterior distribution of the causal effect.
\end{enumerate}

\begin{remark}
    For single-world case where $z=z'$, we skip the steps of sampling $v'$ from the conditional distribution $P_{V_{z'}\mid V_z=v, Z=z,C=c}(v')$ in Step 3. Instead, we let $v' = v$. The rest of the steps exactly align with those in cross-world case where $z\neq z'$.
\end{remark}

\section{Semiparametric Theory and One-Step Posterior Correction}
In the previous section, we described the computational procedure to obtain the posterior distribution of the causal functionals using G-computation. While the EDPM framework provides a flexible mechanism to model the complex joint distribution of the data, the resulting plug-in estimator, denoted as $\chi(P)$, may suffer from regularization bias inherent in nonparametric density estimation. As discussed in \textcite{bickel2012semiparametric} and \textcite{yiu2025semiparametric}, this bias typically decays at a rate slower than $n^{-1/2}$, potentially leading to credible intervals that fail to provide valid frequentist coverage.

To address this challenge and ensure asymptotically valid inference, this section introduces a one-step post-processing correction to the posterior distribution \parencite{yiu2025semiparametric} for our setting. This correction leverages the geometric properties of the model space via the Efficient Influence Function (EIF). We first derive the specific EIF form for our causal estimand under the identified post-treatment confounding structure. Subsequently, we present the theoretical main result: the Semiparametric Bernstein-Von Mises (BvM) theorem. This theorem establishes that the centered and scaled one-step posterior converges to a Gaussian distribution with variance equal to the semiparametric efficiency bound, thereby justifying the use of our credible sets as valid confidence intervals.

\subsection{Efficient Influence Function}

Before introducing the influence function for our primary target, it is important to note that the Natural Direct and Indirect Effects are composed of three counterfactual expectations: $\mathbb E[Y_{z,M_{z}}]$, $\mathbb E[Y_{z',M_{z'}}]$, and $\mathbb E[Y_{z,M_{z'}}]$, the cross-world term. In this section, we begin by the cross-world case in
detail, since it contains the main conceptual and technical difficulties of the problem,
in particular the handling of the post-treatment confounder $V$ and the copula-based
identification strategy. We then return to the two single-world quantities,
$\mathbb E[Y_{z,M_z}]$ and $\mathbb E[Y_{z',M_{z'}}]$, whose derivations are simpler and
can be treated as special cases of the general arguments developed for the cross-world
setting. Accordingly, most of the discussion below focuses on $\mathbb E[Y_{z,M_{z'}}]$.
We first consider the full data case, where we assume there is no missingness in the observed data. Our target parameter of interest, $\chi(P_f)$, is the functional defined on this full-data distribution, given by the g-computation formula:
\begin{align*}
    \chi (P_f) & = \mathbb{E}[Y_{z,M_{z'}}] 
     = \iiiint\mathbb{E}(Y|M=m', V_z=v, Z=z, C=c)\mathrm{dF}_{M|V_{z'} = v',Z=z', C=c}(m')\\
    &\qquad \quad \qquad \qquad \qquad\mathrm{dF}_{V_{z'}|V_z=v,Z=z,C=c}(v')\mathrm{dF}_{V_z|Z=z,C=c}(v)\mathrm{dF}_{C}(c).
\end{align*}
Define the nuisance parameter as follows:
\begin{align*}
    e &= P(Z = z\mid C = c),\ g_{V_z} = p(v\mid z, C), \ g_M = p(m\mid v,z,c),\ g_{z'|z} = p(v_{z'}\mid v_z,z,c)\\ \mu_1 &= \mathbb E[Y\mid m,v,z,c],\
     F_{V_z}  = F_{V_z}(v\mid z,c),\ F_{V_{z'}} = F(v'\mid z',c) \text{ and } \pi = P(S = 1\mid M,Z,C).
\end{align*}
Define the intermediate maps
\begin{align*}
&\mu_2(v';v,c)=\!\int \mu_1(m';v,c)\,g_M(m'\mid v',c)\,dm',\quad
\mu_3(v;c)=\!\int \mu_2(v';v,c)\,g_{z'|z}(v'\mid v,c)\,dv',\\ &\mu_4(c)=\!\int \mu_3(v;c)\,g_{V_z}(v\mid c)\,dv,\qquad
\chi(P)=\mathbb{E}[\mu_4(C)].
\end{align*}
Let $\dot\chi_{P_f}$ denote the efficient influence function for full data written as the sum of five blocks. The derivation of influence function is based on \textcite{kennedy2024semiparametric}.
\begin{align*}
&\phi_{1,P_f}=\frac{Z}{e(C)}\{Y-\mu_1(M;V_z,C)\},\\
&\phi_{2,P_f}=\frac{1-Z}{1-e(C)}\!\int\![\mu_1(M;v,C)-\mu_2(V_{z'};v,C)]\,c(u,w;\rho)\,g_{V_z}(v\!\mid C)\,dv,\\
&\phi_{3,P_f}=\frac{Z}{e(C)}\{\mu_2(V_{z'};V_z,C)-\mu_3(V_z;C)\},\quad
\phi_{4,P_f}=\frac{Z}{e(C)}\{\mu_3(V_z;C)-\mu_4(C)\},\\
&\phi_{5,P_f}=\mu_4(C)-\chi(P_f),\quad
\dot\chi_{P_f}=\sum_{j=1}^5\phi_{j,P_f},
\end{align*}
where $c(u,w;\rho)$ is the Gaussian copula function based on the Assumption \ref{assump4}. Here $u = F_{V_{z}}(v\mid z, c)\text{ and }\ w = F_{V_{z'}}(v'\mid z' , c)$.\\
The expression $\dot{\chi}_{P_f}$ derived above represents the efficient influence function in the hypothetical full data scenario where $(Y, M, V, Z, C)$ are fully observed for all subjects. However, our actual observed data structure $\mathcal O'$ mentioned in Section 2 is subject to missingness, as indicated by $S$. To conduct valid inference, we must map this full-data influence function into the geometric space of the observed data distribution $P_{O'}$.\\
This mapping relies on the general theory of missing data influence functions established by \textcite{robins1994estimation}. Conceptually, the observed data EIF is constructed by weighting the full-data EIF by the inverse probability of observation, $S/\pi(W)$, and subtracting a projection term to ensure orthogonality to the nuisance score of the missingness mechanism. This transformation preserves the double-robustness property while accounting for the information loss due to missing values. The following theorems formalize the identification of the target functional under missingness and derive the corresponding observed data EIF.\par

\begin{theorem}\label{thm41}
Under Assumption \ref{assump1}-\ref{assump7}, the mean counterfactual $\mathbb E_{P_{O'}}[Y_{z,M_{z'}}]$ is identified by the functional 
\begin{align}\label{gformula}
    \chi(P_{O'}) = \mathbb E_{P_{O'}}\left[\frac{S}{\pi(W)}\mu_4(C)\right].
\end{align}
Furthermore, by applying Assumption \ref{assump6}, Equation \ref{gformula} recovers the standard g-formula.
\end{theorem}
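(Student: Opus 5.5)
We prove Theorem~\ref{thm41} in three steps: first reduce the counterfactual mean to a full-data functional, then show the observed-data expression equals that functional, and finally rewrite the nuisance quantities as observed-data regressions.

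\textbf{Step 1: reduction to the full-data g-formula.} By the identification theorem of Section~2 (which uses Assumptions~\ref{assump1}--\ref{assump4}), $\mathbb E[Y_{z,M_{z'}}\mid V_z=v,C=c]$ is an iterated integral of $\mu_1$ against $g_M$ and $g_{z'|z}$. Integrating this over $V_z\mid C$ and then over $C$ gives $\mathbb E[Y_{z,M_{z'}}]=\mathbb E_{P_f}[\mu_4(C)]=\chi(P_f)$, where $\mu_1,\ldots,\mu_4$ are the intermediate maps defined above. Two further facts are needed here:
\begin{itemize}
\item Assumption~\ref{assump1} (with consistency/SUTVA) replaces the law of $V_z\mid C$ by the law of $V\mid Z=z,C$.
\item Assumption~\ref{assump5} guarantees that all conditional laws given $Z=z$ and $Z=z'$ are well defined on the support of $C$.
\end{itemize}
This step is therefore just a restatement of the earlier theorem, marginalized over $C$.

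\textbf{Step 2: the observed-data identity.} Since $C$ is part of $W=(M,Z,C)$, which is always observed, $\mu_4(C)$ is a function of always-observed variables. We condition on $W$ and use the tower property:
\begin{align*}
\mathbb E_{P_{O'}}\!\left[\frac{S}{\pi(W)}\mu_4(C)\right]
=\mathbb E\!\left[\frac{\mu_4(C)}{\pi(W)}\,\mathbb E[S\mid W]\right]
=\mathbb E[\mu_4(C)]=\chi(P_f).
\end{align*}
The middle step uses $\mathbb E[S\mid W]=\pi(W)$ by definition. Assumption~\ref{assump7} ensures $\pi(W)>\delta_\pi$, so the ratio is well defined and integrable whenever $\mu_4$ is integrable. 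The resulting identity holds with the true $\pi$, and it does not require MAR.

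\textbf{Step 3: recovering the standard g-formula.} This is the main obstacle: $\mu_4$ is built from full-data regressions $\mu_1$, $g_M$, $g_{V_z}$ (and, through the copula, $F_{V_z}$ and $F_{V_{z'}}$), and these must be shown to be functionals of $P_{O'}$.

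\begin{itemize}
\item \emph{Outcome regression.} Assumption~\ref{assump6} gives $S\perp(Y,V)\mid W$, hence
\[
p(y\mid m,v,z,c)=p(y\mid m,v,z,c,S=1)\quad\text{and}\quad p(v\mid m,z,c)=p(v\mid m,z,c,S=1).
\]
So $\mu_1$ equals the complete-case regression.
\item \emph{Laws involving $V$ and $M$.} The densities $g_{V_z}=p(v\mid z,c)$ and $g_M=p(m\mid v,z,c)$ condition on or integrate over $M$, which is not in the conditioning set of MAR as stated. We write them through the joint law:
\[
p(v,m\mid z,c)=p(v\mid m,z,c,S=1)\,p(m\mid z,c).
\]
Each factor on the right is identified, the first by MAR and the second because $M,Z,C$ are always observed. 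The marginal $g_{V_z}$ and conditional $g_M$ then follow by integration and Bayes' rule. The CDFs $F_{V_z}$ and $F_{V_{z'}}$, and hence $g_{z'|z}$ (given $\rho$ under Assumption~\ref{assump4}), are obtained the same way.
\item \emph{Positivity.} Assumption~\ref{assump7} guarantees the complete-case conditionals are defined on the whole support of $W$.
\end{itemize}

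Substituting these observed-data expressions into $\mu_1,\ldots,\mu_4$ yields the standard g-formula written entirely in terms of $P_{O'}$. By Step~2, this equals the right-hand side of \eqref{gformula}, which completes the proof.
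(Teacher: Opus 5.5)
Your proof is correct and follows the paper's route: cancel $S/\pi(W)$ by the tower property, then apply the earlier identification theorem marginalized over $V_z\mid C$ and $C$. It is also more careful than the paper's version, since $\mathbb E[S\mid W]=\pi(W)$ holds by definition rather than by MAR, and your Step~3 supplies the MAR step $p(y,v\mid m,z,c)=p(y,v\mid m,z,c,S=1)$ that makes $\mu_4$ (for given $\rho$) a functional of $P_{O'}$, which the paper's proof leaves implicit. One wording slip: $M$ \emph{is} in the MAR conditioning set $W=(M,Z,C)$, as your own factorization uses, so the actual point is only that $g_{V_z}$ and $g_M$ are not conditionals given all of $W$ and must be assembled from $p(v\mid m,z,c,S=1)$ and $p(m\mid z,c)$.
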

\noindent The proof of Theorem \ref{thm41} is provided in Appendix \ref{Aidentification}. Briefly, the result is obtained using Law of Iterated Expectations to simplify the inner conditional expectation, followed by an application of the missing-at-random and positivity conditions.

\begin{theorem}\label{POEIF}
Under a nonparametric model for $P_{O'}$, the influence function of the mean counterfactual functional $\chi(P_{O'})$ is given by 
\begin{align*}
    \dot\chi_{P_{O'}} = \frac{S}{\pi(W)} \dot \chi _{P_f} - \frac{S-\pi(W)}{\pi(W)}\mathbb E_{P_{O'}}[\dot \chi_{P_f}\mid W, S=1]
\end{align*}
\end{theorem}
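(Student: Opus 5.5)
To prove Theorem~\ref{POEIF}, I would compute the pathwise derivative of $\chi(P_{O'})$ along regular one-dimensional submodels $P_{O',\varepsilon}$ and identify the result as an inner product with the displayed function. Under Assumption~\ref{assump6} the observed-data density factorizes as
\begin{align*}
p_{O'}(o') = p(W)\,\pi(W)^S\{1-\pi(W)\}^{1-S}\,p(Y,V\mid W)^S .
\end{align*}
Hence any score splits into three orthogonal pieces:
\begin{align*}
s_W(W) + \frac{S-\pi(W)}{\pi(W)\{1-\pi(W)\}}\,s_\pi(W) + S\, s_{Y,V\mid W}(Y,V\mid W),
\end{align*}
with $\mathbb E[s_{Y,V\mid W}\mid W]=0$. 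By Theorem~\ref{thm41} and the MAR assumption, $\chi(P_{O'})$ equals the full-data functional $\chi(P_f)$ evaluated at $p_f = p(W)\,p(Y,V\mid W)$. It therefore does not depend on $\pi$, and the derivative along the $\pi$-direction is zero.

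For the remaining directions, I would take the full-data EIF $\dot\chi_{P_f}$ (the five-block sum given above) as known. That gives
\begin{align*}
\frac{d}{d\varepsilon}\chi = \mathbb E_{P_f}\bigl[\dot\chi_{P_f}\{s_W + s_{Y,V\mid W}\}\bigr].
\end{align*}
The key step is to check that the candidate
\begin{align*}
D = \frac{S}{\pi}\dot\chi_{P_f} - \frac{S-\pi}{\pi}\,\mathbb E[\dot\chi_{P_f}\mid W,S=1]
\end{align*}
reproduces this derivative against each score component. I would handle the three components in turn using iterated expectations.

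\begin{itemize}
\item \emph{Score $s_W$.} Conditioning on $(Y,V,W)$ and using MAR, $\mathbb E[S\mid Y,V,W]=\pi(W)$. So the first term gives $\mathbb E[\dot\chi_{P_f}s_W]$, and the second term vanishes because $\mathbb E[S-\pi\mid W]=0$.
\item \emph{Score $S\,s_{Y,V\mid W}$.} The first term gives
\begin{align*}
\mathbb E\Bigl[\tfrac{S}{\pi}\dot\chi_{P_f}s_{Y,V\mid W}\Bigr] = \mathbb E\bigl[\dot\chi_{P_f}s_{Y,V\mid W}\bigr],
\end{align*}
again using $p(Y,V\mid W,S=1)=p(Y,V\mid W)$. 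The second term contributes zero because $\mathbb E[s_{Y,V\mid W}\mid W,S=1]=0$.
\item \emph{Score in the $\pi$-direction.} Using $\mathbb E[\dot\chi_{P_f}\mid W,S=1]=\mathbb E[\dot\chi_{P_f}\mid W]$ (this is MAR again), the two terms of $D$ cancel exactly. This matches the zero derivative.
\end{itemize}

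I would also verify that $\mathbb E[D]=0$ and, using Assumption~\ref{assump7}, that $D\in L_2$. Because the model for $P_{O'}$ is nonparametric (MAR imposes no restriction on the observed data law), the tangent space is saturated. The influence function is then unique, so $D$ is the efficient one.

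The main obstacle is justifying that $\dot\chi_{P_f}$ really is the full-data pathwise derivative, in particular the copula block $\phi_{2,P_f}$, so that the inner-product identity above holds. A related subtlety is that $\chi$ depends on $p(Y,V\mid W)$ only through the full-data law, so the submodel derivative has to be expressed in full-data scores. I would take the full-data EIF as established earlier, following \textcite{kennedy2024semiparametric}, and treat the lift to the observed data as the standard augmented IPW mapping of \textcite{robins1994estimation}, i.e.\ the projection onto the orthocomplement of the missingness-mechanism tangent space. The remaining steps are then mechanical.
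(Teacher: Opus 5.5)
Your proposal is correct, but it takes a verification route where the paper's proof is constructive. The paper starts from the IPW function $\frac{S}{\pi(W)}\dot\chi_{P_f}$. It takes this as a valid influence function on the strength of the general missing-data theory it cites. It then subtracts the projection onto the missingness tangent space $\Lambda_S=\{h(W)(S-\pi(W))\}$, with the optimal choice $h(W)=\pi(W)^{-1}\mathbb E[\dot\chi_{P_f}\mid W,S=1]$.

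You instead posit the AIPW form and verify the Riesz representation directly on the whole observed-data tangent space. You decompose that space via the MAR factorization into scores in $W$, the space $\Lambda_S$, and $S\,s_{Y,V\mid W}$. You then obtain efficiency from saturation of the nonparametric model, not from a variance-minimization argument.

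The pieces correspond one to one:
\begin{itemize}
\item Your first two bullets prove what the paper imports from the literature. They show that $\frac{S}{\pi}\dot\chi_{P_f}$ represents the derivative along submodels with $\pi$ held fixed, and that the augmentation term does not disturb this.
\item Your third bullet is exactly the orthogonality to $\Lambda_S$ that the paper's projection step is designed to produce.
\end{itemize}

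Your route is self-contained. It makes explicit where MAR enters, namely $\mathbb E[S\mid Y,V,W]=\pi(W)$ and $p(Y,V\mid W,S=1)=p(Y,V\mid W)$, and why $\chi$ has zero derivative in the $\pi$-direction. The paper's route explains where the augmentation term comes from. It also carries over to restricted full-data models with many full-data influence functions, but it rests entirely on the cited general theory.

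The caveat you flag applies equally to both arguments. Each needs $\dot\chi_{P_f}$ to be a genuine full-data gradient that is a function of $(Y,V,W)$ alone; otherwise $\frac{S}{\pi}\dot\chi_{P_f}$ is not observable even when $S=1$. The displayed $\phi_{3,P_f}$ evaluates $V_{z'}$ on the $Z=z$ arm, where $V_{z'}$ is not observed. That is the point to check when you take the full-data EIF as given.
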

\noindent Note that $\dot{\chi}_{P_{O'}}$ can be rigorously characterized as a function (or transformation) of the full-data EIF, $\dot{\chi}_{P_f}$. The theoretical foundation for this transformation comes from the seminal work of \textcite{robins1994estimation}. {The formal derivation adapting this general missing data framework to our specific estimand is provided in Supplementary Material Section \ref{Aidentification}.}

\subsection{Main Theoretical Result: Semiparametric Bernstein-Von Mises Theorem}
Having derived the form of the efficient influence function $\dot{\chi}_{P_{O'}}$ in Theorem \ref{POEIF}, we are now equipped to formalize the asymptotic behavior of the one-step corrected posterior. The central theoretical objective is to establish a Semiparametric Bernstein-Von Mises (BvM) theorem. This theorem asserts that, despite the use of flexible nonparametric priors for high-dimensional nuisance parameters, the one-step corrected posterior distribution converges to a Gaussian distribution centered at the efficient estimator with variance equal to the semiparametric efficiency bound.
\\This result is crucial because it bridges Bayesian computation with frequentist validity, justifying the use of our posterior credible sets as asymptotically valid confidence intervals. The proof of this theorem relies on controlling the higher-order error terms in the von Mises expansion. Specifically, we require that the initial EDPM posterior contracts around the truth at a sufficient rate to ensure that the second-order bias term vanishes asymptotically. These regularity conditions are formalized in the following assumption.
\begin{assumption}\label{assumption_abc}
    There exists a sequence of measurable subsets $\left(\tilde H_n\right)_n$ of $\mathcal P$ satisfying $\Pi\left(P_{O'}\in \tilde H_n\mid \mathcal O'_{1:n}\right) \rightarrow 1$ with:\\
    \indent (a)(No second-order bias)
    \begin{align*}
        \sup _{P_{O'}\in \tilde H_n} \left|\sqrt{n}r_2(P_{O',0}, P_{O'})\right| = \sup_{P_{O'}\in \tilde H_n}\left|\sqrt n \{\chi(P_{O',0}) - \chi(P_{O'}) - P_0[\dot{\chi_{P_{O'}}}]\}\right| \rightarrow 0
    \end{align*}
    \indent (b)($L_2$-convergence) 
    \begin{align*}
        \sup_{P\in \tilde H_n}\|\dot\chi_{P_{O'}} - \dot \chi_{P_{O', 0}}\|_{P_{0}} \rightarrow0
    \end{align*}
    \indent (c*)(i)(Convergence of $\dot \chi_{P_{O'}}$ under the empirical process)\begin{align*}
        \sup_{P\in \tilde H_n}\left|\mathbb G_n[\dot \chi_{P_{O'}} - \dot \chi _{P_{O',0}}]\right| \rightarrow 0\quad\text{in }P_0\text{-probability}
    \end{align*}
    \indent (ii)(Bounding of envelope functions) The sets $\{\dot \chi_{P_{O'}}:P_{O'}\in \tilde H_n\}$ have envelope functions $G_n$ (i.e. $|\dot \chi_{P_{O'}} (o')| \leq G_n(z)$ for all $P \in \tilde H_n$ and all $o'\in \mathcal O'$) satisfying \begin{align*}
        \lim_{C\rightarrow\infty} \limsup_{n \rightarrow\infty} P_0G_n^2\mathbf{1}_{G_n^2>C} = 0, \quad P_0G_n^4 = o(n).
    \end{align*}
\end{assumption}

\begin{theorem}\label{semibvm}
    Under Assumption~\ref{assumption_abc}, the one-step posterior satisfies the semiparametric BvM theorem, which is 
    \begin{align*}
        d_{BL}\left(\mathcal L _{\Pi\times \Pi _{BB}}(\sqrt n(\tilde \chi - \hat \chi_n)\mid O_{1:n}),\mathcal N(0, \|\dot \chi_{P_0}\|_{P_0}^2)\right) \rightarrow 0 \quad\text{in }P_0\text{-probability},
    \end{align*}
    where $\mathcal L_{\Pi\times\Pi_{BB}}\left(\sqrt n (\tilde \chi - \hat \chi_n)\mid O_{1:n}\right)$ denotes the posterior law of $\sqrt n (\tilde \chi - \hat \chi_n)$ and $\hat \chi _n$ is the asymptotically efficient sequence $\hat \chi_n = \chi(P_{O', 0})+\mathbb P_n[\dot \chi_{P_{O', 0}}]$.
\end{theorem}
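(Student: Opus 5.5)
The argument follows the template of \textcite{yiu2025semiparametric}. First I make the one-step posterior explicit. For a posterior draw $P_{O'}\sim\Pi(\cdot\mid\mathcal O'_{1:n})$ and independent Bayesian bootstrap weights $W_{n}=(W_{n1},\dots,W_{nn})\sim\mathrm{Dir}(n;1,\dots,1)$, set
\[
\tilde\chi=\chi(P_{O'})+\sum_{i=1}^n W_{ni}\,\dot\chi_{P_{O'}}(\mathcal O'_i),
\]
where $\dot\chi_{P_{O'}}$ is the observed-data influence function of Theorem \ref{POEIF}. The goal is to write $\sqrt n(\tilde\chi-\hat\chi_n)$ as a sum of a leading Bayesian-bootstrap term, whose conditional law is asymptotically $\mathcal N(0,\|\dot\chi_{P_0}\|^2_{P_0})$, and remainders that are $o_p(1)$ uniformly over $P_{O'}\in\tilde H_n$. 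The restriction to $\tilde H_n$ costs nothing, because its posterior mass tends to one and $d_{BL}$ is bounded.

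Next I decompose, writing $\mathbb P_n^W=\sum_i W_{ni}\delta_{\mathcal O'_i}$ and $\dot\chi_0=\dot\chi_{P_{O',0}}$. Adding and subtracting $\chi(P_{O',0})$ and $P_0\dot\chi_{P_{O'}}$, and using the definition of $r_2$ in Assumption \ref{assumption_abc}(a), gives
\begin{align*}
\sqrt n(\tilde\chi-\hat\chi_n)&=-\sqrt n\,r_2(P_{O',0},P_{O'})+\sqrt n(\mathbb P_n^W-P_0)\dot\chi_{P_{O'}}-\sqrt n(\mathbb P_n-P_0)\dot\chi_0\\
&=-\sqrt n\,r_2+\sqrt n(\mathbb P_n^W-\mathbb P_n)\dot\chi_0+\mathbb G_n[\dot\chi_{P_{O'}}-\dot\chi_0]+\sqrt n(\mathbb P_n^W-\mathbb P_n)[\dot\chi_{P_{O'}}-\dot\chi_0].
\end{align*}
The first term vanishes uniformly by (a), and the third vanishes uniformly in $P_0$-probability by (c*)(i). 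It remains to handle the second and fourth terms.

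The second term is the Bayesian bootstrap of a fixed function. Since $P_0\dot\chi_0=0$ and $\dot\chi_0\in L_2(P_0)$, a conditional Lindeberg CLT for exchangeable weighted sums (the Dirichlet weights are normalized exponentials) shows that $\sqrt n(\mathbb P_n^W-\mathbb P_n)\dot\chi_0$ converges conditionally, in $P_0$-probability, to $\mathcal N(0,P_0\dot\chi_0^2)$. The empirical variance of $\dot\chi_0$ converges to $\|\dot\chi_0\|_{P_0}^2$ by the law of large numbers, and the Lindeberg condition uses the uniform integrability in (c*)(ii).

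The fourth term is the main obstacle, because the function $\dot\chi_{P_{O'}}$ is random and is coupled to the data through the posterior. However, conditional on the data and on $P_{O'}$, the bootstrap weights are independent of $P_{O'}$. Hence this term has conditional mean zero and conditional variance of order
\[
\mathbb P_n(\dot\chi_{P_{O'}}-\dot\chi_0)^2-\bigl(\mathbb P_n(\dot\chi_{P_{O'}}-\dot\chi_0)\bigr)^2 .
\]
I bound $\mathbb P_n h^2$ uniformly over $h=\dot\chi_{P_{O'}}-\dot\chi_0$ with $P_{O'}\in\tilde H_n$ by writing $\mathbb P_n h^2=P_0h^2+(\mathbb P_n-P_0)h^2$. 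The piece $P_0h^2$ tends to zero by (b). For $(\mathbb P_n-P_0)h^2$, the envelope satisfies $|h|\le G_n+|\dot\chi_0|$, so its variance is at most $n^{-1}P_0(G_n+|\dot\chi_0|)^4=o(1)$ by $P_0G_n^4=o(n)$ in (c*)(ii). Chebyshev's inequality, conditional on the data, then makes the fourth term $o_p(1)$ in joint posterior-times-bootstrap probability. If a uniform law of large numbers for $h^2$ is not available from the envelope alone, I would bound $\sup_h(\mathbb P_n-P_0)h^2$ directly through the envelope $(G_n+|\dot\chi_0|)^2$ together with a truncation at level $C$, using the first part of (c*)(ii).

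Finally, I combine the pieces with Slutsky's lemma for $d_{BL}$: adding terms that are $o_p(1)$ under $\Pi\times\Pi_{BB}$ changes the bounded-Lipschitz distance by $o_p(1)$. This yields the claimed convergence. The fact that the posterior mass outside $\tilde H_n$ tends to zero completes the argument.
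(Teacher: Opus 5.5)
The paper does not prove this implication itself. It takes the one-step BvM theorem of \textcite{yiu2025semiparametric}, in its (c*) form, as given, and its appendix is devoted to constructing $\tilde H_n$ and verifying (a), (b) and (c*) for the EDPM--copula model. You instead reprove the implication. Your decomposition
\[
\sqrt n(\tilde\chi-\hat\chi_n)=-\sqrt n\,r_2+\sqrt n(\mathbb P_n^W-\mathbb P_n)\dot\chi_0+\mathbb G_n[h_P]+T_4,\qquad T_4=\sqrt n(\mathbb P_n^W-\mathbb P_n)h_P,\quad h_P=\dot\chi_{P_{O'}}-\dot\chi_0,
\]
is correct, and so is your treatment of the first three terms. (For the fixed function $\dot\chi_0\in L_2(P_0)$ the conditional Lindeberg condition holds automatically. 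Assumption (c*)(ii) is not needed there, and it does not provide an envelope for $\dot\chi_0$ in any case.)

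The gap is in $T_4$.
\begin{itemize}
\item Conditional Chebyshev gives $\Pi_{BB}(|T_4|>\epsilon\mid O_{1:n},P)\le\epsilon^{-2}\mathbb P_nh_P^2$. So you need $\int_{\tilde H_n}\mathbb P_nh_P^2\,d\Pi(P\mid O_{1:n})=o_{P_0}(1)$. Because the posterior is built from the same observations, this effectively means $\sup_{P\in\tilde H_n}\mathbb P_nh_P^2=o_{P_0}(1)$.
\item Your bound $\mathrm{Var}\bigl((\mathbb P_n-P_0)h^2\bigr)\le n^{-1}P_0(G_n+|\dot\chi_0|)^4$ holds for a single fixed $h$. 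It also tacitly needs $P_0\dot\chi_0^4<\infty$, which is not assumed. It therefore says nothing about the data-dependent $h_P$.
\item The fallback does not repair this. With $F_n=G_n+|\dot\chi_0|$, truncation at $C$ removes only $\mathbb P_nF_n^2\mathbf 1\{F_n^2>C\}$. It leaves $\sup_h\mathbb P_nh^2\mathbf 1\{F_n^2\le C\}$, which is a uniform law of large numbers over a bounded but otherwise uncontrolled class. No envelope condition can deliver that.
\end{itemize}

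The missing idea is to extract this uniformity from (c*)(i), which you used only for $\mathbb G_n[h_P]$.
\begin{itemize}
\item Uniform integrability of $F_n^2$ gives $E\max_iF_n(O_i)/\sqrt n\to0$. Hence the Hoffmann-J{\o}rgensen inequality upgrades $\sup_h|\mathbb G_nh|\to0$ from convergence in probability to convergence in mean.
\item Desymmetrization then gives $E^*\sup_h|n^{-1/2}\sum_i\varepsilon_i(h(O_i)-P_0h)|\to0$, with $\varepsilon_i$ Rademacher variables.
\item Khintchine's lower bound $E_\varepsilon|\sum_i\varepsilon_ia_i|\ge2^{-1/2}(\sum_ia_i^2)^{1/2}$, applied conditionally on the data, turns this into $\sup_h\mathbb P_n(h-P_0h)^2\to0$ in probability.
\item Since $\sup_h|P_0h|\to0$ by (b), this yields $\sup_{P\in\tilde H_n}\mathbb P_nh_P^2=o_{P_0}(1)$ and closes the argument.
\end{itemize}
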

\noindent As discussed in Section 2.1 of  \textcite{yiu2025semiparametric}, this justifies the use of central
credible sets as confidence regions. Detailed proof of assumption verifications and Theorem~\ref{semibvm} can be found in the Appendix Section~\ref{AsecBvM}.

\subsection{Discussion on Discrepant Sample Sizes}

A methodological consideration arises from the two distinct sample sizes used in our estimation procedure: the full sample size $n_0$ used for the Bayesian model-fitting, and the smaller sample size $n$ of complete cases used for the one-step correction. We argue this discrepancy ($n_0 > n$) is not a limitation but rather a beneficial feature of our hybrid approach. \par
The first stage, our Bayesian EDPM estimation, utilizes the full dataset of $n_0$ subjects. This leverages all available information, including data from subjects with incomplete observations on $Y$ and $V$, as the MCMC framework naturally handles missing data through imputation under the MAR Assumption~\ref{assump6}. This use of the larger $n_0$ sample yields more precise posterior estimates of the nuisance functions (e.g., $P(V|Z,C)$, $P(M|V,Z,C)$, and the missingness propensity $\pi(W)$).\par
The theoretical validity of the one-step correction hinges on satisfying the BvM assumptions, particularly the "no second-order bias" condition, Assumption~\ref{assumption_abc}(a).
This condition requires that the bias term $r_2$, which is composed of products of nuisance function errors, vanishes at a sufficient rate relative to the sample size $n$ of the correction (i.e., $\sqrt{n} r_2(P_0, P) \to 0$). Because our nuisance functions are estimated using the larger $n_0$ dataset, their posterior convergence rates are faster (i.e., determined by $n_0$, not $n$). This causes the bias term $r_2$ to diminish even more rapidly than the theory requires. Therefore, using the full $n_0$ sample for estimation strengthens the justification for Assumption~\ref{assumption_abc}(a) and improves the finite-sample performance of the debiasing step. The final asymptotic variance and the $\sqrt{n}$ scaling of the BvM theorem remain determined by $n$ (the size of the empirical process used for the correction), which is the appropriate statistical cost for the missing data.\par
\begin{remark}[Extension to Conditional Causal Effects]
    While our main theoretical results Theorem \ref{semibvm} are formulated for the marginal causal effect $\mathbb{E}[Y_{z,M_{z'}}]$, the one-step correction framework extends naturally to conditional causal effects $\mu_4(c) = \mathbb{E}[Y_{z,M_{z'}}\mid C=c]$.\par
    In practice, we primarily focus on $C$ being discrete covariates (e.g., race) or discretized versions of continuous variables (e.g., high BMI vs. low BMI groups). 
    When $C$ is treated as discrete (or grouped), the EIF-based correction simplifies because the component of the influence function corresponding to the marginal distribution of $C$ (typically denoted as $\phi_{5,P}$ in marginal settings) vanishes or becomes constant within the stratum. Consequently, the corrected posterior for these conditional effects inherits the same asymptotic normality and frequentist coverage guarantees as the marginal case, ensuring that our semiparametric BvM theorem remains valid for sub-group analyses.
\end{remark}

\subsection{Discussion on Single-World Case}
As we state at the beginning of this section, the single-world case is a simpler version of cross-world case. We can derive the EIF and define the nuisance parameters for $\chi(P_f) = \mathbb E[Y_{z,M_z}]$.
\begin{align*}
   \chi (P_f) & = \mathbb{E}[Y_{z,M_z}] 
     \\& =  \iiint\mathbb{E}(Y\mid M=m, V=v, Z=z, C=c){dF}_{M\mid V = v,Z=z, C=c}(m)
    {dF}_{V_z\mid Z=z,C=c}(v){dF}_{C}(c).
\end{align*}
Define the nuisance parameters as follows:
\begin{align*}
&e(c)=P(Z=z\mid C=c),\quad
g_{V_z}(v\mid c)=f_{V_z\mid Z=z,C}(v\mid c), \\ &g_M(m\mid v_z,c)=f_{M\mid V=v_z,Z=z,C=c}(m),\quad
\mu_1^z(m;v,z,c)=\mathbb{E}[Y\mid M=m,V_z=v,Z=z,C=c].
\end{align*}
Define the intermediate maps,
\[
\mu_2^z(v_z;c)=\int \mu_1^z(m;v,c)\,g_M(m\mid v_z,c)\,dm,\quad
\mu_3^z(c)=\int \mu_2^z(v_z;c)\,g_{V_z}(v_z\mid c)\,dv_z,\quad 
\chi(P)=\mathbb{E}[\mu_3^z(C)].
\]
\begin{align*}
&\phi_{1,P_f}=\frac{Z}{e(C)}\{Y-\mu_1^z(M;V_z,C)\},\quad \phi_{2,P_f}=\frac{Z}{e(C)}\{\mu_1^z(M;V_z,C) - \mu_2^z(V_z; C)\} ,\\& \phi_{3,P_f}=\frac{Z}{e(C)}\{\mu_2^z(V_z;C)-\mu_3^z(C)\}, \quad \phi_{4,P_f}=\mu_3^z(C)-\chi(P_f),\quad
\dot\chi_{P_f}=\sum_{j=1}^4\phi_{j,P_f}.
\end{align*}
Notice that $\dot{\chi}_{P_f}$ can be further simplified as $$\dot{\chi}_{P_f} = \frac{Z}{e(C)}\{Y-\mu_3^z(C)\} + \mu_3^z(C) - \chi(P_f).$$
We can trivially give the formula of the EIF of $\mathbb E[Y_{z,M_{z}}]$ for the observed data with the existence of missingness following Theorem~\ref{thm41}.
\begin{theorem}
    Under Assumption \ref{assump1}-\ref{assump7}, the mean counterfactual $\mathbb E _{P_{O'}}[Y_{z,M_z}]$ is defined by the functional 
\begin{align}\label{gformula2}
    \chi(P_{O'}) = \mathbb E_{P_{O'}}\left[\frac{S}{\pi(W)}\mu_3^z(C)\right].
\end{align}
\end{theorem}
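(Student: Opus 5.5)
The plan follows the proof of Theorem~\ref{thm41}, specialized to the single-world case $z=z'$. It has two parts. First, identify the causal quantity $\mathbb E[Y_{z,M_z}]$ with the full-data g-formula $\mathbb E_{P_f}[\mu_3^z(C)]$. Second, show that this full-data functional equals the inverse-probability-weighted observed-data expression $\mathbb E_{P_{O'}}[S\mu_3^z(C)/\pi(W)]$.

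\textbf{Step 1: causal identification.} This is the identification theorem of Section 2 with $z=z'$. As in the Remark following the G-computation algorithm, the copula step is skipped and $v'=v$. Concretely:
\begin{itemize}
\item Assumption~\ref{assump1} gives $\mathbb E[Y_{z,M_z}\mid C=c]=\mathbb E[Y_{z,M_z}\mid Z=z,C=c]$.
\item Conditioning on $V_z=v$ and using consistency (SUTVA) replaces $V_z$ by the observed $V$ on $\{Z=z\}$. This yields $\mathrm dF_{V\mid Z=z,C=c}$.
\item Conditioning further on $M_z=m$ and applying Assumption~\ref{assump2} replaces $\mathbb E[Y_{z,m}\mid M_z=m,V_z=v,Z=z,C=c]$ by $\mu_1^z(m;v,c)$.
\end{itemize}
Integrating successively over $m$, then $v$, then $c$ produces $\mu_2^z$, then $\mu_3^z$, and finally $\mathbb E_{P_f}[\mu_3^z(C)]$. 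Assumption~\ref{assump5} guarantees that these conditional laws are well defined. Assumptions~\ref{assump3} and~\ref{assump4} are not needed here, since no cross-world quantity appears.

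\textbf{Step 2: passage to the observed data.} Since $W=(M,Z,C)$ and $C\subset W$, $\mu_3^z(C)$ is $W$-measurable. By iterated expectations,
\[
\mathbb E_{P_{O'}}\!\left[\frac{S}{\pi(W)}\mu_3^z(C)\right]=\mathbb E\!\left[\frac{\mu_3^z(C)}{\pi(W)}\,\mathbb E[S\mid W]\right]=\mathbb E[\mu_3^z(C)].
\]
The weight is finite by Assumption~\ref{assump7}, and $\mathbb E[S\mid W]=\pi(W)$ by definition.

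It then remains to argue that the nuisances entering $\mu_3^z$ are functionals of $P_{O'}$. These are $\mu_1^z=\mathbb E[Y\mid M,V,Z,C]$, $g_M=p(m\mid v,z,c)$ and $g_{V_z}=p(v\mid z,c)$; the distribution of $C$ is always observed. Assumption~\ref{assump6} gives $S\perp(Y,V)\mid W$, so the joint law of $(Y,V)$ given $W$ equals its law given $(W,S=1)$, which is observable. Positivity (Assumption~\ref{assump7}) ensures that conditioning on $\{S=1\}$ is well defined on the support of $W$.

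\textbf{Main obstacle.} The only delicate point is $g_M$ and $g_{V_z}$, because they condition on $V$, which is only partially observed, and $M$ now sits in the conditioning set of the missingness mechanism. I would handle this by first recovering the full joint law of $(Y,M,V,Z,C)$:
\[
p(y,v,m,z,c)=p(y,v\mid w,S=1)\,p(w).
\]
From this joint density, $g_M$ and $g_{V_z}$ follow by Bayes' rule, and each factor on the right is observed.
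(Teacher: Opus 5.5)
Your proposal is correct and is essentially the paper's argument; the paper gives no separate proof but defers to the proof of Theorem~\ref{thm41} with $z=z'$, which cancels the weight through $\mathbb E[S\mid W]=\pi(W)$ and identifies $\mathbb E[\mu_3^z(C)]$ with $\mathbb E[Y_{z,M_z}]$ via the full-data g-formula. Your extra step recovering $p(y,v\mid w)=p(y,v\mid w,S=1)$ is exactly where Assumption~\ref{assump6} is actually needed (the cancellation itself holds by the definition of $\pi$, although the paper credits it to MAR), and in your Step~1 consistency alone yields $\mu_1^z$, so Assumption~\ref{assump2} is not actually used.
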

\noindent Also, we can still derive the EIF of the mean counterfactual function $\chi(P_{O'}) = \mathbb E_{P_{O'}}[Y_{z,M_{z}}]$ for the under the distribution $P_{O'}$ with the exact same derivation provided in Theorem~\ref{POEIF}.

\begin{remark}
    For single-world case where $z=z'$, the proof can be fully covered by that for cross-world case where $z\neq  z'$ shown in Appendix \ref{AsecBvM}, since we use the same EDPM model to fit and the structure of EIF is much simpler compared to cross-world case.
\end{remark}

\section{Simulation Studies}
To assess the performance of the proposed approach, we generated five data scenarios based partly on those introduced in \textcite{bae2024bayesian}. We generate a treatment variable from $Z\sim\text{Bernoulli}(0.5)$ for all scenarios. The configuration of baseline confounders distinguished Scenario 5 from the others; Scenarios 1 through 4 utilized two independent continuous variables generated from standard normal distributions, whereas Scenario 5 introduced a high-dimensional setting with 15 baseline confounders (9 binary and 6 continuous) to mimic complex real-world data structures.
\begin{table}[htbp]
\centering
\caption{Detailed simulation specifications based on the updated data generating mechanisms.}
\label{tab:sim_specs}
\begin{threeparttable}
\renewcommand{\arraystretch}{1.3}
\begin{tabular}{@{}ll>{\raggedright\arraybackslash}p{0.68\textwidth}@{}}
\toprule
\textbf{Variable} & \textbf{Scenario} & \textbf{Distribution / Model Specification} \\
\midrule

\multirow{3}{*}{Y}
& S1, S4, S5 
& $\sim 0.6\,\mathcal{N}\!\left(\beta_{1z}Z + \theta_{1}M + \eta_{1}V + \gamma_{1}^{T}C,\; 1.5^2\right) \newline + 0.4\,\mathcal{N}\!\left(\beta_{2z}Z + \theta_{2}M + \eta_{2}V + \gamma_{2}^{T}C,\; 0.5^2\right)$ \\

& S2 
& $\sim 0.6\,\mathcal{N}\!\left(\dots + \theta_{1z}(Z \times M) + \dots,\; 1.5^2\right) \newline + 0.4\,\mathcal{N}\!\left(\dots + \theta_{2z}(Z \times M) + \dots,\; 0.5^2\right)$ \\

& S3, S6 
& $\sim \mathcal{N}\!\left(\beta_z Z + \theta_1 M^2 + \theta_2 (M-0.4)_{+} + \eta_v V + \gamma_y^{T}C,\; 0.2^2\right)$ \\

\addlinespace

\multirow{2}{*}{M}
& S1, S2, S4, S5 
& $\sim \mathcal{SN}\!\left(\xi = \beta_{mz}Z + \eta_{mv}V + \gamma_m^{T}C,\; \omega=3,\; \alpha=10\right)$ \\

& S3, S6 
& $\sim \mathcal{SN}\!\left(\xi = \beta_{mz}Z + \eta_{mv}V + \gamma_m^{T}C,\; \omega=1,\; \alpha=7\right)$ \\

\addlinespace

\multirow{2}{*}{V}
& S1--S4 
& $\sim \mathcal{N}\!\left(\mu_{v,Z}(C),\; \sigma_Z^2\right)$ \quad with $\sigma_0^2=3,\ \sigma_1^2=10$ \\

& S5, S6 
& $\sim \mathrm{Gamma}\!\left(\mathrm{shape} = \log(1+\exp(\mu_{v,Z}(C))),\; \mathrm{scale}=1\right)$ \\

\addlinespace

Z
& All 
& $\sim \mathrm{Bernoulli}(0.5)$ \\

\addlinespace

\multirow{2}{*}{C}
& S1--S3, S5 
& $C=(C_1,C_2)^{T}$ with $C_1 \sim \mathcal{N}(0, 3^2),\; C_2 \sim \mathcal{N}(0, 4^2)$ \\

& S4, S6 
& $C \in \mathbb{R}^{15}$, mixed: $C_{1\text{--}9}\sim \mathrm{Bernoulli},\ C_{10\text{--}15}\sim \mathcal{MVN}(0, \Sigma_C)$ \\

\bottomrule
\end{tabular}
\begin{tablenotes}\footnotesize
\item \textit{Note:} For the mixture models in $Y$, the components are selected with probability $0.6$ and $0.4$, respectively. $\mathcal{SN}(\xi, \omega, \alpha)$ denotes the Skew-Normal distribution with location $\xi$, scale $\omega$, and shape parameter $\alpha$. $\mu_{v,Z}(C)$ represents the $Z$-specific linear predictor for the confounder $V$. $(M-0.4)_{+}$ represents the hinge function $\max(M-0.4, 0)$.
\end{tablenotes}
\end{threeparttable}
\end{table}

The generation of intermediate variables also varied to test specific model assumptions. The post-treatment confounder $V$ was generated conditional on the treatment and baseline covariates. In Scenarios 1, 2, 3, and 5, $V$ was drawn from a conditional normal distribution. To assess robustness to non-Gaussian specifications, Scenario 4 generated $V$ from a Gamma distribution. The mediator $M$ was generated from a linear model with normal errors across all scenarios, conditional on the treatment, baseline, and post-treatment confounders.

The outcome generation mechanism served as the primary differentiator for the remaining scenarios. Scenario 1 served as the baseline, where the outcome $Y$ was generated from a standard linear model with normally distributed errors. Scenario 2 incorporated a treatment-mediator interaction term $Z \times M$ into the outcome mean to assess the model's ability to capture effect modification. Scenario 3 introduced non-linearity through the inclusion of a quadratic term $M^2$ and a discontinuous threshold function. Finally, Scenarios 4 and 5 retained the linear outcome structure of the baseline scenario but focused on the complexities arising from the non-normal post-treatment confounder and high-dimensional baseline covariates, respectively.

The causal parameters of interest were
 the marginal NDE, NDE, and total effect. We considered the following EDPM specification,
\begin{equation}\label{EDPMsimulationdist}
    \begin{aligned}
Y_i\mid M_i, V_i, Z_i, C_i ; \ \bm\theta_i^y &\sim \mathcal N(\mathbb M_i\beta^y_i, \sigma^{y,2}_i)\\
M_i\mid V_i, Z_i, C_i ;\  \bm\theta_i^m &\sim \mathcal N(\mathbb V_i\beta^m_i, \sigma^{m,2}_i)\\
V_i\mid Z_i, C_i ; \ \bm\psi_i^v &\sim \mathcal N(\mathbb X_i\beta^v_i, \sigma^{v,2}_i)\\
Z_i ;\ \bm\psi_{i}^z&\sim\text{Bernoulli}(\pi^z_{i})\\
C_{i,q};\ \bm\psi_{i,q}^c&\sim \text{Bernoulli}(\pi^c_{i,q}), \  q=1,\dots, p_{c,1}\\
C_{i,q}; \ \bm\psi_{i,q}^c& \sim \mathcal N(\mu_{i,q}^c, \sigma^{c,2}_{i,q}),\ q=p_{c,1}+1, \dots, p_{c,1}+p_{c,2}\\
(\bm\theta_i, \bm \psi_i)\mid G&\sim G\\
G&\sim \text{EDP}(\alpha^\theta, \alpha^{\psi|\theta}, G_0),
\end{aligned}
\end{equation}
where $\bm \theta_i^y = (\beta_i^y,\sigma^{y,2}_i),\ \bm\theta_i^m = (\beta_i^m,\sigma^{m,2}_i),\ \bm\psi^v_i = (\beta_i^v,\sigma^{v,2}_i), \ \bm\psi_i^z = \pi_i^z, \ \bm\psi^c_i = (\pi_i^c,\mu_i^c,\sigma^{c,2}_i),\ \bm\theta_i = (\bm\theta_i^y,\bm\theta_i^m)$ and $\bm\psi_i = (\bm\psi_i^v,\bm\psi^z_i,\bm\psi^c_i)$. Here, $\mathbb M_i = (1,M_i,V_i,Z_i,C_i)$ is the design matrix. Similarly, we define the design matrices $\mathbb V_i = (1,V_i, Z_i,C_i),\ \mathbb X_i= (1,Z_i, C_i)$ for each regression. $p_{c,1}$ and $p_{c,2}$ represent the number of binary and continuous confounders, respectively. We used the following priors for base
measure specified in Equation \ref{EDPMsimulationdist}: the base measures for $\bm\theta_i^y$ are $G_0^{\bm\beta^y} = \mathcal N(a_{\bm\beta^y},c_{\bm\beta^y}\mathbf B_{\bm\beta^y})$ and $G_0^{\sigma^{y,2}} = \text{Scale-Inv-}\chi^2(a_{\sigma^y}, \mathbf B_{\sigma^y})$. We set $a_{\bm\beta^y}$ and $\mathbf B_{\bm\beta^y}$ to be the maximum likelihood estimates from a linear regression of $Y$ on $\mathbb M$ \parencite{roy2018bayesian}. We set $a_{\sigma^y} = 1$ and $\mathbf B_{\sigma^y} = \text{MSE}$ from the regression. We use the similar priors for $M$ and $Y$. For the confounders, we assume conjugate priors $G^{\pi^z}_0 = G_0^{\pi^c_q} = \text{Beta}(a_\pi,\mathbf B_{\pi})$ with $a_\pi = \mathbf B_{\pi} = 1$ for binary confounder parameters, where $q=1,\dots,p_{c,1}$. For the continuous confounder parameters, we assume $G^{\mu_q^c\mid \sigma^{c,2}_q}_0 = \mathcal N(a_\mu,\frac {1}{\mathbf B_\mu}\sigma^{c,2}_q)$ and $G_0^{\sigma^{c,2}_q} = \text{Scale-Inv-}\chi^2(a_{\sigma^c}, \mathbf B_{\sigma^c})$ for $q=p_{c,1}+1,\dots, p_{c,1}+p_{c,2}$. For the concentration parameters $\alpha^\theta$ and $\alpha^{\psi|\theta}$, we assume $\text{Gamma}(1,1)$ priors. We used a burn-in of $10,000$ iterations followed by $50,000$ samples for posterior inference across all five scenarios. The sensitivity parameter $\rho$ was assigned a $\text{Unif}(0,1)$ prior. For each scenario, we generated replicated data sets with sample sizes of $n = 500$ and $ = 2500$. To strictly evaluate the theoretical benefits of our proposed methodology, we computed both the standard Bayesian uncorrected plug-in estimator and the one-step corrected estimator for each replicate. We reported and compared the bias, mean squared error (MSE), average length of the $95\%$ credible intervals (CI), and empirical coverage probability for both the uncorrected and corrected estimates.

\subsection{Results}
Table \ref{tab_sim} presents the performance of the standard G-computation (plug-in) estimator and the proposed one-step corrected estimator for Scenarios 1-6 with a sample size of $N = 500$. The results highlight the theoretical advantages of the one-step correction across all estimated causal estimands, which include the natural indirect effect (NIE), the natural direct effect (NDE), and the average treatment effect (ATE).

While both estimators demonstrate relatively low bias across the target parameters, the standard G-computation plug-in method consistently suffers from under-coverage. For instance, in Scenario 1, the empirical coverage probability (CP) for the plug-in estimator of the ATE falls to 84.2$\%$, well below the nominal 95$\%$ level. In contrast, the one-step correction successfully restores valid frequentist properties, achieving a CP of 95.6$\%$ for the ATE in the same scenario.

This improvement in coverage naturally corresponds to wider 95$\%$, credible intervals (CIl). In Scenario 2, which introduces a treatment-mediator interaction, the average CI length for the one-step corrected NIE is 9.017, compared to 3.809 for the plug-in approach. This widening reflects the one-step estimator's ability to appropriately account for the uncertainty that the naive plug-in method ignores, ensuring that the empirical coverage reaches the nominal levels. Overall, the simulations confirm that while the plug-in estimator is efficient, the one-step correction is strictly necessary to guarantee valid statistical inference in finite samples.
\begin{table}[htbp]
\centering
\caption{Simulation results based on $500$ replicates for $Y(z,M(z'))$, NIE, NDE, and ATE under Scenarios 1 and 2 when sample size $N = 500$ }
\label{tab_sim}

{
\setlength{\tabcolsep}{3pt}
\renewcommand{\arraystretch}{0.88}
\begin{tabular}{llc ccc ccc}
\toprule
& & & \multicolumn{3}{c}{G-computation (Plug-in)} & \multicolumn{3}{c}{One-step Correction} \\
\cmidrule(lr){4-6} \cmidrule(lr){7-9}
Scenario & Target & True & Bias & CIl & CP & Bias & CIl & CP  \\
\midrule

\multirow{6}{*}{\textbf{S1}} 
& $Y(1, M(1))$ & 6.532 & -0.065 & 3.377 & 0.826 & -0.074 & 5.103 & 0.952 \\
& $Y(1, M(0))$ & 5.342 & 0.018  & 4.614 & 0.878 & 0.014  & 6.217 & 0.978 \\
& $Y(0, M(0))$ & 4.392 & -0.040 & 2.561 & 0.900 & -0.048 & 3.241 & 0.952 \\
\cmidrule(lr){2-9}
& NIE          & 1.190 & -0.083 & 3.211 & 0.920 & -0.088 & 7.921 & 1.000 \\
& NDE          & 0.950 & 0.058  & 5.193 & 0.870 & 0.062  & 7.015 & 0.974 \\
& ATE          & 2.140 & -0.025 & 4.141 & 0.842 & -0.027 & 6.025 & 0.956 \\

\midrule
\multirow{6}{*}{\textbf{S2}} 
& $Y(1, M(1))$ & 8.845 & -0.086 & 4.244 & 0.861 & -0.094 & 5.822 & 0.949 \\
& $Y(1, M(0))$ & 6.954 & 0.179  & 5.586 & 0.904 & 0.198  & 7.029 & 0.968 \\
& $Y(0, M(0))$ & 4.391 & -0.037 & 2.700 & 0.893 & -0.035 & 3.465 & 0.945 \\
\cmidrule(lr){2-9}
& NIE          & 1.890 & -0.266 & 3.809 & 0.878 & -0.292 & 9.017 & 1.000 \\
& NDE          & 2.563 & 0.216  & 6.131 & 0.872 & 0.233  & 7.836 & 0.972 \\
& ATE          & 4.453 & -0.050 & 5.010 & 0.866 & -0.059 & 6.802 & 0.947 \\
\bottomrule
\end{tabular}

\vspace{0.3em}
\begin{minipage}{0.95\textwidth}
\scriptsize
\textbf{Note}: CIl = average length of 95\% credible interval; CP = empirical coverage probability.
\end{minipage}
}

\end{table}

\section{Application: Rural LITE Trial}

We used the proposed method to assess mediation in the Rural LITE trial \parencite{perri2014comparative}. The Rural LITE trial was designed to examine the effects and costs of behavioral weight-loss treatment. Subjects $(N=612)$ were randomized to 1 of 4 treatment arms: (1) control $(N=169)$; (2) low dose $(N=148)$; (3) moderate dose $(N=134)$ and (4) high dose $(N=161)$. Following the analytical strategy of previous studies, we collapsed these arms into two binary treatment groups for comparison: the Low Dose (LD) group, comprising the control and low-dose arms $(Z=0)$, and the High Dose (HD) group, consisting of the moderate and high-dose arms $(Z=1)$. The resulting sample sizes were $N=317$ for the LD group and $N=295$ for the HD group.

The primary interest lies in exploring the attendance rate as a mediator ($M$) of the treatment effect on weight loss. The outcome of interest ($Y$) is defined as the change in body weight from baseline (month 0) to the final follow-up at month 24. To account for intermediate confounding induced by the treatment, we utilize the weight change from baseline to month 6 as the post-treatment confounder ($V$). Additionally, we adjust for a vector of baseline confounders ($C$) including sex, age, race, and baseline body mass index (BMI).

Subject attrition occurred during the follow-up period, resulting in 121 subjects with missing primary outcome data (month 24) and 56 subjects with incomplete information on the post-treatment confounder (month 6). Consistent with our modeling framework, we handled these missing values under the assumption of ignorable missingness (MAR) via the MCMC data augmentation algorithm described in Section 3.

To evaluate the robustness of our results regarding the non-identifiable dependence structure of the counterfactual post-treatment confounders, we conducted a sensitivity analysis on the Gaussian copula correlation parameter $\rho$. We restricted our examination to positive values of $\rho$, under the plausible assumption that an individual's potential 6-month weight loss under the treatment condition would be positively correlated with their potential weight loss under the control condition. A key advantage of this copula-based approach is that it models this dependence structure without imposing any assumptions on the marginal distributions of the 6-month weight loss under each treatment arm.

\subsection{Results}
The results of the causal mediation analysis for the Rural LITE trial are summarized in Table \ref{tab_real}. We estimated the NIE, NDE, and ATE under three different prior specifications for the Gaussian copula correlation parameter $\rho$ to assess the sensitivity of our findings to the unidentifiable dependence structure between the potential counterfactual post-treatment confounders $V_z, \ V_{z'}$. The application of our proposed methodology to the Rural LITE trial yields clear clinical insights regarding the mechanisms of behavioral weight loss counseling. Our analysis reveals that the overall reduction in 24-month body weight is driven almost entirely by the natural direct effect of the high-dose intervention. Conversely, we find no strong evidence of a natural indirect effect (NIE) mediated through patient attendance rates. Within this specific clinical context, early weight loss at six months acts as a critical post-treatment confounder. This early weight change is directly affected by the initial treatment dose, influences subsequent session attendance as a powerful motivator, and directly impacts the final weight outcome. By appropriately adjusting for this intermediate confounding, our method successfully disentangles these pathways. The results demonstrate that the superior long-term efficacy of high-dose counseling is likely driven by direct mechanisms rather than mere attendance frequency. Furthermore, the magnitude and statistical significance of these estimated causal effects remain highly robust across the sensitivity analysis of the copula correlation parameter $\rho$, confirming that our clinical conclusions do not hinge upon unidentifiable counterfactual dependencies.

Beyond these clinical findings, the results from this real-world application heavily underscore the methodological importance of our proposed one-step correction. Consistent with the findings from our simulation studies, we observe a noticeable widening in the lengths of the 95$\%$ credible intervals for the one-step corrected estimates when compared to the standard, uncorrected G-computation averages. This expansion of the interval length is a crucial and expected theoretical behavior. As demonstrated in our simulations, standard Bayesian nonparametric plug-in estimators frequently underestimate true parameter variability, which leads to severe under-coverage. The one-step correction actively rectifies this issue by appropriately expanding the credible intervals to capture the full sampling variability of the targeted causal functionals. Consequently, the wider credible intervals observed in the Rural LITE trial are not a limitation of the method, but rather a strict necessity to guarantee nominal coverage probabilities and to ensure the overall theoretical validity of the statistical inference in complex empirical settings.
\begin{table}[htbp]
\centering
\caption{Results of Rural LITE Trial for natural indirect effects (NIE), natural direct effects (NDE), and average treatment effect (ATE).}
\label{tab_real}
\begin{threeparttable}
\resizebox{\textwidth}{!}{%
\begin{tabular}{llccccccccc}
\toprule
\multicolumn{2}{c}{\textbf{CE}} 
& \multicolumn{3}{c}{$\rho \sim \mathrm{Tri}(0,1,1)$}
& \multicolumn{3}{c}{$\rho \sim \mathrm{Unif}(0,1)$}
& \multicolumn{3}{c}{$\rho = 0$} \\
\cmidrule(lr){3-5} \cmidrule(lr){6-8} \cmidrule(lr){9-11}
& 
& \textbf{Est.} & \multicolumn{2}{c}{\textbf{95\% CI}}
& \textbf{Est.} & \multicolumn{2}{c}{\textbf{95\% CI}}
& \textbf{Est.} & \multicolumn{2}{c}{\textbf{95\% CI}} \\
\midrule

\multirow{3}{*}{Gcomp Avg}
& NIE & $-0.16$ & $-0.77$ & $0.42$ 
& $-0.15$ & $-0.64$ & $0.25$ 
& $-0.06$ & $-0.54$ & $0.38$ \\
& NDE & $-3.26$ & $-4.70$ & $-1.81$ 
& $-3.03$ & $-4.42$ & $-1.78$ 
& $-3.01$ & $-4.37$ & $-1.76$ \\
& ATE & $-3.42$ & $-4.88$ & $-1.85$ 
& $-3.19$ & $-4.52$ & $-1.88$ 
& $-3.07$ & $-4.47$ & $-1.72$ \\

\midrule

\multirow{3}{*}{1 Step Avg}
& NIE & $-0.08$ & $-1.92$ & $1.87$ 
& $-0.16$ & $-2.35$ & $1.96$ 
& $-0.10$ & $-1.72$ & $1.56$ \\
& NDE & $-3.51$ & $-5.49$ & $-1.69$
& $-3.41$ & $-5.58$ & $-1.55$ 
& $-3.40$ & $-5.10$ & $-1.61$ \\
& ATE & $-3.59$ & $-5.60$ & $-1.69$ 
& $-3.58$ & $-5.40$ & $-2.00$ 
& $-3.50$ & $-5.36$ & $-1.82$ \\

\bottomrule
\end{tabular}%
}
\begin{tablenotes}
\footnotesize
\item 
\end{tablenotes}
\end{threeparttable}
\end{table}

\section{Discussion}
In this paper, we addressed the challenging problem of estimating causal mediation effects in the presence of post-treatment confounding and missingness of outcome as well as the post-treatment confounder. We proposed a novel framework that combines an Enriched Dirichlet Process Mixture (EDPM) model with a one-step posterior correction. Our theoretical derivations and empirical results consistently demonstrate that while standard Bayesian nonparametric methods offer immense flexibility, they require targeted corrections to ensure valid frequentist inference.

The primary contribution of this work lies in providing a theoretically grounded and practically implementable combination of Bayesian nonparametric approach and semiparametric correction for complex causal mediation analysis. By theoretically validating our estimator through the semiparametric Bernstein-von Mises (BvM) theorem, we established that the one-step correction bridges the gap between flexible Bayesian density estimation and frequentist reliability. This methodology provides several distinct advantages. It offers the flexibility to model complex, non-linear data structures without parametric restrictions, it successfully adjusts for post-treatment confounders, and it significantly improves frequentist properties compared to standard uncorrected estimators.

The implications of both our simulation studies and the Rural LITE trial application clearly illustrate the necessity of the one-step correction. Standard Bayesian G-computation estimators often suffer from severe under-coverage because they fail to capture the full sampling variability of the targeted causal functional. Our correction step successfully calibrates the posterior distribution, appropriately widening the credible intervals to achieve nominal coverage while maintaining the low bias characteristic of flexible Bayesian modeling. 

Furthermore, our application demonstrated the robustness of these estimates to unidentifiable counterfactual dependencies. By employing a Gaussian copula to model the joint distribution of the potential intermediate confounders, we were able to formally parameterize this unidentifiable dependence using the sensitivity parameter $\rho$. The stability of the natural direct and indirect effect estimates across multiple prior specifications for $\rho$ strongly reinforces the clinical conclusions drawn from the trial data, showing that the findings are not artifacts of hidden structural assumptions.

Despite its theoretical and empirical advantages, our approach has several limitations. First, like all causal mediation analyses, our framework relies on untestable identification assumptions. While the Gaussian copula provides a transparent mechanism for sensitivity analysis, it inherently imposes a specific parametric structure on the unobservable dependence between counterfactuals. Second, the computational burden associated with Markov chain Monte Carlo sampling for the EDPM is substantial, particularly as the dimensionality of the covariate space increases. Third, practical implementation requires selecting truncation levels for the stick-breaking representations, which can influence computational efficiency and approximation accuracy. Finally, Bayesian nonparametrics can be sensitive to the choice of base measures and hyperparameters. However, we note that the one-step correction actively mitigates this sensitivity for the targeted causal functionals, rendering the final estimates much more robust to prior misspecification than the uncorrected plug-in estimates.

There are several promising avenues for future research. One critical direction is refining how the methodology handles missing data. While our current application relies on coarsened data and missing at random assumptions, future work should move beyond coarsened missingness indicators. Specifically, it would be highly valuable to explicitly model the detailed, joint missingness patterns of the post-treatment confounder and the final outcome. Researchers could address more complex, non-monotone missing data mechanisms under a broader range of identifying assumptions by explicitly mapping the four distinct observation states of these variables. These specific states include both variables being observed $(1,1)$, missing the post-treatment confounder $(0,1)$, missing the outcome $(1,0)$, and missing both variables $(0,0)$. Additionally, the one-step correction framework developed here can be extended to other challenging causal inference scenarios. A natural extension is the evaluation of surrogate endpoints, where one could adapt the one-step correction to estimate the average treatment effect on a surrogate marker and subsequently use it to predict the treatment effect on a final, distal clinical outcome. Expanding this methodology to such frameworks would further enhance the toolkit available for robust, efficient causal inference in clinical trials.

\newpage
\printbibliography

\clearpage
\appendix
\section*{Supplementary material for `` A case study of causal mediation using Bayesian nonparametrics and semiparametric corrections ''}
\section{Identification}
\label{Aidentification}

\textcite{hong2023posttreatment} prove the following under assumptions \ref{assump1} and \ref{assump3}.
\begin{theorem}\label{athm1}
   Under assumptions \ref{assump1} and \ref{assump3},
   \begin{align*}
       \mathbb{P}&(M_{z'} = m' \mid V_z = v, Z = z, C = c)\\=
&\int
\mathbb{P}(M_{z'} = m' \mid V_{z'} = v', Z = z', C = c)\,
\mathbb{P}(V_{z'} = v' \mid V_z = v, Z = z, C = c)\,dv'.   \end{align*}
\end{theorem}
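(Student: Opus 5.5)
We will prove the identity by conditioning on the cross-world confounder \(V_{z'}\) and then removing, one at a time, the extra conditioning information, using Assumption~\ref{assump3} and Assumption~\ref{assump1}.

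\textbf{Step 1 (law of total probability).} Start from the left-hand side, conditioning throughout on \(V_z=v\), \(Z=z\), \(C=c\), and integrate over \(V_{z'}\):
\begin{align*}
\mathbb{P}(M_{z'}=m'\mid V_z=v,Z=z,C=c)=\int \mathbb{P}(M_{z'}=m'\mid V_{z'}=v',V_z=v,Z=z,C=c)\,\mathbb{P}(V_{z'}=v'\mid V_z=v,Z=z,C=c)\,dv'.
\end{align*}
The second factor already matches the right-hand side. It is well defined because the Gaussian copula of Assumption~\ref{assump4} gives a joint density for \((V_z,V_{z'})\) given \(C\); we only need that it exists.

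\textbf{Step 2 (dropping \(V_z\) and switching \(Z\)).} Here the conditioning is on \(Z=z\), whereas Assumption~\ref{assump3} is stated conditional on \(Z=z'\). We therefore first use Assumption~\ref{assump1}: the vector \((M_{z'},V_z,V_{z'})\) is independent of \(Z\) given \(C\). Hence the conditional law of \(M_{z'}\) given \((V_{z'},V_z,C)\) is the same whether we condition on \(Z=z\), on \(Z=z'\), or on no value of \(Z\). This gives
\begin{align*}
\mathbb{P}(M_{z'}=m'\mid V_{z'}=v',V_z=v,Z=z,C=c)=\mathbb{P}(M_{z'}=m'\mid V_{z'}=v',V_z=v,Z=z',C=c).
\end{align*}
Now Assumption~\ref{assump3}, \(M_{z'}\perp V_z\mid V_{z'},Z=z',C\), lets us drop \(V_z\):
\begin{align*}
\mathbb{P}(M_{z'}=m'\mid V_{z'}=v',V_z=v,Z=z',C=c)=\mathbb{P}(M_{z'}=m'\mid V_{z'}=v',Z=z',C=c).
\end{align*}
Substituting into Step~1 yields the claimed formula.

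\textbf{Main obstacle.} The delicate point is the direction of the switch in \(Z\) in Step~2. Assumption~\ref{assump1} as written lists \(\{Y_{z,m},M_z,M_{z'},V_z,V_{z'}\}\perp Z\mid C\). We must read this as \emph{joint} independence of the whole vector from \(Z\); independence of each component separately would not let us transfer conditional distributions of \(M_{z'}\) given \((V_z,V_{z'})\) across values of \(Z\).

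We also need the conditional densities to exist and all conditioning events to have positive density. For this we invoke the positivity of Assumption~\ref{assump5} and the copula construction. With these in place, the conditional probabilities can be taken as versions of conditional densities, and the equalities above hold for almost every \((v,v',c)\), which suffices for the integral identity.
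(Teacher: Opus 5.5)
Your proof is correct. The paper does not prove this result itself: it states Theorem~\ref{athm1} in the appendix and attributes it to \textcite{hong2023posttreatment} without giving an argument. Your derivation is the natural one and supplies that missing step. It runs as follows: total probability over $V_{z'}$ given $(V_z, Z=z, C)$; then Assumption~\ref{assump1}, read as joint independence of $(M_{z'},V_z,V_{z'})$ from $Z$ given $C$, to replace $Z=z$ by $Z=z'$ in the conditioning; then Assumption~\ref{assump3} to drop $V_z$.

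Your remark that componentwise independence in Assumption~\ref{assump1} would not suffice is correct and worth keeping. On the regularity side, two small points:
\begin{itemize}
\item Assumption~\ref{assump4} is not really needed. Step~1 holds with the integral taken against the conditional law $\mathrm{d}F_{V_{z'}\mid V_z=v,Z=z,C=c}(v')$; the copula only justifies writing that law as a density, as the statement does.
\item Positivity of both $P(Z=z\mid C=c)$ and $P(Z=z'\mid C=c)$ is genuinely required, simply for the conditional probabilities on the two sides to be defined. Invoking Assumption~\ref{assump5} there is therefore appropriate, even though the theorem lists only Assumptions~\ref{assump1} and~\ref{assump3}.
\end{itemize}
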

\noindent
This theorem implies we can identify the conditional probability of a certain mediator value, $M_{z'} = m'$, under the counterfactual condition $Z = z$ via marginalizing out the conditional distribution, $V_{z'} \mid V_z = v, Z = z, C = c$. Note that the conditional
distribution, $V_{z'} \mid V_z = v, Z = z, C = c$, is partially identified by \ref{assump4}.\\
The following theorem provides identification of the NIE and NDE.

\begin{theorem}   
Under assumptions \ref{assump1}-\ref{assump4}, NIE and NDE can be identified: for
$z \neq z'$ and $v \neq v'$,

\begin{align*}
    \mathbb{E}\!\left[Y_{z,M_{z'}} \mid V_z = v, C = c\right]
=
\iint
\mathbb{E}\!\left[Y_{z,m'} \mid M_z = m', V_z = v, Z = z, C = c\right]\,
\mathrm{d}F_{M_{z'} \mid V_{z'} = v', Z = z', C = c}(m') \\ 
\mathrm{d}F_{V_{z'} \mid V_z = v, Z = z, C = c}(v').
\end{align*}
\end{theorem}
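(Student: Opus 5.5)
We would prove the identity by conditioning on the cross-world mediator value, splitting the counterfactual outcome's expectation into an outcome part and a mediator part, and identifying each separately. Conditioning on $M_{z'}$ and using the tower property, we write
\begin{align*}
\mathbb E\left[Y_{z,M_{z'}}\mid V_z=v,C=c\right]=\int \mathbb E\left[Y_{z,m'}\mid M_{z'}=m',V_z=v,C=c\right]\,\mathrm dF_{M_{z'}\mid V_z=v,C=c}(m').
\end{align*}
Here we have used that on the event $\{M_{z'}=m'\}$ the composite potential outcome $Y_{z,M_{z'}}$ equals $Y_{z,m'}$; this follows from the consistency and no-multiple-versions clauses of SUTVA. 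By Assumption~\ref{assump1}, $Z$ is independent of all potential variables given $C$, so we may add $Z=z$ to every conditioning set without changing any of these conditional laws.

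The next step treats the outcome factor. By Assumption~\ref{assump2}, $Y_{z,m'}\perp\{M_z,M_{z'}\}\mid V_z=v,Z=z,C=c$. This allows us to replace conditioning on $M_{z'}=m'$ first by no conditioning on the mediator, and then by conditioning on $M_z=m'$:
\begin{align*}
\mathbb E\left[Y_{z,m'}\mid M_{z'}=m',V_z=v,Z=z,C=c\right]=\mathbb E\left[Y_{z,m'}\mid M_z=m',V_z=v,Z=z,C=c\right].
\end{align*}
The right-hand side is the integrand appearing in the statement. By consistency it equals the observed regression $\mathbb E[Y\mid M=m',V=v,Z=z,C=c]$. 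A positivity remark is needed so that this conditional expectation is well defined at $(m',v)$.

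For the mediator factor, we apply Theorem~\ref{athm1}. Once $Z=z$ has been inserted via Assumption~\ref{assump1}, that theorem gives
\begin{align*}
\mathrm dF_{M_{z'}\mid V_z=v,Z=z,C=c}(m')=\int \mathrm dF_{M_{z'}\mid V_{z'}=v',Z=z',C=c}(m')\,\mathrm dF_{V_{z'}\mid V_z=v,Z=z,C=c}(v').
\end{align*}
The underlying argument is short. Integrate over $V_{z'}$, use Assumption~\ref{assump3} to drop $V_z$ from the law of $M_{z'}$ given $V_{z'}$, and use Assumption~\ref{assump1} to switch the conditioning from $Z=z$ to $Z=z'$. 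Substituting this into the outer integral and applying Fubini (justified by integrability of $Y$) yields the double integral in the statement.

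It remains to note that every piece is a functional of the observed-data law. The law of $M_{z'}$ given $V_{z'},Z=z',C$ is that of $M\mid V,Z=z',C$. The cross-world conditional $V_{z'}\mid V_z=v,Z=z,C=c$ is pinned down by Assumption~\ref{assump4}: the observed marginals $F_{V\mid Z=z,C}$ and $F_{V\mid Z=z',C}$ together with the copula parameter $\rho$ determine it. Integrating over $F_{V_z\mid C}$ and $F_C$ then identifies NIE and NDE.

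The main obstacle is the interchange of the two conditionings in the outcome step, namely moving from conditioning on $M_{z'}=m'$ to conditioning on $M_z=m'$ in a cross-world setting. This requires reading Assumption~\ref{assump2} as joint independence of $Y_{z,m'}$ from the pair $(M_z,M_{z'})$, together with the overlap of the supports of $M_z$ and $M_{z'}$ given $(V_z,Z,C)$. I would state this support condition explicitly rather than leave it implicit.
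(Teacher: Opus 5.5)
Your proposal is correct and follows the paper's proof essentially line for line: the tower property over $M_{z'}$, Assumption~\ref{assump1} to insert $Z=z$, Assumption~\ref{assump2} to swap the conditioning event $M_{z'}=m'$ for $M_z=m'$, and Theorem~\ref{athm1} for the mediator law before combining the integrals. The points you add are left implicit in the paper and are worth stating: consistency linking the integrand to the observed regression, Fubini, and the support condition that the law of $M_{z'}$ given $(V_z,Z=z,C)$ be absolutely continuous with respect to that of $M_z$. Note also that the swap needs only the two separate independences $Y_{z,m'}\perp M_{z'}$ and $Y_{z,m'}\perp M_z$ given $(V_z,Z=z,C)$, which the joint form of Assumption~\ref{assump2} already implies.
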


\begin{proof}
\begin{align*}
\mathbb{E} &\left[Y_{z,M_{z'}} \mid V_z = v, C = c\right]
=
\int
\mathbb{E}\!\left[Y_{z,m'} \mid M_{z'} = m', V_z = v, C = c\right]\,
\mathrm{d}F_{M_{z'} \mid V_z = v, C = c}(m')
\\
&\overset{(\mathrm{A1})}{=}
\int
\mathbb{E}\!\left[Y_{z,m'} \mid M_{z'} = m', V_z = v, Z = z, C = c\right]\,
\mathrm{d}F_{M_{z'} \mid V_z = v, Z = z, C = c}(m')
\\
&\overset{(\mathrm{A2})}{=}
\int
\mathbb{E}\!\left[Y_{z,m'} \mid M_z = m', V_z = v, Z = z, C = c\right]\,
\mathrm{d}F_{M_{z'} \mid V_z = v, Z = z, C = c}(m')
\\
&\overset{(*)}{=}
\int
\mathbb{E}\!\left[Y_{z,m'} \mid M_z = m', V_z = v, Z = z, C = c\right]
\int
\mathrm{d}F_{M_{z'} \mid V_{z'} = v', Z = z', C = c}(m')\,
\mathrm{d}F_{V_{z'} \mid V_z = v, Z = z, C = c}(v')
\\
&=
\iint
\mathbb{E}\!\left[Y_{z,m'} \mid M_z = m', V_z = v, Z = z, C = c\right]\,
\mathrm{d}F_{M_{z'} \mid V_{z'} = v', Z = z', C = c}(m')\,
\mathrm{d}F_{V_{z'} \mid V_z = v, Z = z, C = c}(v').
\end{align*}
where $(*)$ follows from Theorem \ref{athm1}.
\end{proof}

\begin{theorem}
Under Assumption \ref{assump1}-\ref{assump7}, the mean counterfactual $\mathbb E_{P_{O'}}[Y_{z,M_{z'}}]$ is identified by the functional 
\begin{align}\label{gformula}
    \chi(P_{O'}) = \mathbb E_{P_{O'}}\left[\frac{S}{\pi(W)}\mu_4(C)\right],
\end{align}
where $W=(M,Z,C)$ is always observed as in Rural LITE Trial.\\
Furthermore, by applying Assumption \ref{assump6} and \ref{assump7}, Equation (\ref{gformula}) recovers the standard g-formula.
\end{theorem}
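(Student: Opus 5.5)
The plan has two stages: first identify $\mathbb E[Y_{z,M_{z'}}]$ as the full-data g-formula $\mathbb E_{P_f}[\mu_4(C)]$, then show that this equals the inverse-probability-weighted functional $\mathbb E_{P_{O'}}[S\mu_4(C)/\pi(W)]$.

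\textbf{Step 1 (full-data identification).} I would begin from the preceding identification theorem, which gives $\mathbb E[Y_{z,M_{z'}}\mid V_z=v,C=c]$ as an iterated integral of $\mathbb E[Y_{z,m'}\mid M_z=m',V_z=v,Z=z,C=c]$. Under SUTVA and consistency, on the event $\{Z=z\}$ this inner conditional mean equals $\mu_1(m';v,c)=\mathbb E[Y\mid M=m',V=v,Z=z,C=c]$. Likewise $F_{M_{z'}\mid V_{z'},Z=z',C}$ equals the observed $g_M$ evaluated at $Z=z'$.

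The conditional law $F_{V_{z'}\mid V_z,Z=z,C}$ is the copula-implied density $g_{z'|z}$ from Assumption~\ref{assump4}. Its marginals $F_{V_z}(\cdot\mid c)$ and $F_{V_{z'}}(\cdot\mid c)$ are identified as $F_{V\mid Z=z,C}$ and $F_{V\mid Z=z',C}$ by Assumption~\ref{assump1}, and positivity (Assumption~\ref{assump5}) ensures these conditionals are well defined. Integrating against $dF_{V_z\mid Z=z,C}$ yields $\mathbb E[Y_{z,M_{z'}}\mid C=c]=\mu_4(c)$, again using Assumption~\ref{assump1} to drop the conditioning on $Z$ for the outer law. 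A final integration against $dF_C$ gives $\mathbb E[Y_{z,M_{z'}}]=\mathbb E_{P_f}[\mu_4(C)]$.

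\textbf{Step 2 (observed-data representation).} Since $\mu_4(C)$ is a function of $C\subset W$, the tower property gives
\[
\mathbb E_{P_{O'}}\!\left[\frac{S}{\pi(W)}\mu_4(C)\right]=\mathbb E\!\left[\frac{\mu_4(C)}{\pi(W)}\mathbb E[S\mid W]\right]=\mathbb E[\mu_4(C)].
\]
The ratio is well defined and integrable because Assumption~\ref{assump7} bounds $\pi(W)$ below by $\delta_\pi$.

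\textbf{Step 3 (the nuisance functions are observed-data functionals).} It remains to check that each nuisance is a functional of $P_{O'}$. The laws of $W$, and hence of $C$ and of $Z\mid C$, are fully observed. For the components involving $(Y,V)$, Assumption~\ref{assump6} gives $p(y,v\mid m,z,c)=p(y,v\mid m,z,c,S=1)$. Combined with $p(m\mid z,c)$, this recovers $p(y,m,v\mid z,c)$ from the complete cases, and therefore $\mu_1$, $g_M$ and $g_{V_z}$ as well as the copula marginals. Positivity (Assumption~\ref{assump7}) guarantees that the complete-case conditionals are defined on the support of $W$. Substituting these expressions shows that the weighted functional collapses to the standard g-formula written with complete-case conditionals, which proves the second claim.

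The main obstacle is Step 3. The MAR assumption conditions only on $W=(M,Z,C)$, yet $g_M=p(m\mid v,z,c)$ conditions on $V$, which can be missing. I would handle this by recovering $g_M$ through Bayes' rule from the identified joint $p(m,v\mid z,c)$ rather than reading it directly from the complete cases. I would also check that the copula step in Step 1 needs only the marginal CDFs, so that no cross-world quantity beyond $\rho$ enters.
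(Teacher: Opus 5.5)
Your proposal is correct and follows essentially the paper's argument. You cancel the weight by the tower property, using $\mathbb E[S\mid W]=\pi(W)$ and Assumption~\ref{assump7}, and then identify $\mathbb E[Y_{z,M_{z'}}]$ with $\mathbb E[\mu_4(C)]$ through the full-data identification result (you simply take these two steps in the opposite order to the paper).

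Your Step~3 goes beyond the paper, which never shows that $\mu_1$, $g_M$ and $g_{V_z}$ are observed-data functionals. You are right to use MAR there rather than in the weight cancellation, which needs only the definition of $\pi$. Your recovery of $g_M=p(m\mid v,z,c)$ by Bayes' rule from $p(y,v\mid m,z,c,S=1)\,p(m\mid z,c)$ is the right way to make the ``standard g-formula'' claim rigorous.
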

\begin{proof} We need to show that the observed data functional $\chi(P_{O'})$ identifies the full-data counterfactual mean $\mathbb{E}[Y_{z,M_{z'}}]$ under the stated assumptions. The proof proceeds in two steps: (i) eliminating the missingness indicator and its associated weight using the MAR assumption, and (ii) recovering the classic g-computation formula.

\paragraph{Step 1: Weight Cancellation via MAR}
Consider the observed data functional defined in Equation \ref{gformula}. By the Law of Iterated Expectations, conditioning on the observed variables $(W, C)$, we have:
\begin{align*}
\chi(P_{O'}) &= \mathbb{E}_{P_{O'}} \left[ \frac{S}{\pi(W)} \mu_4(C) \right] \\
&= \mathbb{E}_{P_{O'}} \left[ \mathbb{E}_{P_{O'}} \left( \frac{S}{\pi(W)} \mu_4(C) \mid W, C \right) \right].
\end{align*}
Under Assumption \ref{assump6} (Missing at Random), the missingness indicator $S$ is independent of the underlying full data given the observed covariates $W$. Thus, the conditional expectation of $S$ reduces to the propensity score for missingness:
\begin{equation*}
\mathbb{E}_{P_{O'}} [S \mid W, C] = P(S=1 \mid W, C) = \pi(W).
\end{equation*}
Substituting this result into the functional, the weights cancel out:
\begin{align*}
\chi(P_{O'}) &= \mathbb{E}_{P_{O'}} \left[ \frac{\mathbb{E}_{P_{O'}}(S \mid W, C)}{\pi(W)} \mu_4(C) \right] \\
&= \mathbb{E}_{P_{O'}} \left[ \frac{\pi(W)}{\pi(W)} \mu_4(C) \right] \\
&= \mathbb{E}_{P_{O'}} [ \mu_4(C) ],
\end{align*}
where Assumption \ref{assump7} (Positivity) ensures that $\pi(W) > \delta > 0$, rendering the division well-defined. This step follows the weighting consistency argument established in \textcite{levis2025robust}.

\paragraph{Step 2: Recovery of the G-formula}
By definition, $\mu_4(C)$ represents the outcome regression in the full-data distribution $P_f$. Following the factorization logic in \textcite{levis2025robust}, $\mu_4(C)$ identifies the conditional expectation $\mathbb{E}_{P_f} [Y \mid Z=z, M, V, C]$. 

Since the outer expectation $\mathbb{E}_{P_{O'}} [\cdot]$ is taken over the distribution of baseline covariates, and under Assumptions \ref{assump1}-\ref{assump5} (Identification under full data), the expression expands to the following iterated integral:
\begin{align*}
\mathbb{E}_{P_{O'}} [ \mu_4(C) ] &= \iint \mathbb{E}_{P_f} [Y \mid Z=z, M=m', V=v', C=c] \\
&\quad \times \mathrm{dF}_{M_{z'} \mid V_{z'}=v', Z=z', C=c}(m') \mathrm{dF}_{V_{z'} \mid V_z=v, Z=z, C=c}(v').
\end{align*}
The right-hand side is the standard g-computation formula for the identification of natural effects in mediation analysis. Thus, we conclude that $\chi(P_{O'}) = \mathbb{E}[Y_{z,M_{z' }}]$. 
\end{proof}
\begin{theorem}\label{POEIF}
Under a nonparametric model for $P_{O'}$, the influence function of the mean counterfactual functional $\chi(P_{O'})$ is given by 
\begin{align*}
    \dot\chi_{P_{O'}} = \frac{S}{\pi(W)} \dot \chi _{P_f} - \frac{S-\pi(W)}{\pi(W)}\mathbb E_{P_{O'}}[\dot \chi_{P_f}\mid W, S=1].
\end{align*}
\end{theorem}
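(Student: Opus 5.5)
We derive the observed-data influence function with the Gateaux-derivative/score characterization, in the style of \textcite{robins1994estimation}. Work with a smooth one-dimensional submodel $P_{O',t}$ through $P_{O'}$ that has score $s(\mathcal O')$. Under Assumption~\ref{assump6} and parameter separability, the observed-data density of $\mathcal O'=(SY,M,SV,Z,C,S)$ factorizes as
\begin{align*}
p(\mathcal O') = p(W)\,\pi(W)^S\{1-\pi(W)\}^{1-S}\,p(Y,V\mid W,S=1)^S .
\end{align*}
The score therefore decomposes into orthogonal pieces, $s = s_W(W) + \{S-\pi(W)\}a(W) + S\,s_{Y,V}(Y,V\mid W)$, where $\mathbb E[s_{Y,V}\mid W,S=1]=0$. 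By Theorem~\ref{thm41} and MAR, $\chi(P_{O'})=\chi(P_f)$, where $P_f$ is recovered from $P_{O'}$ through $p_f(Y,V\mid W)=p(Y,V\mid W,S=1)$ and $p_f(W)=p(W)$. The target functional thus depends on $P_{O'}$ only through $p(W)$ and $p(Y,V\mid W,S=1)$; it does not depend on $\pi$.

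The first step is the chain rule. Along the submodel,
\begin{align*}
\frac{d}{dt}\chi(P_{O',t})\Big|_{t=0} = \mathbb E_{P_f}\big[\dot\chi_{P_f}\, s_f\big], \qquad s_f = s_W(W)+s_{Y,V}(Y,V\mid W),
\end{align*}
where $\dot\chi_{P_f}$ is the full-data EIF given before Theorem~\ref{thm41}, taken as known. Since $\mathbb E[\dot\chi_{P_f}]=0$, we can split the right-hand side into $\mathbb E[\mathbb E(\dot\chi_{P_f}\mid W)s_W(W)]$ and $\mathbb E[\{\dot\chi_{P_f}-\mathbb E(\dot\chi_{P_f}\mid W)\}s_{Y,V}]$. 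MAR gives $\mathbb E_{P_f}(\dot\chi_{P_f}\mid W)=\mathbb E_{P_{O'}}[\dot\chi_{P_f}\mid W,S=1]$; call this $q(W)$.

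The second step is to check that the candidate
\begin{align*}
\dot\chi_{P_{O'}}=\frac{S}{\pi}\dot\chi_{P_f}-\frac{S-\pi}{\pi}q(W) = q(W)+\frac{S}{\pi}\{\dot\chi_{P_f}-q(W)\}
\end{align*}
satisfies $\mathbb E_{P_{O'}}[\dot\chi_{P_{O'}}s]=\frac{d}{dt}\chi(P_{O',t})|_{t=0}$ for every score. We take the three components in turn.
\begin{itemize}
\item \emph{Against $s_W$:} conditioning on $W$ and using $\mathbb E[S\mid W]=\pi$, the second term has conditional mean zero, which leaves $\mathbb E[q\,s_W]$.
\item \emph{Against $\{S-\pi\}a(W)$:} $q(W)$ contributes $\mathbb E[q(W)(S-\pi)a(W)]=0$. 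The second term is $\frac{S}{\pi}(\dot\chi_{P_f}-q)(S-\pi)a(W)$; conditioning on $(W,S=1)$ kills it because $\dot\chi_{P_f}-q$ has conditional mean zero given $W,S=1$. This orthogonality to the missingness score is exactly why the augmentation term is there.
\item \emph{Against $S s_{Y,V}$:} the $q$ term vanishes since $\mathbb E[s_{Y,V}\mid W,S=1]=0$. The remaining term equals $\mathbb E\big[\frac{S}{\pi}\{\dot\chi_{P_f}-q\}s_{Y,V}\big]=\mathbb E_W\big[\mathbb E\{(\dot\chi_{P_f}-q)s_{Y,V}\mid W,S=1\}\big]$. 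By MAR this equals the full-data quantity $\mathbb E_{P_f}[(\dot\chi_{P_f}-q)s_{Y,V}]$.
\end{itemize}
Summing the three components reproduces the chain-rule derivative. The candidate also has mean zero. Because the model for $P_{O'}$ is nonparametric (locally saturated tangent space), the gradient is unique, so it is the EIF.

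The main obstacle is the first step: justifying that $\dot\chi_{P_f}$ is the pathwise derivative of $\chi$ with respect to $p_f(Y,V\mid W)$ and $p_f(W)$. In $\dot\chi_{P_f}$ the conditioning is on $(V,Z,C)$ or $(Z,C)$ rather than on $W=(M,Z,C)$, and the copula term $\phi_{2,P_f}$ involves $F_{V_z},F_{V_{z'}}$. The point is that any full-data score decomposes as $s_W+s_{Y,V\mid W}$ regardless of the causal ordering, so the full-data EIF identity $\frac{d}{dt}\chi=\mathbb E[\dot\chi_{P_f}s_f]$ applies directly. We only need to confirm that $\dot\chi_{P_f}$, including its copula block with $\rho$ held fixed, is a valid full-data gradient, and we take this from the derivation in the preceding subsection following \textcite{kennedy2024semiparametric}. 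We also need square-integrability, which follows from Assumptions~\ref{assump5} and~\ref{assump7}.
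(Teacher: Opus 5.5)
Your proof is correct, but it takes a different route from the paper's. The paper starts from the IPW function $\frac{S}{\pi(W)}\dot\chi_{P_f}$ and takes from the Robins--Rotnitzky--Zhao/Tsiatis theory that it is a valid but inefficient observed-data influence function. It then obtains the EIF by subtracting the $L_2$ projection of that function onto the missingness tangent space $\{h(W)(S-\pi(W))\}$, computing $h_{\mathrm{opt}}(W)=\pi(W)^{-1}\mathbb E[\dot\chi_{P_f}\mid W,S=1]$. You instead posit the AIPW candidate and compute the pathwise derivative by the chain rule through the identification map $P_{O'}\mapsto P_f$. You then check the representation separately on the three orthogonal pieces $s_W$, $(S-\pi)a(W)$ and $S\,s_{Y,V}$ of the saturated observed-data tangent space, and conclude by uniqueness of the gradient. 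Your second check is the paper's projection step read backwards, namely that the candidate is orthogonal to the $\pi$-scores. Your first and third checks replace the paper's appeal to the general characterization of observed-data influence functions as IPW terms plus augmentation.

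The paper's argument is constructive: it produces the augmentation term rather than needing it in advance, and it is the template that extends to restricted full-data models or richer coarsening. Yours is self-contained and shows exactly where MAR enters (turning $\mathbb E_{P_f}[\cdot\mid W]$ into $\mathbb E[\cdot\mid W,S=1]$) and where the $\pi$-free form of $\chi$ is used. It also directly certifies pathwise differentiability, and the appeal to uniqueness is legitimate here because MAR places no restriction on $P_{O'}$.

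Both arguments rest on the premise you flag: that $\dot\chi_{P_f}$ is a mean-zero, square-integrable full-data gradient that is a function of $(Y,M,V,Z,C)$ alone. Keep that premise explicit. The displayed $\phi_{3,P_f}$ evaluates $\mu_2$ at $V_{z'}$ on units with $Z=z$, and both the observability of $S\,\dot\chi_{P_f}$ and your MAR identity $\mathbb E_{P_f}(\dot\chi_{P_f}\mid W)=\mathbb E[\dot\chi_{P_f}\mid W,S=1]$ depend on it.
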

\begin{proof}
The derivation of the observed-data efficient influence function (EIF) $\dot{\chi}_{P_{O'}}$ relies on the semiparametric missing data framework established by \textcite{robins1994estimation} and comprehensively detailed in Chapter 10 of \textcite{tsiatis2006semiparametric}.
By the general theory of inverse probability weighting (IPW), a valid but inefficient influence function for the observed data can be constructed by simply reweighting the full-data EIF:
$$
\text{IF}_{\text{IPW}} = \frac{S}{\pi(W)}\dot{\chi}_{P_f}.
$$
To obtain the efficient influence function $\dot{\chi}_{P_{O'}}$, we must subtract the projection of $\text{IF}_{\text{IPW}}$ onto the nuisance tangent space associated with the missingness mechanism, denoted as $\Lambda_S$. Under a nonparametric model for the missingness mechanism $\pi(W)$, the nuisance tangent space consists of all functions of the form $h(W)(S - \pi(W))$. 

The projection theorem \parencite{tsiatis2006semiparametric} dictates that the optimal choice of $h(W)$ that minimizes the variance is given by the conditional expectation of the full-data influence function given the observed history $\{W,S=1\}$:
$$
h_{\text{opt}}(W) = \frac{1}{\pi(W)} \mathbb{E}_{P_{O'}} \left[ \dot{\chi}_{P_f} \mid W, S=1 \right].
$$
Subtracting this projection term from the initial IPW influence function yields the desired augmented inverse probability weighting (AIPW) form:
$$
\dot{\chi}_{P_{O'}} = \frac{S}{\pi(W)}\dot{\chi}_{P_f} - \frac{S - \pi(W)}{\pi(W)}\mathbb{E}_{P_{O'}} \left[ \dot{\chi}_{P_f} \mid W, S=1 \right].
$$
This completes the derivation.
\end{proof}
\begin{remark}
    Note that $\dot \chi_{P_{O'}}$ can be rigorously characterized as a function (or transformation) of the full-data EIF, $\dot\chi_{P_f}$. The theoretical foundation for this transformation comes from the seminal work of \textcite{robins1994estimation}.
\end{remark}
\section{EDPM of Truncation Approximation Posterior Sampling Computations}
In each iteration of the blocked Gibbs sampler, we sample:
\subsection{Conditional Posterior for Latent Cluster Assignments $(K_i, J_i)$}

We update the latent assignments for each observation in a single "blocked" step. This involves deriving the joint full conditional posterior distribution for the pair $(K_i, J_i)$. For each observation $i$, we compute the probability that it belongs to primary cluster $k$ AND sub-cluster $j$ for all possible pairs $(k,j)$.

\paragraph{Proportionality}
The conditional posterior probability is proportional to the joint prior probability of the assignment multiplied by the joint likelihood of the data, given that assignment:
\begin{equation}
    P(K_i = k, J_i = j \mid \text{rest}, \mathcal{D}) \propto p(K_i=k, J_i=j \mid \boldsymbol{\omega}) \times p(\text{data}_i \mid K_i=k, J_i=j, \text{rest})
\end{equation}
The joint prior probability is given by the hierarchy: $p(K_i=k, J_i=j \mid \boldsymbol{\omega}) = \omega_{j|k}^{\psi} \cdot \omega_{k}^\theta$.

\paragraph{Derivation}
Let $\mathcal{L}_{ijk}$ denote the unnormalized log-posterior probability for the joint assignment to pair $(k,j)$. It is the sum of the log-prior and the log-likelihoods of all data components for observation $i$, evaluated with the parameters from $\theta_k$ and $\psi_{j|k}$.
\begin{align*}
\mathcal{L}_{ijk} \;=&\; \log(\omega_{k}^\theta) + \log(\omega_{j|k}^{\psi}) \\
&\;-\frac{1}{2}\log(2\pi\sigma_{y}^2) - \frac{(y_i - m_i^\top\boldsymbol{\beta}_{y})^2}{2\sigma_{y}^2}-\frac{1}{2}\log(2\pi\sigma_{m}^2) - \frac{(m_i - v_i^\top\boldsymbol{\beta}_{m})^2}{2\sigma_{m}^2} \\
&\;-\frac{1}{2}\log(2\pi\sigma_{v}^2) - \frac{(v_i - x_i^\top\boldsymbol{\beta}_{v})^2}{2\sigma_{v}^2} + Z_i\log(\pi) + (1-Z_i)\log(1-\pi) \\
&\;+ \sum_{q=1}^{p_c} \Bigl(-\tfrac{1}{2}\log(2\pi\sigma_{q}^2) - \tfrac{(C_{iq} - \mu_{q})^2}{2\sigma_{q}^2}\Bigr)
\end{align*}

\paragraph{Posterior Distribution and Sampling}
The full conditional posterior for the pair $(K_i, J_i)$ is a single Categorical distribution over the space of all possible $(k,j)$ pairs. The sampling procedure for each observation $i$ is as follows:
\begin{enumerate}
    \item \textbf{Compute Probabilities:} For each possible pair $(k,j)$, where $k \in \{1,\dots,M\}$ and $j \in \{1,\dots,N_k\}$, calculate the log-probability $\mathcal{L}_{ijk}$ using the formula above. This yields a vector of $\sum_{k=1}^M N_k$ log-probabilities.
    \item \textbf{Normalize:} Convert this vector of log-probabilities into a single probability vector $\tilde{\mathbf{p}}_i$ using the softmax function to ensure numerical stability:
    $$\tilde{p}_{ijk} = \frac{\exp(\mathcal{L}_{ijk})}{\sum_{k'=1}^M \sum_{j'=1}^{N_{k'}} \exp(\mathcal{L}_{ik'j'})}$$
    \item \textbf{Sample:} Draw a single sample from the unified Categorical distribution defined by the probability vector $\tilde{\mathbf{p}}_i$. The result of this draw is the new pair of assignments $(K_i, J_i)$ for the observation.
\end{enumerate}

\subsection{Clustering Structure Parameters: $V$ and $\alpha$}

\subsubsection{Stick-Breaking Variables, $V_k^\theta$ and $V_{j|k}^{\psi|\theta}$}
The full conditional posteriors are Beta distributions, resulting from Beta-Multinomial conjugacy.
\begin{gather}
    V_k^\theta \mid  \text{rest}, \mathcal{D} \sim \text{Beta}(n_k + 1, \alpha^\theta + \sum_{l=k+1}^{M} n_l) \\
    V_{j|k}^{\psi|\theta} \mid \text{rest}, \mathcal{D} \sim \text{Beta}(n_{jk} + 1, \alpha_k^{\psi|\theta} + \sum_{h=j+1}^{N_k} n_{kh})
\end{gather}

\subsubsection{Concentration Parameters, $\alpha^\theta$ and $\alpha_k^{\psi|\theta}$}
The full conditional posteriors are Gamma distributions, resulting from Gamma-Beta conjugacy.
\begin{gather}
    \alpha^\theta \mid \text{rest}, \mathcal{D} \sim \text{Gamma}\left(\eta_1^\theta + M - 1, \eta_2^\theta - \sum_{k=1}^{M-1}\log(1-V_k^\theta)\right) \\
    \alpha_k^{\psi|\theta} \mid \text{rest}, \mathcal{D} \sim \text{Gamma}\left(\eta_1^{\psi|\theta} + N_k - 1, \eta_2^{\psi|\theta} - \sum_{j=1}^{N_k-1}\log(1-V_{j|k}^{\psi|\theta})\right)
\end{gather}

\subsection{Clustering Structure Parameters: $V$ and $\alpha$}

\subsubsection{Stick-Breaking Variables, $V_k^\theta$ and $V_{j|k}^{\psi|\theta}$}
The full conditional posteriors are Beta distributions, resulting from Beta-Multinomial conjugacy.
\begin{gather}
    V_k^\theta \mid \text{rest}, \mathcal{D} \sim \text{Beta}(n_k + 1, \alpha^\theta + \sum_{l=k+1}^{M} n_l) \\
    V_{j|k}^{\psi|\theta} \mid \text{rest}, \mathcal{D} \sim \text{Beta}(n_{jk} + 1, \alpha_k^{\psi|\theta} + \sum_{h=j+1}^{N_k} n_{kh})
\end{gather}

\subsubsection{Concentration Parameters, $\alpha^\theta$ and $\alpha_k^{\psi|\theta}$}
The full conditional posteriors are Gamma distributions, resulting from Gamma-Beta conjugacy.
\begin{gather}
    \alpha^\theta \mid \text{rest}, \mathcal{D} \sim \text{Gamma}\left(\eta_1^\theta + M - 1, \eta_2^\theta - \sum_{k=1}^{M-1}\log(1-V_k^\theta)\right) \\
    \alpha_k^{\psi|\theta} \mid \text{rest}, \mathcal{D} \sim \text{Gamma}\left(\eta_1^{\psi|\theta} + N_k - 1, \eta_2^{\psi|\theta} - \sum_{j=1}^{N_k-1}\log(1-V_{j|k}^{\psi|\theta})\right)
\end{gather}

\subsection{Primary Cluster Parameters ($\theta_k$)}
For each primary cluster $k \in \{1, \dots, M\}$:

\subsubsection{Y-Model Parameters: $({\beta}_y, \sigma_y^2)$}
The full conditional posterior forms substitute the relevant data and priors: $(\mathbf{y}_k, \mathbb{M}_k, \boldsymbol{\beta}_y, \sigma_y^2, a_{\bm\beta^y}, \mathbf B_{\bm\beta^y}, a_{\sigma^y}, \mathbf B_{\sigma^y})$. 
We have the priors
\[
\bm\beta_y\mid\sigma_y^2\sim N(a_{\bm\beta^y},\;\sigma_y^2\mathbf B_{\bm\beta^y}),
\qquad
\sigma_y^2\sim\mathrm{InvGamma}(a_{\sigma^y},\mathbf B_{\sigma^y}).
\]
Here we define
\[
\mathbf y_k = (y_i)_{i\in\mathcal I_k},
\qquad
\mathbb M_k = \bigl[m_i^\top\bigr]_{i\in\mathcal I_k},
\quad
m_i^\top = \bigl[1,\;M_i,\;V_i,\;Z_i,\;C_{i1},\dots,C_{i,p_c}\bigr],
\] where
\[
\mathcal I_k = \{\,i : K_i = k\}, \quad
n_k = |\mathcal I_k|.
\]

\begin{itemize}
    \item \textbf{Full conditional posterior for $\boldsymbol{\beta}_y$ (given $\sigma_y^2$)}:
    $$\boldsymbol{\beta}_y \mid\sigma_y^2, \text{rest}, \mathcal{D} \sim N(\boldsymbol{\mu}_{\beta_y}^*, \boldsymbol{\Sigma}_{\beta_y}^*)$$
    where $\boldsymbol{\Sigma}_{\beta_y}^* = \left( (\sigma_y^2 \mathbf{B}_{\bm\beta^y})^{-1} + \frac{1}{\sigma_y^2}\mathbb{M}_k^T\mathbb{M}_k \right)^{-1}$ and $\boldsymbol{\mu}_{\beta_y}^* = \boldsymbol{\Sigma}_{\beta_y}^* \left( (\sigma_y^2 \mathbf{B}_{\bm\beta^y})^{-1}a_{\bm\beta^y} + \frac{1}{\sigma_y^2}\mathbb{M}_k^T\mathbf{y}_{k} \right)$. 

    \item \textbf{Full conditional posterior for $\sigma_y^2$ (given $\boldsymbol{\beta}_y$)}:
    $$\sigma_y^2 \mid  \boldsymbol{\beta}_y, \text{rest}, \mathcal{D} \sim \text{InvGamma}(a_{\sigma^y}^*, \mathbf B_{\sigma^y}^*)$$
    where $a_{\sigma^y}^* = a_{\sigma^y} + \frac{n_k}{2} + \frac{\text{dim}(\boldsymbol{\beta}_y)}{2}$ and $\mathbf B_{\sigma^y}^* = \mathbf B_{\sigma^y} + \frac{1}{2}(\mathbf{y}_k - \mathbb{M}_k\boldsymbol{\beta}_y)^T(\mathbf{y}_k - \mathbb{M}_k\boldsymbol{\beta}_y) + \frac{1}{2}(\boldsymbol{\beta}_y - a_{\bm\beta^y})^T \mathbf{B}_{\bm\beta^y}^{-1} (\boldsymbol{\beta}_y - a_{\bm\beta^y})$.
\end{itemize}

\subsubsection{M-Model Parameters: $({\beta}_m, \sigma_m^2)$}
The full conditional posterior forms are identical to the Y-model, substituting the relevant data and priors: $(\mathbf{m}_k, \mathbb{V}_k, \boldsymbol{\beta}_m, \sigma_m^2, a_{\bm\beta^m}, \mathbf B_{\bm\beta^m}, a_{\sigma^m}, \mathbf B_{\sigma^m})$,
where 
\[
\mathbf m_k = (M_i)_{i\in\mathcal I_k},
\qquad
\mathbb V_k=\bigl[v_i^\top\bigr]_{i\in\mathcal I_k},
\quad
v_i^\top=\bigl[1,\;V_i,\;Z_i,\;C_{i1},\dots,C_{i,p_c}\bigr],
\]
with priors $\bm\beta_m\mid\sigma_m^2\sim N(a_{\bm\beta^m},\;\sigma_m^2\mathbf B_{\bm\beta^m})$ and $\sigma_m^2\sim\mathrm{InvGamma}(a_{\sigma^m},\mathbf B_{\sigma^m})$.
\begin{itemize}
    \item \textbf{Full conditional posterior for $\boldsymbol{\beta}_m$ (given $\sigma_m^2$)}:
    $$\boldsymbol{\beta}_m \mid\sigma_m^2, \text{rest}, \mathcal{D} \sim N(\boldsymbol{\mu}_{\beta_m}^*, \boldsymbol{\Sigma}_{\beta_m}^*)$$
    where $\boldsymbol{\Sigma}_{\beta_m}^* = \left( (\sigma_m^2 \mathbf{B}_{\bm\beta^m})^{-1} + \frac{1}{\sigma_m^2}\mathbb{V}_k^T\mathbb{V}_k \right)^{-1}$ and $\boldsymbol{\mu}_{\beta_m}^* = \boldsymbol{\Sigma}_{\beta_m}^* \left( (\sigma_m^2 \mathbf{B}_{\bm\beta^m})^{-1}a_{\bm\beta^m} + \frac{1}{\sigma_m^2}\mathbb{V}_k^T\mathbf{m}_{k} \right)$. 

    \item \textbf{Full conditional posterior for $\sigma_m^2$ (given $\boldsymbol{\beta}_m$)}:
    $$\sigma_m^2 \mid  \boldsymbol{\beta}_m, \text{rest}, \mathcal{D} \sim \text{InvGamma}(a_{\sigma^m}^*, \mathbf B_{\sigma^m}^*)$$
    where $a_{\sigma^m}^* = a_{\sigma^m} + \frac{n_k}{2} + \frac{\text{dim}(\boldsymbol{\beta}_m)}{2}$ and $\mathbf B_{\sigma^m}^* = \mathbf B_{\sigma^m} + \frac{1}{2}(\mathbf{m}_k - \mathbb{V}_k\boldsymbol{\beta}_m)^T(\mathbf{m}_k - \mathbb{V}_k\boldsymbol{\beta}_m) + \frac{1}{2}(\boldsymbol{\beta}_m - a_{\bm\beta^m})^T \mathbf{B}_{\bm\beta^m}^{-1} (\boldsymbol{\beta}_m - a_{\bm\beta^m})$.
\end{itemize}

\subsection{Sub-Cluster Parameters ($\psi_{j|k}$)}
For each sub-cluster $(k,j)$:

\subsubsection{V-Model Parameters: $({\beta}_v, \sigma_v^2)$}
The full conditional posterior forms are identical to the Y-model, substituting the relevant data and priors: $(\mathbf{v}_{kj}, \mathbb{X}_{kj}, \boldsymbol{\beta}_v, \sigma_v^2, a_v, b_v, \boldsymbol{\beta}_{v0}, \mathbf{V}_{v0})$.
Based on the similar conjugate-integral arguments as for \(Y\) and \(M\), but now within the second-level cluster \((k,j)\), let
\[
\mathcal I_{kj} = \{\,i : K_i = k,\;J_i = j\}, 
\quad n_{kj} = |\mathcal I_{kj}|,
\]
and 
\[
\mathbf v_{kj} = (V_i)_{i\in\mathcal I_{kj}},
\qquad
\mathbb X_{kj} = \bigl[x_i^\top\bigr]_{i\in\mathcal I_{kj}},
\quad
x_i^\top = \bigl[1,\;Z_i,\;C_{i1},\dots,C_{i,p_c}\bigr].
\]
We have the priors
\[\bm\beta_v\mid\sigma_v^2 \sim N\bigl(a_{\bm\beta^v},\;\sigma_v^2\mathbf B_{\bm\beta^v}\bigr),
\quad
\sigma_v^2\sim\mathrm{InvGamma}(a_{\sigma^v},b_{\sigma^v}).
\]
\begin{itemize}
    \item \textbf{Full conditional posterior for $\boldsymbol{\beta}_v$ (given $\sigma_v^2$)}:
    $$\boldsymbol{\beta}_v \mid\sigma_v^2, \text{rest}, \mathcal{D} \sim N(\boldsymbol{\mu}_{\beta_v}^*, \boldsymbol{\Sigma}_{\beta_v}^*)$$
    where $\boldsymbol{\Sigma}_{\beta_v}^* = \left( (\sigma_v^2 \mathbf{B}_{\bm\beta^v})^{-1} + \frac{1}{\sigma_v^2}\mathbb{X}_{kj}^T\mathbb{X}_{kj} \right)^{-1}$ and $\boldsymbol{\mu}_{\beta_v}^* = \boldsymbol{\Sigma}_{\beta_v}^* \left( (\sigma_v^2 \mathbf{B}_{\bm\beta^v})^{-1}a_{\bm\beta^v} + \frac{1}{\sigma_v^2}\mathbb{X}_{kj}^T\mathbf{v}_{kj} \right)$. 

    \item \textbf{Full conditional posterior for $\sigma_v^2$ (given $\boldsymbol{\beta}_v$)}:
    $$\sigma_v^2 \mid  \boldsymbol{\beta}_v, \text{rest}, \mathcal{D} \sim \text{InvGamma}(a_{\sigma^v}^*, \mathbf B_{\sigma^v}^*)$$
    where $a_{\sigma^v}^* = a_{\sigma^v} + \frac{n_{kj}}{2} + \frac{\text{dim}(\boldsymbol{\beta}_v)}{2}$ and $\mathbf B_{\sigma^v}^* = b_v + \frac{1}{2}(\mathbf{v}_{kj} - \mathbb{X}_{kj}\boldsymbol{\beta}_v)^T(\mathbf{v}_{kj} - \mathbb{X}_{kj}\boldsymbol{\beta}_v) + \frac{1}{2}(\boldsymbol{\beta}_v - a_{\bm\beta^v})^T \mathbf{B}_{\bm\beta^v}^{-1} (\boldsymbol{\beta}_v - a_{\bm\beta^v})$.
\end{itemize}

\subsubsection{Z-Model Parameter: $\pi^z$}
$$\pi^z \mid  \text{rest}, \mathcal{D} \sim \text{Beta}\left(a_\pi + \sum_{i \in \mathcal{I}_{kj}}Z_i, \quad b_\pi + n_{kj} - \sum_{i \in \mathcal{I}_{kj}}Z_i\right).$$

\subsubsection{Binary C-Model Parameters: $\pi^c_q$ for each $q=1, \dots, p_{c,1}$}

$$\pi^c_q \mid  \text{rest}, \mathcal{D} \sim \text{Beta}\left(a_\pi + \sum_{i \in \mathcal{I}_{kj}}C_{i,q}, \quad b_\pi + n_{kj} - \sum_{i \in \mathcal{I}_{kj}}C_{i,q}\right)$$

\subsubsection{Continuous C-Model Parameters: $(\mu_q, \sigma_q^2)$ for each $q=p_{c,1}+1, \dots, p_{c,1}+p_{c,2}$}
With priors $\mu_q \mid \sigma_q^2 \sim N(a_\mu, \frac{1}{b_{\mu}}\sigma_q^2)$ and $\sigma_q^2 \sim \text{InvGamma}(a_{\sigma^c}, b _{\sigma^c})$,
\begin{itemize}
    \item \textbf{Full conditional posterior for $\mu_q$ (given $\sigma_q^2$)}:
    $$\mu_q \mid \sigma_q^2, \text{rest}, \mathcal{D} \sim N(\mu_q^*, \sigma_q^{2*})$$
    where $\sigma_q^{2*} = \left( \frac{n_{kj}}{\sigma_q^2} + \frac{b_\mu}{\sigma_q^2} \right)^{-1}$ and $\mu_q^* = \sigma_q^{2*} \left( \frac{n_{kj}\bar{c}_q}{\sigma_q^2} + \frac{a_{\mu}b_\mu}{\sigma_q^2} \right)$, with $\bar{c}_q = \frac{1}{n_{kj}}\sum_{i \in \mathcal{I}_{kj}} C_{i,q}$.

    \item \textbf{Full conditional posterior for $\sigma_q^2$ (given $\mu_q$)}:
    $$\sigma_q^2 \mid  \mu_q, \text{rest}, \mathcal{D} \sim \text{InvGamma}(a_{\sigma^c}^*, b_{\sigma^c}^*)$$
    where $a_{\sigma^c}^* = a_{\sigma^c} + \frac{n_{kj}}{2} + \frac{1}{2}$ and $b_{\sigma^c}^* = b_{\sigma^c} + \frac{1}{2}\sum_{i \in \mathcal{I}_{kj}}(C_{i,q}-\mu_q)^2 + \frac{ b_\mu(\mu_q-a_{\mu})^2}{2}$.
\end{itemize}

\subsection{Hyperparameters}
To ensure the prior support appropriately covers the parameter space and to facilitate rapid mixing of the Gibbs sampler, we fix the hyperparameters of the base measures using the observed data. 

\subsubsection{Regression Parameters for $Y$, $M$, and $V$}
Let $\hat{\boldsymbol{\beta}}_y, \hat{\boldsymbol{\beta}}_m, \hat{\boldsymbol{\beta}}_v$ and $\hat{\sigma}^2_y, \hat{\sigma}^2_m, \hat{\sigma}^2_v$ denote the maximum likelihood estimates (MLE) of the coefficients and the residual variances obtained from fitting standard generalized linear models (GLMs) to the entire dataset for $Y$, $M$, and $V$, respectively. 

We anchor the prior means for the regression coefficients to these MLEs:
\[
a_{\bm\beta^y} = \hat{\boldsymbol{\beta}}_y, \qquad a_{\bm\beta^m} = \hat{\boldsymbol{\beta}}_m, \qquad a_{\bm\beta^v} = \hat{\boldsymbol{\beta}}_v.
\]
To create weakly informative priors that scale appropriately with the data, we specify the prior covariance matrices as diagonal matrices proportional to the GLM residual variances.
\[
\mathbf{B}_{\bm\beta^y} = c \cdot\text{Var}\left(\hat{\bm\beta^y} \right), \quad \mathbf{B}_{\bm\beta^m} = c \cdot\text{Var}\left(\hat{\bm\beta} ^m\right), \quad \mathbf{B}_{\bm\beta^v} = c \cdot\text{Var}\left(\hat{\bm\beta}^v \right),
\]
where $c$ is the scaling factor.

For the variance parameters, we specify Inverse-Gamma priors with the shape parameters set to 1.0 and the rate parameters matching the GLM residual variances:
\[
a_{\sigma^y} = a_{\sigma^m} = a_{\sigma^v} = 1.0, \qquad \mathbf{B}_{\sigma^y} = \hat{\sigma}^2_y, \quad \mathbf{B}_{\sigma^m} = \hat{\sigma}^2_m, \quad \mathbf{B}_{\sigma^v} = \hat{\sigma}^2_v.
\]

\subsubsection{Covariate Parameters ($Z$, $C$)}
For the binary treatment assignment $Z$ and binary confounders $\mathbf{C}_{binary}$, we assume uniform priors over the probability parameters by setting:
\[
a_\pi = 1.0, \qquad b_\pi = 1.0.
\]
For the continuous confounders $C_{q}$ for $q=p_{c,1} + 1, \dots, p_{c,1}+p_{c,2}$, we set the global prior mean $a_\mu = 0$ (standradized). To provide sufficient flexibility for cluster-specific means to vary around the global mean, we set the precision multiplier $b_\mu = 0.5$. The variance components for the continuous covariates are assigned weakly informative Inverse-Gamma priors with $a_{\sigma^c} = 2.0$ and $b_{\sigma^c} = 1.0$.

\subsection{Cluster-Reallocation Algorithm}
In the posterior inference for the Enriched Dirichlet Process Mixture (EDPM) model, our algorithm employs a Blocked Gibbs sampler based on a truncated stick-breaking representation. Unlike the marginal Pólya urn scheme—which integrates out the random measure and sequentially updates individual allocation indicators—the Blocked Gibbs sampler instantiates the nested cluster weights and parameters, drawing observation allocations conditionally. However, while the Blocked Gibbs approach facilitates efficient joint updates, it can be prone to poor mixing. Specifically, the instantiated cluster parameters become strongly coupled with the data, causing the Markov chain to become trapped in local optima when exploring the complex nested partition structures of the EDPM \parencite{wade2014improving}.\\
To overcome this limitation and significantly improve the mixing of the chain within our conditional Blocked Gibbs framework, we include an additional cluster reallocation Metropolis-Hastings (MH) step after performing the standard point-wise updates at each iteration. Adapted from the marginal Pólya urn scheme of \textcite{wade2014improving}, our MH algorithm proposes a macroscopic shift by allowing an entire instantiated inner sub-cluster—along with its conditionally assigned observations and component parameters—to be moved and nested within a different outer cluster. By enabling such block-level reallocations within the truncated stick-breaking representation, the algorithm can efficiently traverse the complex parameter space and jump across low-probability valleys, which greatly accelerates convergence and improves the overall quality of the posterior samples.\\
In each iteration, this algorithm proposes to move a $\psi$-cluster to be nested with a different or new $\theta$-cluster. This step is separated into three possible moves:\begin{enumerate}
    \item[(1)] a $\psi$-cluster, among those within $\theta$-clusters with more than one $\psi$-cluster, is moved to a different $\theta$-cluster; 
    \item[(2)] a $\psi$-cluster, among those within $\theta$-clusters with more than one $\psi$-cluster, is moved to a new $\theta$-cluster;
    \item[(3)] a $\psi$-cluster, among those within $\theta$-clusters with only one $\psi$-cluster, is moved to a different $\theta$-cluster.
\end{enumerate}
Let $c_{x,2+}$ be the number of $\psi$-clusters within a $\theta$-cluster with more than one $\psi$-cluster and $c_{x,1}$ be the number of $\psi$-clusters within a $\theta$-cluster with only one $\psi$-cluster. The proposal distributions for the three moves are as follows.
For the first move, the $\psi$-cluster
is uniformly selected with probability $c^{-1}_{x,2+}$ and moved within a different $\theta$-cluster selected uniformly with probability $(c-1)^{-1}$. For the second, the $\psi$-cluster is again uniformly selected with probability $c^{-1}_{x,2+}$ and moved to a new cluster. Lastly, for the third, the $\psi$-cluster is uniformly selected with probability $c^{-1}_{x,1}$and moved within a different $\theta$-cluster selected uniformly with probability $(c-1)^{-1}$.
\\
 Even though we consider three different moves, the specifics of the Metropolis-Hasting algorithm are similar. We just need to clarify the differences of proposal probability based on the different moves. Suppose that we are moving $\psi_{j|k}$ from $k_{th}$ $\theta$-cluster to $k'_{th}$ $\theta$-cluster,  $\psi_{j'|k'}$. Then $n_{kj}$, $n_{k'j'}$, $n_k$ and $n_{k'}$ will change, which implies that the posteriors of stick-breaking variables $\mathbf{V}^\theta,\mathbf{V}^{\psi|\theta}$ and concentration parameters $\bm{\alpha}=\left(\alpha^\theta,\alpha_{j|k}^{\psi|\theta},\alpha_{j'|k'}^{\psi|\theta}\right)$ will change. After the move, we need to update all $V_s^\theta$, where $s\leq \max \{k,k'\}$. Similarly, we need to update all $V_{r|k}^{\psi|\theta}$ and $V_{r'|k'}^{\psi|\theta}$ satisfying $r\leq j$ and $r' \leq j'$. Based on the newly generated $\mathbf{V}^\theta,\mathbf{V}^{\psi|\theta}$, we regenerate $\alpha^\theta$, $\alpha^{\psi|\theta}_k$, and $\alpha^{\psi|\theta}_{k'}$. Note that the weights $\bm{\omega}$ are a deterministic function of $\mathbf{V}$. Thus all the weights for $\theta$-cluster will be updated and also for all the weights of $\psi$-clusters in $k^{th}$ and $k'^{th}$ $\theta$-clusters. 
\\
Let $\mathcal L(\mathbf y,\mathbf m,\mathbf v,\mathbf z,\mathbf c \mid \mathbf K,\mathbf J)
=\prod_{k=1}^{M}\;\prod_{j=1}^{N_k}\;\prod_{i\in\mathcal I_{kj}}
f\bigl(y_i,m_i,v_i,z_i,c_i \mid K_i=k,J_i=j\bigr)$ be the marginalized likelihood. $\pi(\mathbf K, \mathbf J\mid \bm \omega)$, $\pi(\bm{\omega}\mid\bm \alpha)$, $\pi(\bm{\alpha})$ are denoted as the priors of the cluster-assignment categorical variables $\mathbf K, \mathbf J $, the weights $\bm{\omega}$ and concentration parameters $\bm{\alpha}$.
\subsubsection{Algorithm}
0. Given the data and current posterior sample of the parameters $(\mathbf y, \mathbf m, \mathbf v, \mathbf z,\mathbf c;\mathbf K, \mathbf J, \bm{\omega},\bm{\alpha})$\\
1. Uniformly select an active inner sub-cluster $(k, j)$ with parameter $\psi_{j|k}$. Propose moving $\psi_{j|k}$, along with all data points currently assigned to it, to a target $k'$-th $\theta$ cluster. Denote its new index as $(k', j')$. Update the allocation indicators for the involved data points, and update the cluster counts accordingly:
$$n_{kj}^* = 0, \quad n_k^* = n_k - n_{kj}, \quad n_{k'j'}^* = n_{kj}, \quad n^*_{k'} = n_{k'} + n_{kj}.$$ \\
2. Sample $\left(V_1^{\theta*},\dots,V_{\max\{k,k'\}}^{\theta*}\left|\mathbf K^*, \alpha^\theta\right)\right.$, $\left(V_{1|k}^{\psi|\theta*},\dots,V_{j|k}^{\psi|\theta*}\left|\mathbf K^*, \mathbf J^*,\alpha^{\psi|\theta}_k\right)\right.$, $\left(V_{1|k'}^{\psi|\theta*},\dots,V_{j'|k'}^{\psi|\theta*}\left|\mathbf K^*,\mathbf J^*, \alpha^{\psi|\theta}_{k'}\right)\right.$, where 
\begin{align*}
    V_k^{\theta*}\left|\mathbf{K}^*\,\alpha^\theta\right.&\overset{\text{ind}}{\sim}Beta\left(n_k^*+1,\alpha^\theta+\sum_{h=k+1}^{M}n_h^*\right),\\ V_{j|k}^{\psi|\theta*}\left|\mathbf{K}^*,\mathbf{J}^*,\alpha^{\psi|\theta}_k\right.&\overset{\text{ind}}{\sim}Beta\left(n_{kj}^*+1,\alpha^{\psi|\theta}_k+\sum_{h=j+1}^{N_k}n_{kh}^*\right).
\end{align*}\\
3. Sample $\left(\alpha^{\theta*}\left|V_1^{\theta*}\right.,\dots,V_{\max\{k,k'\}}^{\theta*}, V^\theta_{\max\{k,k'\}+1},\dots, V^\theta_M \right)$, $\left(\alpha^{\psi|\theta*}_{k}\left| V_1^{\theta*},\dots,V_{\max\{k,k'\}}^{\theta*}, V^\theta_{\max\{k,k'\}+1},\dots, V^\theta_M\right)\right.$, $\left(\alpha^{\psi|\theta*}_{k'}\left|V_{1|k'}^{\psi|\theta*}\right.,\dots,V_{j'|k'}^{\psi|\theta*},V_{j'+1|k'}^{\psi|\theta},\dots,V^{\psi|\theta}_{N_k'|k'}\right)$, where
\begin{align*}
\alpha^{\theta*}\left|\mathbf{V}^{\theta*} \right.&\sim Gamma\left(M+\eta^\theta_1-1,\eta_2^\theta-\sum_{k=1}^{M-1}\log\left(1-V_k^{\theta*}\right)\right),
\\
\alpha_k^{\psi|\theta*}\left|\mathbf{V}
_k^{\psi|\theta*}\right.&\sim Gamma\left(N_k+\eta_1^{\psi|\theta}-1,\eta_2^{\psi|\theta}-\sum_{j=1}^{N_k-1}\log\left(1-V_{j|k}^{\psi|\theta*}\right)\right).
\end{align*}
\\
4. Compute the new weights $\bm{\omega}^*$ based on the stick-breaking construction: $\omega_1^{\theta*} = V_1^{\theta*}$ and $\omega_k^{\theta*}=V_k^{\theta*}\prod_{h=1}^{k-1}\left(1-V_h^{\theta*}\right)$, $k =2,\dots, M$ and $\omega_{1|k}^{\psi|\theta*} = V_{1|k}^{\psi|\theta*}$ and $\omega_{j|k}^{\psi|\theta*}=V_{j|k}^{\psi|\theta*}\prod_{h=1}^{j-1}\left(1-V_{h|k}^{\psi|\theta*}\right)$, $j=2,\dots,N_k$.\\
5. Accept the proposed move with probability\\
\begin{align*}
    \alpha=\min\left\{1,\frac{\mathcal L(\mathbf y,\mathbf m,\mathbf v,\mathbf z,\mathbf c \mid \mathbf K^*,\mathbf J^*)\cdot\pi(\mathbf K^*,\mathbf J^*|\bm \omega^*)\cdot\pi(\bm{\omega}^*\left|\bm{\alpha}^*\right.)\cdot\pi\left(\bm{\alpha}^*\right)}{\mathcal L(\mathbf y,\mathbf m,\mathbf v,\mathbf z,\mathbf c \mid \mathbf K,\mathbf J)\cdot\pi(\mathbf K,\mathbf J|\bm \omega)\cdot\pi(\bm{\omega}\left|\bm{\alpha}\right.)\cdot\pi\left(\bm{\alpha}\right)}\times\frac{q(\text{old}|\text{new})}{q(\text{new}|\text{old})}\right\},
\end{align*}
where 
\begin{align*}
\frac{q(\text{old}|\text{new})}{q(\text{new}|\text{old})}=\frac{q(\mathbf K, \mathbf J|\mathbf K ^*,\mathbf J^*)}{q(\mathbf K^*, \mathbf J^*|\mathbf K ,\mathbf J)}\times\frac{q\left(\bm{\omega}\left|\bm{\omega}^*\right.\right)}{q\left(\bm{\omega}^*\left|\bm{\omega}\right.\right)}
\end{align*}

\begin{align*}
    \frac{q(\mathbf K, \mathbf J|\mathbf K ^*,\mathbf J^*)}{q(\mathbf K^*, \mathbf J^*|\mathbf K ,\mathbf J)}=\begin{cases}
    \frac{c_{x,2+}\cdot c_{k',0}}{c*_{x,2+}\cdot c^*_{k,0}} &\text{first move}\\
    \frac{c_{x,2+}\cdot (M - c)\cdot N_{k'}}{c*_{x,1}\cdot c \cdot c^*_{k,0}}&\text{second move}\\
    \frac{c_{x,1}\cdot(c-1)\cdot c_{k',0}}{c*_{x,2+}\cdot (M - c + 1)\cdot N_k}&\text{third move}
\end{cases}
\end{align*}
and 
\begin{align*}
\frac{q\left(\bm{\omega}\left|\bm{\omega}^*\right.\right)}{q\left(\bm{\omega}^*\left|\bm{\omega}\right.\right)}=\frac{P\left(\mathbf{V}|\mathbf{K},\mathbf{J},\bm{\alpha}^*\right)\cdot P(\bm{\alpha}|\mathbf{V})}{P(\mathbf{V}^*|\mathbf{K}^*, \mathbf{J}^*,\bm{\alpha})\cdot P(\bm{\alpha}^*|\mathbf{V}^*)}\times \frac{\left|J(\mathbf{V})\right|}{\left|J(\mathbf{V}^*)\right|}.
\end{align*}
Here, $|J(\mathbf{V})|$ and $|J(\mathbf{V}^*)|$ denote the determinants of the Jacobian matrices arising from the change of variables between the weights space $\bm{\omega}$ space and the stick-breaking space $\mathbf{V}$.
\begin{align*}
    J(\mathbf{V}) = \prod_{l=1}^{M-1}\left(1-V_l^\theta\right)^{M-1-l}\times \prod_{r=1}^{N_k-1}\left(1-V_{r|k}^{\psi|\theta}\right)^{N_k-1-r}\prod_{r' = 1}^{N_{k'}-1}\left(1-V^{\psi|\theta}_{r|'k'}\right)^{N_{k'}-1-r'}\\
    J(\mathbf{V}^*) = \prod_{l=1}^{M-1}\left(1-V_l^{\theta*}\right)^{M-1-l}\times \prod_{r=1}^{N_k-1}\left(1-V_{r|k}^{\psi|\theta*}\right)^{N_k-1-r}\prod_{r' = 1}^{N_{k'}-1}\left(1-V^{\psi|\theta*}_{r|'k'}\right)^{N_{k'}-1-r'}
\end{align*}
Notice that 
\begin{align}
 \frac{\pi\left(\bm \omega^*\mid \bm\alpha^*\right)}{\pi\left(\bm\omega\mid \bm\alpha\right)}  = \frac{\pi\left(\mathbf V^*\mid \bm\alpha^*\right)\cdot |J(\mathbf V^*)|}{\pi\left(\mathbf V \mid \bm \alpha\right)\cdot |J(\mathbf V)|}   
\end{align}
where the prior of $V$ is $Beta(1, \alpha)$.
The Jacobi part in the GEM prior can be canceled out with that in proposal probability part. The acceptance rate can be computed as follows, 
\begin{align*}
     \alpha=\min&\left\{1, \frac{\mathcal L(\mathbf y,\mathbf m,\mathbf v,\mathbf z,\mathbf c \mid \mathbf K^*,\mathbf J^*)\cdot\pi(\mathbf K^*,\mathbf J^*|\bm \omega^*)\cdot\pi(\mathbf V^*\left|\bm{\alpha}^*\right.)\cdot\pi\left(\bm{\alpha}^*\right)}{\mathcal L(\mathbf y,\mathbf m,\mathbf v,\mathbf z,\mathbf c \mid \mathbf K,\mathbf J)\cdot\pi(\mathbf K,\mathbf J|\bm \omega)\cdot\pi(\mathbf V\left|\bm{\alpha}\right.)\cdot\pi\left(\bm{\alpha}\right)}\right.\\ & \qquad\times\left.\frac{q(\mathbf K, \mathbf J|\mathbf K ^*,\mathbf J^*)\cdot P\left(\mathbf{V}|\mathbf{K},\mathbf{J},\bm{\alpha}^*\right)\cdot P(\bm{\alpha}|\mathbf{V})}{q(\mathbf K^*, \mathbf J^*|\mathbf K ,\mathbf J)\cdot P(\mathbf{V}^*|\mathbf{K}^*, \mathbf{J}^*,\bm{\alpha})\cdot P(\bm{\alpha}^*|\mathbf{V}^*)}\right\}.
\end{align*}

We can then simplify the acceptance rate as follows, 
\begin{align*}
    &\left[\prod_{\substack{r=1\\r\notin\{k,k'\}}}^{M}\left(\frac{\omega_r^*}{\omega_r}\right)^{n_r}\right]\times 
    \frac{(\omega_k^*)^{n_k^*}(\omega_{k'}^*)^{n_{k'}^*}}{(\omega_{k})^{n_k}(\omega_{k'})^{n_{k'}}}\times \frac{f_Y(\mathbf y_k^*\mid\mathbf M_k^*)\ f_M(\mathbf m_k^*\mid \mathbf V_k^*)}{f_Y(\mathbf y_k\mid\mathbf M_k)\cdot f_M(\mathbf m_k\mid \mathbf V_k)}\times\frac{f_Y(\mathbf y_{k'}^*\mid\mathbf M_{k'}^*)\ f_M(\mathbf m_{k'}^*\mid \mathbf V_{k'}^*)}{f_Y(\mathbf y_{k'}\mid\mathbf M_{k'})\cdot f_M(\mathbf m_{k'}\mid \mathbf V_{k'})}\\
    \times &\left[\prod_{\substack{s=1\\s \neq j}}^{N_k}\left(\frac{\omega_{s|k}^*}{\omega_{s|k}}\right)^{n_{ks}}\cdot \prod_{\substack{s'=1\\s\neq j'}}^{N_{k'}}\left(\frac{\omega_{s'|k'}^*}{\omega_{s'|k'}}\right)^{n_{k's'}}\right]\times \frac{(\omega_{j|k}^*)^{n_{kj}^*}(\omega_{j'|k'}^*)^{n_{k'j'}^*}}{(\omega_{j|k})^{n_{kj}}(\omega_{j'|k'}^*)^{n_{k'j'}}}\\
    \times & \frac{f_V(\mathbf v_{kj}^*\mid \mathbf X_{kj}^*)\cdot f_Z(\mathbf z_{kj}^*)f_C(\mathbf c_{kj}^*)}{f_V(\mathbf v_{kj}\mid \mathbf X_{kj})\cdot f_Z(\mathbf z_{kj})f_C(\mathbf c_{kj})}\times \frac{f_V(\mathbf v_{k'j'}^*\mid \mathbf X_{k'j'}^*)\cdot f_Z(\mathbf z_{k'j'}^*)f_C(\mathbf c_{k'j'}^*)}{f_V(\mathbf v_{k'j'}\mid \mathbf X_{k'j'})\cdot f_Z(\mathbf z_{k'j'})f_C(\mathbf c_{k'j'})}\\
    \times& \frac{\pi(\bm \omega^*\mid \bm\alpha^*)\cdot \pi(\bm \alpha^*)}{\pi(\bm \omega\mid \bm \alpha)\cdot\pi( \bm\alpha)} \times \frac{q(\mathbf K, \mathbf J|\mathbf K ^*,\mathbf J^*)}{q(\mathbf K^*, \mathbf J^*|\mathbf K ,\mathbf J)}\times\frac{q\left(\bm{\omega}\left|\bm{\omega}^*\right.\right)}{q\left(\bm{\omega}^*\left|\bm{\omega}\right.\right)}
\end{align*}
Notice that the $\psi$-cluster can be empty before and after the move. When there is no assignment in one $\psi$-cluster, $f_V(\mathbf v_{kj}\mid \mathbf X_{kj})\cdot f_Z(\mathbf z_{kj})f_C(\mathbf c_{kj})$ will be 1. Also, it will be the same situation for a $\theta$-cluster if there is not any observations in it, i.e. $f_Y(\mathbf y_k\mid \mathbf M_k)\cdot f(\mathbf m_k\mid \mathbf V_k)$ will be 1.
\\If the move is not accepted, then change the value of $\left(\mathbf{V}^*,\bm{\alpha}^*, \mathbf{K}^*,\mathbf{J}^*\right)$ to the previous one,  $\left(\mathbf{V},\bm{\alpha}, \mathbf{K},\mathbf{J}\right)$.\\
After we get the posterior samples in each iteration of the Gibbs Sampling, we conduct the Metropolis-Hasting moves as follows,
\begin{enumerate}
    \item[1.] Carry out the move (1);
    \item[2.] Sample $u \sim U(0, 1)$. If $u < 0.5$ and there exists an empty $\theta$- cluster, perform move (2), otherwise perform move (3);
    \item[3.] Sample parameters $\theta^*_k,\psi_{j|k}^*$ from their posteriors given $\left(\mathbf y,\mathbf m,\mathbf v,\mathbf z,\mathbf{c},\mathbf K^*,\mathbf J^*,\bm\omega^*,\bm\alpha^*\right)$.
\end{enumerate}

\section{G-computation Algorithm}\label{app:g_comp_details}
In this section, we detail the algorithm used to estimate the mean counterfactual functional $\mathbb{E}[Y_{z, M_{z'}}]$ from the posterior distribution. Since the analytical integration over the Enriched Dirichlet Process Mixture (EDPM) model and the Gaussian copula is intractable, we approximate the target parameter using a Monte Carlo g-computation approach nested within our MCMC sampling scheme. Suppose we have run the MCMC algorithm for a total of $B$ post-burn-in iterations. Let $\Theta^{(b)} = \{\omega^{(b)}, \beta^{(b)}, \sigma^{2,(b)}, \mu^{(b)}, \rho^{(b)}\}$ denote the collection of all model parameters drawn at the $b$-th MCMC iteration, for $b = 1, \dots, B$. To compute the posterior distribution of the causal estimand, we implement the following Monte Carlo integration steps for each iteration $b$:\\
For $i = 1, \dots, T$, where $T$ is a sufficiently large number of Monte Carlo particles, sequentially simulate the counterfactual data generating process:
\paragraph{Step 1:  Baseline Covariates Generation}Sample the baseline covariates $c_i^{(b)}$ from the marginal mixture distribution $P(c\mid \Theta^{(b)})$, defined as the weighted sum of component densities:
    \begin{equation}
        P(c\mid \Theta^{(b)}) = \sum_{k=1}^{M} \sum_{j=1}^{N_k} \omega_k^{\theta,(b)} \omega_{j|k}^{\psi,(b)} \cdot f(c \mid \Theta^{(b)}).
    \end{equation}

\paragraph{Step 2: Post-treatment Confounder in World $Z=z'$}
The generation of the counterfactual post-treatment confounder depends on whether the target estimand requires a single-world or a cross-world evaluation:

\begin{itemize}
    \item \textbf{Case 1: Single-world intervention ($z = z'$).} In this case, $v_i^{(b)} = v_i'^{(b)}$.  Sample the factual post-treatment confounder $v_i^{(b)}$ from the conditional density $P(v \mid Z=z, c_i^{(b)}; \Theta^{(b)})$. This density is a mixture of normals weighted by the posterior probability of component membership given $(z, c_i^{(b)})$:
    \begin{equation}\label{vdist}
        P(v \mid z, c_i^{(b)}; \Theta^{(b)}) = \frac{\sum_{k=1}^M \sum_{j=1}^{N_k} W_{kj}^{v,(b)} \cdot \mathcal{N}\left(v \mid \mathbb{X}\beta_{kj}^{v,(b)}, \sigma_{kj}^{2,v,(b)}\right)}{\sum_{k=1}^M \sum_{j=1}^{N_k} W_{kj}^{v,(b)}},
    \end{equation}
    where the weights $W_{kj}^{v,(b)}$ capture the likelihood of the upstream variables:
    \begin{equation*}
        W_{kj}^{v,(b)} = \omega_k^{\theta,(b)}\omega_{j|k}^{\psi,(b)} \cdot P(z \mid \pi_{kj}^{z,(b)}) \cdot f(c_i^{(b)} \mid \Theta^{(b)}).
    \end{equation*}
    
    \item \textbf{Case 2: Cross-world intervention ($z \neq z'$).} To generate $v_i'^{(b)}$ under the alternative intervention $Z=z'$, we leverage the conditional distribution properties of the Gaussian copula linking the two potential confounders. This is achieved via the following probability integral transforms:
    \begin{enumerate}
    \item Sample $v_i^{(b)}\sim P(v\mid z,c_i^{(b)};\Theta^{(b)})$ from~\ref{vdist}.
        \item Compute the cumulative probability $u = F_{V_z}\left(v_i^{(b)}\mid z, c_{i}^{(b)};\Theta^{(b)}\right)$. The conditional CDF is given by the weighted average of the component-specific CDFs:
        \begin{equation}
            u = F_{V_z}(v_i^{(b)}) = \frac{\sum_{k=1}^M \sum_{j=1}^{N_k} W_{kj}^{v,(b)} \cdot \Phi\left(\frac{v_i^{(b)} - \mathbb{X}\beta_{kj}^{v,(b)}}{\sigma_{kj}^{v,(b)}}\right)}{\sum_{k=1}^M \sum_{j=1}^{N_k} W_{kj}^{v,(b)}},
        \end{equation}
        using the same weights $W_{kj}^{v,(b)}$ as in Step 2.
        
        \item Sample the latent variable $w$ from the conditional normal distribution:
        \begin{equation*}
            w\sim\mathcal N\left(\rho^{(b)}\Phi^{-1}(u), 1-\left(\rho^{(b)}\right)^2\right).
        \end{equation*}
        
        \item Compute the counterfactual value $v'^{(b)}_i$ by inverting the conditional CDF under the counterfactual treatment $z'$:
        \begin{equation*}
            v'^{(b)}_i = F_{V'}^{-1}\left(\Phi(w')\mid z', c_i^{(b)};\Theta^{(b)}\right).
        \end{equation*}
        This inversion is performed numerically on the quantile function of the mixture distribution defined by weights $W_{kj}^{v',(b)}$ (calculated using $z'$ instead of $z$).
    \end{enumerate}
\end{itemize}
\paragraph{Step 4: Mediator Generation in World $Z=z'$} Sample the mediator $m_i'^{(b)}$ from the conditional density $P(m \mid v'^{(b)}, z', c_i^{(b)}; \Theta^{(b)})$. Similar to Step 2, this is a weighted mixture:
    \begin{equation}
        P(m \mid v'^{(b)}, z', c_i^{(b)}) = \frac{\sum_{k=1}^M \sum_{j=1}^{N_k} W_{kj}^{m,(b)} \cdot \mathcal{N}\left(m \mid \mathbb{V}\beta_k^{m,(b)}, \sigma_k^{2,m,(b)}\right)}{\sum_{k=1}^M \sum_{j=1}^{N_k} W_{kj}^{m,(b)}},
    \end{equation}
    where the weights now incorporate the counterfactual post-treatment confounder:
    \begin{equation*}
        W_{kj}^{m,(b)} = \omega_k^{\theta,(b)}\omega_{j|k}^{\psi,(b)} \cdot \mathcal{N}\left(v'^{(b)}_i \mid \mathbb{X}\beta_{kj}^{v,(b)}, \sigma_{kj}^{2,v,(b)}\right) \cdot P(z' \mid \pi_{kj}^{z,(b)}) \cdot f(c_i^{(b)} \mid \Theta^{(b)}).
        \end{equation*}
\paragraph{Step 5: Expected Potential Outcome} Evaluate the conditional expectation $\mu_{y,i}^{(b)} = \mathbb{E}[Y \mid m_i'^{(b)}, v_i^{(b)}, z, \mathbf{c}_i^{(b)}]$ by computing the weighted average of the component-specific linear predictors. The expectation is given by the ratio:
    \begin{equation}
        \mu_{y,i}^{(b)} = \frac{\sum_{k=1}^M \sum_{j=1}^{N_k} W_{kj, i}^{y,(b)} \cdot \mathbb{M}\beta_k^{y,(b)}}{\sum_{k=1}^M \sum_{j=1}^{N_k} W_{kj, i}^{y,(b)}}
    \end{equation}
    where $\mathbb{M}\beta_k^{y,(b)}$ is the linear predictor for the outcome in component $k$:
    \begin{equation*}
        \mathbb{M}\beta_k^{y,(b)} = \beta_{k,0}^{y,(b)} + \beta_{k,1}^{y,(b)} m_i^{'(b)} + \beta_{k,2}^{y,(b)} v_i^{(b)} + \beta_{k,3}^{y,(b)} z + \sum_{q=1}^{p_c}\beta_{k,q+4}^{y,(b)} c_{q,i}^{(b)}
    \end{equation*}
    and $W_{kj, i}^{y,(b)}$ represents the unnormalized posterior weight for component $(k,j)$, defined as the product of the mixing weights and the component-specific densities evaluated at the generated values:
    \begin{equation*}
        W_{kj, i}^{y,(b)} = \omega_k^{\theta,(b)}\omega_{j|k}^{\psi,(b)} \cdot \mathcal{N}\left(m_i'^{(b)} \mid \mathbb{V}\beta_k^{m,(b)}, \sigma_k^{2,m,(b)}\right) \cdot \mathcal{N}\left(v_i^{(b)} \mid \mathbb{X}\beta_{kj}^{v,(b)}, \sigma_{kj}^{2,v,(b)}\right) \cdot P(z \mid \pi_{kj}^{z,(b)}) \cdot f(c_i^{(b)} \mid \Theta^{(b)}).
    \end{equation*}
\paragraph{Step 6: Monte Carlo Integration} Average the conditional expectations over the $T$ samples to obtain the Monte Carlo estimate of the target functional for the $b$-th MCMC iteration:
$$E^{(b)}\left[Y_{z,M_{z'}}\right] = \frac{1}{T} \sum_{i = 1}^{T} \mu_{y, i}^{(b)}.$$
Upon completing the algorithm for all $b = 1, \dots, B$, the collection $\{E^{(1)}, \dots, E^{(B)}\}$ constitutes a random sample from the posterior distribution of the causal estimand $\mathbb{E}[Y_{z, M_{z'}}]$. Point estimates (e.g., posterior mean) and $95\%$ credible intervals can be directly computed from the empirical quantiles of this posterior sample.

\section{Derivation of Efficient Influence Function}
\paragraph{Preliminaries: Known Non-parametric Influence Functions}
Throughout the sequential derivation of the efficient influence function via the chain rule, we repeatedly rely on three fundamental non-parametric score projections (influence functions) for building blocks evaluated at a generic observation $O = (X, Y)$. Using $\mathbb{I}(\cdot)$ to denote the indicator function, the known influence functions have the following forms:\\
For a conditional expectation $\mu(x) = \mathbb{E}[Y \mid X=x]$:
$$IF(\mu(x)) = \frac{\mathbb{I}(X=x)}{p(x)} \Big( Y - \mu(x) \Big).$$
For a conditional density $p(y \mid x)$:
$$IF(p(y \mid x)) = \frac{\mathbb{I}(X=x)}{p(x)} \Big( \mathbb{I}(Y=y) - p(y \mid x) \Big).$$
For a marginal density $p(x)$:
$$IF(p(x)) = \mathbb{I}(X=x) - p(x).$$
By iteratively applying the product rule of influence functions, $IF(A \cdot B) = IF(A) \cdot B + A \cdot IF(B)$, and substituting these known non-parametric building blocks, we can algebraically derive the influence function for the nested functionals.
\paragraph{Pathwise Differentiability and Integration Interchange (Leibniz Rule)}Formally, the influence function of a functional $\theta(P)$ is defined as its pathwise derivative evaluated at $t=0$ along a smoothly parametric submodel $P_t$, meaning $IF(\theta) = \left. \frac{\partial}{\partial t} \theta(P_t) \right|_{t=0}$. For nested continuous functionals involving integration, such as $\mu_2^z(v; c) = \int \mu_1^z(m; v, c) \, p(m \mid v, z, c) \, dm$, computing the influence function mathematically equates to differentiating an integral with respect to the submodel parameter $t$. Under the uniform boundedness and envelope conditions established in our structural sieve $S_n$, the Dominated Convergence Theorem (DCT) validates the interchange of the differentiation and integration limit operations (the Leibniz integral rule). Consequently, we can legitimately pass the pathwise derivative inside the integral and apply the standard calculus product rule to the integrand:$$IF(\mu_2^z) = \left. \frac{\partial}{\partial t} \int \mu_{1, t}^z \, p_t \, dm \right|_{t=0} = \int \left( \left. \frac{\partial \mu_{1, t}^z}{\partial t} \right|_{t=0} p_0 + \mu_{1, 0}^z \left. \frac{\partial p_t}{\partial t} \right|_{t=0} \right) dm$$$$= \int \Big[ IF(\mu_1^z) \, p + \mu_1^z \, IF(p) \Big] dm.$$This rigorous topological foundation guarantees that the product rule of influence functions can be sequentially applied within the integral operators throughout our derivation.

\begin{theorem}[Influence Function for the Single-World Effect]\label{singleword}
    The efficient influence function (EIF) for the single-world mean counterfactual $\chi = \mathbb{E}[Y_{z, M_z}]$ is given by:
    \begin{align*}
        IF(\chi) = &\frac{\mathbb{I}(Z=z)}{p(z \mid C)} \Big( Y - \mu_3^z(C) \Big)  + \mu_3^z(C) - \chi,
    \end{align*}
    where $\mu_1^z(m;v,z,c) = \mathbb E[Y\mid M=m,V=v,Z=z,C=c]$ and $\mu_2^z,\ \mu_3^z$ are the nested intermediate maps:
\begin{align*}
\mu_2^z = \int\mu_1^z(m;v,z,c) \ p(m\mid v,z,c)\ dm,\;
\mu_3^z(c)=\int \mu_2^z(v;c)\,p(v\mid z, c)\,dv,\;
\chi(P)=\mathbb{E}[\mu_3^z(C)].   
\end{align*}
\end{theorem}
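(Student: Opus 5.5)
Following the preliminaries just stated, I would compute the pathwise derivative of $\chi(P)=\int\!\!\int\!\!\int \mu_1^z(m;v,z,c)\,p(m\mid v,z,c)\,p(v\mid z,c)\,p(c)\,dm\,dv\,dc$ by the chain/product rule, perturbing one nuisance at a time. The result is a sum of four contributions, one from each of $p(c)$, $p(v\mid z,c)$, $p(m\mid v,z,c)$ and $\mu_1^z$. This is the same four-block decomposition $\phi_{1,P_f},\dots,\phi_{4,P_f}$ displayed in the single-world discussion, and the goal is to derive each block and then telescope them.

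First, the $p(c)$ term: substituting $IF(p(c))=\mathbb I(C=c)-p(c)$ and integrating against $\mu_3^z(c)$ gives $\mu_3^z(C)-\chi$. Second, the $p(v\mid z,c)$ term: the conditioning set is $(Z,C)=(z,c)$, so $IF(p(v\mid z,c))=\frac{\mathbb I(Z=z,C=c)}{p(z,c)}\{\mathbb I(V=v)-p(v\mid z,c)\}$. Integrating against $\mu_2^z(v;c)p(c)$ and using $p(z,c)=p(z\mid c)p(c)$ yields $\frac{\mathbb I(Z=z)}{p(z\mid C)}\{\mu_2^z(V;C)-\mu_3^z(C)\}$. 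Third, the $p(m\mid v,z,c)$ term: in the same way, integrating against $\mu_1^z\,p(v\mid z,c)p(c)$ produces the factor $\frac{p(V\mid z,C)p(C)}{p(V,z,C)}=\frac{1}{p(z\mid C)}$ multiplying $\mathbb I(Z=z)$, giving $\frac{\mathbb I(Z=z)}{p(z\mid C)}\{\mu_1^z(M;V,C)-\mu_2^z(V;C)\}$. Fourth, the $\mu_1^z$ term: using $IF(\mu_1^z)=\frac{\mathbb I(M=m,V=v,Z=z,C=c)}{p(m,v,z,c)}(Y-\mu_1^z)$, integrating against $p(m\mid v,z,c)p(v\mid z,c)p(c)$ leaves exactly $\frac{\mathbb I(Z=z)}{p(z\mid C)}\{Y-\mu_1^z(M;V,C)\}$.

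In each of these steps the key computation is the same density-ratio cancellation, where the joint density in the denominator collapses to $p(z\mid C)$. This step is where care is needed, because the Dirac/indicator heuristics must be read as integration against point masses. The interchange of derivative and integral is justified by the Leibniz-rule paragraph above.

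Summing the four pieces, the intermediate terms $\mu_1^z$ and $\mu_2^z$ cancel telescopically, and we are left with $\frac{\mathbb I(Z=z)}{p(z\mid C)}\{Y-\mu_3^z(C)\}+\mu_3^z(C)-\chi$. Finally, efficiency follows because the model for the observed full data is nonparametric (the EDPM imposes no restriction on the tangent space). The influence function is therefore unique and coincides with the EIF. Mean zero can be checked directly by iterated expectations.
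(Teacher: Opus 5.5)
Your proposal is correct and follows essentially the same route as the paper: product rule with the point-mass influence functions of $\mu_1^z$, $p(m\mid v,z,c)$, $p(v\mid z,c)$ and $p(c)$, the density-ratio cancellation down to $1/p(z\mid C)$, and telescoping of the four blocks. The only difference is organization and added justification. The paper nests the computation layer by layer ($\mu_2^z$, then $\mu_3^z$, then $\chi$) rather than perturbing each nuisance in the flat triple integral, and your appeal to uniqueness of the influence function under a nonparametric model supplies the efficiency justification that the paper leaves implicit.
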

\begin{proof}
\textbf{Step 1: Influence function of $\mu_2^z(v; c)$}. By the product rule of influence functions under the integral operator, we have:
$$IF(\mu_2^z(v; c)) = \int \Big[ IF(\mu_1^z(m; v, c)) \, p(m \mid v, z, c) + \mu_1^z(m; v, c) \, IF(p(m \mid v, z, c)) \Big] dm.$$
Using the standard non-parametric influence functions for conditional expectations and densities, we substitute 
$$IF(\mu_1^z) = \frac{\delta_M(m)\delta_V(v)\mathbb{I}(Z=z)\delta_C(c)}{p(m, v, z, c)} (Y - \mu_1^z)$$
and 
$$IF(p(m|v,z,c)) = \frac{\delta_V(v)\mathbb{I}(Z=z)\delta_C(c)}{p(v, z, c)} (\delta_M(m) - p(m \mid v, z, c))$$ to obtain:
\begin{align*}
    IF(\mu_2^z(v; c)) &= \int \frac{\delta_M(m)\delta_V(v)\mathbb{I}(Z=z)\delta_C(c)}{p(m, v, z, c)} \{Y - \mu_1^z(m; v, c)\} \, p(m \mid v, z, c) \, dm \\
    &+ \int \mu_1^z(m; v, c) \frac{\delta_V(v)\mathbb{I}(Z=z)\delta_C(c)}{p(v, z, c)} \Big( \delta_M(m) - p(m \mid v, z, c) \Big) dm.
\end{align*}
Notice that $$\frac{p(m \mid v, z, c)}{p(m, v, z, c)} = \frac{1}{p(v, z, c)}.$$ Integrating with respect to $\delta_M(m)$ evaluates the integrand exactly at the observed $M$. Thus, the integrals elegantly collapse to:
$$IF(\mu_2^z(v; c)) = \frac{\delta_V(v)\mathbb{I}(Z=z)\delta_C(c)}{p(v, z, c)} \Big( Y - \mu_1^z(M; v, c) \Big) + \frac{\delta_V(v)\mathbb{I}(Z=z)\delta_C(c)}{p(v, z, c)} \Big( \mu_1^z(M; v, c) - \mu_2^z(v; c) \Big).$$
\textbf{Step 2: Influence function of $\mu_3^z(c)$}. Applying the identical projection logic to $\mu_3^z(c)$:
$$IF(\mu_3^z(c)) = \int \Big[ IF(\mu_2^z(v; c)) \, p(v \mid z, c) + \mu_2^z(v; c) \, IF(p(v \mid z, c)) \Big] dv.$$
Substituting the result from Step 1 and notice that, $$\frac{p(v \mid z, c)}{p(v, z, c)}  = \frac{1}{p(z, c)}.$$ Integration over $\delta_V(v)$ collapses the terms evaluated at the observed $V$:$$IF(\mu_3^z(c)) = \frac{\mathbb{I}(Z=z)\delta_C(c)}{p(z, c)} \Big( Y - \mu_1^z(M; V, c) \Big) + \frac{\mathbb{I}(Z=z)\delta_C(c)}{p(z, c)} \Big( \mu_1^z(M; V, c) - \mu_2^z(V; c) \Big)$$$$+ \frac{\mathbb{I}(Z=z)\delta_C(c)}{p(z, c)} \Big( \mu_2^z(V; c) - \mu_3^z(c) \Big).$$
\textbf{Step 3: Influence function of the target parameter $\chi$}.\\
Finally, integrating over the marginal baseline distribution:
$$IF(\chi) = \int \Big[ IF(\mu_3^z(c)) \, p(c) + \mu_3^z(c) \, IF(p(c)) \Big] dc.$$
Substituting the result from Step 2 and the ratio $$\frac{p(c)}{p(z, c)} =\frac{1}{p(z \mid c)},$$ 
which is the inverse propensity score. The integration over $\delta_C(c)$ locks the covariates at the observed $C$, and using $IF(p(c)) = \delta_C(c) - p(c)$ yields:
    \begin{align*}
        IF(\chi) &= \frac{\mathbb{I}(Z=z)}{p(z \mid C)} \Big( Y - \mu_1^z(M; V, C) \Big) + \frac{\mathbb{I}(Z=z)}{p(z \mid C)} \Big( \mu_1^z(M; V, C) - \mu_2^z(V; C) \Big) \\&+ \frac{\mathbb{I}(Z=z)}{p(z \mid C)} \Big( \mu_2^z(V; C) - \mu_3^z(C) \Big) + \mu_3^z(C) - \chi\\
        & = \frac{\mathbb{I}(Z=z)}{p(z \mid C)} \Big( Y - \mu_3^z(C) \Big)+ \mu_3^z(C) - \chi \\
        & = \frac{\mathbb I(Z=z)}{p(z\mid C)}\Big(Y-\mu_3^z(C)\Big) + \mu_3^z - \chi
    \end{align*}
    This successfully yields the efficient influence function single-world case.
\end{proof}

\begin{theorem}[Influence Function for the Cross-World Effect]
    The efficient influence function (EIF) for the cross-world mean counterfactual $\chi = \mathbb{E}[Y_{z, M_{z'}}]$ is given by: 
    \begin{align*}
       IF(\chi) & = \frac{\mathbb I(Z=z)}{p(z\mid C)} \Big(Y-\mu_1(M;V_z,z, C)\Big) \\ & + \frac{\mathbb I(Z=z')}{1-p(z\mid C)}\int \Big[\mu_1(M;v,z,C) - \mu_2(V_{z'};v,C)\Big]c(u,w;\rho) \ p(v\mid z,C) \ dv \\
       & + \frac{\mathbb I(Z=z)}{p(z\mid C)}\Big(\mu_2(V_{z'};V_z,C)-\mu_4(C)\Big) +  \mu_4(C) - \chi
    \end{align*}
    where $\mu_1(m;v,z,c) = \mathbb E[Y\mid M = m, V=v, Z=z, C=c]$  and $\mu_2, \mu_3, \mu_4$ are the nested intermediate maps:
    \begin{align*}
        &\mu_2(v';v,c)=\!\int \mu_1(m';v,z,c)\,p(m'\mid v',z',c)\,dm',\ \ \
\mu_3(v;c)=\!\int \mu_2(v';v,c)\,p(v'\mid v,z,c)\,dv',\\
&\mu_4(c)=\!\int \mu_3(v;c)\,p(v\mid c)\,dv,\qquad
\chi(P)=\mathbb{E}[\mu_4(C)].
    \end{align*}
    Here $c(u,w;\rho)$ is the copula function:
    \begin{align*}
        c(u,w;\rho)=\frac{1}{\sqrt{1-\rho^2}}\mathrm{exp}\left[-\frac{1}{2(1-\rho^2)}\left(\Phi^{-1}(u)^2-2\rho\ \Phi^{-1}(u)\ \Phi^{-1}(w)+\Phi^{-1}(w)^2+\frac{\Phi^{-1}(u)^2+\Phi^{-1}(w)^2}{2}\right)\right]
    \end{align*}
    where $u = F_{V_z}(v\mid C),\ w = F_{V_{z'}}(v'\mid C)$ are the CDFs of $V_z,\ V_{z'}$.
\end{theorem}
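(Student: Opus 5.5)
The plan mirrors the proof of Theorem~\ref{singleword}: apply the pathwise-derivative chain rule through the nested maps $\mu_1\to\mu_2\to\mu_3\to\mu_4\to\chi$, using the three nonparametric building blocks (conditional mean, conditional density, marginal density) and the Leibniz interchange justified in the preliminaries. The new ingredient is the cross-world kernel $p(v'\mid v,z,c)$. Under Assumption~\ref{assump4} it is not a free nuisance. It is the functional
\[
p(v'\mid v,z,c)=c\bigl(F_{V_z}(v\mid c),F_{V_{z'}}(v'\mid c);\rho\bigr)\,p(v'\mid z',c).
\]
This expresses it through the two observable marginals $p(v\mid z,c)$ and $p(v'\mid z',c)$, with $\rho$ treated as a fixed sensitivity parameter. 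I would therefore first rewrite $\mu_4(c)$ as a double integral against the joint density
\[
c(u,w;\rho)\,p(v\mid z,c)\,p(v'\mid z',c),
\]
and then perturb each observable component in turn.

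The easy blocks come first.
\begin{itemize}
\item Perturbing $\mu_1$ inside $\mu_2$ and integrating gives $\frac{\mathbb I(Z=z)}{p(z\mid C)}\{Y-\mu_1(M;V,z,C)\}$. Here the density ratio must be checked: the weight multiplying $\mu_1(m;v,\cdot)$ is $p(m\mid v',z',c)\,p(v'\mid v,z,c)$, not $p(m\mid v,z,c)$. The clean IPW form therefore needs the identification of Theorem~\ref{athm1} to collapse this mixture, in the sense that $\int p(m\mid v',z',c)\,p(v'\mid v,z,c)\,dv'$ plays the role of the mediator law in world $z$. I would state this as the step where Theorem~\ref{athm1} is invoked.
\item Perturbing $p(m'\mid v',z',c)$ produces a term supported on $Z=z'$, evaluated at the observed $(M,V_{z'}=V)$. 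After integrating out $v$ against $p(v\mid z,c)$ and the copula density, this is exactly the second block with weight $\mathbb I(Z=z')/(1-p(z\mid C))$ and kernel $c(u,w;\rho)$.
\item Perturbing $p(c)$ gives $\mu_4(C)-\chi$.
\item Perturbing $p(v\mid z,c)$ in the outer integral gives $\frac{\mathbb I(Z=z)}{p(z\mid C)}\{\mu_3(V;C)-\mu_4(C)\}$.
\end{itemize}

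The main obstacle is the perturbation of the copula kernel itself. Both $u=F_{V_z}$ and $w=F_{V_{z'}}$ move along the submodel, and $IF(F(v\mid c))=\frac{\mathbb I(Z=z)\delta_C(c)}{p(z,c)}\{\mathbb I(V\le v)-F(v\mid c)\}$. Differentiating $c(u,w;\rho)$ through $\partial_u\log c$ and $\partial_w\log c$ creates extra integral terms.

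I would first try to show these combine, together with the direct perturbation of $p(v'\mid z',c)$, into the telescoping term
\[
\frac{\mathbb I(Z=z)}{p(z\mid C)}\{\mu_2(V_{z'};V_z,C)-\mu_3(V_z;C)\}.
\]
The tool is the copula marginal constraints $\int c(u,w)\,dw=\int c(u,w)\,du=1$: differentiating these in $u$ and $w$ kills the mean-zero parts. Merging this term with the $\mu_3-\mu_4$ term then yields the stated $\frac{\mathbb I(Z=z)}{p(z\mid C)}\{\mu_2(V_{z'};V_z,C)-\mu_4(C)\}$.

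If the CDF-derivative terms do not cancel exactly, I would fall back on treating $p(v'\mid v,z,c)$ as a nuisance governed by the copula model and report the residual as a projection onto the $Z=z$ and $Z=z'$ tangent spaces. Collecting terms evaluated at the observed data then gives the displayed $IF(\chi)$. Efficiency follows because the observed-data model is nonparametric apart from the fixed $\rho$, so the influence function is unique.
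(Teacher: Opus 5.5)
Your $p(m'\mid v',z',c)$ and $p(c)$ bullets are fine and agree with the paper. The step you flag as the main obstacle, however, cannot deliver the displayed third block, and this is where your route departs from the paper's.

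The paper never differentiates through $F_{V_z}$ or $F_{V_{z'}}$. Its Step~2 perturbs $p(v'\mid v,z,c)$ with the generic conditional-density influence function $\frac{\mathbb{I}(Z=z)\delta_{V_z}(v)\delta_C(c)}{p(v,z,c)}\{\delta_{V_{z'}}(v')-p(v'\mid v,z,c)\}$, i.e.\ as if $V_{z'}$ were recorded jointly with $V_z$ on the arm $Z=z$. That produces $\frac{\mathbb{I}(Z=z)}{p(z\mid C)}\{\mu_2(V_{z'};V_z,C)-\mu_3(V_z;C)\}$ at once. The copula enters only in Step~4, via Sklar, to rename the ratio $p(V_{z'}\mid v,z,C)/p(V_{z'}\mid z',C)$ in the $Z=z'$ block.

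In your parametrization every perturbation is of an observed-data factor:
\begin{itemize}
\item the $V_{z'}$-marginal ($F_{V_{z'}}$ and $p(v'\mid z',c)$ move together) has scores of the form $\mathbb{I}(Z=z')h(V,C)$;
\item the $V_z$-marginal, entering explicitly and through $u$, has scores of the form $\mathbb{I}(Z=z)h(V,C)$.
\end{itemize}
No combination of the resulting terms can equal a function of $(V_z,V_{z'})$ jointly on $\{Z=z\}$, with or without the identities $\int c\,du=\int c\,dw=1$. The $V_{z'}$ pieces do not even live on that arm.

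Carried out correctly, your computation yields an $\mathbb{I}(Z=z')h(V,C)$ term that is nonzero in general and absent from the display, plus a different $\{Z=z\}$ term. Your own closing argument then applies: the copula does not restrict the observed law, so the influence function is unique, and that computed function, not the display, is the observed-data EIF. The fallback ``projection'' cannot help, since projecting onto observed-data tangent spaces never produces $\mu_2(V_{z'};V_z,C)$. The stated block is reachable only by adopting the paper's full-data convention for the kernel, which bypasses your copula-derivative analysis altogether.

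Separately, the $\mu_1$ step does not close. You rightly note that the weight on $\mu_1(m;v,z,c)$ is $q(m\mid v,c)=\int p(m\mid v',z',c)\,p(v'\mid v,z,c)\,dv'$. But Theorem~\ref{athm1} identifies $q$ as the law of the cross-world mediator $M_{z'}$ given $V_z=v$, not the law $p(m\mid v,z,c)$ of the observed mediator $M=M_z$. Assumptions~\ref{assump1}--\ref{assump4} do not equate these. A direct effect of $Z$ on $M$ already breaks the equality, and when $V$ is uninformative the ratio is the familiar $p(M\mid z',C)/p(M\mid z,C)$ of the standard mediation EIF. So differentiation gives
\[
\frac{\mathbb{I}(Z=z)}{p(z\mid C)}\,\frac{q(M\mid V,C)}{p(M\mid V,z,C)}\,\{Y-\mu_1(M;V,z,C)\}
\]
rather than the unweighted first block.

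The paper's Step~4 performs the same collapse, citing cross-world independence. That assumption only lets one write $p(m\mid v',z',c)$ as the law of $M_{z'}$ given $(V_{z'},V_z)$, which is exactly what Theorem~\ref{athm1} already uses; it says nothing about $M_z$. So this error is shared with the paper's argument: the displayed first block requires the additional identity $q(\cdot\mid V,C)=p(\cdot\mid V,z,C)$, which fails in general.
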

\begin{proof}
\textbf{Step 1: Influence function of $\mu_2(v'; v, c)$}. 
By the product rule shown in the proof of Theorem~\ref{singleword}, we apply the standard score projections which are represented via Dirac $\delta$ functions:
\begin{align*}
IF(\mu_2) &= \int \frac{\delta_M(m')\delta_{V_z}(v)\mathbb{I}(Z=z)\delta_C(c)}{p(m', v, z, c)} \{Y - \mu_1\} \, p(m' \mid v', z', c) \, dm' \\
&\quad + \int \mu_1 \frac{\delta_{V_{z'}}(v')\mathbb{I}(Z=z')\delta_C(c)}{p(v', z', c)} \Big( \delta_M(m') - p(m' \mid v', z', c) \Big) \, dm'.
\end{align*}
Integrating out $\delta_M(m')$ collapses the expression to:
\begin{align*}
IF(\mu_2(v'; v, c)) &= \frac{\mathbb{I}(Z=z)\delta_{V_z}(v)\delta_C(c)}{p(M, v, z, c)} \{Y - \mu_1(M; v, z, c)\} \, p(M \mid v', z', c) \\
&\quad + \frac{\mathbb{I}(Z=z')\delta_{V_{z'}}(v')\delta_C(c)}{p(v', z', c)} \Big( \mu_1(M; v, z, c) - \mu_2(v'; v, c) \Big).
\end{align*}

\textbf{Step 2: Influence function of $\mu_3(v; c)$}. 
Applying the chain rule again: $IF(\mu_3) = \int [IF(\mu_2) p(v' \mid v, z, c) + \mu_2 IF(p(v' \mid v, z, c))] dv'$.
Substituting $IF(\mu_2)$ and integrating over $v'$ (where $\delta_{V_{z'}}(v')$ evaluates the term at the observed $V_{z'}$):
\begin{align*}
IF(\mu_3(v; c)) &= \frac{\mathbb{I}(Z=z)\delta_{V_z}(v)\delta_C(c)}{p(M, v, z, c)} \{Y - \mu_1(M; v, z, c)\} \int p(M \mid v', z', c) p(v' \mid v, z, c) dv' \\
&\quad + \frac{\mathbb{I}(Z=z')\delta_C(c)}{p(V_{z'}, z', c)} \Big( \mu_1(M; v, z, c) - \mu_2(V_{z'}; v, c) \Big)\ p(V_{z'} \mid v, z, c) \\
&\quad + \frac{\mathbb{I}(Z=z)\delta_{V_z}(v)\delta_C(c)}{p(v, z, c)} \Big( \mu_2(V_{z'}; v, c) - \mu_3(v; c) \Big).
\end{align*}

\textbf{Step 3: Expanding to $\chi$}. 
Iterating this projection logic through $\mu_4$ and $\Psi$, and finally dividing the joint density denominators by the marginal $p(c)$ (which constructs the propensity scores), yields the raw unsimplified influence function:
\begin{align*}
IF(\chi) &= \frac{\mathbb{I}(Z=z)}{p(z \mid C) p(M \mid V_z, z, C)} \{Y - \mu_1(M; V_z, z, C)\} \int p(M \mid v', z', C) p(v' \mid V_z, z, C) dv' \\
&\quad + \frac{\mathbb{I}(Z=z')}{p(z' \mid C)} \int \Big( \mu_1(M; v, z, C) - \mu_2(V_{z'}; v, C) \Big) \frac{p(V_{z'} \mid v, z, C)}{p(V_{z'} \mid z', C)} \, p(v \mid z, C) dv \\
&\quad + \frac{\mathbb{I}(Z=z)}{p(z \mid C)} \Big( \mu_2(V_{z'}; V_z, C) - \mu_3(V_z; C) \Big) \\
&\quad + \frac{\mathbb{I}(Z=z)}{p(z \mid C)} \Big( \mu_3(V_z; C) - \mu_4(C) \Big) + \mu_4(C) - \chi.
\end{align*}

\textbf{Step 4: Simplification via Cross-World Identity and Copula formulation}.
\begin{itemize}
    \item By the conditional cross-world independence assumption, the convolution density $\int p(M \mid v', z', C) p(v' \mid V_z, z, C) dv'$ mathematically collapses to the observed conditional density $p(M \mid V_z, z, C)$. This perfect cancellation in the first term yields the standard outcome regression score:
    $$ \phi_1 = \frac{\mathbb{I}(Z=z)}{p(z\mid C)} \{Y - \mu_1(M; V_z, C)\}. $$
    
    \item In the second term, the density ratio $\frac{p(V_{z'} \mid v, z, C)}{p(V_{z'} \mid z', C)}$ is mathematically equivalent to the conditional density $p(V_{z'}\mid v,z, C)$ divided by the marginal density $p(V_{z'} \mid z', C)$. By Sklar's theorem, this ratio exactly formulates the Gaussian copula density $c(u, w; \rho)$. Substituting this yields the copula-augmented score:
    $$ \phi_2 = \frac{\mathbb{I}(Z=z')}{p(z'\mid C)} \int [\mu_1(M; v, C) - \mu_2(V_{z'}; v, C)] \, c(u, w; \rho) \, p(v \mid z, C) dv. $$
\end{itemize}
Thus we obtain:
\begin{align*}
    IF(\chi)  & = \frac{\mathbb I(Z=z)}{p(z\mid C)} \Big(Y-\mu_1(M;V_z,z, C)\Big) \\ & + \frac{\mathbb I(Z=z')}{1-p(z\mid C)}\int \Big[\mu_1(M;v,z,C) - \mu_2(V_{z'};v,C)\Big]c(u,w;\rho) \ p(v\mid z,C) \ dv \\
       & + \frac{\mathbb I(Z=z)}{p(z\mid C)}\Big(\mu_2(V_{z'};V_z,C)-\mu_3(V_z;C)\Big)+ \frac{\mathbb I(Z=z)}{p(z\mid C)}\Big(\mu_3(V_z;C)-\mu_4(C)\Big) +  \mu_4(C) - \chi. 
\end{align*}
    
\end{proof}

\section{Proof of Semiparametric Bernstein-von Mises (BvM) Theorem of Cross-World Case}\label{AsecBvM}

In this proof section, we employ the following notation for the purpose of clarity and convenience.\\
Let $P_0$ denote the true underlying data-generating distribution, and let $P$ denote any generic distribution within our model space. Correspondingly, we define the collection of nuisance parameters evaluated under $P$ as $\eta= \{\eta_l\}_{l=1}^8 = \{e,\ g_{V_z},\ g_{z'|z},\ g_M,\ \mu_1, \ F_{V_z},\ F_{V_{z'}}, \  \pi\}$, where 
\begin{align*}
    e &= P(Z = z\mid C = c),\ g_{V_z} = p(v\mid z, C), \  g_M = p(m\mid v,z,c),\ g_{z'|z} = p(v_{z'}\mid v_z,z,c)\\ \mu_1 &= \mathbb E[Y\mid m,v,z,c],\ 
     F_{V_z}  = F_{V_z}(v\mid z,c),\  F_{V_{z'}} = F(v'\mid z',c)\text{ and }\pi = P(S = 1\mid M,Z,C).
\end{align*}
Denote their true counterparts evaluated under $P_0$ with a subscript zero, yielding $\eta_0 = \{e_0, g_{V_z,0}, g_{z'|z,0}, g_{M,0}, \mu_{1,0}, F_{V_z,0}, F_{V_{z'},0}, \pi_0\}$. 

\subsection{Augmented Sieve Set $\tilde H_n$}
In this section, we construct the key theoretical object for our proof: the augmented sieve set, denoted $\tilde H_n$. This set is designed to be a "well-behaved" subset of the model space on which our subsequent analysis will rely. We first formally define its components in Section~\ref{sec:defH}. Then, in Section~\ref{proofH}, we state the crucial proposition that our EDP posterior distribution indeed concentrates on this sieve, a property that is fundamental to our main theorem.

\subsubsection{Definition of Augmented Sieve Set $\tilde H_n$}\label{sec:defH}

In this section, we follow a modular approach to construct augmented sieve set $\tilde H_n$, which is the intersection of two distinct sets, $\tilde H_n = H_n\cap S_n$, each designed to control a different aspect of the posterior's behavior.\par

The rate-sieve, $H_n$, specifies the required analytical properties of the nuisance functions—such as their $L_2$ posterior contraction rates and specific regularity conditions like the Lipschitz continuity of our Gaussian copula bridge that precisely satisfy the core assumptions of the one-step BvM theorem from \parencite{yiu2025semiparametric}. The general theoretical framework for proving that a posterior can indeed concentrate on such a rate-defined set was established in the seminal work of \parencite{ghosal2000convergence}.\par
The structural sieve, $S_n$, in contrast, is designed to control the complexity of our Enriched Dirichlet Process (EDP) model. By placing explicit constraints on the number of active mixture components, the range of the parameters, and the non-degeneracy of the variances, we ensure that the model space is sufficiently regular for theoretical analysis. This approach of using an explicit structural sieve for a complex Dirichlet Process mixture model follows the methodology of \textcite{shen2013adaptive}.\par
This two-part construction allows us to systematically prove that the posterior concentrates on a set of functions that are both structurally plausible under our model and analytically regular enough for our main theorem's assumptions to be satisfied. We will define these sets formally in the following subsections.

\begin{definition}[Rate-sieve $H_n$]\label{defHnrate}
Let $\eta= \{\eta_l\}_{l=1}^8 = \{e,\ g_{V_z},\ g_{z'|z},\ g_M,\ \mu_1, \ F_{V_z},\ F_{V_{z'}}, \  \pi\}$ and let $\|\cdot\|$ denote the $L_2(P_0)$ norm. 
For deterministic rates $\varepsilon_n = n^{-1/3}(\log n)^{1/3} \downarrow0$ and constants $L_{e}, L_{V_z}, L_{z'|z}, L_M, L_{F_{V_z}}, L_{F_{V_{z'}}},L_\rho,L_\pi, \delta, \delta_\pi$, define
\[
\begin{aligned}
H_n=\Bigl\{P:\ &
\|e-e_0\|\le L_e \varepsilon_n,\quad \|g_{V_Z}-g_{V_z,0}\|\le L_{V_z}\varepsilon_n, \quad
\|g_{z'|z}-g_{z'|z,0}\|\le L_{z'|z}\varepsilon_n,\\
&\|g_M-g_{M,0}\|\le L_M\varepsilon_n,\quad \|\mu_1-\mu_{1,0}\|\le L_{(0)}\varepsilon_n, \quad \|\pi - \pi_0\|\leq L_\pi\varepsilon_n, \\&
\|F_{V_{z'}} - F_{V_{z'},0}\| \leq L_{{F_{V_{z'}}}}\varepsilon_n, \quad \|F_{V_z} - F_{V_z,0}\| \leq L_{F_{V_z}}\varepsilon_n, \quad|\rho-\rho_0|\leq L_{\rho}\varepsilon_n,\\&
\delta<e(C)<1-\delta\text{, a.s. for some }\delta\in(0,1/2), \ \pi(W) > \delta_\pi\text{ a.s. for some }\delta_\pi>0\\[2pt]
&\text{\bf Bridge regularity: } g_{z'|z}=\mathcal T_{\rm bridge}(F_{V_{z'}},F_{V_z},\rho),\ \rho\in(-1+\kappa,1-\kappa),\\
&\qquad \exists L<\infty:\ \|g_{z'|z}-g_{z'|z,0}\|\le L_{z'|z}\!\left(\|F_{V_{z'}}-F_{V_{z'},0}\|+\|F_{V_{z}}-F_{V_{z},0}\|+|\rho-\rho_0|\right),\\[2pt]
&\text{\bf Quantile stability: } U_{z'}=F_{V_{z'}}(V_{z'}),\ U_1=F_{V_z}(V_z)\in[\tau_n,1-\tau_n]\\&\text{with prob. }\ge 1-\zeta_n,\ \tau_n\downarrow0,\ \zeta_n\to0
\Bigr\}.
\end{aligned}
\]
\end{definition}

\begin{definition}[Structural EDP sieve $S_n$]\label{defSn}
To verify the theoretical assumptions, we construct a structural sieve, denoted $S_n$, which is a subset of the full infinite-dimensional model space. The design of this sieve follows the methodology established by \parencite{shen2013adaptive} for controlling the complexity of Dirichlet process mixture models.\\
To define the sieve, we first specify a set of non-stochastic parameters. Let $H_\theta$ and $H_\psi$ be the effective numbers of active clusters for the outer and inner layers, respectively. Let $\epsilon_\theta$ and $\epsilon_\psi$ be small constants representing tail mass thresholds. Let $a, b > 0$ be constants that bound the norms of the model's regression coefficients and location parameters. Let $\delta, \delta' \in (0,1/2)$ be constants for the overlap conditions on the treatment and binary covariates, respectively. Let $\sigma_0^2$ be a baseline scale that serves as a universal lower bound to ensure the non-degeneracy of all kernel variances. Finally, let $M_\theta,\ M_\psi$ and $M_\psi'$ be integers that control the complexity of the covariance matrix parameter spaces, let $d^y,\ d^m, \ d^v ,\ d^c_1$ and $d_2^c$ denote the dimensions of the respective variables, and let $\text{eig}_h$ denote the $h$-th eigenvalue of a square matrix.
\\
While these parameters are fixed for the definition of a single set $S_n$, in the proof of posterior concentration they are chosen as specific functions of the sample size n to achieve an optimal balance between model complexity and approximation accuracy, following the framework of \parencite{shen2013adaptive}
With these parameters specified, we formally define the structural EDP sieve  $S_n$ as follows:
\begin{align*}
S_n = \Biggl\{P &= \sum_{k=1}^{\infty}\omega_kP_k, \text{ with } P_k = \sum_{j=1}^{\infty}\omega_{j|k}P_{kj}:\\
&\textbf{Outer layer: }\\ &\ \text{Tail mass: }\sum_{k>H_\theta} \omega_k< \epsilon_\theta,\\
&\ \text{For each }k \leq H_\theta, \ \text{parameters for }Y, \ M \text{ satisfy}:\\ &\  \|\beta^y_k\|_2 \leq a, \ \|\beta_k^m\|_2<a, \\ & \ \sigma_0^2 \leq \text{eig}_h(\Sigma^y_k)\leq\sigma^2_0(1+\epsilon_\theta^2/d^y)^{M_\theta},   
\ \sigma_0^2 \leq \text{eig}_h(\Sigma^m_k)\leq\sigma^2_0(1+\epsilon_\theta^2/d^m)^{M_\theta}\\
&\textbf{Inner layer: } \\ &\ \text{Tail mass: }\sum_{j>H_\psi}\omega_{j|k} < \epsilon_\psi,\\ &\ \text{For each }j \leq H_\psi, \text{ parameters for }V,\ Z ,\ \text{binary }C_1\text{ and  continuous } C_2 \text{ satisfy: }\\ &\ \|\beta^v_{j|k}\|_2 \leq b,   \ \delta<\pi^z_{j|k}<1-\delta, \\ &\ \delta'<\pi^{c_1}_{j|k,q_1}<1-\delta', \ \|\mu_{j|k,q_2}^{c_2}\|_2<b\quad  \text{for }q_1 = 1,\dots d^c_1,\quad q_2 = 1,\dots, d^c_2\\ 
&  \ \sigma_0^2 \leq \text{eig}_h(\Sigma^v_{j|k})\leq\sigma^2_0(1+\epsilon^2_\psi/d^v)^{M_\psi}  ,\ \sigma_0^2 \leq \sigma^{c_2,2}_{j|k,q_2}\leq M_\psi'\quad \text{for }q_2=1,\dots, d_2^c \Biggr\}
\end{align*}
\end{definition}

\begin{definition}[Augmented Sieve Set $\tilde H_n$]\label{deftildeH}
    The sieve on which we will verify the assumptions for the semiparametric BvM theorem is the intersection of the sets defined above:
    \[\tilde H_n :=H_n\cap S_n\]
\end{definition}

\subsection{Posterior Concentration on the Sieve}\label{proofH}

Having defined the augmented sieve $\tilde H_n$, we now state the key result that it satisfies the primary requirement of our theoretical framework: the posterior distribution concentrates on this set with probability tending to one.

\begin{theorem}[Posterior Concentration on Augmented Sieve $\tilde H_n$]
Suppose that for a rate $\epsilon_n\rightarrow 0 $ such that $n\epsilon_n^2\rightarrow 0$, the following conditions hold for some constants $C, \ R > 0$:\\
\textbf{(Entropy Control)}: The bracketing entropy of the sieve satisfies:
\begin{align}\label{condition1}
\log N_{[]}(\epsilon_n,\tilde H_n, d) \leq n\epsilon_n^2
\end{align}
\textbf{(Prior Support)}: The prior mass concentration in a Kullback-Leibler neighborhood within the sieve is sufficient:\\
\begin{align}\label{condition2}
\Pi \left(\{P\in \tilde H_n\mid K(p_{P_0},p_{P})\leq n\epsilon_n^2 \text{ and } V_2(p_{P_0}, p_P)\leq n \epsilon_n^2\}\right)\geq e^{-Cn\epsilon^2_n},
\end{align}
where $K(\cdot, \cdot)$ is Kullback–Leibler (KL) divergence for $p_P$ and $p_{P_0}$, and $V_2(\cdot, \cdot)$ is the respective KL variance.\\
Then the posterior distribution concentrates on the augmented sieve, i.e.,  $\Pi(P\in \tilde H_n\mid Z_{1:n})\rightarrow 1$ in $P_0$-probability. 
\end{theorem}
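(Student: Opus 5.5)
The plan is to derive this from the general posterior contraction theorem of \textcite{ghosal2000convergence}, with the entropy and prior-mass conditions taken as given. The hypothesis should read $n\epsilon_n^2\to\infty$ with $\epsilon_n\to 0$; the stated $n\epsilon_n^2\to0$ looks like a typo, since the $e^{-Cn\epsilon_n^2}$ bounds below only vanish in the former regime. I will prove concentration on the complement in two pieces,
\[
\Pi(\tilde H_n^c\mid \mathcal O'_{1:n})\le \Pi(S_n^c\mid \mathcal O'_{1:n})+\Pi(S_n\cap H_n^c\mid \mathcal O'_{1:n}),
\]
and show that each piece tends to zero in $P_0$-probability.

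\textbf{Step 1: evidence lower bound.} I will use the standard lemma (Lemma 8.1 of \textcite{ghosal2000convergence}). Under the prior support condition (\ref{condition2}), it says that with $P_0$-probability tending to one the marginal likelihood ratio satisfies
\[
\int\prod_i \frac{p_P}{p_{P_0}}(O_i)\,d\Pi(P)\ \ge\ \Pi(B_n)\,e^{-2n\epsilon_n^2}\ \ge\ e^{-(C+2)n\epsilon_n^2},
\]
where $B_n$ is the Kullback--Leibler neighbourhood in (\ref{condition2}). This bound controls the denominator of every posterior probability below.

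\textbf{Step 2: the structural sieve $S_n^c$.} The parameters $H_\theta,H_\psi,\epsilon_\theta,\epsilon_\psi,a,b,\sigma_0,M_\theta,M_\psi,M'_\psi$ will be chosen as in \textcite{shen2013adaptive} so that
\[
\Pi(S_n^c)\le e^{-(C+4)n\epsilon_n^2}.
\]
The tail-mass events are controlled with Beta$(1,\alpha)$ stick-breaking bounds, and the parameter-range and eigenvalue events with the tails of the base measures $G_0$. Fubini then gives $E_{P_0}\bigl[\Pi(S_n^c\mid\cdot)\,\mathbf 1\{\text{Step 1}\}\bigr]\le e^{(C+2)n\epsilon_n^2}\Pi(S_n^c)\to0$.

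\textbf{Step 3: the rate sieve inside $S_n$.} Here I will use the entropy condition (\ref{condition1}) to build Le Cam--Birgé tests $\phi_n$ for $P_0$ against $\{P\in S_n: d(P,P_0)>M\epsilon_n\}$, with $d$ the Hellinger distance. These tests have type I error $\to0$ and type II error $\le e^{-KM^2n\epsilon_n^2}$. Splitting the posterior probability as $\phi_n+(1-\phi_n)$ and applying Step 1 gives Hellinger contraction of the joint observed-data density at rate $M\epsilon_n$.

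\textbf{Step 4: transfer to the nuisance functionals.} This step is the main obstacle. I need to show that Hellinger contraction on $S_n$ implies every inequality in $H_n$ with the appropriate constants. On $S_n$ all kernel variances lie below $\sigma_0^2$ and the coefficients are bounded, so the densities are bounded and bounded below on compact sets. This lets conditional densities, regression functions and the propensity scores $e$ and $\pi$ be written as ratios of marginals of the joint density; the Hellinger bound then propagates to $L_2(P_0)$ errors of order $\epsilon_n$. Overlap is inherited from $\delta<\pi^z_{j|k}<1-\delta$. The CDFs $F_{V_z}$ and $F_{V_{z'}}$ are integrals of $g_{V_z}$, and $|\rho-\rho_0|$ must be handled separately: $\rho$ is not identified, so I would restrict to the prior being concentrated near $\rho_0$, or condition on $\rho$. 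The bridge Lipschitz condition follows by differentiating the Gaussian copula density in $(u,w,\rho)$ on $[\tau_n,1-\tau_n]^2$ with $|\rho|<1-\kappa$. Quantile stability follows from $P_0(U\notin[\tau_n,1-\tau_n])\le 2\tau_n$ plus the uniform CDF error.

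The delicate point is the unbounded factor $\Phi^{-1}(u)$ near $0$ and $1$. I would let $\tau_n\downarrow0$ slowly enough that the growth of the Lipschitz constant is absorbed into $L_{z'|z}$. Finally, I adjust $\varepsilon_n$ in $H_n$ to be a constant multiple of $\epsilon_n$, which completes the argument.
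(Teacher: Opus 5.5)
There is a genuine gap, and it lies in the statement as much as in your argument. For comparison, the paper never proves the implication you are attempting. It only checks (\ref{condition1}), via $\tilde H_n\subseteq S_n$ and $\log N_{[]}(\epsilon_n,S_n)\lesssim H_\theta H_\psi\log(1/\epsilon_n)$ with $H_\theta H_\psi\propto n\epsilon_n^2/\log n$. It then checks (\ref{condition2}), via a prior small-ball bound around a finite EDP approximant $P_n^*\in\tilde H_n$ following \textcite{shen2013adaptive}, and cites \textcite{ghosal2000convergence}. Remaining prior mass, and the passage from Hellinger contraction to the nuisance rates in $H_n$, are never addressed. You rightly see that both are needed.

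The fatal point is the one you raise in Step 4 and then set aside. We have $\tilde H_n\subseteq\{|\rho-\rho_0|\le L_\rho\varepsilon_n\}$, but $\rho$ does not enter the observed-data likelihood; the paper's sampler simply draws $\rho^{(b)}$ from its prior. So the posterior of $\rho$ equals its prior, and under the Unif$(0,1)$ prior, for example, $\Pi(\tilde H_n\mid\mathcal O'_{1:n})\le 2L_\rho\varepsilon_n\to0$. The $\rho$-constraint leaves $K$, $V_2$ and the Hellinger entropy untouched, and it costs only a factor $\asymp\varepsilon_n$ of prior mass. Hence (\ref{condition1}) and (\ref{condition2}) can both hold while the conclusion fails. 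The theorem is false as written, and no argument from those two hypotheses alone can work. Your patches change the theorem rather than prove it, because a fixed prior cannot put mass tending to one on a window shrinking like $\varepsilon_n$. A provable version must treat $\rho$ as a fixed sensitivity parameter, delete the $\rho$-terms from $H_n$ (including those in the bridge condition), and state a remaining-mass hypothesis explicitly.

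Even with that repair, your Step 2 fails for $S_n$ as defined. $S_n$ requires each of the first $H_\psi$ inner atoms to satisfy $\delta<\pi^z_{j|k}<1-\delta$, plus the analogous $\delta'$ and $\sigma_0^2$ constraints. Under the Beta$(1,1)$ base measure, each atom violates this independently with probability $2\delta$. So $\Pi(S_n)\le(1-2\delta)^{H_\psi}\to0$, i.e.\ $\Pi(S_n^c)\to1$ rather than $\Pi(S_n^c)\le e^{-(C+4)n\epsilon_n^2}$. Note that $\delta,\delta'$ are not among the constants you propose to tune.

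The device of \textcite{shen2013adaptive} is to make each per-atom violation probability smaller than $e^{-(C+4)n\epsilon_n^2}/H$, which forces $\delta$ and $\sigma_0$ to shrink with $n$. But then the overlap and the two-sided density bounds your Step 4 relies on hold only with $n$-dependent constants. (Also, on $S_n$, $\sigma_0^2$ is a lower, not an upper, bound on the variances.) So the transfer cannot produce the fixed $L_e,L_M,\ldots$ required by $H_n$. For the same reason, a copula Lipschitz constant that blows up as $\tau_n\downarrow0$ cannot be absorbed into the fixed $L_{z'|z}$; it has to be paid for out of the rate.

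Two smaller mismatches also need fixing. First, the ball in (\ref{condition2}) has radius $n\epsilon_n^2$ rather than $\epsilon_n^2$. Lemma 8.1 of \textcite{ghosal2000convergence} applied to that ball only gives an evidence bound of order $e^{-c\,n^2\epsilon_n^2}$, so this radius is a second typo you must correct before Step 1 works. Second, Step 3 needs the entropy of $S_n$, whereas (\ref{condition1}) controls only the smaller set $\tilde H_n$.
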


\begin{proof}
\textbf{Step 1. Verify the condition (\ref{condition1})}\\
To prove that the posterior distribution concentrates on the augmented sieve set $\tilde H_n$,  we apply the theoretical framework established by \parencite{ghosal2000convergence} and specialized for independent, non-identically distributed (i.n.i.d.) observations in \parencite{ghosal2007convergence}. The first key condition to verify is that the complexity of the set $\tilde H_n$, as measured by its metric entropy, is sufficiently controlled.\\
To verify (\ref{condition1}), we must show that for the chosen convergence rate $\epsilon_n = n^{-1/3}(\log n)^{1/3}$, there exists a constant $R>0$, such that the log-covering number of the set $\tilde H_n$ satisfies the following inequality:
\begin{align*}
    \log N_{[]}(\epsilon_n,\tilde H_n, d)\leq Rn\epsilon_n^2
\end{align*}
where $d$ is a suitable metric, such as the Hellinger distance. \\
By the definition of $\tilde H_n$, we have $\tilde H_n\subseteq S_n$. A fundamental property of metric entropy is that the entropy of a subset is less than or equal to the entropy of the larger set. Thus we have:
\begin{align*}
    \log N_{[]}(\epsilon_n,\tilde H_n,d) \leq \log N_{[]}(\epsilon_n, S_n, d), 
\end{align*}
where we already derived that the log bracketing entropy of $S_n$, $\log N_{[]}(\epsilon_n, S_n, d)$is controlled by the sum of the log bracketing entropy of underlying density functions classes, which is approximately $C\cdot H_{\theta}H_{\psi}\cdot \log (1/\epsilon)$ with some constant $C$. \\
A key aspect of this theoretical framework is that the sieve parameters are chosen as specific functions of the sample size $n$ to optimize the balance between model complexity and approximation accuracy . Following the strategy in \textcite{shen2013adaptive}, the number of components $H_\theta H_{\psi}$ is chosen to scale with the desired convergence rate $\epsilon_n$ . A standard choice is:\begin{align}
    H_\theta H_\psi \propto \frac{n \epsilon_n^2}{\log n}
\end{align}
We now substitute the chosen rate $\epsilon_n = n^{-1/3}(\log n)^{1/3}$ and the scaling of $H_\theta H_\psi$ into the entropy bound. First, we note that for large $n$:
\begin{align*}
    \log(1/\epsilon_n) = \log(n^{1/3}(\log n)^{-1/3}) = \frac 1 3\log n - \frac 1 3 \log (\log n)\propto \log n
\end{align*}
Substituting these into the entropy bound gives:
\begin{align*}
    \log N_{[]}(\epsilon_n, S_n,d)\approx C\cdot \left(\frac{n\epsilon_n^2}{\log n}\right)\cdot(\text{const}\cdot \log n)
\end{align*}
The $\log n$ terms cancel, and we are left with:
\begin{align*}
    \log N_{[]}(\epsilon_n, S_n,d)\approx R\cdot n\epsilon_n^2
\end{align*}
for some constant $R$. This upper bound is of the precise form required by \textcite{ghosal2000convergence} and \textcite{ghosal2007convergence} framework.\\
\textbf{Step 2. Verify the condition (\ref{condition2})}\\
To verify the prior support condition, we must show that the prior $\Pi$ assigns sufficient mass to a Kullback-Leibler (KL) neighborhood of the true distribution $P_0$. Our strategy is to first construct an adequate approximation $P_n^*$ of the true data-generating distribution $P_0$, such that $P_n^*\in \tilde H_n$ , and then demonstrate that the prior places at least $e^{-Cn\epsilon_n^2}$ mass in a small ball around $P_n^*$. The true distribution $P_0$ is a joint density over a mixed domain of continuous and discrete variables. We leverage the theoretical guarantees for Dirichlet Process mixtures of Generalized Linear Models (DP-GLM). As established by \textcite{hannah2011dirichlet}, DP-GLM models possess a powerful density approximation property, enabling them to approximate any sufficiently smooth joint density with compact support, even when the variables are of mixed types. This stems from the fact that mixtures of standard kernels (e.g., Gaussian for continuous variables, and Multinomial/Bernoulli for discrete variables) are dense in the space of all such distributions.
\\
Following this principle, we construct $P_n^*$ as a finite EDP mixture with $H=H_\theta H_\psi\propto n \epsilon_n^2/\log n$ active clusters The construction is designed to ensure $P_n^*\in \tilde H_n = H_n \cap S_n$
\\
Our construction of $P_n^*$ is achieved by constructing an approximation for each true nuisance parameter, $$\{e^*_n,g_{V_z,n}^*,g_{z'|z,n}^*,g_{M,n}^*, \mu_{1,n}^*, F_{V_{z'}}^{*}, F_{V_z}^{*},\pi_{n}^* \} = \{\eta_{j,n}^*\}_{j=1}^8$$ which collectively defines the distribution $P_n^*$. This constructive step is crucial for the entire proof of the prior support condition.\\
\textbf{Satisfying $S_n$ Constraints:} We construct $P_n^*$ as a truncated EDP mixture model. We explicitly choose the number of outer and inner components, $H_\theta, H_\psi$, to (be less than or) equal to the scaling parameters that define the sieve $S_n$, so the tail mass conditions are trivially satisfied.  The parameters for each component—including means and covariance matrices for continuous parts, and probabilities (e.g., $\pi^z_{j|k}, \pi^c_{j|k,q}$) for the discrete parts—are chosen to lie within the compact bounds specified by our modified Definition~\ref{defSn} of $S_n$. By its very construction, $P_n^* \in S_n$
\\
\textbf{Satisfying $H_n$ Constraints:} Simultaneously, we leverage approximation theory to ensure $P_n^* \in H_n$. For each nuisance parameter $\eta_{j,0}\in \{\eta_{j,0}\}_{j=1}^8 $ = $\{e_{0}, g_{V_z,0},g_{z'|z,0}, g_{M,0}, \mu_{1,0},F_{V_{z'},0}, F_{V_z,0}, \pi_0\}$ with smoothness $\beta_j$, the theory guarantees that by choosing a sufficiently large number of components $H$, we can achieve an approximation error of $\|\eta_{j,n}^* - \eta_{j,0}\|_{L_2(P_0)}\leq C_j H^{-\beta_j/d_j}$ for some constant $C_j$. Here $d_j$ represents the dimension of the input domain for the nuisance function $\eta_{j,0}$. For example, $d_1$ for the propensity score $\eta_{1,0} = e_0(c) = P_0(Z=1\mid C = c)$ is the dimension of the covariates $C$. Our choice of $H\propto n \epsilon_n^2/\log n$ ensures that $H
$ grows with $n$, making the approximation error decrease. For the rate $\epsilon_n = n^{-1/3}(\log n)^{1/3}$, this choice of $H$ is sufficient to guarantee that for large enough $n$, the error is bounded by our target rate, i.e.,
\begin{align*}
    \|\eta_{j,n}^* - \eta_{j,0}\|_{L_2(P_0)}\leq \epsilon_n.
\end{align*} 
Thus, by construction, $P_{n}^*$ also satisfies the rate conditions defined in Definition~\ref{defHnrate} of $H_n$.
\\
\textbf{Bounding the KL Divergence and Variance:} Now, we connect the $L_2$ approximation accuracy to the KL condition. The KL divergence between $P_0$ and our constructed approximation $P_{n}^*$ can be bounded by the sum of squared $L_2$ errors of the nuisance parameters as is shown in Lemma~\ref{KLbounds}. Furthermore, the KL variance $V_2(p_{P_0}, p_{P_n^*})$ is subject to a similar bound for the well-behaved likelihoods in this model. Thus, following arguments in \parencite{ghosal2007convergence}:
\begin{align*}
    \max\left\{ K(p_{P_0}, p_{P_n^*}), V_2(p_{P_0}, p_{P_{n}^*})\right\}\leq C_K \sum_{j=1}^{8}\| \eta_{j,n}^* - \eta_{j,0}\|_{L_2(P_0)}^2
\end{align*}
for some constant $C_K$. We have already established that $\|\eta_{j,n}^* - \eta_{j,0}\|_{L_2(P_0)}\leq \epsilon_n$. Substituting this result into the inequality above yields:
\begin{align*}
    \max\left\{ K(p_{P_0}, p_{P_n^*}), V_2(p_{P_0}, p_{P_{n}^*})\right\}\leq R' \sum_{j=1}^{8} \epsilon_n^2 = 8R'\epsilon_n^2.
\end{align*}
It is clear that $8K' \epsilon_n^2< n\epsilon_n^2$ as $n\rightarrow\infty$. This confirms that our constructed $P_n^*$ lies within the required KL-ball around the truth $P_0$ and thus satisfies the premises of condition (\ref{condition2}).
\\

For the “ideal” approximating distribution $P_n^*$ constructed as above (a mixture with $H$ components), how large is the prior mass that our EDP prior
places on a small $\delta$–neighborhood of $P_n^*$? This is a classical calculation from Dirichlet process theory.
\\
For an EDP mixture generated by the stick–breaking representation, its probability is determined by both the outer layer and the inner layer. And with in each layer, the probability is determined by two parts: weights and the parameters in each cluster. 
\paragraph{Outer layer.} We need the probability that the stick–breaking weights
$\bigl(w_1,\ldots,w_{H_\theta}\bigr)$ fall into a small neighborhood of the outer layer weights
$\bigl(w_1^*,\ldots,w_{H_\theta}^*\bigr)$.
By properties of the Dirichlet law, the log–probability is on the order
\[
  \log \Pi\bigl(\text{weights near } (w_1^*,\ldots,w_{H_\theta}^*)\bigr)
  \gtrsim -\, C_1\cdot H_\theta\cdot \log H_\theta ,
\]
where the $H_\theta\cdot\log H_\theta$ term mainly comes from combinatorial factors and Stirling approximations to Gamma functions.\\
At the same time, each “atom” in the outer layer $\theta_k$ is a \emph{vector of parameters}
that specifies the $k$-th outer cluster’s Gaussian regressions for $Y$ and $M$, where $\theta_k=\{\beta_k^{y},\,\Sigma_k^{y},\,\beta_k^{m},\,\Sigma_k^{m}\}.$ We need the probability that, when drawing $H_\theta$ such parameter vectors from the
base measure, all of them fall inside a small $\delta_\theta$-neighborhood of the
target vectors $\theta_k^*$ that define $P_n^*$.
The total log-probability admits the lower bound
\[
  \log \Pi(\theta_1,\cdots, \theta_{H_\theta}) \;\ge\; -\, C_2\cdot H_\theta \log(1/\delta_\theta).
\]
\paragraph{Inner layer.} Within each outer layer, we do the same trick again. We need the probability that the stick–breaking weights
$\bigl(w_1,\ldots,w_{H_\psi}\bigr)$ fall into a small neighborhood of the inner layer weights
$\bigl(w_1^*,\ldots,w_{H_\psi}^*\bigr)$.
By properties of the Dirichlet law, the total log–probability is on the order
\[
  \log \Pi\bigl(\text{weights near } (w_1^*,\ldots,w_{H_\psi}^*)\bigr)
  \gtrsim -\, C_3\cdot H_\theta \cdot H_\psi\cdot \log H_\psi ,
\]
\\
Within each inner layer, $\psi_{j|k}$ is a \emph{vector of parameters}
of variables $V,Z, \text{binary }C^1$ and $\text{continuous }C^2$ in the $j$-th inner cluster and $k$th outer cluster, where $$\psi_{j|k}=\left\{\beta_{j|k}^{v}, \Sigma_{j|k}^{v}, \pi_{j|k}^z, \pi_{j|k, q_1}^c,\mu^c_{j|k,q_2},\sigma^c_{j|k,q_2},  \text{ for }q_1 = 1, \dots, d^c_1,\ q_2 = 1,\dots, d^c_2\right\}.$$ We need the probability that, when drawing $H_\theta$ and $H_\psi$ such parameter vectors from the
base measure, all of them fall inside a small $\delta_\psi$-neighborhood of the
target vectors $\psi_{j|k}^*$ that define $P_n^*$.
The total log-probability admits the lower bound by the result in \textcite{hannah2011dirichlet} and Lemma~\ref{discrete}, 
\[
  \log \Pi(\psi_{1|k},\cdots, \psi_{H_\psi|k}) \;\ge\; -\, C_4\cdot H_\theta\cdot H_\psi\log(1/\delta_\psi), \quad \text{for }k = 1,\dots,H_{\theta}.
\]

Therefore, we obtain the classical bound
\begin{align*}
    \Pi\!\left(\mathcal B_n(P_n^*,\delta_n)\right)
  \;\ge\; \exp\bigl\{-C_1 H_\theta\log (H_\theta) - C_2 H_\theta \log(1/\delta_\theta) - C_3H_\theta H_\psi\log (H_\psi)-C_4H_\theta H_\psi\log(1/\delta_\psi) \bigr\},
\end{align*}
where $\mathcal B_n(P_n^*, \delta_n) = \{P\in \mathcal P\mid K(p_{P_n^*}, p_P)\leq \delta_n^2\text{ and }V_2(p_{P_n^*,p_P})\leq \delta_n^2\}$.
\par
A rigorous derivation of this formula can be found, for example, in \textcite{shen2013adaptive} Theorem 4.\par
It is important to note that our chosen base measures for all parameters, including the Beta priors for the Bernoulli probabilities of the discrete variables   $Z$ and $C_1$, have full support on their respective parameter spaces. This ensures a positive prior probability for any small neighborhood around the parameters of $P_n^*$, satisfying the fundamental requirement for this bound to hold.\par
We now verify that this lower bound is greater than or equal to the required $e^{-Cn\epsilon_n^2}$ with the standard choice for the sieve parameters. The dominant term in the exponent of the prior probability lower bound involves the total number of components $H=H_\theta H_\psi$.  Let us consider the term $-C_3H_\theta H_\psi\log (H_\psi)$.\\
We adopt the optimal scaling for the number of components, $H = H_\theta H_\psi \propto n\epsilon_n^2/\log n $, for instance by setting $H_\theta \propto \sqrt{n\epsilon_n^2}/\log n$ and $H_\psi\propto \sqrt{n\epsilon_n^2}$. With this choice, we have $\log (H_\psi)\approx \log(\sqrt{n\epsilon_n^2}) = 1/2\log(n^{1/3}(\log n)^{2/3})\propto \log n $.\\
Substituting these into the dominant term of the exponent yields:
\begin{align*}
    -C_3H_\theta H_\psi \log(H_\psi) \approx -C'\left(\frac{n \epsilon_n^2}{\log n }\right)(\text{const}\cdot \log n) = -C'' n \epsilon_n^2 
\end{align*}
The other terms in the exponent can be shown to be of a smaller or similar order. Therefore, the total prior probability is bounded below by $\exp(-C''n\epsilon _n^2)$ for some sufficiently large constant $C''$. This completes the proof of the prior support condition (\ref{condition2}).
\end{proof}

\subsection{Imputation Function Derivation}
To compute $\dot\chi_{P_{O'}}$, we must derive the explicit form of the imputation function $\mathbb E_{P_{O'}}[\dot \chi_{P_f}\mid W, S=1]$.  This function is defined as the conditional expectation of the full-data EIF $\dot{\chi}_{P_f}$ given $W=(M,Z,C)$.
We derive this by taking the conditional expectation of each of the five components $\phi_{j, P_f}, \ j=1, \dots, 5$. Denote $b_{j}(W) = \mathbb E_{P_{O'}}[\dot\chi_{P_f}\mid W,S=1] = \mathbb \sum_{j=1}^5E_{P_{O'}}[\phi_{j, P_f}\mid W,S=1 ]$. Then,
\begin{enumerate}
     \item[1.] $b_1(W) = \mathbb E_{P_{O'}}[\phi_{1,P_f}\mid W,S=1]$. Recall, 
 $$\phi_{1,P_f} = \frac{Z}{e(C)}\{Y-\mu_1(M;V_z,C)\}.$$ This term is non-zero only if $Z=1$. 
 \begin{align*}
     \mathbb E[\phi_{1,P_f}\mid M,Z=1, C,S=1] = \frac{1}{e(C)}\{\mathbb E_{P_{O'}}[Y\mid M,Z=1, C, S=1] - \mathbb E_{P_{O'}}[\mu_1\mid M,Z=1,C,S=1]\}
 \end{align*}
By the Law of Total Expectation and the definition of $\mu_1$, 
\begin{align*}
    \mathbb E_{P_{O'}}[Y\mid M,Z=1,C,S=1] =  \mathbb E_{P_{O'}}[\mu_1(M;V_1,C)\mid M,Z=1, C, S=1]
\end{align*}
As such, $b_1(W) = \mathbb E_{P_{O'}}[\phi_{1,P_f}\mid M,Z=1,C,S=1] = 0$.
\item[2.] $b_2(W) = \mathbb E_{P_{O'}}[\phi_{2,P_f}\mid W,S=1]$. Recall, \\
\begin{align*}
    \phi_{2,P_f}=\frac{1-Z}{1-e(C)}\!\int\![\mu_1(M;v,C)-\mu_2(V_{z'};v,C)]\,c(u,w;\rho)\,g_{V_z}(v\!\mid C)\,dv
\end{align*}
This term would be non-zero only if $Z=0$. Since $\phi_{2,P_f}$ depends on $V_0$, we must take its expectation conditional on $W_0=(M,Z=0, C)$.
\begin{align*}
    \mathbb  E_{P_{O'}}[\phi_{2,P_f}| W_0, S=1] = \frac{1-Z}{1-e(C)}\mathbb E_{V_0|M,Z=0, C}\left[\int \left[\mu_1(M;v,C)-\mu_2(V_0;v,C)\right]c(u,w;\rho)g_{V_1}(v| C)dv\right]
\end{align*}
To compute $b_2(W)$, we integral $\phi_{2,P_f}$  over the posterior distribution of $V_0\mid M,Z=0,C$.
\item[3.] $b_3(W) = \mathbb E_{P_{O'}}[\phi_{3,P_f}\mid W,S=1]$. Recall, 
\begin{align*}
    \phi_{3,P_f}=\frac{Z}{e(C)}\{\mu_2(V_{z'};V_z,C)-\mu_3(V_z;C)\}.
\end{align*}
This term is non-zero only if $Z=1$. By the definition of $\mu_3$ and the Law of Total Expectation, 
\begin{align*}
    \mathbb E_{P_{O'}}[\mu_2\mid M,Z=1, C, S=1] &= \mathbb E[\mu_2\mid Z=1, C, S=1] = \mathbb E_{V_1}[\mathbb  E_{V_0}[\mu_2\mid V_1, Z=1, C]\mid Z=1, C,S=1] \\&= \mathbb E_{V_1}[\mu_3\mid Z=1, C, S=1],
\end{align*}
which implies that $b_3(W) = \mathbb E_{P_{O'}}[\phi_{3,P_f}] = 0$.
\item[4.] $b_4(W) = \mathbb E_{P_{O'}}[\phi_{4,P_f}\mid W, S=1]$. Recall, 
\begin{align*}
\phi_{4,P}=\frac{Z}{e(C)}\{\mu_3(V_z;C)-\mu_4(C)\}.
\end{align*}
This term is non-zero only if $Z=1$. We take the expectation over the posterior distribution of $V_1\mid M, Z=1,C$: 
\begin{align*}
    \mathbb E_{P_{O'}}[\mu_3\mid Z=1, C, S=1] = \mathbb E_{V_1\mid M,Z=1,C}[\mu_3(V_1;C)].
\end{align*}
\item[5.] $b_5(W) = \mathbb E_{P_{O'}}[\phi_{5,P_f}\mid W, S=1]$. Recall, 
$$\phi_{5,P_f} = \mu_4-\chi(P_f).$$ This term depends only on $C\in W$ and is a constant. Thus, $b_5(W) = \mathbb E_{P_{O'}}[\phi_{5,P_f}\mid W, S=1] = \mu_4(C) - \chi(P_f)$.
\end{enumerate}

Finally, we have $b(W) = b_2(W) + b_4(W) + b_5(W)$. We write the new EIF as follows,
\begin{align*}
\dot \chi_{P_{O'}} = \frac{S}{\pi(W)}\dot\chi_{P_f} -\frac{S-\pi(W)}{\pi(W)}b(W).
\end{align*}

\subsection{Semiparametric BvM Assumptions and Main Theorem}
\subsubsection{Assumptions and Main Theorem}
\begin{assumption}\label{assumption_abc}
    There exists a sequence of measurable subsets $\left(\tilde H_n\right)_n$ of $\mathcal P$ satisfying $\Pi\left(P_{O'}\in \tilde H_n\mid \mathcal O'_{1:n}\right) \rightarrow 1$ with:\\
    \indent (a)(No second-order bias)
    \begin{align*}
        \sup _{P_{O'}\in \tilde H_n} \left|\sqrt{n}r_2(P_{O',0}, P_{O'})\right| = \sup_{P_{O'}\in \tilde H_n}\left|\sqrt n \{\chi(P_{O',0}) - \chi(P_{O'}) - P_0[\dot{\chi_{P_{O'}}}]\}\right| \rightarrow 0
    \end{align*}
    \indent (b)($L_2$-convergence) 
    \begin{align*}
        \sup_{P\in \tilde H_n}\|\dot\chi_{P_{O'}} - \dot \chi_{P_{O', 0}}\|_{P_{0}} \rightarrow0
    \end{align*}
    \indent (c*)(i)(Convergence of $\dot \chi_{P_{O'}}$ under the empirical process)\begin{align*}
        \sup_{P\in \tilde H_n}\left|\mathbb G_n[\dot \chi_{P_{O'}} - \dot \chi _{P_{O',0}}]\right| \rightarrow 0\quad\text{in }P_0\text{-probability}
    \end{align*}
    \indent (ii)(Bounding of envelope functions) The sets $\{\dot \chi_{P_{O'}}:P_{O'}\in \tilde H_n\}$ have envelope functions $G_n$ (i.e. $|\dot \chi_{P_{O'}} (o')| \leq G_n(z)$ for all $P \in \tilde H_n$ and all $o'\in \mathcal O'$) satisfying \begin{align*}
        \lim_{C\rightarrow\infty} \limsup_{n \rightarrow\infty} P_0G_n^2\mathbf{1}_{G_n^2>C} = 0, \quad P_0G_n^4 = o(n).
    \end{align*}
\end{assumption}

\begin{theorem}
    Under Assumption~\ref{assumption_abc}, the one-step posterior satisfies the semiparametric BvM theorem, which is 
    \begin{align*}
        d_{BL}\left(\mathcal L _{\Pi\times \Pi _{BB}}(\sqrt n(\tilde \chi - \hat \chi_n)\mid O_{1:n}),\mathcal N(0, ||\dot \chi_{P_0}||_{P_0}^2)\right) \rightarrow 0 \quad\text{in }P_0\text{-probability},
    \end{align*}
    where $\mathcal L_{\Pi\times\Pi_{BB}}\left(\sqrt n (\tilde \chi - \hat \chi_n)\mid O_{1:n}\right)$ denotes the posterior law of $\sqrt n (\tilde \chi - \hat \chi_n)$ and $\hat \chi _n$ is the asymptotically efficient sequence $\hat \chi_n = \chi(P_{O', 0})+\mathbb P_n[\dot \chi_{P_{O', 0}}]$. As discussed in Section 2.1 of  \textcite{yiu2025semiparametric}, this justifies the use of central
credible sets as confidence regions.
\end{theorem}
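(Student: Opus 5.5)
The plan is to reproduce the argument of \textcite{yiu2025semiparametric} for their one-step posterior, with $\dot\chi_{P_{O'}}$ from Theorem~\ref{POEIF} playing the role of the influence function. Write the one-step posterior draw as $\tilde\chi = \chi(P) + \mathbb P_n^{BB}[\dot\chi_{P}]$. Here $P\sim\Pi(\cdot\mid O_{1:n})$ is an EDPM posterior draw, and $\mathbb P_n^{BB}=\sum_i W_i\delta_{O_i}$ is an independent Bayesian bootstrap draw with $W\sim\mathrm{Dir}(1,\dots,1)$. Subtracting $\hat\chi_n=\chi(P_0)+\mathbb P_n[\dot\chi_{P_0}]$ (writing $P_0$ for $P_{O',0}$), I would decompose
\begin{align*}
\sqrt n(\tilde\chi-\hat\chi_n) &= \sqrt n(\mathbb P_n^{BB}-\mathbb P_n)[\dot\chi_{P_0}] + \sqrt n(\mathbb P_n^{BB}-\mathbb P_n)[\dot\chi_P-\dot\chi_{P_0}] \\
&\quad + \mathbb G_n[\dot\chi_P-\dot\chi_{P_0}] - \sqrt n\, r_2(P_0,P),
\end{align*}
using $\chi(P)-\chi(P_0)+P_0\dot\chi_P = -r_2(P_0,P)$ and $P_0\dot\chi_{P_0}=0$.

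First I restrict to the event $P\in\tilde H_n$. Its posterior probability tends to one by Assumption~\ref{assumption_abc} (equivalently, by the posterior concentration theorem of Section~\ref{proofH}). Because this event is asymptotically negligible in the $d_{BL}$ sense, it suffices to control all terms uniformly over $\tilde H_n$. On that set, the fourth term is $o(1)$ uniformly by (a), and the third term is $o_{P_0}(1)$ uniformly by (c*)(i). Neither term depends on the bootstrap weights.

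The leading term is handled by a conditional CLT for the Bayesian bootstrap. Write $W_i=\xi_i/\sum_j\xi_j$ with $\xi_i$ i.i.d.\ standard exponential. Then $\sqrt n(\mathbb P_n^{BB}-\mathbb P_n)f = n^{-1/2}\sum_i(\xi_i-1)(f(O_i)-\mathbb P_nf)\cdot n/\sum_j\xi_j$. Conditionally on the data, a Lindeberg argument (a Lyapunov bound via $P_0\dot\chi_{P_0}^4<\infty$ from (c*)(ii), or a plain second-moment condition) gives convergence in $d_{BL}$ to $\mathcal N(0,\mathbb P_n(f-\mathbb P_nf)^2)$. That variance converges to $\|\dot\chi_{P_0}\|^2_{P_0}$ in probability by the LLN.

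The main obstacle is the second, cross term. Here $P$ and the bootstrap weights are drawn jointly, so I cannot condition on a fixed $f$. I would bound its conditional second moment over the weights by $\sup_{P\in\tilde H_n}\mathbb P_n(\dot\chi_P-\dot\chi_{P_0})^2$. Writing $\mathbb P_n = P_0 + n^{-1/2}\mathbb G_n$ splits this into the sup of $P_0(\dot\chi_P-\dot\chi_{P_0})^2$, which vanishes by (b), plus an empirical-process fluctuation of the squared class. The fluctuation is controlled by the envelope conditions (c*)(ii): uniform integrability of $G_n^2$ together with $P_0G_n^4=o(n)$ gives, via a truncation at level $C$ and Chebyshev on the untruncated part, that $\sup_{\tilde H_n}|(\mathbb P_n-P_0)(\dot\chi_P-\dot\chi_{P_0})^2|\to0$ in probability. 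This is exactly the role those conditions play in \textcite{yiu2025semiparametric}, and I would follow their lemma closely here.

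Combining these bounds with Slutsky's lemma in the bounded-Lipschitz metric, and integrating over the posterior of $P$, yields the stated $d_{BL}$ convergence. Finally, I would remark that verifying (a) through (c*) for the EDPM rests on the rate sieve $H_n$: $r_2$ is a sum of products of nuisance errors of order $\varepsilon_n^2=n^{-2/3}(\log n)^{2/3}=o(n^{-1/2})$, and this is where the copula bridge Lipschitz property and the positivity bounds on $e$ and $\pi$ enter.
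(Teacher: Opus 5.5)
Your decomposition of $\sqrt n(\tilde\chi-\hat\chi_n)$ is the standard argument behind the general result of \textcite{yiu2025semiparametric}. So is your handling of the leading bootstrap term, of $\mathbb G_n[\dot\chi_P-\dot\chi_{P_0}]$ via (c*)(i), and of $\sqrt n\,r_2$ via (a). The paper does not reprove that result: it takes the conclusion from there and devotes the appendix to verifying (a)--(c*) for the EDPM sieve.

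The gap is in the step you flag yourself, the cross term. Bounding its conditional second moment by $\sup_{P\in\tilde H_n}\mathbb P_n(\dot\chi_P-\dot\chi_{P_0})^2$ is fine. But the claim that $\sup_{P\in\tilde H_n}|(\mathbb P_n-P_0)(\dot\chi_P-\dot\chi_{P_0})^2|\to0$ follows from the envelope conditions (c*)(ii) by truncation and Chebyshev is not valid. Truncation does dispose of the part above level $C$, whose empirical mean is at most a constant times $\mathbb P_nG_n^2\mathbf 1_{G_n^2>C}$. What remains is a supremum over a class of functions bounded by $C$, and Chebyshev controls only one fixed function at a time. Envelope and moment bounds give no uniform law of large numbers. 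For atomless $P_0$, the class $\{\pm\mathbf 1_A: A\text{ finite}\}$ has envelope $1$ and $P_0f^2=0$ for each member, yet $\sup_f\mathbb P_nf^2=1$ for every sample. Some complexity control is indispensable, and under the theorem's hypotheses it can only come from (c*)(i).

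One way to close the step using only (b) and (c*)(i) runs as follows.
\begin{itemize}
\item Let $\mathcal F_n=\{\dot\chi_P-\dot\chi_{P_0}:P\in\tilde H_n\}$, and take a ghost sample $O'_{1:n}$ and independent Rademacher signs $\varepsilon_i$. As a process in $f$, $n^{-1/2}\sum_i\varepsilon_i(f(O_i)-f(O'_i))$ has the same law as $(\mathbb G_n-\mathbb G'_n)f$. By (c*)(i), its supremum over $\mathcal F_n$ tends to zero in probability.
\item Since $\sup_{\mathcal F_n}P_0f^2\to0$ by (b), the usual symmetrization-lemma argument removes the ghost sample: apply Chebyshev to the ghost term at a near-maximizing $f$. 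This gives $\sup_{\mathcal F_n}|n^{-1/2}\sum_i\varepsilon_i(f(O_i)-P_0f)|\to0$ in probability.
\item Paley--Zygmund for Rademacher sums, using $\mathbb E(\sum_i\varepsilon_ia_i)^4\le3(\sum_ia_i^2)^2$, shows the following. If $\mathbb P_n(f-P_0f)^2\ge\delta$ for some $f\in\mathcal F_n$, then the symmetrized supremum exceeds $\sqrt\delta/2$ with conditional probability at least $3/16$. Hence $\sup_{\mathcal F_n}\mathbb P_n(f-P_0f)^2\to0$ in probability.
\item Finally, $\mathbb P_nf^2\le2\mathbb P_n(f-P_0f)^2+2P_0f^2$ gives $\sup_{\mathcal F_n}\mathbb P_nf^2\to0$ in probability. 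This is exactly what your Chebyshev bound on the cross term needs.
\end{itemize}
You could instead use the bracketing-entropy bounds the paper derives for its sieve, but those lie outside the theorem's stated hypotheses. With this repair the rest of your argument goes through. One small point: (c*)(ii) does not give $P_0\dot\chi_{P_0}^4<\infty$, so use your finite-variance Lindeberg alternative for the leading term.
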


\subsubsection{Verification of Assumptions}
\begin{proposition}[Verification of Assumption~\ref{assumption_abc}(a)]Given Lemma~\ref{lem:r2} and Definition~\ref{defHnrate}, the second-order is negligible. Mathematically, we can prove 
\begin{align*}
\sup _{P\in \tilde H_n} \left|\sqrt{n}r_2(P_{O'0}, P_{O'})\right| \rightarrow 0
\end{align*}
\end{proposition}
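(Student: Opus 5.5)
The approach is to write the remainder $r_2(P_{O',0},P_{O'}) = \chi(P_{O',0})-\chi(P_{O'})-P_0[\dot\chi_{P_{O'}}]$ as a finite sum of integrals, each involving a product of two nuisance errors. Each such product is then bounded by Cauchy--Schwarz and the uniform rates that define $H_n$ in Definition~\ref{defHnrate}. The key quantitative fact is that every nuisance in $\eta$ lies within a constant times $\varepsilon_n=n^{-1/3}(\log n)^{1/3}$ of its true value on $\tilde H_n$. Hence any product of two such errors is $O(\varepsilon_n^2)=O(n^{-2/3}(\log n)^{2/3})$, and after multiplying by $\sqrt n$ we get $O(n^{-1/6}(\log n)^{2/3})\to 0$. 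All constants must be uniform over $\tilde H_n$, and the sieve supplies this uniformity: $e\in(\delta,1-\delta)$, $\pi>\delta_\pi$, $\rho$ bounded away from $\pm1$, and the kernel variances in $S_n$ bounded below so that densities and regressions are bounded. With uniform constants the bound holds for the supremum.

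The first step handles the missingness layer. Using Theorem~\ref{POEIF} and iterated expectations over $S$ given $W$ under MAR (Assumption~\ref{assump6}), I will show that $P_0[\dot\chi_{P_{O'}}]$ equals the full-data term $P_0[\dot\chi_{P_f}]$ (evaluated at $P$) plus a term of the form
\[
P_0\Bigl[\tfrac{\pi_0-\pi}{\pi}\bigl(b_0(W)-b(W)\bigr)\Bigr].
\]
That term is bounded by $\delta_\pi^{-1}\|\pi-\pi_0\|\,\|b-b_0\|$. The error $\|b-b_0\|$ is controlled by the errors in $g_{V_z},g_{z'|z},g_M,\mu_1,e$ through the explicit $b_j$ derived in the imputation section. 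Combined, this gives a bound $\lesssim \varepsilon_n^2$. This reduces everything to the full-data remainder $\chi(P_0)-\chi(P)-P_0[\dot\chi_{P_f}]$.

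The second step expands the full-data remainder block by block, following the nested maps $\mu_1\to\mu_2\to\mu_3\to\mu_4$ in the standard telescoping way for sequential-regression functionals, as in \textcite{kennedy2024semiparametric}. The $\phi_5$ term together with $\chi(P_0)-\chi(P)$ leaves $P_0[\mu_4-\mu_{4,0}]$. The $\phi_4$ term then converts this into $(e_0/e-1)$ times $(\mu_4-\mu_{4,0})$, plus lower-level differences. Repeating the step for $\phi_3$ and $\phi_1$ produces cross products of the types
\[
(e-e_0)\times(\mu_j-\mu_{j,0}),\qquad (g_M-g_{M,0})\times(\mu_1-\mu_{1,0}),\qquad (g_{V_z}-g_{V_z,0})\times(\mu_3-\mu_{3,0}).
\]
Each of these is bounded by $\varepsilon_n^2$ once I show that $\|\mu_j-\mu_{j,0}\|$ is Lipschitz in the lower-level nuisance errors; boundedness of the densities on $S_n$ gives this. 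This is where Lemma~\ref{lem:r2} is invoked: I take it to supply exactly this product-form bound, and I will cite it rather than redo the algebra.

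The main obstacle is the cross-world block $\phi_2$, which contains the copula density $c(u,w;\rho)$ evaluated at the estimated CDFs $F_{V_z},F_{V_{z'}}$. Its remainder contains the term $(g_{z'|z}-g_{z'|z,0})\times(\mu_2-\mu_{2,0})$, and also errors of $c(F_{V_z},F_{V_{z'}};\rho)$ relative to the truth. The copula density is not globally Lipschitz in $(u,w)$ near the boundary of the unit square, because $\Phi^{-1}$ blows up there. I would first use the bridge-regularity clause of $H_n$, namely $\|g_{z'|z}-g_{z'|z,0}\|\lesssim \|F_{V_{z'}}-F_{V_{z'},0}\|+\|F_{V_z}-F_{V_z,0}\|+|\rho-\rho_0|$, to reduce everything to CDF and $\rho$ errors. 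The boundary is handled by the quantile-stability clause. On the event $\{U,U_{z'}\in[\tau_n,1-\tau_n]\}$, the derivative of $c$ is bounded by a polynomial in $\Phi^{-1}(\tau_n)\asymp\sqrt{\log(1/\tau_n)}$, which costs at most a log factor that $n^{-1/6}$ still absorbs provided $\tau_n$ decays polynomially. Off this event, which has probability at most $\zeta_n$, boundedness of the integrands on $S_n$ gives a contribution of order $\zeta_n$. I would therefore need to require $\sqrt n\,\zeta_n\to0$. If the sieve does not already guarantee this, I would tighten $\zeta_n$ in the definition of $H_n$.
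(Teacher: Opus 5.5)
Your first two steps are the paper's argument, with the content of its lemmas written out. The paper's proof simply invokes Lemma~\ref{r2:MAR}. That lemma's proof is your split into the full-data remainder of Lemma~\ref{lem:r2} plus an AIPW cross term $-\mathbb E_{P_0}[(\Delta\pi/\pi)\,\Delta f]$, bounded by $\delta_\pi^{-1}\|\Delta\pi\|\,\|\Delta f\|$. The paper then substitutes $\|\Delta\eta_j\|\le L_j\varepsilon_n$ from Definition~\ref{defHnrate} and concludes $\sqrt n\,|r_2|\le K' n^{-1/6}(\log n)^{2/3}$. The copula block gets no separate treatment there. Lemma~\ref{lem:r2} already contains the cross term $U_{\mathrm{copula}}$, bounded by $K\|\Delta g_M\|\,\|\Delta f_{V_{z'}}\|$. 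The paper treats $K$ as uniform because quantile stability keeps the CDF values in a fixed interval $[\tau,1-\tau]$, as in its Lipschitz lemma for the Gaussian copula density. One caution about citing these lemmas: they are stated for $\chi(P)-\chi(P_0)-(P-P_0)[\dot\chi_{P_0}]$, with the influence function at the truth. Assumption~\ref{assumption_abc}(a) and your first step instead use $P_0[\dot\chi_{P_{O'}}]$, with the influence function at $P$. The remainder you need equals minus the lemma's remainder plus $(P-P_0)[\dot\chi_P-\dot\chi_{P_0}]$, so your own telescoping in the correct direction is the sounder justification.

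The genuine gap is in your third paragraph, where you go beyond the lemma and let $\tau_n\downarrow0$. The estimate that paragraph rests on is false. Writing $x=\Phi^{-1}(u)$, one has $\partial_u c=c\,(\partial_x\log c)/\phi(x)$, and $1/\phi(\Phi^{-1}(\tau_n))\asymp 1/\bigl(\tau_n\sqrt{2\log(1/\tau_n)}\bigr)$. For $\rho\neq0$ the density itself also reaches order $\tau_n^{-2|\rho|/(1+|\rho|)}$, up to logarithmic factors, at a corner of $[\tau_n,1-\tau_n]^2$. So $\sup c$ and the Lipschitz constant of $c$ on that square grow polynomially in $1/\tau_n$, not polynomially in $\Phi^{-1}(\tau_n)$. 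With $\tau_n=n^{-a}$ the cost is a power of $n$, which the $n^{-1/6}$ margin absorbs only when $a$ is small relative to $\rho$, not for every polynomial rate. Off the quantile-stability event, $S_n$ does not bound $c$ either: it constrains kernel parameters, while $c$ is unbounded near the corners of the unit square. So the $O(\zeta_n)$ contribution does not follow from $S_n$, and $\sqrt n\,\zeta_n\to0$ is a condition Definition~\ref{defHnrate} does not provide. As written, your argument establishes the proposition only for a modified sieve. You can drop the third paragraph and use the uniform constant of Lemma~\ref{lem:r2}, as the statement allows and the paper does. Alternatively, if you want $\tau_n\downarrow0$, state explicit conditions such as $\sqrt n\,\varepsilon_n^2\,\tau_n^{-\gamma}\to0$ for an exponent $\gamma$ determined by $\rho$, together with a rate for the exceptional set.
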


\begin{proof}
\textbf{Step 1 (Apply the Quadratic Bound from the Lemma)}. First, we use Lemma~\ref{r2:MAR}, which provides an upper bound for the absolute value of the remainder term, $|r_2(P_{O',0}, P_{O'})|$. This lemma is crucial because it shows the remainder is made up of products of nuisance function errors.
\begin{align*}
   |r_{2}(P_{O',0}, P_{O'})| \le K'' \Big(
\sum_{j,k=1}^8 |\Delta \eta_j| \, |\Delta \eta_k| + |\Delta \pi| \sum_{j=1}^8 |\Delta \eta_j| \Big),
\end{align*}

where $\Delta t = t - t_0$ and $K''$ is a constant.

\textbf{Step 2 (Use the Properties of the Sieve $\tilde H_n$)}. Next, we use the fact that our supremum is over $P_{O'} \in \tilde H_n = H_n\cap S_n$. Thus $P_{O'}$ must satisfy the conditions of he rate-sieve $H_n$. \\
Definition~\ref{defHnrate} gives us the explicit bounds on $L_2(P_0)$ norms of the nuisance errors for any $P\in H_n$. Then we can substitute these bounds into the right hand side of the inequality from \textbf{Step 1}. Each product term is bounded,  \[\|\Delta\eta_j\|\ \|\Delta \eta_k\|\leq L_{j}L_{k}\varepsilon_n^2\]
Applying his to all terms in Lemma~\ref{lem:r2}'s bound gives:
\begin{align*}
    |r_2(P_{O'0}, P_{O'})|&\leq K\left(\sum_{j=1}^8L_jL_k + L_\pi\sum_{h=1}^8 L_h\right)\varepsilon_n^2 \\ &= K'\varepsilon^2_n = K' n^{-2/3}(\log n)^{2/3}
\end{align*}
Thus 
\begin{align*}
    \sup_{P_{O'}\in \tilde H_n}\left|\sqrt n r_2(P_{O'0}, P_{O'})\right| \leq K' n^{-1/6}(\log n)^{2/3} \rightarrow 0  
\end{align*}
\end{proof}
\begin{proposition}[Uniform $L_2$-Convergence of the EIF on the Sieve]
Let $\tilde H_n$ be the sieve defined in Definition~\ref{deftildeH}. Then, the efficient influence function (EIF)  $\dot \chi_{P_{O'}}$ converges uniformly to $\dot \chi_{P_{O', 0}}$ in the $L_2(P_0)$norm over $\tilde H_n$ as $n\rightarrow\infty$. That is:  
\begin{align*}
\sup_{P\in \tilde H_n}  \|\dot \chi_{P_{O'}} - \dot \chi_{P_{O',0}}\| \rightarrow0
\end{align*} 
\end{proposition}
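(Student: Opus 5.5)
The plan is to write $\dot\chi_{P_{O'}}-\dot\chi_{P_{O',0}}$ as a telescoping sum of nuisance perturbations and bound each piece in $L_2(P_0)$ by a constant multiple of $\sum_l\|\eta_l-\eta_{l,0}\|+|\rho-\rho_0|$. The constant must be uniform over $\tilde H_n$. The rate-sieve $H_n$ (Definition~\ref{defHnrate}) then gives a bound of order $\varepsilon_n\to0$.

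\textbf{Observed-data map.} Using the representation obtained in the imputation derivation, $\dot\chi_{P_{O'}}=\frac{S}{\pi(W)}\dot\chi_{P_f}-\frac{S-\pi(W)}{\pi(W)}b(W)$ with $b=b_2+b_4+b_5$, I would decompose
\begin{align*}
\dot\chi_{P_{O'}}-\dot\chi_{P_{O',0}}
&=S\Big(\tfrac{1}{\pi}-\tfrac{1}{\pi_0}\Big)\dot\chi_{P_f}+\tfrac{S}{\pi_0}\big(\dot\chi_{P_f}-\dot\chi_{P_{f,0}}\big)\\
&\quad-\Big(\tfrac{S-\pi}{\pi}-\tfrac{S-\pi_0}{\pi_0}\Big)b-\tfrac{S-\pi_0}{\pi_0}(b-b_0).
\end{align*}
The positivity bound $\pi,\pi_0>\delta_\pi$ gives $|1/\pi-1/\pi_0|\le\delta_\pi^{-2}|\pi-\pi_0|$. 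So the first and third terms are controlled by $\|\pi-\pi_0\|$ times sup-norm bounds on $\dot\chi_{P_f}$ and $b$. The remaining two terms reduce to bounding $\|\dot\chi_{P_f}-\dot\chi_{P_{f,0}}\|$ and $\|b-b_0\|$. Since $b$ is a conditional expectation of components of $\dot\chi_{P_f}$, the second of these follows from the first plus a change-of-measure term involving $g_M,g_{V_z}$, handled by Jensen's inequality.

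\textbf{Full-data blocks.} I would treat $\phi_1,\dots,\phi_5$ one at a time, again by adding and subtracting. The factor $Z/e$ is Lipschitz in $e$ because $\delta<e<1-\delta$. In $\phi_1$, $\phi_3$ and $\phi_4$ the differences $\mu_j-\mu_{j,0}$ are integrals of $\mu_1$ against $g_M$, $g_{z'|z}$ and $g_{V_z}$. By Cauchy--Schwarz each is bounded by $\|\mu_1-\mu_{1,0}\|+\|g_M-g_{M,0}\|+\|g_{z'|z}-g_{z'|z,0}\|+\|g_{V_z}-g_{V_z,0}\|$. The bridge-regularity clause converts the $g_{z'|z}$ error into $F$ and $\rho$ errors. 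Uniform boundedness of these maps comes from the structural sieve $S_n$: the regression coefficients are bounded by $a,b$ and the variances are bounded below by $\sigma_0^2$, which bounds $\mu_1$ on the relevant range and yields envelopes.

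\textbf{Main obstacle.} The hard step is $\phi_2$, because it contains the Gaussian copula density $c(u,w;\rho)$ with $u=F_{V_z}$ and $w=F_{V_{z'}}$. This density is unbounded as $u,w\to0,1$, and $\Phi^{-1}$ is not Lipschitz there. I would split on the quantile-stability event $\{U_1,U_{z'}\in[\tau_n,1-\tau_n]\}$.
\begin{itemize}[label={}]
\item \emph{On the event:} $\Phi^{-1}$ has derivative bounded by $O(1/\phi(\Phi^{-1}(\tau_n)))$, and $\rho$ stays in $(-1+\kappa,1-\kappa)$. Hence $c$ is Lipschitz in $(u,w,\rho)$ with a constant $L_n$ growing slowly in $\tau_n^{-1}$. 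This gives a contribution of order $L_n(\|F_{V_z}-F_{V_z,0}\|+\|F_{V_{z'}}-F_{V_{z'},0}\|+|\rho-\rho_0|)$.
\item \emph{Off the event:} I would use the identity $c=p(V_{z'}\mid v,z,C)/p(V_{z'}\mid z',C)$ from the cross-world EIF derivation. Together with the bounded integrand, this bounds the second moment by $\zeta_n$ times a constant.
\end{itemize}
The delicate point is choosing $\tau_n$ so that $L_n\varepsilon_n\to0$. Any $\tau_n$ decaying polynomially slower than $\varepsilon_n$ (e.g.\ logarithmically) should suffice, since $1/\phi(\Phi^{-1}(\tau))\approx \tau^{-1}\sqrt{2\log(1/\tau)}^{-1}$. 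With that choice both contributions vanish uniformly over $\tilde H_n$, which completes the argument.
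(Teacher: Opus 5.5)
Your skeleton is the paper's own. The paper deduces the proposition from Lemma~\ref{lem:l2_mar}, which is exactly your IPW/augmentation add-and-subtract with $|\pi^{-1}-\pi_0^{-1}|\le\delta_\pi^{-2}|\pi-\pi_0|$. That lemma rests on Lemma~\ref{lem:L2} (inverse-propensity and integral-operator Lipschitz bounds block by block, with bridge regularity trading $g_{z'|z}$ errors for $F$ and $\rho$ errors) and on Lemma~\ref{lemmab} for $b$, after which the rates of Definition~\ref{defHnrate} are plugged in.

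The genuine difference is the copula step. The paper's copula lemma fixes $\tau>0$ and treats $u,w\in[\tau,1-\tau]$ as holding surely. This yields a single constant $L_c(\tau,\kappa)$ and the clean bound $K'\varepsilon_n$, which the paper later reuses as $\sigma_n$ when verifying (c*)(i). But Definition~\ref{defHnrate} only provides $\tau_n\downarrow 0$ and the event with probability at least $1-\zeta_n$, so that uniform constant is not actually available.

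Your event split is faithful to the definition, and it buys you several things:
\begin{itemize}
\item Your off-event bound amounts to $\int_0^1 c(u,w;\rho)\,du=1$. It controls $\phi_2$ without the paper's envelope claim $|c|\le C_{\rm copula}$, which fails for $\rho\neq 0$.
\item You rightly keep the change-of-measure term in $b-b_0$ that the Jensen step in Lemma~\ref{lemmab} silently drops.
\item The price is a bound of order $L_n\varepsilon_n+\sqrt{\zeta_n}$ rather than $\varepsilon_n$. That suffices for this proposition but is weaker for downstream use.
\end{itemize}

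Three points need tightening.
\begin{itemize}
\item \emph{$L_n$ does not grow slowly.} $d\Phi^{-1}/du$ is of order $\tau_n^{-1}(\log(1/\tau_n))^{-1/2}$ at the boundary. The copula density $c$ itself reaches about $\tau_n^{-2|\rho|/(1+|\rho|)}$ at the corners of $[\tau_n,1-\tau_n]^2$. Hence $L_n\lesssim\tau_n^{-2}$ up to logarithms, and ``any $\tau_n$ slower than $\varepsilon_n$'' is too generous. A polynomial choice $\tau_n=n^{-a}$ with $a<1/6$ works and keeps a polynomial rate, which matters if the bound is reused for $\sigma_n$ in (c*)(i); a logarithmic $\tau_n$ gives only a logarithmic rate.
\item \emph{The integrated copula argument in $\phi_2$.} There the first copula argument $u=F_{V_z}(v)$ is integrated over $v$. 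The quantile-stability event concerns the observed $U_1$, so it does not put $u$ in $[\tau_n,1-\tau_n]$. You must also truncate the $v$-integral, bounding the discarded part by $\sup|h|\int_{[0,\tau_n]\cup[1-\tau_n,1]}c(u,w;\rho)\,du$, where $h(v)=\mu_1(M;v,C)-\mu_2(V_{z'};v,C)$. This is a Gaussian conditional tail probability, at most $2\{1-\Phi(\Phi^{-1}(1-\tau_n)\sqrt{\kappa/(2-\kappa)})\}$ when $w\in[\tau_n,1-\tau_n]$ and $|\rho|\le 1-\kappa$. The paper's lemma has the same omission.
\item \emph{No sup-norm bound on $\dot\chi_{P_f}$.} $\phi_1$ contains $Y$, so $\dot\chi_{P_f}$ is not bounded; the paper's $\|G_n\|_\infty$ has the same defect. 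Since $|\pi-\pi_0|\le 1$, use $\|(\pi-\pi_0)G_n\|\le\|\pi-\pi_0\|^{1/2}\|G_n\|_{L_4(P_0)}$ together with $\mathbb E_{P_0}|Y|^4<\infty$.
\end{itemize}
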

\begin{proof}
It is trivial to verify Assumption~\ref{assumption_abc}(b) given Lemma~\ref{lem:l2_mar}. For any $P_{O'}\in \tilde H_n$, we have $\|\dot \chi_{P_{O'}} - \dot \chi_{P_{O',0}}\|_{P_0}\rightarrow0$ given Lemma~\ref{lem:l2_mar} and Definition~\ref{defHnrate}.
\end{proof}

\begin{proposition}[Verification of Assumption~\ref{assumption_abc}(c*)(ii)]
The class of functions $\{\dot \chi_{P_{O'}}\mid P\in \tilde H_n\}$ admits an envelope function $G_n(Z)$. This envelope function satisfies the required moment and uniform integrability conditions:
\begin{align*}
        \lim_{C\rightarrow\infty} \limsup_{n \rightarrow\infty} P_0G_n^2\mathbf{1}_{G_n^2>C} = 0, \quad P_0G_n^4 = o(n).
\end{align*}
\end{proposition}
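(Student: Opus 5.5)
The plan is to exhibit an explicit deterministic envelope $G_n$ that is uniformly bounded on the observed-data support, or at least has a fixed square-integrable dominating function. Both moment conditions then follow at once. Write $\dot\chi_{P_{O'}} = \frac{S}{\pi(W)}\sum_{j=1}^5\phi_{j,P_f} - \frac{S-\pi(W)}{\pi(W)}b(W)$, with $b=b_2+b_4+b_5$ as derived in the imputation-function subsection. I will bound each summand separately, using only constraints that hold for every $P\in\tilde H_n$.

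\textbf{Weights.} On $H_n$ we have $\pi(W)>\delta_\pi$ and $\delta<e(C)<1-\delta$ almost surely. Hence
\[
\left|\frac{S}{\pi}\right|\le \delta_\pi^{-1},\qquad \left|\frac{S-\pi}{\pi}\right|\le 1+\delta_\pi^{-1},\qquad \frac{1}{e}\le\delta^{-1},\qquad \frac{1}{1-e}\le\delta^{-1}.
\]

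\textbf{Regression maps.} On $S_n$ every active outer component has $\|\beta^y_k\|_2\le a$. So $\mu_1(m;v,c)$, being a convex combination of linear predictors $\mathbb M\beta^y_k$, satisfies $|\mu_1|\le a\,(1+|m|+|v|+1+\|c\|)$. The tail components carrying mass below $\epsilon_\theta$ need separate handling, either by restricting the sieve to finite mixtures or by adding a crude bound on their coefficients. The maps $\mu_2,\mu_3,\mu_4$ integrate $\mu_1$ against Gaussian-mixture conditionals whose means are bounded by $b(1+\|c\|)$ and whose variances lie in $[\sigma_0^2,\sigma_0^2(1+\epsilon^2/d)^{M}]$. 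They therefore inherit affine bounds of the form $K(1+|v|+\|c\|)$ with $K$ free of $P$.

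\textbf{Outcome term and the envelope.} $|Y-\mu_1|\le |Y|+K(1+|M|+|V|+\|C\|)$. Combining these gives
\[
G_n(o')=K_1\bigl(1+|Y|+|M|+|V|+\|C\|\bigr)+K_2\,\sup_{P\in\tilde H_n}|\phi_{2,P_f}|+K_3\,\sup_{P\in\tilde H_n}|b_2(W)|.
\]

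\textbf{Main obstacle: the copula term.} The hard pieces are $\phi_{2,P_f}$ and its conditional mean $b_2$, because both involve the copula density $c(u,w;\rho)$. This density is unbounded as $u,w\to 0$ or $1$. I would first use $\rho\in(-1+\kappa,1-\kappa)$ together with the quantile-stability clause of $H_n$, which keeps $U_1,U_{z'}\in[\tau_n,1-\tau_n]$ except on an event of probability $\zeta_n$. On this good set, $|\Phi^{-1}(u)|,|\Phi^{-1}(w)|\le \sqrt{2\log(1/\tau_n)}$, so $c\le \kappa^{-1/2}\tau_n^{-c_\kappa}$, which grows only polynomially in $1/\tau_n$. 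I would then choose $\tau_n$ so that $\tau_n^{-4c_\kappa}=o(n)$ with, for instance, $\tau_n=1/\log n$; this makes $P_0G_n^4=o(n)$.

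A cleaner route is also available. Because $c(u,\cdot;\rho)\,g_{V_z}$ integrates to the conditional density $p(v\mid v',\cdot)$, the integral $\int[\mu_1-\mu_2]\,c\,g_{V_z}\,dv$ is itself a conditional expectation of an affinely bounded function under a Gaussian-mixture-implied law. That gives the affine bound directly, without the $\tau_n$ truncation, and I would try this route first.

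\textbf{Moment conditions.} With $G_n$ dominated by a fixed affine function $\bar G$ of $(|Y|,|M|,|V|,\|C\|)$, and $P_0$ having finite fourth moments (assumed, or implied by the skew-normal/Gaussian truth and the compact support used in the approximation argument), $P_0G_n^4\le P_0\bar G^4<\infty=o(n)$. Uniform integrability follows from dominated convergence: $P_0\bar G^2\mathbf 1_{\bar G^2>C}\to0$ as $C\to\infty$, uniformly in $n$.

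If only the $\tau_n$-truncation version of the copula step works, $G_n$ grows slowly with $n$. The uniform-integrability condition then requires showing that the growing part lives on the exceptional set of $P_0$-probability $\zeta_n$, using Cauchy–Schwarz with $\zeta_n\tau_n^{-2c_\kappa}\to0$.
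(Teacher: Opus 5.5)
Your argument takes a genuinely different route from the paper's, and its main line is the more defensible one. The paper asserts uniform constants $C_{\mu_1}$ and $C_{\text{copula}}$ on $\tilde H_n$ by appeal to the sieve argument of Yiu et al. It then propagates $|\mu_2|,|\mu_3|,|\mu_4|\le C_{\mu_1}$, bounds $|\phi_{2,P_f}|\le 2\delta^{-1}C_{\mu_1}C_{\text{copula}}$, and arrives at $G_n=K_1'|Y|+K_2'$, so only $\mathbb E_{P_0}|Y|^4<\infty$ is needed. That is shorter, but neither constant is justified in this model. First, $\mu_1$ is a weighted average of the linear predictors $\mathbb M\beta^y_k$, so a uniform constant requires $(M,V,C)$ to be compactly supported under $P_0$. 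Second, for $\rho\neq0$ one has $\sup_u c(u,w;\rho)=(1-\rho^2)^{-1/2}\exp\{\Phi^{-1}(w)^2/2\}$, which is unbounded in $w$. The quantile-stability clause, with $\tau_n\downarrow0$ and holding only with probability $1-\zeta_n$, cannot deliver a fixed $C_{\text{copula}}$. Your approach combines affine bounds (from $\|\beta^y_k\|_2\le a$ and the variance constraints) with the identity $\int c\{F_{V_z}(v\mid C),F_{V_{z'}}(V_{z'}\mid C);\rho\}\,g_{V_z}(v\mid C)\,dv=1$. This identity makes $\phi_{2,P_f}$ equal to $\{1-e(C)\}^{-1}$ times a conditional expectation of $\mu_1-\mu_2$ over $V_z$ given $(V_{z'},C)$, which sidesteps both problems. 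It is precisely the repair the paper's $\phi_2$ step needs: replace $C_{\text{copula}}$ by $1$. The price is an envelope affine in $(|Y|,|M|,|V|,\|C\|)$, and hence fourth moments of all observed coordinates under $P_0$ rather than of $Y$ alone.

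A few steps still need to be written out.

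\textbf{The bound on $\mu_3$.} $\mu_3$ integrates against the copula-induced conditional $g_{z'|z}$, not a Gaussian mixture. Its affine bound, and the bound on $\mathbb E_P[|V_z|\mid V_{z'},C]$ used in $\phi_{2,P_f}$, should come from the mixture quantile estimates $|F^{-1}(\Phi(x))|\le\max_k|\nu_k|+\bar\sigma|x|$ and $|\Phi^{-1}(F(v))|\le(|v|+\max_k|\nu_k|)/\sigma_0$. Here $\nu_k$ are the component means and $\bar\sigma$ is the largest component standard deviation.

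\textbf{The terms $b_2$ and $b_4$.} You cannot simply place $\sup_P|b_2(W)|$ in $G_n$. You need $\mathbb E_P[|V|\mid M,Z,C]\le K(1+|M|+\|C\|)$, which follows from within-component Gaussian conjugacy with sieve-bounded slopes and variances. The same is needed for $b_4$.

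\textbf{Tail components.} As you note, $S_n$ constrains only components $k\le H_\theta$ and $j\le H_\psi$, so the sieve must be strengthened. The paper's citation conceals the same gap.

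\textbf{The counterfactual in $\phi_{3,P_f}$.} With affine bounds, $\phi_{3,P_f}$ contributes $|V_{z'}|$ evaluated at the counterfactual, which is not a coordinate of $\mathcal O'$ when $Z=z$. You must say how it enters, for example as the copula-imputed draw used in the one-step algorithm. The paper's constant bound on $\mu_2$ masks this issue.

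\textbf{The $\tau_n$ fallback.} Drop it. It bounds $\Phi^{-1}(u)$ even though $u$ is integrated over all of $(0,1)$. Off the quantile-stability event, the only available bound grows like $\exp\{\Phi^{-1}(w)^2/2\}$, with no moment control.
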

\begin{proof}
   We have $\dot \chi_{P_{O'}} = \frac{S}{\pi(W)}\sum_{j = 1}^{4}\phi_{j,P} + \phi_{5,P_f} - \frac{S-\pi(W)}{\pi(W)}b(W)$.
\begin{align*}
\left|\dot \chi_{P_{O'}}\right| &\leq \left|\frac{S}{\pi(W)}\right|\cdot \sum_{j=1}^4\left|\phi_{j,P}\right|  + |\phi_{5,P_f}|+\left|\frac{S-\pi(W)}{\pi(W)}\right|\cdot |b(W)|\\
&\leq \frac{1}{\delta_\pi}\cdot \sum_{j=1}^4|\phi_{j,P_f}| + |\phi_{5,P_f}| + \left(1+\frac{1}{\delta_\pi}\right)\cdot |b(W)|.
\end{align*}
A direct consequence of the structural sieve $S_n$ is the uniform boundedness of the nuisance functions. The definition of $S_n$ follows the sieve construction of \textcite{shen2013adaptive}, which ensures this property by imposing constraints on the model parameters, particularly the covariance matrix.\\
Formally, as argued in Section S8.2 of the supplementary material of \textcite{yiu2025semiparametric} for this class of sieves, there exists a set of uniform constants $\{C_{e},C_{g_{V_1}}, C_{g_{0|1}}, C_{g_M}, C_{\mu_1}, C_\pi,C_{\text{copula}}, C_b\}$ such that for all $P\in \tilde H_n$ and almost all $x = (y,m,v,v_0,z,c)$, it holds that 
\begin{align*}
    &\left|e(c)\right| \leq C_e, \ \left|g_{V_1}(v|c)\right| \leq C_{g_{V_1}},\ \left|g_{0|1}(v_0|v,c)\right|\leq C_{g_{0|1}}, \ \left|g_M(m|v_0,c)\right|\leq C_{g_M},\ \left|\mu_1(m;v,c)\right|\leq C_{\mu_1}\\&|\pi(m,z,c)|\leq C_\pi, \ |c(u,w;\rho)|\leq C_{\text{copula}},\ |b(m,z,c)|\leq C_b.
\end{align*}
Based on the uniform boundedness of nuisance parameters shown above,   $\mu_2,\ \mu_3, \ \mu_4$ are also uniformly bounded. Noting $\left|\mathbb E[X]\right| \leq \mathbb E[\left|X\right|]$, we have
\begin{align*}
    &\left|\mu_2(v_0;v,c)\right| =\left|\mathbb E\left[\mu_1(M;v,c)\mid V_0 = v_0, Z=0, C=c\right]\right| \leq \mathbb E\left[\left|\mu_1(M;v,c)\right|\mid V_0 = v_0, Z= 0, C=c\right] = C_{\mu_1}\\
    &\left|\mu_3(v;c) \right|=\left| \mathbb E\left[\mu_2(V_0; v,c)\mid V_1 = v, Z=1,C=c\right]\right| \leq \mathbb E\left[\left|\mu_2(V_0; v,c)\right|\mid V_1 = v, Z=1,C=c\right] \leq C_{\mu_1} \\
    & \left|\mu_4(c)\right| \leq \left|\mathbb E\left[\mu_3(V;c)\mid Z = 1, C= c\right]\right| \leq \mathbb E\left[\left|\mu_3(V;c)\right|\mid Z = 1, C= c\right] \leq C_{\mu_1}\\
    &\left|\chi(P_{f})\right|  = \left|\mathbb E[\mu_4(C)]\right| \leq \mathbb E\left[|\mu_4(C)|\right] \leq C_{\mu_1}
\end{align*}
Based on $\delta < e(c) <1-\delta$ for $\delta \in \left(0,\frac{1}{2}\right)$ and $\pi(m,z,c) > \delta_\pi$ for $\delta_\pi > 0$
\begin{align*}
    |\phi_{1,P_f}| &= \left|\frac{Z}{e(C)}\{Y - \mu_1(M;V_1,C)\}\right| \leq \frac{|Z|}{|e(C)|}\cdot \left(|Y| + |\mu_1(M;V_1,C)|\right)  \leq \frac{1}{\delta}\left(|Y| + C_{\mu_1}\right)
   \\ |\phi_{2,P_f}| &= \left|\frac{1-Z}{1-e(C)}\int [\mu_1(M;v,C) - \mu_2(V_0;v,C)]c(u,w;\rho)g_{V_1}(v\mid C)\ dv\right| \\ &\leq \frac{1}{\delta} \int [\left|\mu_1(M;v,C)\right| +\left| \mu_2(V_0;v,C)\right|]\cdot |c(u,w;\rho)|\cdot g_{V_1}(v\mid C)\ dv \\ &\leq \frac{1}{\delta} \int 2C_{\mu_1}\cdot C_{\text{copula}}\cdot g_{V_1}(v\mid C)\ dv= \frac 2 \delta C_{\mu_1}C_{\text{copula}} = C_2'\\
    |\phi_{3,P_f}| &= \left|\frac{Z}{e(C)}\{\mu_2(V_0;V_1,C) - \mu_3(V_1;C)\}\right|\leq \frac{1}{\delta}\left(|\mu_2(V_0;V_1,C)| + |\mu_3(V_1;C)|\right) \leq \frac{2}{\delta}C_{\mu_1} = C'_3\\
    |\phi_{4,P_f}| &= \left|\frac{Z}{e(C)}\{\mu_3(V_1;C) - \mu_4(C)\}\right|\leq \frac{1}{\delta}\left(|\mu_3(V_1;C)|+ |\mu_4(C)|\right)\leq \frac{2}{\delta} C_{\mu_1} = C'_4\\
    |\phi_{5,P_f}|& = \left|\mu_4(C) - \chi(P)\right|\leq |\mu_4(C)| + |\chi(P)| \leq 2 C_{\mu_1} = C'_5 \\
    |b_2| &= \left|\mathbb E [\phi_{2,P_f}\mid M,Z,C,S=1]\right| \leq \mathbb E[|\phi_{2,P_f}|M,Z,C, S=1]\leq C_2' \\ 
    |b_4| & = \left|\mathbb E[\phi_{4,P_f}\mid M,Z,C,S=1]\right|\leq \mathbb E [|\phi_{4,P_f}|M,Z,C,S=1] \leq C_4'\\
    |b_5| &= |\phi_{5,P_f}| \leq C_{5}'.
\end{align*}
Thus,
\begin{align*}
    \left|\dot \chi _P\right|& \leq \frac{1}{\delta_\pi}\sum_{j=1}^5|\phi_{j,P}|  + \left(1+\frac{1}{\delta_\pi}\right)|b|\\&\leq \frac{1}{\delta_\pi} \left[\frac{1}{\delta}(|Y|+|C_{\mu_1}|) + C_2'+C_3'+C_4'+C_5' \right]+ \left(1+\frac{1}{\delta_\pi}\right)(2\ C_2' + C_5')\\ &= K'_1\cdot |Y| + K'_2.
\end{align*}
Then we can set the envelope function $G_n(X) = K_1'|Y| + K'_2$, where $K_1',K_2'$ are the constants do not depend on $n$ and $P\in \tilde H_n$.
Notice that we have the moment condition on $Y$ where $\mathbb E_{P_0}[|Y|^4]<\infty$.\\
Now we can first verify $P_0G_n^4 = o(n)$.
$$P_0G_n^4 = \mathbb{E}_{P_0}[G_n^4] = \mathbb{E}_{P_0}[(K'_1|Y| + K'_2)^4]$$
Apply the inequality $(a+b)^p \le 2^{p-1}(a^p+b^p)$ for $a,b \ge 0, p \ge 1$. Setting $p=4$, we have $$\mathbb{E}_{P_0}[(K'_1|Y| + K'_2)^4] \le \mathbb{E}_{P_0}[2^3( (K'_1|Y|)^4 + K'_2{}^4 )] = 8K'_1{}^4 \mathbb{E}_{P_0}[|Y|^4] + 8K'_2{}^4<\infty, $$
which indicates $P_0G_n^4 = o(n)$.\\
Secondly, we verify $\lim_{C\rightarrow\infty}\limsup_{n\rightarrow\infty}P_0[G_n^2\mathbf 1_{G_n^2>C}]$ = 0. Since $G_n = K'_1|Y| + K_2'$ does not depend on $n$, so the limit can be simplified as $\lim_{C\rightarrow\infty }P_0[G_n^2\mathbf 1 _{G_n^2>C}]$. Along with $P_0G_n^4<\infty$, we have
\begin{align*}
    P_0[G_n^2\mathbf 1_{G_n^2>C}] &= \mathbb E_{P_0}[G_n^2\mathbf 1_{G_n^2>C}] \leq \sqrt{\mathbb E_{P_0}[G_n^4]\mathbb E_{P_0}[\mathbf 1_{G_n^2 > C}^2] } \qquad \qquad\qquad  \text{(Cauchy-Schwarz)}
    \\&=\sqrt{\mathbb E_{P_0}[G_n^4]\mathbb E_{P_0}[\mathbf 1_{G_n^2>C}] } =\sqrt{\mathbb E_{P_0}[G_n^4]P_0(G_n^2 > C)}  \rightarrow 0
\end{align*}
Then Assumption~\ref{assumption_abc}
(c*)(ii) has been verified.
\end{proof}

\begin{proposition}[Verification of Assumption~\ref{assumption_abc}(c*)(i)]
Let $\mathcal F_n = \{\dot \chi_{P_{O'}}-\dot \chi_{P_{O', 0}}\mid P\in \tilde H_n\}$. Under the conditions defining the augmented sieve $\tilde H_n$ and the modeling assumptions, the empirical process indexed by the class $\mathcal F_n$ converges uniformly to zero in probability. That is:
\begin{align*}
    \sup_{P\in \tilde H_n}\left|\mathbb G_n[\dot \chi_{P_{O'}} - \dot\chi_{P_{O',0}}]\right| \rightarrow0
\end{align*}
\end{proposition}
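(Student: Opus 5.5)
The plan is the standard empirical-process route: show that $\mathcal F_n$ has controlled bracketing entropy and shrinking $L_2(P_0)$ diameter, then apply a maximal inequality. Concretely, I would invoke Lemma 3.4.2 of van der Vaart and Wellner (1996). For a class $\mathcal F$ with $\|f\|_{P_0}\le \delta$ and envelope $F$ it gives
\begin{align*}
\mathbb E_{P_0}^*\sup_{f\in\mathcal F}|\mathbb G_n f|\lesssim J_{[]}(\delta,\mathcal F,L_2(P_0))\Bigl(1+\frac{J_{[]}(\delta,\mathcal F,L_2(P_0))}{\delta^2\sqrt n}\,\|F\|_{P_0}\Bigr),
\end{align*}
where $J_{[]}(\delta,\mathcal F,L_2)=\int_0^\delta\sqrt{1+\log N_{[]}(\epsilon,\mathcal F,L_2)}\,d\epsilon$. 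We then take $\delta=\delta_n:=\sup_{P\in\tilde H_n}\|\dot\chi_{P_{O'}}-\dot\chi_{P_{O',0}}\|_{P_0}$. This tends to zero by the preceding proposition verifying Assumption~\ref{assumption_abc}(b). For the envelope we take $2G_n=2(K_1'|Y|+K_2')$ from the proof of Assumption~\ref{assumption_abc}(c*)(ii), which has bounded second moment. Convergence in mean then gives convergence in probability by Markov's inequality.

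The substantive step is bounding $\log N_{[]}(\epsilon,\mathcal F_n,L_2(P_0))$. I would decompose $\dot\chi_{P_{O'}}$ as in Theorem~\ref{POEIF}, namely $\tfrac{S}{\pi}\sum_{j}\phi_{j,P_f}-\tfrac{S-\pi}{\pi}b(W)$, and view it as a Lipschitz map of the nuisance vector $\eta=(e,g_{V_z},g_{z'|z},g_M,\mu_1,F_{V_z},F_{V_{z'}},\pi)$ and $\rho$. The bounds $e\in(\delta,1-\delta)$, $\pi>\delta_\pi$ and the uniform bounds $C_{\mu_1}$, $C_{\text{copula}}$ on $S_n$ make the maps $1/e$, $1/\pi$, the products and the integrals against $g_{V_z}$ and $g_M$ Lipschitz. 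A bracket $[\ell_j,u_j]$ of $L_2$-size $\epsilon$ for each nuisance class then yields a bracket for $\dot\chi_{P_{O'}}$ of size $\lesssim(1+|Y|)\epsilon$. Since $Y$ enters only linearly in $\phi_{1,P_f}$, this multiplier is square-integrable. Hence
\begin{align*}
\log N_{[]}(C\epsilon,\mathcal F_n,L_2(P_0))\le\sum_{l=1}^8\log N_{[]}(\epsilon,\{\eta_l:P\in\tilde H_n\},L_2(P_0))+\log(1/\epsilon).
\end{align*}
Each nuisance class is a ratio or integral of finite mixtures from $S_n$ with at most $H_\theta H_\psi$ active components and compact parameter sets, plus a tail of mass at most $\epsilon_\theta+\epsilon_\psi$. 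Exactly as in Step 1 of the posterior concentration theorem, following \textcite{shen2013adaptive}, this gives entropy $\lesssim H_\theta H_\psi\log(1/\epsilon)$ up to constants depending on dimension.

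Plugging in, $J_{[]}(\delta_n,\mathcal F_n)\lesssim\delta_n\sqrt{H_\theta H_\psi\log(1/\delta_n)}$. With $H_\theta H_\psi\asymp n\varepsilon_n^2/\log n=n^{1/3}(\log n)^{-1/3}$, the first term is $\delta_n n^{1/6}$ up to logs. So we need the sharper fact $\delta_n=O(\varepsilon_n)$ up to logs, not merely $\delta_n\to0$. This follows from Lemma~\ref{lem:l2_mar}: the EIF error is linear in the nuisance errors, each $\le L\varepsilon_n$ on $H_n$. Then $J\lesssim n^{-1/3}n^{1/6}\,\mathrm{polylog}(n)\to0$. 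The second term, $J^2\|F\|/(\delta_n^2\sqrt n)\lesssim H_\theta H_\psi\log n/\sqrt n\asymp n^{-1/6}\,\mathrm{polylog}(n)$, also vanishes.

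The main obstacle is the Lipschitz and bracketing step for the copula block $\phi_{2,P_f}$ and its imputation $b_2(W)$. There $c(u,w;\rho)$ depends on $\Phi^{-1}(F_{V_z})$ and $\Phi^{-1}(F_{V_{z'}})$, which is not Lipschitz near $0$ and $1$. I would handle this with the quantile-stability clause of $H_n$. On the event $U\in[\tau_n,1-\tau_n]$, $\Phi^{-1}$ has Lipschitz constant $\lesssim 1/\phi(\Phi^{-1}(\tau_n))$, which I would let grow at most polylogarithmically by choosing $\tau_n\ge n^{-c}$. On the complement, of probability $\zeta_n$, I would bound the contribution crudely via the envelope, giving $\sqrt n\,\zeta_n^{1/2}$-type terms. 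This forces a rate condition on $\zeta_n$ that I would add to the definition of $H_n$ if it is not already implied. If the crude bound fails, the fallback is to truncate $\Phi^{-1}$ at $\pm\Phi^{-1}(1-\tau_n)$ inside $\tilde H_n$ and treat the truncation bias as part of $r_2$.
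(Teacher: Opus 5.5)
Your architecture is the paper's. Lipschitz dependence of $\dot\chi_{P_{O'}}$ on the nuisances transfers the sieve entropy of Lemma~\ref{Snentropycontrol} to $\mathcal F_n$. Lemma~\ref{lem:l2_mar} gives $\delta_n\lesssim\varepsilon_n$ (the paper's $\sigma_n$), the envelope comes from (c*)(ii), and a bracketing maximal inequality plus Markov finishes. Your rate bookkeeping is more careful than the paper's. The paper asserts the entropy is polylogarithmic in $n$ and so only needs $\sigma_n\to0$, which contradicts its own choice $H_\theta H_\psi\propto n\varepsilon_n^2/\log n$. You correctly see that the entropy is of order $n^{1/3}$ up to logs, so $\delta_n=O(\varepsilon_n)$ is exactly what makes $J_{[]}(\delta_n)\asymp n^{-1/6}$ up to logs. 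Two steps, however, fail as written.

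\emph{Maximal inequality.} Lemma 3.4.2 of van der Vaart and Wellner (1996) requires $\|f\|_\infty\le M$. The version you state, with $\|F\|_{P_0}$ of an unbounded envelope in place of $M$, is false in general. Take $n^2$ disjoint sets $B_j$ with $P_0(B_j)=n^{-3}$ and $f_j=\delta n^{3/2}\mathbf 1_{B_j}$. Then $\|f_j\|_{P_0}=\delta$, $J_{[]}(\delta)\le\delta\sqrt{1+2\log n}$ and $\|F\|_{P_0}=\delta n$, so your right-hand side is $O(\delta\sqrt n\log n)$. Yet with probability tending to $1-e^{-1}$ some $B_j$ contains an observation, and then $\sup_j\mathbb G_nf_j\ge\delta n(1-n^{-2})$.

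Since $G_n=K_1'|Y|+K_2'$ is unbounded, you need an unbounded-envelope bound such as Lemma 19.34 of van der Vaart (1998): $\mathbb E^*\sup_{f\in\mathcal F}|\mathbb G_nf|\lesssim J_{[]}(\delta)+\sqrt n\,P_0F\mathbf 1\{F>\sqrt n\,a(\delta)\}$ with $a(\delta)=\delta/\sqrt{1+\log N_{[]}(\delta)}$. With your entropy bound, $\sqrt n\,a(\delta_n)\asymp 1$, so the second term is of order $\sqrt n$.

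Granting, as you and the paper do, that everything except the $Y$-term is uniformly bounded on $\tilde H_n$, the repair is to split off $SZY/(\pi e)$, whose coefficient is bounded. Lemma 3.4.2 handles the bounded remainder. Truncating $Y$ at $T_n$ costs about $n^{-1/6}T_n$ (up to logs) plus $\sqrt n\,\mathbb E|Y|\mathbf 1\{|Y|>T_n\}$. This closes if $\mathbb E|Y|^p<\infty$ for some $p>4$ or if $Y$ has exponential-type tails, but not with only the fourth moment used in (c*)(ii).

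Relatedly, Step 1 of the concentration proof bounds the entropy only at the single scale $\epsilon_n$. Your $J_{[]}(\delta_n)$ needs it at every $\epsilon\le\delta_n$, including scales below the tail mass of $S_n$, where the unconstrained tail components must also be bracketed. Stopping the chaining at a polynomially small scale handles this, but it has to be done.

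\emph{Copula step.} Choosing $\tau_n\ge n^{-c}$ does not keep the modulus polylogarithmic. Since $1/\phi(\Phi^{-1}(\tau))\asymp 1/(\tau\sqrt{\log(1/\tau)})$, you pick up a factor of order $n^{c}$. For $\rho\neq0$ the copula density is itself polynomially large near the corners, so a uniform $C_{\text{copula}}$ is not available either.

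More seriously, the crude bound $\sqrt n\,\zeta_n^{1/2}$ on the exceptional set needs $\zeta_n=o(1/n)$. But at $P_0$ the probability integral transform is exactly uniform, and $\tilde H_n$ must contain distributions near $P_0$, so $\zeta_n$ cannot be much below $2\tau_n$. Hence $\tau_n=o(1/n)$. The copula's Lipschitz constant on $[\tau_n,1-\tau_n]$, which multiplies $\varepsilon_n$ in your bound for $\delta_n$, is then at least of order $n/\sqrt{\log n}$, while the $n^{-1/6}$ margin in $J_{[]}$ tolerates only factors $o(n^{1/6})$. Your fallback has the same tension: $\sqrt n$ times a truncation bias of order $\tau_n$ vanishes only if $\tau_n=o(n^{-1/2})$, which still gives a factor of order at least $n^{1/2}/\sqrt{\log n}$. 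The paper escapes this only because its copula Lipschitz lemma takes $\tau$ fixed, which $H_n$ (with $\tau_n\downarrow0$) does not supply.

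So worst-case constants on truncated domains cannot work. What helps is that $c(F_{V_z}(v),w;\rho)\,g_{V_z}(v)$ is the conditional density of $V_z$ given $V_{z'}$. Hence $\phi_{2,P_f}$ and $b_2$ are conditional expectations of quantities bounded by $2C_{\mu_1}$, and their envelope needs no copula bound and no exceptional set. What remains is an $L_2(P_0)$ modulus of order $\varepsilon_n$ (up to logs) for this reverse bridge in $(F_{V_z},F_{V_{z'}},\rho)$. You would have to build that into $H_n$, as the paper does for $g_{z'|z}$ through its bridge-regularity clause, and then justify that the posterior concentrates on it.
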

\begin{proof}
Define $\mathcal F_n = \{\dot \chi_{P_{O'}} - \dot\chi_{P_{O',0}}\mid P_{O'}\in \tilde H_n\}$. Based on Lemma~\ref{Snentropycontrol}, we already derived the upper bounds of the log bracketing entropy for underlying density function classes, $\mathcal F_{C,n}, \mathcal F_{Z|C,n}, \mathcal F_{V|Z,C,n}, \mathcal F_{M|V,Z,C,n}, \mathcal F_{Y|M,V,Z,C,n}, \mathcal F_{S\mid W,n}$.\\ 
Notice that the nuisance parameters \(\eta=\{e,g_{V_1},g_{0|1},g_M,\mu_1, \pi, b\}\) are not modeled directly but can be derived as functionals of the density functions that constitute the full model.\\The transformations that map the base density functions to these nuisance parameters—involving operations such as integration and division—are smooth. The overlap condition in the rate sieve $H_n$ ensures the stability of division, while the boundedness from the structural sieve $S_n$ ensures the stability of the integrals.\\
Crucially, in empirical process theory, a smooth mapping (specifically, a Lipschitz continuous mapping) does not significantly increase the complexity of a function class. Therefore, the complexity (i.e., the order of the log-entropy) of the nuisance parameter function classes is controlled by and is of the same order as the complexity of the underlying density function classes, which we established in the preceding analysis as being dominated by polynomials of $\log n$.\\
With the properties of the nuisance parameter classes established, we consider the mapping from these parameters to the Efficient Influence Function (EIF) itself. Lemma~\ref{lem:L2} provides the key result, demonstrating that the map from the vector of nuisance parameters $\eta$ to the EIF $\dot \chi_P$ is Lipschitz continuous in the $L_2(P_0)$ norm. This Lipschitz continuity is a crucial property. Specifically, if a map is Lipschitz continuous, the bracketing entropy of the resulting function class can be bounded by the sum of the bracketing entropies of the input function classes. \\
Therefore, we can conclude that the log bracketing entropy of our target class, $\mathcal F_n = \{\dot \chi_P-\dot\chi_{P_0}\mid P\in \tilde H_n\}$, is controlled by the sum of the log bracketing entropies of the individual nuisance parameter classes and thus is controlled by the sum of the log bracketing entropies of the underlying density function classes $\{\mathcal F_{C,n}, \mathcal F_{Z|C,n}, \mathcal F_{V|Z,C,n}, \mathcal F_{M|V,Z,C,n}, \mathcal F_{Y|M,V,Z,C,n}, \mathcal F_{S\mid W,n}\}$. \\
Therefore, we can directly conclude:
\begin{align*}
    \log N_{[]}(\epsilon, \mathcal F_n, \rho ) \leq K'(&\log N_{[]}(\epsilon,\mathcal F_{C,n}, \rho) + \log N_{[]}(\epsilon,\mathcal F_{Z|C,n}, \rho) + \log N_{[]}(\epsilon,\mathcal F_{V|Z,C,n}, \rho)  \\ +&\log N_{[]}(\epsilon,\mathcal F_{M|V,Z,C,n}, \rho) + \log N_{[]}(\epsilon,\mathcal F_{Y|M,V,Z,C,n}, \rho) + \log N_{[]}(\epsilon,\mathcal F_{S\mid W,n}, \rho)).
\end{align*}

Given that we have already established that the entropy of each underlying density class is dominated by a polynomial of $\log n$ , it follows directly that the entropy of the final EIF class $\mathcal F_n$ is also dominated by a polynomial of $\log n$.

A key result from empirical process theory, Theorem 2.5.2 \parencite{van1996weak}, bounds the expected supremum of the empirical process. A simplified form of such a bound is:
\begin{align*}
    \mathbb E_{P_0}^*\left[\sup_{f\in \mathcal F_n}\left|\mathbb G_n[f]\right |\right] \leq J\left(\int_0^{\sigma_n}\sqrt{\log N_{[]} (\epsilon,\mathcal{F}_n, L_2(P_0))}d\epsilon, \sigma_n,\|G_n\|_{P_0 ,2} \right),
\end{align*}
where $\mathbb E^*$ is the outer expectation and $J(\cdot )$ is a monotonically increasing function with $J(0) = 0$. We verify that he arguments of this function converge to zero:\\
\textbf{Entropy Integral}: Define the variance upper bound as $\sigma^2_n := \sup_{f\in \mathcal F_n} P_0[f^2]$. We have established that the log bracketing entropy, $\log N_{[]}$, is bounded by an upper bound that is dominated by polynomials of $\log n$. Because this growth is slow, the Dudley integral $\int _0^{\sigma_n}\sqrt{\log N_{[]}} d\epsilon$  is a well-behaved term that converges to zero as its upper limit of integration, $\sigma_n$, converges to zero. That is, 
\begin{align}\label{limsqrtentropy=0}
    \lim_{n\rightarrow\infty}\int ^{\sigma_n}_0 \sqrt{\log N_{[]} (\epsilon,\mathcal{F}_n, L_2(P_0))}d\epsilon = 0.
\end{align}
Then,
\begin{align}\label{ineqentropy}
    \int ^{\sigma_n}_0 \sqrt{\log N_{[]} (\epsilon,\mathcal{F}_n, L_2(P_0))}d\epsilon \leq \sigma_n\cdot \sup_{\epsilon \in (0,\sigma_n]}\sqrt{\log N_{[]} (\epsilon,\mathcal{F}_n, L_2(P_0))}.
\end{align}
We already proved that the log bracketing entropy $\log N_{[]}(\epsilon, \mathcal F_n,L_2(P_0))$ is dominated by a polynomial of $\log n$, thus the supremum term on the right hand side of (\ref{ineqentropy}) grows at a sub-polynomial rate in $n$. To complete the proof of (\ref{limsqrtentropy=0}), it suffices to show that $\sigma_n\rightarrow0$ as $n\rightarrow\infty$.
\\
\textbf{Variance}: Now we are going to prove that the variance of all functions in the class  $\mathcal F_n = \{\dot \chi_{P_{O'}} - \dot\chi_{P_{O',0}}\mid P_{O'}\in \tilde H_n\}$ converges uniformly to 0. We have $$\sigma_n = \sup_{f \in \mathcal{F}_n} \|f\|_{P_{0}} = \sup_{P \in \tilde{H}_n} \|\dot \chi_{P_{O'}} - \dot\chi_{P_{O',0}}\|_{P_0}$$
By applying the result of Lemma~\ref{lem:l2_mar}, we can bound this supremum by the sum of the $L_2$ errors of the individual nuisance parameters. Crucially, by the Bridge regularity condition in Definition~\ref{defHnrate}, the error of $g_{0|1}$ is absorbed into the errors of the marginal CDFs and the copula correlation. Thus, we have:
\begin{align*}
    \sup_{P_{O'} \in \tilde{H}_n} \|\dot{\chi}_{P_{O'}} - \dot{\chi}_{P_{O'},0}\|_{P_0} \le \sup_{P \in \tilde{H}_n} K \Big( &\|e-e_0\|_{P_0} + \|g_{V_z} - g_{V_z,0}\|_{P_0} + \|g_M - g_{M,0}\|_{P_0} \\
    +& \|\mu_1 - \mu_{1,0}\|_{P_0} + \|\pi-\pi_0\|_{P_0} + \|F_{V_{z'}} - F_{V_{z'},0}\|_{P_0} \\
    +& \|F_{V_z} - F_{V_z,0}\|_{P_0} + |\rho - \rho_0| \Big).
\end{align*}
Furthermore, by the Definition~\ref{defHnrate} of the rate sieve $H_n$, we know that for every $P \in \tilde{H}_n$, each nuisance parameter's $L_2$ error is bounded by a term proportional to $\varepsilon_n$: $\|\eta_j - \eta_{j,0}\|_{P_0} \le L_j \varepsilon_n$. Substituting these bounds yields:$$
\sup_{P_{O'} \in \tilde{H}_n} \|\dot{\chi}_{P_{O'}} - \dot{\chi}_{P_{O'},0}\|_{P_0} \le K \cdot \left(L_e + L_{V_1} + L_{M} + L_{(0)} + L_{\pi} + L_{F0} + L_{F1} + L_\rho\right) \varepsilon_n = K' \varepsilon_n
$$
for a constant $K'$ that does not depend on $n$.\\
Therefore, we obtain the bound on the variance: $\sigma^2_n\leq \left(K'\varepsilon_n\right)^2$. Since the rate sieve $H_n$ requires $\varepsilon_n = n^{-1/3}(\log n)^{1/3} $, it follows that $\sigma_n^2 \rightarrow0$ as $n \rightarrow\infty$.\\
\textbf{Envelope}: The validity of Assumption (c*)(ii) ensures the existence of a suitable envelope function $G_n$ with finite moments.\\
To conclude, the entire upper bound from the maximal inequality converges to zero. Therefore, we have:
\begin{align*}
    \lim_{n\rightarrow\infty}\mathbb E^*_{P_0}\left[\sup_{f\in \mathcal F_n}\left|\mathbb G_n[f]\right|\right] = 0.
\end{align*}
By Markov's inequality, convergence in expectation implies convergence in $P_0$-probability. This completes the proof that $\sup_{f\in \mathcal F}\left|\mathbb G_n[f]\right|\rightarrow0$ in $P_0$-probability, thus verifying Assumption~\ref{assumption_abc}(c*)(i).
\end{proof}

\subsection{Supplementary Lemmas}
\begin{lemma}[Second–order remainder for the full data case]\label{lem:r2}
We discuss the full data case first. Let $P$ be any distribution in a (dominated) neighborhood of $P_0$.
Write the nuisance collection
\[
\eta=\{e,g_{V_z},g_{z'|z},g_M,F_{V_{z'}}, F_{V_z},\mu_1\}.
\]
Assume \emph{overlap} $\delta<e_0(C)<1-\delta,\ $ a.s.\ for some $\delta\in(0,\tfrac12)$ and \emph{moment/envelope} bounds ensuring that
$\mu_1,g_M,g_{z'|z},g_{V_z},F_{V_{z'}},F_{V_z}$ have square–integrable envelopes under $P_0$.
Denote $\Delta t:=t-t_0$ and the $L_2(P_0)$ norm by $\|\cdot\|$.
Then the second–order remainder
\[
r_2(P_0,P):=\chi(P)-\chi(P_0)-(P-P_0)\!\big[\dot\chi_{P_0}\big]
\]
contains \emph{no linear terms} in the nuisance errors and admits the quadratic bound
\begin{align*}
    |r_2(P_0,P)| \leq K \Big( &\|\Delta\mu_1\|\|\Delta g_M\| + \|\Delta\mu_1\|\|\Delta g_{z'|z}\| + \|\Delta\mu_1\|\|\Delta g_{V_z}\| \\
    &+ \|\Delta g_M\|\|\Delta g_{z'|z}\| + \|\Delta e\|\|\Delta\mu_1\| + \|\Delta g_M\|\|\Delta f_{V_{z'}}\| \Big)
\end{align*}
for a constant $K>0$ depending only on $\delta$ and the envelope/moment bounds. In particular, $K$ does \emph{not} depend on the Gaussian–copula form of $g_{z'|z}$.
Consequently, there is no first–order bias: $r_2(P_0,P)=o(\|\Delta\eta\|)$ as $\|\Delta\eta\|\to0$.
\end{lemma}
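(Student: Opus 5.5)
Since $\chi(P)=P[\chi(P)]$ and $P[\dot\chi_P]=0$, I will write $r_2(P_0,P)=\chi(P)-\chi(P_0)+P_0[\dot\chi_P]$; this is the form appearing in Assumption~\ref{assumption_abc}(a). The aim is to show that this quantity is an explicit sum of products of pairs of nuisance errors. I will use the nested maps $\mu_1\to\mu_2\to\mu_3\to\mu_4$ and compute $P_0[\phi_{j,P}]$ block by block, conditioning iteratively under $P_0$.

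\textbf{Step 1: the outer layers.} Take $P_0[\phi_{5,P}]=P_0[\mu_4]-\chi(P)$. Next, conditioning on $C$, $P_0[\phi_{4,P}]=P_0\big[\tfrac{e_0}{e}\{\mu_{3,P_0\text{-avg}}-\mu_4\}\big]$, where $\int\mu_3\,g_{V_z,0}\,dv$ replaces $\mu_4$. Writing $\tfrac{e_0}{e}=1-\tfrac{\Delta e}{e}$, the $\phi_4$ and $\phi_5$ contributions combine to
\[
P_0\!\left[\int\mu_3\,g_{V_z,0}\right]-\chi(P)-P_0\!\left[\tfrac{\Delta e}{e}\int\mu_3\,\Delta g_{V_z}\right].
\]
The last term is a product error, bounded by $\|\Delta e\|\,\|\Delta g_{V_z}\|$ times the envelope of $\mu_3$ divided by $\delta$. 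I will repeat this for $\phi_3$, which replaces $\mu_3$ by $\int\mu_2\,g_{z'|z,0}$ at the cost of a term $\Delta e\cdot\Delta g_{z'|z}$. I will do the same for $\phi_1$, which replaces $\mu_1$ by $\mu_{1,0}$ inside the innermost integral at the cost of $\Delta e\cdot\Delta\mu_1$. All of these are controlled by Cauchy--Schwarz, using overlap and square-integrable envelopes.

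\textbf{Step 2: the cross-world block $\phi_2$.} This is the step where the copula enters. I will first show that $c(u,w;\rho)\,g_{V_z}(v\mid C)$ equals $g_{z'|z}(V_{z'}\mid v)\,g_{V_z}(v)/g_{V_{z'}}(V_{z'})$, by Sklar's theorem as in Step 4 of the EIF derivation. Then $P_0[\phi_{2,P}]$, evaluated with $Z=z'$ and conditioning on $(V_{z'},C)$, becomes
\[
\int\!\!\int \tfrac{1-e_0}{1-e}\,\frac{g_{V_{z'},0}}{g_{V_{z'}}}\,g_{z'|z}\,g_{V_z}\,\big[\mu_{2}^{(g_{M,0})}-\mu_2\big]\,dv'\,dv .
\]
Here $\mu_2^{(g_{M,0})}$ denotes the map $\mu_2$ integrated against $g_{M,0}$ rather than $g_M$. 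Expanding each ratio as $1+O(\Delta)$, the leading part cancels the linear $\Delta g_M$ error left over from Step 1. The remaining pieces are products of $\Delta g_M$ with $\Delta e$, $\Delta g_{z'|z}$, $\Delta g_{V_z}$ and $\Delta F_{V_{z'}}$ (the last through the marginal density ratio). Carrying out this cancellation exactly is the main obstacle. It requires rewriting $\chi(P)-\chi(P_0)$ as a telescoping sum over the four integration layers,
\[
\int\!\!\int\!\!\int\!\!\int \mu_1 g_M g_{z'|z} g_{V_z} dP_C-\int\!\!\int\!\!\int\!\!\int \mu_{1,0} g_{M,0} g_{z'|z,0} g_{V_z,0} dP_{C,0},
\]
and matching each first-order difference against the corresponding $P_0[\phi_j]$. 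I will first try doing the telescoping in the order $C, V_z, V_{z'}, M, Y$. Terms that are genuinely cross products, such as $\Delta\mu_1\Delta g_M$, $\Delta\mu_1\Delta g_{z'|z}$, $\Delta\mu_1\Delta g_{V_z}$ and $\Delta g_M\Delta g_{z'|z}$, arise when two layers are perturbed simultaneously in the telescoping, and each is bounded by Cauchy--Schwarz.

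\textbf{Step 3: collecting bounds.} The quadratic bound with $K$ depending only on $\delta$ and the envelopes then follows. Because $g_{z'|z}$ is treated as a generic conditional density throughout, $K$ does not involve the Gaussian-copula form; that form enters only later, through the bridge regularity in $H_n$. The conclusion $r_2=o(\|\Delta\eta\|)$ is immediate from the quadratic bound.
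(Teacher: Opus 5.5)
Your treatment of $\phi_1$ in Step~1 fails, and the ``no linear terms'' claim depends on it. Under $P_0$, $\phi_{1,P}$ lives on the arm $Z=z$, where $M\sim g_{M,0}(\cdot\mid V,z,C)$. Hence
\[
P_0[\phi_{1,P}]=P_0\Big[\tfrac{e_0}{e}\iint(\mu_{1,0}-\mu_1)(m;v,C)\,g_{M,0}(m\mid v,z,C)\,g_{V_z,0}(v\mid C)\,dm\,dv\Big].
\]
The linear $\Delta\mu_1$ term of your telescoping, however, averages $\Delta\mu_1$ against a different law. It uses the cross-world law $q_0(m\mid v,C)=\int g_{M,0}(m\mid v',z',C)\,g_{z'|z,0}(v'\mid v,C)\,dv'$, the law of $M_{z'}$ given $V_z=v$ (Theorem~\ref{athm1}). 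What survives is
\[
P_0\Big[\iint\Delta\mu_1(m;v,C)\,\{q_0(m\mid v,C)-g_{M,0}(m\mid v,z,C)\}\,g_{V_z,0}(v\mid C)\,dm\,dv\Big].
\]
This term is linear in $\Delta\mu_1$. It is untouched by $\phi_2,\dots,\phi_5$; your $\phi_2$ step removes only the $\Delta g_M$ term. It vanishes for all $\Delta\mu_1$ only if $M_{z'}$ and $M_z$ have the same law given $V_z$, which would force $\mathbb E[Y_{z,M_{z'}}]=\mathbb E[Y_{z,M_z}]$. So with $\dot\chi$ as defined, no bookkeeping yields the claim.

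The paper's proof has the same hole. It asserts that $\phi_{1,P_0}$ cancels $T_1$, relying on Step~4 of its EIF derivation. There, $\int p(M\mid v',z',C)\,p(v'\mid V_z,z,C)\,dv'$ is said to collapse to $p(M\mid V_z,z,C)$ by Assumption~\ref{assump3}, but that assumption implies no such thing. The fix is to keep the weight $q(M\mid V_z,C)/g_M(M\mid V_z,z,C)$ in $\phi_1$, as in the paper's ``raw'' Step~3 form. With this ratio bounded, $\phi_1$ cancels the $\Delta\mu_1$ term up to products of $\Delta\mu_1$ with $\Delta e$, $\Delta g_M$ and $\Delta g_{z'|z}$.

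There are two further problems.

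\textbf{The remainder is not the lemma's.} $\chi(P)-\chi(P_0)+P_0[\dot\chi_P]$ is not a rewriting of the lemma's $r_2$, which subtracts $(P-P_0)[\dot\chi_{P_0}]=P[\dot\chi_{P_0}]$. The paper accordingly integrates $\dot\chi_{P_0}$ under $P$ and splits $f_{V_{z'}}=f_{V_{z'},0}+\Delta f_{V_{z'}}$ in the $\phi_2$ block. The two remainders differ by $(P-P_0)[\dot\chi_{P_0}-\dot\chi_P]$. Yours is the form in Assumption~\ref{assumption_abc}(a), so it is the more useful target, but you should say that you are switching.

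\textbf{The stated bound does not follow, and is false.} Your Steps~1--2 themselves generate $\Delta e\cdot\Delta g_{V_z}$, $\Delta e\cdot\Delta g_{z'|z}$, $\Delta e\cdot\Delta g_M$ and $\Delta g_M\cdot\Delta g_{V_z}$. None of these is among the six products in the statement, so Step~3's ``the bound then follows'' does not hold. For a counterexample, perturb only $e$ and $g_{V_z}$, keeping $\mu_1$, $g_M$, $g_{z'|z}$ and the law of $C$ at the truth. Every listed product vanishes, yet your remainder equals $P_0\big[\tfrac{\Delta e}{e}\int\mu_{3,0}\,\Delta g_{V_z}\,dv\big]$, which is generically nonzero. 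The lemma's version gives $-P_0\big[\tfrac{\Delta e}{e_0}\int\mu_{3,0}\,\Delta g_{V_z}\,dv\big]$, also generically nonzero. The paper conceals this by bounding its whole propensity cross-term $U_e$ by $\|\Delta e\|\|\Delta\mu_1\|$ alone.

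After the $\phi_1$ repair, your route can deliver a bound by all pairwise products of nuisance errors, including $\|\Delta e\|(\|\Delta g_M\|+\|\Delta g_{z'|z}\|+\|\Delta g_{V_z}\|)$. That is all that Lemma~\ref{r2:MAR} and the verification of Assumption~\ref{assumption_abc}(a) actually use.
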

\begin{proof}
{Step 1 (Telescoping expansion of $\chi(P)-\chi(P_0)$).}
Write $\chi(P)=\mathbb{E}_C\!\iiint \mu_1\,g_M\,g_{z'|z}\,g_{V_z}\,dm\,dv'\,dv$.
Insert and subtract the $P_0$ components layer by layer to obtain a telescoping sum:
\[
\chi(P)-\chi(P_0)=T_1+T_2+T_3+T_4+R^{(>1)},
\]
where
\[
T_1=\mathbb{E}_C\!\iiint \Delta\mu_1\;g_{M,0}\,g_{z'|z,0}\,g_{V_z,0},\quad
T_2=\mathbb{E}_C\!\iiint \mu_{1,0}\;\Delta g_M\;g_{z'|z,0}\,g_{V_z,0},
\]
\[
T_3=\mathbb{E}_C\!\iiint \mu_{1,0}\;g_{M,0}\;\Delta g_{z'|z}\;g_{V_z,0},\quad
T_4=\mathbb{E}_C\!\iiint \mu_{1,0}\;g_{M,0}\,g_{z'|z,0}\;\Delta g_{V_z},
\]
and $R^{(>1)}$ collects all terms that contain at least a \emph{product} of two or more deltas (e.g.\ $\Delta\mu_1\Delta g_M\cdot g_{z'|z,0}g_{V_z,0}$, etc.).\\
Step 2 (Subtract the linearization via the EIF and cancel all first–order pieces).
By pathwise differentiability with EIF $\dot\chi_{P_0}$, we have
\[
\chi(P)-\chi(P_0)=(P-P_0)[\dot\chi_{P_0}]+r_2(P_0,P).
\]
Compute $(P-P_0)[\dot\chi_{P_0}]$ by taking expectations of each EIF block under $P$ and subtracting under $P_0$.
A direct calculation (conditioning on $(C,Z)$ and using $\mathbb{E}[\frac{Z}{e_0(C)}\mid C]=1$ and $\mathbb{E}[\frac{1-Z}{1-e_0(C)}\mid C ]=1$). Accordingly, the components $\phi_{1,P_0}$, $\phi_{3,P_0}$, and $\phi_{4,P_0}$ perfectly cancel the first-order terms $T_1$, $T_3$, and $T_4$.
The estimation error from the propensity score manifests as the cross-term $U_e$, explicitly capturing the canonical double-robustness structure:
$$
U_e = \mathbb{E}_C \left[ \frac{\Delta e(C)}{e_0(C)} \mathcal{D}_1(C) - \frac{\Delta e(C)}{1-e_0(C)} \mathcal{D}_0(C) \right],
$$
where $\mathcal{D}_1(C)$ and $\mathcal{D}_0(C)$ are bounded linear combinations of the intermediate mapping errors (e.g., $\Delta\mu_1$, $\Delta\mu_2$, $\Delta\mu_3$) integrated over their respective treatment arms. For instance, the component generated from $\phi_{1,P_0}$ explicitly takes the form $$\mathbb{E}_C \big[ \frac{\Delta e(C)}{e_0(C)} \Delta\mu_1(M;V_1,C) \big].$$
Crucially, we must carefully examine the copula-based term $\phi_{2,P_0}$. Under the true generating distribution $P$ (specifically under the $Z=0$ arm), its expectation involves integrating over the true outcome mechanism and the true marginal density of $V_0$:
$$\mathbb{E}_P[\phi_{2,P_0}] = \mathbb{E}_C \iint\left(\int [\mu_{1,0}(m) - \mu_{2,0}(v_0)] \, c_0(u,w;\rho) \, g_{V_z,0}\,dv\right) \, g_M \,  dm \, d F_{V_{z'}}(v').$$
Notice that in the inner integral, $\int [\mu_{1,0} - \mu_{2,0}] g_M dm = \int \mu_{1,0} \Delta g_M dm$ and let $f_{V_{z'}}$ be the probability density of $V_{z{'}}$, then $dF_{V_{z'}}(v')= f_{V_{z'}}dv'.$ Substituting this, the expectation simplifies to an integral involving $\mu_{1,0} \Delta g_M c_0 f_{V_{z'}} g_{V_z,0}$.  By decomposing the true marginal density as $f_{V_{z'}} = f_{V_{z'},0} + \Delta f_{V_{z'}}$, the baseline part yields $c_0 f_{V_{z'},0} = g_{z'|z,0}$, which exactly reconstructs and cancels $T_2$:$$T_2 = \mathbb{E}_C \iiint \mu_{1,0} \Delta g_M g_{z'|z,0} g_{V_z,0} \, dm \, dv' \, dv.$$
The residual part forms the copula-induced second-order cross-term:$$U_{\rm copula} = \mathbb{E}_C \iiint \mu_{1,0} \, \Delta g_M \, c_0 \, \Delta f_{V_{z'}} \, g_{V_z,0} \, dm \, dv' \, dv.$$
Therefore, by subtracting this EIF expectation from the original telescoping sum, the first-order bias terms $T_1$ through $T_4$ are perfectly annihilated. The exact second-order remainder is given by:$$r_2(P_0,P) = R^{(>1)} - U_e - U_{\rm copula}.$$This remainder visibly contains \emph{strictly no linear terms} in any of the functional nuisance errors $(\Delta\mu_1, \Delta g_M, \Delta g_{V_z}, \Delta f_{V_{z'}}, \Delta e)$. Every surviving component is at least a quadratic cross-product of estimation errors, mathematically cementing the multiple-robustness of the framework and ensuring $r_2(P_0,P) = o_P(n^{-1/2})$ under the sieve rates.
\\
{Step 3 (Quadratic $L_2$ bound).}
Every summand in $R^{(>1)}$ can be expressed as an expectation under the baseline distribution of the generic form $\mathbb{E}_{P_0}[M \cdot \Delta f \cdot \Delta g]$. Here, the multiplier $M$ mathematically arises from converting the original multiple integrals of the telescoping sum into an expectation under $P_0$, which inherently requires dividing the integrand by the baseline joint density. Consequently, $M$ is a bounded function composed of baseline densities (e.g., $g_{M,0}, g_{z'|z,0}, g_{V_z,0}$) and inverse-propensity factors bounded by $\delta^{-1}$.By the Cauchy-Schwarz inequality,$$|\mathbb{E}_{P_0}[M \cdot \Delta f \cdot \Delta g]| \le \|M\|_\infty \|\Delta f\| \|\Delta g\|.$$Applying this logic to each quadratic term in $R^{(>1)}$ produces the four standard norm products:
$$\|\Delta\mu_1\|\|\Delta g_M\|,\quad \|\Delta\mu_1\|\|\Delta g_{z'|z}\|,\quad \|\Delta\mu_1\|\|\Delta g_{V_z}\|,\quad \|\Delta g_M\|\|\Delta g_{z'|z}\|.$$
Any triple (or higher) product terms in $R^{(>1)}$ are strictly dominated by a linear combination of these quadratic products via $|abc|\le \frac{1}{2}a^2+\frac{1}{4}b^2+\frac{1}{4}c^2$ and the envelope bounds. For the propensity score cross-term $U_e$, the strict overlap assumption ensures its corresponding multiplier is bounded by a constant $K$, yielding:$$|U_e| \le K\|\Delta e\|\|\Delta\mu_1\|.$$For the newly derived copula cross-term $U_{\rm copula}$, converting its integral into a $P_0$-expectation introduces a specific multiplier $M_{\rm copula}$ that contains the baseline copula density $c_0(u_0,w_0;\rho_0)$. Crucially, by the Quantile stability condition established in Definition \ref{defHnrate}, the marginal CDF values are strictly bounded away from $0$ and $1$. This purposeful truncation circumvents the tail singularities of the Gaussian copula, guaranteeing that $\|M_{\rm copula}\|_\infty \le K$. Thus, applying Cauchy-Schwarz similarly yields:
$$|U_{\rm copula}| \le K\|\Delta g_M\|\|\Delta f_{V_{z'}}\|.$$
Collecting all finite uniform bounds into a generic generic constant $K>0$ completes the quadratic bound for $r_2(P_0,P)$.
\end{proof}

\begin{lemma}[$L_2$-continuity of the EIF in the full data case]\label{lem:L2}

If, in $L_2(P_0)$,
\[
\|e-e_0\|\to 0,\quad
\|\mu_1-\mu_{1,0}\|\to 0,\quad
\|g_M-g_{M,0}\|\to 0,\quad
\|g_{z'|z}-g_{z'|z,0}\|\to 0,\quad
\|g_{V_z}-g_{V_z,0}\|\to 0,
\]
and for the copula components, $\|F_{V_{z'}}-F_{V_{z'},0}\|\to 0$, $\|F_{V_z}-F_{V_z,0}\|\to 0$, and $|\rho-\rho_0|\to 0$, then
\[
\|\dot\chi_P-\dot\chi_{P_0}\|_{P_0}\ \longrightarrow\ 0.
\]
\end{lemma}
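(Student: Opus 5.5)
The plan is to decompose $\dot\chi_P-\dot\chi_{P_0}=\sum_{j=1}^5(\phi_{j,P}-\phi_{j,P_0})$. We then bound each block difference in $L_2(P_0)$ by a linear combination of the nuisance errors listed in the statement, and conclude by the triangle inequality. Throughout we use the same uniform bounds that appear in the verification of Assumption~\ref{assumption_abc}(c*)(ii). These are: overlap $\delta<e<1-\delta$; boundedness of $\mu_1$ and hence of $\mu_2,\mu_3,\mu_4$ by $C_{\mu_1}$; square-integrable envelopes for $Y$; and, via the quantile-stability condition of Definition~\ref{defHnrate}, a uniform bound $C_{\rm copula}$ on $c(u,w;\rho)$ together with a Lipschitz bound on its dependence on $(u,w,\rho)$ over $[\tau,1-\tau]^2\times(-1+\kappa,1-\kappa)$.

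\textbf{Step 1: the scalar-weight blocks.} For $\phi_1,\phi_3,\phi_4$, write each as $Z/e(C)$ times a residual. The standard identity
\[
\tfrac{Z}{e}A-\tfrac{Z}{e_0}A_0=\tfrac{Z}{e}(A-A_0)-\tfrac{Z(e-e_0)}{e e_0}A_0
\]
gives a bound by $\delta^{-1}\|A-A_0\|+\delta^{-2}\|A_0\|_\infty\|e-e_0\|$. For $\phi_1$ we have $A-A_0=-\Delta\mu_1$, with $A_0$ handled by the envelope of $Y$ via Cauchy--Schwarz. For $\phi_3,\phi_4,\phi_5$ it remains to control $\|\Delta\mu_2\|,\|\Delta\mu_3\|,\|\Delta\mu_4\|$ and $|\Delta\chi|$. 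For these we use the telescoping device of Lemma~\ref{lem:r2}, Step 1, applied to one layer only:
\[
\Delta\mu_2=\int\Delta\mu_1\,g_M+\int\mu_{1,0}\,\Delta g_M .
\]
Its $L_2(P_0)$ norm is bounded by $\|\Delta\mu_1\|+C_{\mu_1}\|\Delta g_M\|$. This requires the change of measure from $g_{M,0}$-integration to $P_0$-expectation to have a bounded density ratio, which we assume as part of the envelope conditions. The same argument propagates to $\mu_3$ (adding $\|\Delta g_{z'|z}\|$) and to $\mu_4$ (adding $\|\Delta g_{V_z}\|$). Finally $|\Delta\chi|\le\|\Delta\mu_4\|+|(P-P_0)\mu_{4,0}|$; the last term is controlled by the marginal of $C$ and is absorbed into the convergence of $P$ to $P_0$, or equivalently into the $\Delta g$ terms.

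\textbf{Step 2: the copula block $\phi_2$, the main obstacle.} Write the integrand as $h_P(v)=[\mu_1(M;v,C)-\mu_2(V_{z'};v,C)]\,c(u_P(v),w_P;\rho)\,g_{V_z}(v\mid C)$. We split $h_P-h_{P_0}$ into three pieces:
\begin{itemize}
\item the change in the bracket, giving $\Delta\mu_1$ and $\Delta\mu_2$, both already controlled in Step 1;
\item the change in $g_{V_z}$, giving $\|\Delta g_{V_z}\|$ times $C_{\rm copula}\cdot 2C_{\mu_1}$;
\item the change in the copula density.
\end{itemize}
For the copula piece we use the mean value theorem in $(\Phi^{-1}(u),\Phi^{-1}(w),\rho)$. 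On the quantile-stable region $\Phi^{-1}$ is Lipschitz with constant of order $1/\phi(\Phi^{-1}(\tau))$, so that
\[
|c_P-c_{P_0}|\le K\bigl(|\Delta F_{V_z}|+|\Delta F_{V_{z'}}|+|\Delta\rho|\bigr).
\]
On the exceptional set of probability at most $\zeta_n$, we bound the contribution crudely using the boundedness of $\mu$'s and the Gaussian copula tail. This contribution vanishes because $\zeta_n\to0$, which is why the conclusion is stated as convergence rather than a rate. Integrating over $v$ with Jensen's inequality then converts pointwise bounds into $L_2(P_0)$ bounds. The prefactor $(1-Z)/(1-e)$ is handled exactly as in Step 1.

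\textbf{Step 3: collecting.} Summing the block bounds yields
\[
\|\dot\chi_P-\dot\chi_{P_0}\|\le K\bigl(\|\Delta e\|+\|\Delta\mu_1\|+\|\Delta g_M\|+\|\Delta g_{z'|z}\|+\|\Delta g_{V_z}\|+\|\Delta F_{V_z}\|+\|\Delta F_{V_{z'}}\|+|\Delta\rho|\bigr)+o(1).
\]
This tends to zero under the hypotheses, and it is precisely the Lipschitz form used later in the verification of Assumption~\ref{assumption_abc}(b) and (c*)(i).

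I expect Step 2 to be the delicate step. The unbounded derivative of $\Phi^{-1}$ near $0$ and $1$ is exactly why quantile stability must be invoked and the exceptional set treated separately.
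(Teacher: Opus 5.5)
Your approach is the paper's approach. You take blockwise differences, use the inverse-propensity Lipschitz bound, propagate errors through $\mu_2,\mu_3,\mu_4$ by one-layer telescoping, and control the copula block by Lipschitz continuity of $c(u,w;\rho)$ under quantile stability. Your identity $\Delta\mu_2=\int\Delta\mu_1\,g_M+\int\mu_{1,0}\,\Delta g_M$ is exact, whereas the paper's version drops the cross term $\int\Delta\mu_1\,\Delta g_M$.

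\textbf{The gap is in $\phi_5$, at the term you isolate and then dismiss.} Write $\chi(P)-\chi(P_0)=P_0(\mu_4-\mu_{4,0})+\int\mu_4\,d(P_C-P_{0,C})$; your split puts $\Delta\mu_4$ under $P$ rather than $P_0$. The second term depends only on the marginal law of $C$. That law is not among the listed nuisances: $e$, $g_{V_z}$, $g_M$, $g_{z'|z}$, $\mu_1$ and both CDFs are all conditional on $C$. So it cannot be ``absorbed into the $\Delta g$ terms.''

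Here is a concrete failure. Take $P$ whose conditionals, conditional CDFs and $\rho$ coincide with those of $P_0$, but with $P_C\neq P_{0,C}$. Every hypothesis holds trivially, $\phi_{j,P}=\phi_{j,P_0}$ for $j\le 4$, and $\mu_4=\mu_{4,0}$. Yet $\dot\chi_P-\dot\chi_{P_0}=\int\mu_{4,0}\,d(P_{0,C}-P_C)$, which is a nonzero constant whenever $\mu_{4,0}$ is non-constant.

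So the statement needs an extra hypothesis. One option is $\|P_C-P_{0,C}\|_{\mathrm{TV}}\to0$, which gives $|\int\mu_4\,d(P_C-P_{0,C})|\le 2C_{\mu_1}\|P_C-P_{0,C}\|_{\mathrm{TV}}$. Another is to prove the claim for $\dot\chi_P+\chi(P)$ instead. Additive constants cancel in the one-step draw $\chi(P)+\sum_i\omega_{BB,i}\dot\chi_P(\mathcal O_i)$ and are annihilated by $\mathbb{G}_n$, so this suffices downstream. The paper's proof has the same hole: its step $|\chi(P)-\chi(P_0)|=|\mathbb{E}[\mu_4-\mu_{4,0}]|$ silently integrates both terms against $P_{0,C}$.

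\textbf{A secondary issue is in your copula step.} In Definition~\ref{defHnrate} the stable region is $[\tau_n,1-\tau_n]$ with $\tau_n\downarrow0$. The Lipschitz constant of $\Phi^{-1}$ there, of order $1/\phi(\Phi^{-1}(\tau_n))$, is therefore not uniform in $n$. If you instead fix $\tau$, the exceptional set has probability of order $\tau$, not $\zeta_n$. You are mixing the two.

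The argument must be run as a double limit. At fixed $\tau$, bound the on-region part by $K_\tau(\|\Delta F_{V_z}\|+\|\Delta F_{V_{z'}}\|+|\Delta\rho|)$. Show the off-region part is at most $\epsilon(\tau)$ uniformly in $n$, with $\epsilon(\tau)\to0$. Boundedness of the $\mu$'s is not enough for this, because the Gaussian copula density is unbounded near the corners. What you need is uniform square-integrability of $c$. This holds for $|\rho|\le 1-\kappa$, since $c\in L_{2+\eta}$ of independent uniforms for small $\eta>0$, provided $F_{V_{z'}}(V_{z'})$ has a bounded density under $P_0$. Then let $\tau\to0$.

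This proves convergence, which is all the lemma asserts. It does not give the $n$-uniform Lipschitz bound that your Step 3 advertises for later use in assumptions (b) and (c*)(i). The paper's copula lemma avoids the issue only by assuming a fixed $\tau$, which is inconsistent with its own definition.
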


\begin{proof}
We work throughout with the $L_2(P_0)$ norm $\|\cdot\|$ and use the same symbol $K$ for finite constants depending only on $\delta$ and the envelopes.

\paragraph{Step 1: two basic Lipschitz ingredients.}
(i) \emph{Inverse-propensity Lipschitz.} From $\delta<e,e_0<1-\delta$,
\[
\|e^{-1}-e_0^{-1}\|\le \delta^{-2}\|e-e_0\|,\qquad
\|(1-e)^{-1}-(1-e_0)^{-1}\|\le \delta^{-2}\|e-e_0\|.
\]
(ii) \emph{Integral-operator Lipschitz.} By Cauchy--Schwarz and the square-integrable envelopes,
\begin{align*}
\|\mu_2-\mu_{2,0}\|
&=\Big\|\int(\mu_1-\mu_{1,0})\,g_{M,0}\,dm+\int \mu_{1,0}\,(g_M-g_{M,0})\,dm\Big\|\\
&\le K\big(\|\mu_1-\mu_{1,0}\|+\|g_M-g_{M,0}\|\big),\\
\|\mu_3-\mu_{3,0}\|
&=\Big\|\int(\mu_2-\mu_{2,0})\,g_{z'|z,0}\,dv_0+\int \mu_{2,0}\,(g_{z'|z}-g_{z'|z,0})\,dv_0\Big\|\\
&\le K\big(\|\mu_2-\mu_{2,0}\|+\|g_{z'|z}-g_{z'|z,0}\|\big),\\
\|\mu_4-\mu_{4,0}\|
&=\Big\|\int(\mu_3-\mu_{3,0})\,g_{V_z,0}\,dv+\int \mu_{3,0}\,(g_{V_z}-g_{V_z,0})\,dv\Big\|\\
&\le K\big(\|\mu_3-\mu_{3,0}\|+\|g_{V_z}-g_{V_z,0}\|\big),\\
|\chi(P)-\chi(P_0)|
&=\big|\mathbb{E}[\mu_4-\mu_{4,0}]\big|\le \|\mu_4-\mu_{4,0}\|.
\end{align*}
Iterating yields
\[
\|\mu_2-\mu_{2,0}\|+\|\mu_3-\mu_{3,0}\|+\|\mu_4-\mu_{4,0}\|
\le K\big(\|\mu_1-\mu_{1,0}\|+\|g_M-g_{M,0}\|+\|g_{z'|z}-g_{z'|z,0}\|+\|g_{V_z}-g_{V_z,0}\|\big).
\]

\paragraph{Step 2: bound each EIF block difference.}
(1) For $\phi_{1,P}$,
\[
\phi_{1,P}-\phi_{1,P_0}
= Z\Big[(e^{-1}-e_0^{-1})(Y-\mu_{1,0}) - e^{-1}(\mu_1-\mu_{1,0})\Big],
\]
hence
\[
\|\phi_{1,P}-\phi_{1,P_0}\|\le K\big(\|e-e_0\|+\|\mu_1-\mu_{1,0}\|\big).
\]

(2) For $\phi_{2,P}$, set $T_{2,P}:=\int[\mu_1-\mu_2]\,c(u,w;\rho)\,g_{V_1}\,dv$. Then
\[
\phi_{2,P}-\phi_{2,P_0}
=\Big((1-e)^{-1}-(1-e_0)^{-1}\Big)(1-Z)\,T_{2,P_0}+(1-e)^{-1}(1-Z)\,(T_{2,P}-T_{2,P_0}).
\]
The first term is $O(\|e-e_0\|)$ by Step~1(i) and envelope bounds for $T_{2,P_0}$.
Expanding $T_{2,P}-T_{2,P_0}$ and using Cauchy--Schwarz,
\[
\|T_{2,P}-T_{2,P_0}\|\le K\big(\|\mu_1-\mu_{1,0}\|+\|\mu_2-\mu_{2,0}\|
+\|c(u,w;\rho) - c_0(u,w;\rho)\|+\|g_{V_1}-g_{V_1,0}\|\big),
\]
where 
$$c(u,w;\rho) = \frac{F_{V_z,V_{z'} \mid C}(v,v'\mid c)}{F_{V_z}(v\mid c)\cdot F_{V_{z'}}(v'\mid c)}.$$
Crucially, by the quantile stability condition in Definition \ref{defHnrate}, the CDF values are bounded strictly away from $0$ and $1$. This truncation avoids the tail singularities of the Gaussian copula, ensuring its density $c(u,w;\rho)$ has bounded partial derivatives and is globally Lipschitz. Thus, the copula density difference is bounded as $\|c-c_0\| \le K\big(\|F_{V_{z'}}-F_{V_{z'},0}\|+\|F_{V_z}-F_{V_z,0}\|+|\rho-\rho_0|\big)$.
\[
\|\phi_{2,P}-\phi_{2,P_0}\|
\le K\big(\|e-e_0\|+\|\mu_1-\mu_{1,0}\|+\|\mu_2-\mu_{2,0}\|
+\|F_{V_{z'}}-F_{V_{z'},0}\|+\|F_{V_z}-F_{V_z,0}\|+|\rho-\rho_0|+\|g_{V_z}-g_{V_z,0}\|\big).
\]

(3) For $\phi_{3,P}$ and (4) $\phi_{4,P}$,
\begin{align*}
&\|\phi_{3,P}-\phi_{3,P_0}\|
\le K\big(\|e-e_0\|+\|\mu_2-\mu_{2,0}\|+\|\mu_3-\mu_{3,0}\|\big)\\
&\|\phi_{4,P}-\phi_{4,P_0}\|
\le K\big(\|e-e_0\|+\|\mu_3-\mu_{3,0}\|+\|\mu_4-\mu_{4,0}\|\big).
\end{align*}

(5) For $\phi_{5,P}$,
\[
\|\phi_{5,P}-\phi_{5,P_0}\|
=\|(\mu_4-\mu_{4,0})-(\chi(P)-\chi(P_0))\|
\le \|\mu_4-\mu_{4,0}\|+|\chi(P)-\chi(P_0)|
\le K\|\mu_4-\mu_{4,0}\|.
\]

\paragraph{Step 3: collect bounds and replace intermediate quantities.}
By the triangle inequality and the displays above,
\[
\|\dot\chi_P-\dot\chi_{P_0}\|
\le K\Big(\|e-e_0\|+\|\mu_1-\mu_{1,0}\|+\|\mu_2-\mu_{2,0}\|+\|\mu_3-\mu_{3,0}\|+\|\mu_4-\mu_{4,0}\|
+\|g_{0|1}-g_{0|1,0}\|+\|g_{V_1}-g_{V_1,0}\|\Big).
\]
Finally substitute the Step~1(ii) Lipschitz bounds for $\mu_2,\mu_3,\mu_4$ in terms of the \emph{basic} nuisances $(\mu_1,g_M,g_{z'|z},g_{V_z},F_{V_{z'}}, F_{V_z})$ to obtain
\begin{align*}
   \|\dot\chi_P-\dot\chi_{P_0}\|
\le K\Big(&\|e-e_0\|+\|\mu_1-\mu_{1,0}\|
+\|g_M-g_{M,0}\|
+\|g_{z'|z}-g_{z'|z,0}\| \\+&\|F_{V_{z'}}-F_{V_{z'},0}\|+\|F_{V_z}-F_{V_z,0}\|+|\rho-\rho_0| 
+\|g_{V_z}-g_{V_z,0}\|\Big). 
\end{align*}

Under the assumed $L_2$-convergences, the right-hand side tends to $0$, proving the claim.
\end{proof}
\begin{lemma}[Lipschitz Continuity of Gaussian Copula Density]
    Let $c(u, w; \rho)$ denote the density function of the bivariate Gaussian copula with correlation parameter $\rho$. Under the Sieve conditions in Definition 5.1, specifically that the marginal CDF values are strictly bounded away from the boundaries, i.e., $u, w \in [\tau, 1-\tau]$ for some $\tau > 0$, and the correlation is bounded away from $\pm 1$, i.e., $\rho \in [-1+\kappa, 1-\kappa]$ for some $\kappa > 0$, the copula density is globally Lipschitz continuous with respect to its arguments. Consequently, for any two sets of parameters $(F_{V_{z'}}, F_{V_z}, \rho)$ and $(F_{V_{z'},0}, F_{V_z,0}, \rho_0)$ evaluated at the same data points, there exists a uniform constant $L_c < \infty$ depending only on $\tau$ and $\kappa$, such that:$$\|c(F_{V_{z'}}, F_{V_z}; \rho) - c(F_{V_{z'},0}, F_{V_z,0}; \rho_0)\|_{P_0} \le L_c \Big( \|F_{V_{z'}} - F_{V_{z'},0}\|_{P_0} + \|F_{V_z} - F_{V_z,0}\|_{P_0} + |\rho - \rho_0| \Big).$$
\end{lemma}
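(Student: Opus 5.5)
The plan is to write the Gaussian copula density in its explicit closed form and reduce the claim to a pointwise Lipschitz bound on a compact set, followed by an $L_2(P_0)$ integration. Set $x=\Phi^{-1}(u)$ and $y=\Phi^{-1}(w)$. Then
\[
c(u,w;\rho)=\frac{1}{\sqrt{1-\rho^2}}\exp\!\Big(-\frac{\rho^2(x^2+y^2)-2\rho xy}{2(1-\rho^2)}\Big),
\]
so $c=h(\Phi^{-1}(u),\Phi^{-1}(w),\rho)$ for a smooth function $h$ on $\mathbb R^2\times(-1,1)$.

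\textbf{Step 1.} On $u,w\in[\tau,1-\tau]$ the quantile map $\Phi^{-1}$ is $C^1$ with derivative $1/\phi(\Phi^{-1}(u))\le 1/\phi(\Phi^{-1}(\tau))$. Hence $|x|,|y|\le q_\tau:=|\Phi^{-1}(\tau)|$, and $\Phi^{-1}$ is Lipschitz there with constant $L_\tau=1/\phi(q_\tau)$.

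\textbf{Step 2.} On the compact box $K=[-q_\tau,q_\tau]^2\times[-1+\kappa,1-\kappa]$, the function $h$ is $C^1$, because $1-\rho^2\ge\kappa(2-\kappa)>0$. Its gradient is therefore bounded by some $G_{\tau,\kappa}$, and since $K$ is convex the mean value theorem gives
\[
|h(x,y,\rho)-h(x_0,y_0,\rho_0)|\le G_{\tau,\kappa}\big(|x-x_0|+|y-y_0|+|\rho-\rho_0|\big).
\]
Combining this with Step 1 yields, pointwise,
\[
|c(u,w;\rho)-c(u_0,w_0;\rho_0)|\le L_c\big(|u-u_0|+|w-w_0|+|\rho-\rho_0|\big),
\]
with $L_c=G_{\tau,\kappa}\max(L_\tau,1)$. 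This constant depends only on $\tau$ and $\kappa$.

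\textbf{Step 3.} Apply the pointwise bound with $u=F_{V_z}(v\mid c)$, $u_0=F_{V_z,0}(v\mid c)$ and similarly for $w$, all evaluated at the same data point. Take the $L_2(P_0)$ norm, use Minkowski's inequality, and note that $|\rho-\rho_0|$ is a constant. This gives the stated inequality.

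The main obstacle is that the quantile stability condition in Definition~\ref{defHnrate} only guarantees $U\in[\tau_n,1-\tau_n]$ with probability at least $1-\zeta_n$. It also concerns only the evaluated CDF values, and it requires both the $P$ and $P_0$ values to lie in the box. I would proceed in either of two ways.

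The first is to read the lemma as stated: for fixed $\tau$, assume both the $P$ and $P_0$ CDF values lie in $[\tau,1-\tau]$, which is what the lemma's hypothesis posits.

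The second is to split the $L_2$ norm over the good event and its complement. The complement has $P_0$-mass at most $\zeta_n$. On that event I would bound the copula difference using an envelope, or else use truncated versions $\max(\tau,\min(u,1-\tau))$; clipping is 1-Lipschitz, so the truncated copula satisfies the bound exactly.

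I would present the truncated-argument version as the rigorous statement. I would then remark that the constant $L_c$ blows up like $e^{q_\tau^2/2}\cdot e^{c\,q_\tau^2/\kappa}$ as $\tau\to0$, so for a vanishing $\tau_n$ the effective rate is $L_{c}(\tau_n)\varepsilon_n$.
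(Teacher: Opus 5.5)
Your proof is correct and is essentially the paper's argument. Both use the explicit Gaussian-copula density, a uniform gradient bound on the box $[\tau,1-\tau]^2\times[-1+\kappa,1-\kappa]$ from $\phi(\Phi^{-1}(u))\ge\phi(q_\tau)$ and $1-\rho^2\ge\kappa(2-\kappa)$, the mean value theorem on that convex set, and Minkowski's inequality in $L_2(P_0)$; your factorization through the Lipschitz map $\Phi^{-1}$ is only a repackaging of the paper's chain-rule computation. Your extra observations are valid points the paper's proof does not address: both the $P$ and $P_0$ CDF values must lie in the box, and the sieve's quantile-stability condition only gives $[\tau_n,1-\tau_n]$ with probability $1-\zeta_n$ and $\tau_n\downarrow 0$, so $L_c=L_c(\tau_n)$ is not actually uniform where the lemma is used downstream.
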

\begin{proof}
    The bivariate Gaussian copula density is explicitly given by:
    $$c(u, w; \rho) = \frac{1}{\sqrt{1-\rho^2}} \exp\left( -\frac{\rho^2 (x^2 + y^2) - 2\rho xy}{2(1-\rho^2)} \right),$$
    where $x = \Phi^{-1}(u)$ and $y = \Phi^{-1}(w)$, with $\Phi^{-1}$ being the quantile function of the standard normal distribution. To establish uniform Lipschitz continuity, we must show that the gradient vector $\nabla c = (\frac{\partial c}{\partial u}, \frac{\partial c}{\partial w}, \frac{\partial c}{\partial \rho})^\top$ is bounded uniformly over the domain defined by the sieve restrictions. By the chain rule, the partial derivative with respect to $u$ is:
    
    $$\frac{\partial c}{\partial u} = \frac{\partial c}{\partial x} \frac{dx}{du} = \frac{\partial c}{\partial x} \frac{1}{\phi(\Phi^{-1}(u))},$$
    where $\phi(\cdot)$ is the standard normal PDF. As $u \to 0$ or $u \to 1$, $\phi(\Phi^{-1}(u)) \to 0$, which would cause the derivative to diverge to infinity. However, under the Quantile stability condition, $u \in [\tau, 1-\tau]$, $x \in [\Phi^{-1}(\tau), \Phi^{-1}(1-\tau)]$, which is a compact interval $[-M_\tau, M_\tau]$. On this compact interval, the standard normal density is strictly bounded away from zero: $\phi(x) \ge \phi(M_\tau) > 0$. Thus, $\frac{dx}{du}$ is uniformly bounded. Similarly, the derivative with respect to $w$ is uniformly bounded due to the truncation $w \in [\tau, 1-\tau]$. For the derivative with respect to the correlation parameter $\rho$, the denominator contains terms involving $1-\rho^2$. By the sieve assumption $\rho \in [-1+\kappa, 1-\kappa]$, the term $1-\rho^2$ is bounded strictly away from zero (specifically, $1-\rho^2 \ge 1-(1-\kappa)^2 > 0$), ensuring no singularities arise from the correlation matrix inversion. Since all partial derivatives are continuous and uniformly bounded on the compact restricted domain $[\tau, 1-\tau]^2 \times [-1+\kappa, 1-\kappa]$, the function $c(u, w; \rho)$ has a bounded gradient envelope $L_c$. By the Mean Value Theorem, the copula density is uniformly Lipschitz continuous pointwise:$$|c(u, w; \rho) - c(u_0, w_0; \rho_0)| \le L_c \big( |u - u_0| + |w - w_0| + |\rho - \rho_0| \big).$$Substituting $u = F_{V_0|0}$, $w = F_{V_1|1}$ and integrating over the baseline measure $P_0$, the pointwise Lipschitz bound directly translates to the $L_2(P_0)$ norm via the Minkowski inequality:
    $$\|c - c_0\|_{P_0} \le L_c \Big( \|F_{V_{z'}} - F_{V_{z'},0}\|_{P_0} + \|F_{V_z} - F_{V_z,0}\|_{P_0} + |\rho - \rho_0| \Big).$$This completes the proof.
\end{proof}

\begin{lemma}[Lipschitz Continuity of $b_2(W)$ and $b_4(W)$]\label{lemmab}
Denote $\Delta t:=t-t_0$ and the $L_2(P_0)$ norm by $\|\cdot\|$. The impute functions $b_2(W) = \mathbb E[\phi_{2,P_f}\mid W, S=1]$ and $b_4(W) = \mathbb E[\phi_{4,P_f}\mid W, S=1]$ are functions of nuisance parameters and thus admits the quadratic bound
    \begin{align*}
        \|b_2 - b_{2,0}\|\leq K_b\sum_{j=1}^8 \|\Delta \eta_j\| \\
         \|b_4 - b_{4,0}\|\leq K_b\sum_{j=1}^8 \|\Delta \eta_j\| 
    \end{align*}
    for a constant $K_b>$  0.
\end{lemma}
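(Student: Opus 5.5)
The plan is to write each imputation function as an explicit integral operator applied to nuisance quantities. We then bound the difference $b_\ell - b_{\ell,0}$ with the same telescoping and Cauchy--Schwarz device used in Step 1(ii) of Lemma~\ref{lem:L2}. Concretely, for $Z=z'$ we write
\[
b_2(W)=\frac{1}{1-e(C)}\int T_{2,P}(M,v_0,C)\,q(v_0\mid M,C)\,dv_0,
\]
where $T_{2,P}=\int[\mu_1-\mu_2]\,c(u,w;\rho)\,g_{V_z}\,dv$ is the inner integral from the proof of Lemma~\ref{lem:L2}. The kernel $q(v_0\mid M,C)\propto g_M(M\mid v_0,C)\,f_{V_{z'}}(v_0\mid C)$ is the posterior density of $V_{z'}$ given $(M,Z=z',C)$, computed by Bayes' rule from the model components. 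Similarly, for $Z=z$,
\[
b_4(W)=\frac{1}{e(C)}\Bigl\{\int \mu_3(v;C)\,q_z(v\mid M,C)\,dv-\mu_4(C)\Bigr\},
\]
with $q_z\propto g_M(M\mid v,C)\,g_{V_z}(v\mid C)$.

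The first step is to control the kernels $q$ and $q_z$. The denominators are $\int g_M f_{V_{z'}}\,dv_0=p(M\mid z',C)$ and $\int g_M g_{V_z}\,dv=p(M\mid z,C)$. I would use the structural sieve $S_n$ to get lower bounds on these marginal mediator densities on the relevant support. The variances there are bounded below by $\sigma_0^2$ and the coefficients are bounded, so each Gaussian mixture kernel is bounded above. I would then argue that the ratio map $(a,b)\mapsto a/b$ is Lipschitz when $b$ is bounded below, which gives
\[
\|q-q_0\|\le K\bigl(\|\Delta g_M\|+\|\Delta f_{V_{z'}}\|\bigr),\qquad \|q_z-q_{z,0}\|\le K\bigl(\|\Delta g_M\|+\|\Delta g_{V_z}\|\bigr).
\]
The density $f_{V_{z'}}$ is controlled through $F_{V_{z'}}$ and the bridge regularity condition of Definition~\ref{defHnrate}.

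The second step is a three-way telescoping decomposition of $b_2-b_{2,0}$:
\[
b_2-b_{2,0}=\Bigl[\tfrac{1}{1-e}-\tfrac{1}{1-e_0}\Bigr]\!\int T_{2,P}\,q+\tfrac{1}{1-e_0}\!\int (T_{2,P}-T_{2,P_0})\,q+\tfrac{1}{1-e_0}\!\int T_{2,P_0}\,(q-q_0).
\]
The first piece is bounded by $\delta^{-2}\|\Delta e\|$ times the envelope $C_2'$ from the verification of Assumption~\ref{assumption_abc}(c*)(ii). The second piece is handled by the bound on $\|T_{2,P}-T_{2,P_0}\|$ already derived in Lemma~\ref{lem:L2}, together with the Gaussian copula Lipschitz lemma. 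The third piece follows from Cauchy--Schwarz with $|T_{2,P_0}|\le 2C_{\mu_1}C_{\text{copula}}$ and step one. The same template applies to $b_4$, with $\mu_3$ in place of $T_2$. That case also uses the Step 1(ii) bounds $\|\Delta\mu_3\|,\|\Delta\mu_4\|\le K(\|\Delta\mu_1\|+\|\Delta g_M\|+\|\Delta g_{z'|z}\|+\|\Delta g_{V_z}\|)$. Summing over the pieces gives a constant $K_b$ times $\sum_{j=1}^8\|\Delta\eta_j\|$. Note that this is a linear (Lipschitz) bound, which is what the statement actually asserts despite the word ``quadratic''.

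The main obstacle is the $L_2(P_0)$ bookkeeping in the integral-operator steps. The differences $\Delta\mu_1$ and $\Delta g_M$ are integrated against the kernels $q$, $g_{V_z}$ and $c$ over variables not drawn from $P_0$'s marginal of $W$. Converting these to $L_2(P_0)$ norms requires a density-ratio bound, namely that the integrating measure has bounded density with respect to the corresponding $P_0$ marginal. I would obtain this from the sieve bounds and positivity, and it is where the constant $K_b$ depends on $\delta$, $\delta_\pi$, $\sigma_0$ and the quantile truncation $\tau_n$. I would first try to cover the remaining difficulty by enlarging $K_b$ uniformly over $\tilde H_n$. The weak point is that the lower bound on $p(M\mid z',C)$ holds only on a compact region, and the tails would then need a separate argument.
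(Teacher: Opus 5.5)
\textbf{How your route differs from the paper's.} The paper's proof is a single step. It rewrites $b_2-b_{2,0}$ as $\mathbb E[\phi_{2,P_f}-\phi_{2,P_0}\mid W,S=1]$, applies conditional Jensen to get $\|b_2-b_{2,0}\|\le\|\phi_{2,P_f}-\phi_{2,P_0}\|$, and quotes Step~2 of Lemma~\ref{lem:L2}; it does the same for $b_4$. That collapse treats both conditional expectations as if they were taken under one law. In fact $b_2$ integrates against the model's conditional law of $V$ given $(M,Z=z',C)$, while $b_{2,0}$ integrates against the true one. Your explicit kernel $q$ and third telescoping piece $\frac{1}{1-e_0}\int T_{2,P_0}(q-q_0)\,dv_0$ capture exactly the term the paper drops. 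So your decomposition is the one that actually bounds $\|b_2-b_{2,0}\|$ as defined. The paper's version is shorter only because it never deals with the change of conditioning measure.

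\textbf{The step you flag is a genuine gap.} The sieve $S_n$ bounds the Gaussian kernels from above, since all variances are at least $\sigma_0^2$. It gives no lower bound on $p(M\mid z',C)$ away from compacts, so the ratio map is not Lipschitz in the tails, and enlarging $K_b$ cannot repair this. Two changes reduce what you need.

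First, reorder the telescoping (plus the $e$-term) as
\[
\int(T_{2,P}-T_{2,P_0})\,q_0\,dv_0+\int T_{2,P}\,(q-q_0)\,dv_0 .
\]
The first piece is now a genuine $P_0$-conditional expectation, and MAR lets you drop $S=1$. Conditional Jensen then gives $\|T_{2,P}-T_{2,P_0}\|$ with no density ratio $q/q_0$; this is the part the paper's Jensen step actually controls. In the second piece you can use the sieve envelope $|T_{2,P}|\le 2C_{\mu_1}C_{\text{copula}}$.

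Second, set $a=g_Mf_{V_{z'}}$, $b=\int a\,dv_0$, $A=\int T_{2,P}\,a\,dv_0$ and $A_0=\int T_{2,P}\,a_0\,dv_0$, and write
\[
\frac{A}{b}-\frac{A_0}{b_0}=\frac{A-A_0}{b_0}-\frac{A}{b}\cdot\frac{b-b_0}{b_0}.
\]
Now only the true marginal $b_0$ appears in a denominator. Since $|A/b|\le\|T_{2,P}\|_\infty$, Cauchy--Schwarz with weight $a_0$ yields
\[
\Bigl\|\int T_{2,P}\,(q-q_0)\,dv_0\Bigr\|_{P_0}\;\le\;2\,\|T_{2,P}\|_\infty\,\Bigl\|\frac{a-a_0}{a_0}\Bigr\|_{P_0},
\]
with no lower bound on $b$ or $b_0$.

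The price is that the error now enters as the relative error of the joint density of $(M,V)$ given $(Z=z',C)$, evaluated at the observed data on $\{Z=z'\}$. It no longer appears as $\|\Delta g_M\|+\|\Delta f_{V_{z'}}\|$. To reach the stated form you still need an assumption that neither $\tilde H_n$ nor the paper supplies. Examples are bounded support of $(M,V)$ with $p_0$ bounded below there, or a rate sieve phrased in relative (density-ratio) errors. The same argument applies to $b_4$, with $q_z$ and $\mu_3$ in place of $q$ and $T_{2,P}$.
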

\begin{proof}
    By definition, the difference $\Delta b_2 = b_2 - b_{2,0}$ is:
    \begin{align*}
\Delta b_2 = \mathbb E_{P_{O'}}[\phi_{2,P_f}\mid W, S=1] - \mathbb E_{P_{O',0}}[\phi_{2,P_0}\mid W, S=1]
\end{align*}
By Jensen's inequality for conditional expectations, the $L_2$ norm of the difference in conditional expectations is bounded by the $L_2$ norm of the difference of the functions themselves:
\begin{align*}
\Delta b_2  = |\mathbb{E}[\phi_{2,P_f} - \phi_{2,P_0} \mid W, S=1]| \le |\phi_{2,P_f} - \phi_{2,P_0}|
\end{align*}
The bound for $\|\phi_{2,P_f} - \phi_{2,P_0}\|$ is already provided in the proof of Lemma~\ref{lem:L2}, {Step 2}, which shows it is bounded by a linear combination of the $L_2$ errors of the original 5-component nuisance collection $\eta = (e, g_{V_z}, g_{z'|z}, g_M, \mu_1, F_{V_{z'}}, F_{V_z})$. Therefore,
\begin{align*}
|b_2 - b_{2,0}| \le K_b \sum_{j=1}^8 |\Delta \eta_j|.
\end{align*}
Similarly, for $b_4(W)$, the difference $\Delta b_4 = b_4 - b_{4,0}$ is 
\begin{align*}
    \Delta b_4 = \mathbb E_{P_{O'}}\left[\phi_{4,P_f}\mid W,S=1\right] - \mathbb E_{P_{O',0}}\left[\phi_{4,P_0}\mid W,S=1\right].
\end{align*}
Then by Jensen's inequality we have:
\begin{align*}
    \Delta b_4 \leq \left|\phi_{4,P_f} - \phi_{4,P_{0}}\right|\leq K_b\sum_{j=1}^8|\Delta \eta_j|
\end{align*}
which completes the proof.
\end{proof}

\begin{lemma}[Second-order remainder for the MAR target]\label{r2:MAR}
Let $P_{O'}$ be any distribution in a (dominated) neighborhood of $P_{O', 0}$.
Write the nuisance collection 
\begin{align*}
    \eta_{O'}=\{e, g_{V_z}, g_{z'|z}, g_M, \mu_1, F_{V_{z'}}, F_{V_z}, \pi\}.
\end{align*}
Let $\dot{\chi}_{P_{O'}}$ be the observed-data EIF:
$$\dot\chi_{P_{O'}} = \frac{S}{\pi(W)} \dot \chi _{P_f} - \frac{S-\pi(W)}{\pi(W)}b(W).$$
\\The second-order remainder for the observed-data functional,$$r_{2}(P_{O',0}, P_{O'}) := \chi(P_{O'}) - \chi(P_{O',0}) - (P_{O'} - P_{O',0})[\dot{\chi}_{P_{O'},0}],$$
admits a quadratic bound in the $L_2(P_0)$ errors of the $\eta_{O'}$. For a constant $K'' > 0$:
\begin{align*}
|r_{2}(P_{O',0}, P_{O'})| \le K'' \Big(
\sum_{j,k=1}^8 |\Delta \eta_j| \, |\Delta \eta_k| + |\Delta \pi| \sum_{j=1}^8 |\Delta \eta_j| \Big).
\end{align*}
\end{lemma}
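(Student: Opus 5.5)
The plan is to decompose the observed-data remainder into a full-data remainder, reweighted by $\pi_0/\pi$, plus a pure missingness cross-term. Lemma~\ref{lem:r2} will control the first piece and the Cauchy--Schwarz inequality the second.

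\textbf{Step 1 (compute $P_0[\dot\chi_{P_{O'}}]$).} Write $\dot\chi_{P_{O'}} = \frac{S}{\pi}\dot\chi_{P_f} - \frac{S-\pi}{\pi}b$ and condition on $W$. Under Assumption~\ref{assump6}, $\mathbb E_0[S\mid W]=\pi_0(W)$ and $\mathbb E_0[\dot\chi_{P_f}\mid W,S=1]=\mathbb E_0[\dot\chi_{P_f}\mid W]=:b^\dagger(W)$. This gives
\[
P_0[\dot\chi_{P_{O'}}] = \mathbb E_0\Big[\tfrac{\pi_0}{\pi}b^\dagger - \tfrac{\pi_0-\pi}{\pi}b\Big]
= \mathbb E_0[b^\dagger] + \mathbb E_0\Big[\tfrac{\pi_0-\pi}{\pi}(b^\dagger - b)\Big].
\]
Since $\mathbb E_0[b^\dagger]=P_0[\dot\chi_{P_f}]$, the observed-data remainder (written in the sign convention of Assumption~\ref{assumption_abc}(a)) splits as
\[
r_2(P_{O',0},P_{O'}) = r_2^{f}(P_0,P) - \mathbb E_0\Big[\tfrac{\pi_0-\pi}{\pi}(b^\dagger-b)\Big].
\]
Here $r_2^f$ is the full-data remainder of Lemma~\ref{lem:r2}, with $\chi(P_{O'})=\mathbb E[\mu_4(C)]$ by Theorem~\ref{thm41}. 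Lemma~\ref{lem:r2} bounds $|r_2^f|$ by $K\sum_{j,k}\|\Delta\eta_j\|\|\Delta\eta_k\|$.

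\textbf{Step 2 (the cross term).} Use positivity $\pi>\delta_\pi$ (from $H_n$) and the Cauchy--Schwarz inequality to bound the cross term by $\delta_\pi^{-1}\|\Delta\pi\|\,\|b^\dagger-b\|$. Then write $b^\dagger - b = (b^\dagger - b_0) - (b-b_0)$, where $b_0$ is the truth. Since $b_0 = \mathbb E_0[\dot\chi_{P_{f,0}}\mid W]$, we have $b^\dagger-b_0=\mathbb E_0[\dot\chi_{P_f}-\dot\chi_{P_{f,0}}\mid W]$. Jensen's inequality then bounds $\|b^\dagger-b_0\|$ by $\|\dot\chi_{P_f}-\dot\chi_{P_{f,0}}\|$, and Lemma~\ref{lem:L2} controls that by $K\sum_j\|\Delta\eta_j\|$. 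The term $b-b_0$ is handled by Lemma~\ref{lemmab} together with the analogous (trivial) bounds for $b_5$, since $b_1=b_3=0$. Combining gives $K''\|\Delta\pi\|\sum_j\|\Delta\eta_j\|$.

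\textbf{Main obstacle.} The delicate point is that $b$ integrates $V$ against the \emph{model} conditional $V\mid W$, while $b^\dagger$ uses the truth. So $b^\dagger-b$ involves both the change in $\dot\chi_{P_f}$ and the change in the conditional law of $(Y,V)$ given $W$. I would first try absorbing the latter into the Lipschitz bounds of Lemma~\ref{lemmab}, treating that conditional density as a smooth functional of $(g_{V_z},g_M,e)$ via Bayes' rule. This uses the sieve's lower bounds on kernel variances and the overlap constant $\delta$ to keep the denominators bounded away from zero, so that it contributes only another $\sum_j\|\Delta\eta_j\|$. Finally, summing the two pieces and merging constants yields the claimed quadratic bound.
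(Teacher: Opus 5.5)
Your proposal follows essentially the same route as the paper. Like the paper, you split the remainder into the full-data remainder (Lemma~\ref{lem:r2}) plus an AIPW cross term controlled by Cauchy--Schwarz, $\pi>\delta_\pi$, Lemma~\ref{lem:L2} and Lemma~\ref{lemmab}. Because $\mathbb E_0[f-f_0\mid W]=b^\dagger-b$ (with $f=\dot\chi_{P_f}-b$), the paper's $r_{2,AIPW}=-\mathbb E_{P_0}[\frac{\pi-\pi_0}{\pi}(f-f_0)]$ equals your cross term up to sign; you derive it explicitly from MAR, whereas the paper only cites a ``standard identity''.

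The obstacle you flag is genuine, and it is in fact the whole content of that term, since $b^\dagger-b=(\mathbb E_0-\mathbb E_P)[\dot\chi_{P_f}\mid W]$. The paper's proof of Lemma~\ref{lemmab} skips it by applying Jensen as if both conditional expectations were taken under one law. Your suggested repair is not justified as stated, however: the variance floor $\sigma_0^2$ in $S_n$ bounds kernel densities from above, not the Bayes denominator $p(m\mid z,c)$ away from zero, and $\delta$ only controls $e$. You would therefore need an additional lower-density or compact-support condition, or else an explicit rate for $\mathbb E[\dot\chi_{P_f}\mid W]$ as an extra nuisance in $H_n$.
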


\begin{proof}
The proof relies on decomposing the remainder into the full-data remainder (bounded by Lemma \ref{lem:r2}) and a new remainder term from the AIPW projection.
\smallskip

\textbf{Step 1 (Decomposition of the remainder).}
Using the standard identity for AIPW estimators \parencite{robins1994estimation}, the remainder $r_2(P_{O',0}, P_{O'})$ can be decomposed as:
$$r_{2}(P_{O',0}, P_{O'}) = r_2(P_0, P) + r_{2,AIPW}$$
where $r_2(P_0, P) = \chi(P) - \chi(P_0) - (P-P_0)[\dot{\chi}_{P_f,0}]$ is the original full-data remainder, and$$r_{2,AIPW} = - \mathbb{E}_{P_{O',0}}\left[ \left( \frac{\pi - \pi_0}{\pi} \right) \left( f - f_0 \right) \right],$$with $f = \dot{\chi}_{P_f} - b(W) = \dot{\chi}_{P_f} - (b_2(W)+b_4(W)+b_5(W))$.
\smallskip

\textbf{Step 2 (Bound the full-data remainder $r_2$).}
The first term, $r_2(P_0, P)$, is precisely the remainder characterized in \textbf{Lemma \ref{lem:r2}}. By direct application of that lemma, this term is bounded by a quadratic product of the errors in the full data nuisance set $\eta$:
$$|r_2(P_0, P)| \le K_1 \Big( \|\Delta\mu_1\|\,\|\Delta g_M\| + \dots + \|\Delta e^{-1}\|\,\|\Delta\mu_1\| \Big) \le K_1' \sum_{j,k=1}^8 \|\Delta \eta_j\| \ \|\Delta \eta_k\|.$$

\smallskip\textbf{Step 3 (Bound the AIPW remainder $r_{2,AIPW}$).}We apply the Cauchy-Schwarz inequality to the second term:$$|r_{2,AIPW}| = \left|\mathbb{E}_{P_{O',0}}\left[ \left( \frac{\Delta \pi}{\pi} \right) \left( \Delta f \right) \right]\right| \le \left\|\frac{\Delta \pi}{\pi}\right\| \cdot \|\Delta f\|.$$

We now bound the two components of this product.

\textbf{(i) Bound on $\|\frac{\Delta \pi}{\pi}\|$}:Using the expanded sieve's positivity assumption ($\pi > \delta_\pi$),
$$\left\|\frac{\Delta \pi}{\pi}\right\| \le \left\|\frac{1}{\pi}\right\|_\infty \cdot \|\Delta \pi\| \le \frac{1}{\delta_\pi} \|\Delta \pi\|.$$

\textbf{(ii) Bound on $\|\Delta f\|$} (using Lemmas~\ref{lem:L2} and \ref{lemmab}):First, we explicitly define $\Delta f$ using the simplified structure of $b(W)$ and $\dot{\chi}_{P_f}$:\begin{align*}f &= \dot{\chi}_{P_f} - b(W) = \left(\sum_{j=1}^5 \phi_{j,P_f} \right) - (b_2+ b_4+ b_5)  =\left(\sum_{j=1}^5 \phi_{j,P_f} \right) - (b_2+ b_4+ \phi_{5,P_f}) = \sum_{j=1}^4 \phi_{j,P_f} - b_2 \\f_0 &= \dot{\chi}_{P_f,0} - b_0(W) = \left(\sum_{j=1}^5 \phi_{j,P_0} \right) - (b_{2,0} + b_{4,0}+b_{5,0}) = \sum_{j=1}^4 \phi_{j,P_0} -( b_{2,0}+b_{4,0})\end{align*}Therefore, $\Delta f = f - f_0 = \left(\sum_{j=1}^4 (\phi_j - \phi_{j,0})\right) - (b_2 - b_{2,0}) - (b_4 - b_{4,0})$.

By the triangle inequality:
$$\|\Delta f\| \le \sum_{j=1}^4 \|\phi_{j,P_f} - \phi_{j,P_0}\| + \|b_2 - b_{2,0}\| + \|b_4-b_{4,0}\|.$$
We bound these terms separately:\textbf{ The first sum}, $\sum_{j=1}^4 \|\phi_{j,P_f} - \phi_{j,P_0}\|$, is bounded by Lemma~\ref{lem:L2}, which provides individual bounds for $\phi_{1,P_f}, \phi_{2,P_f}, \phi_{3,P_f}, \phi_{4,P_f}$:
$$ \sum_{j=1}^4 \|\phi_{j,P_f} - \phi_{j,P_0}\| \le K_\phi \sum_{j=1}^8 \|\Delta \eta_j\| ,$$ for some constant $K_\phi > 0$.

\textbf{The $b_2, \ b_4$ term}, $\|b_2 - b_{2,0}\| $ and $\|b_4-b_{4,0}\|$, are bounded by our Lemma \ref{lemmab}:
$$ \|b_2 - b_{2,0}\|  + \|b_4-b_{4,0}\|\le K_b \sum_{j=1}^8 \|\Delta \eta_j\|, $$
for some constant $K_b>0$.

Combining these, we get a single Lipschitz bound for $\Delta f$ in terms of $\Delta \eta$:$$\|\Delta f\| \le (K_\phi + K_b) \sum_{j=1}^8 \|\Delta \eta_j\|.$$

\smallskip\textbf{Step 4 (Combine bounds).}
Substituting the results from Step 3(i) and 3(ii) into the bound for $r_{2,AIPW}$:$$|r_{2,AIPW}| \le \left( \frac{1}{\delta_\pi} \|\Delta \pi\| \right) \cdot \left( (K_\phi + K_b) \sum_{j=1}^8 \|\Delta \eta_j\| \right).$$
Finally, combining the bounds for $r_2$ (from Step 2) and $r_{2,AIPW}$ gives the complete bound for $r_{2}(P_{O'}, P_{O',0})$:
$$|r_{2}(P_{O'}, P_{O',0})| \le |r_2(P,P_0)| + |r_{2,AIPW}| \le K_1' \sum_{j,k=1}^8 \|\Delta \eta_j\| \cdot \|\Delta \eta_k\| + K_2' \|\Delta \pi\| \sum_{j=1}^8 \|\Delta \eta_j\|.$$
This completes the proof, as the remainder is shown to be purely quadratic in the errors of the base nuisance collection $\eta_{O'} = (\eta, \pi)$.

\end{proof}

\begin{lemma}[$L_2$ Continuity of the Observed-Data EIF]\label{lem:l2_mar}
Let $\dot{\chi}_{P_{o'}}$ be the observed-data EIF,$$\dot\chi_{P_{O'}} = \frac{S}{\pi(W)} \dot \chi _{P_f} - \frac{S-\pi(W)}{\pi(W)}b(W).$$
Assume the conditions of Lemmas~\ref{lem:r2} and \ref{lemmab}, and the sieve $\tilde{H}_{n}$ hold. If, in $L_2(P_0)$ norm,
$$\|\Delta \eta_j\| \to 0 \quad \text{for } j=1,\dots,8,$$
then the $L_2(P_0)$ norm of the EIF difference also converges to zero:$$\|\dot{\chi}_{P_{O'}} - \dot{\chi}_{P_{O'},0}\| \longrightarrow 0.$$
More precisely, $\dot{\chi}_{P_{O'}}$ is Lipschitz continuous in the expanded nuisance collection $\eta_{O'} = (\eta, \pi)$:$$\|\dot{\chi}_{P_{o'}} - \dot{\chi}_{P_{o'},0}\| \le K \left( \sum_{j=1}^8 \|\Delta \eta_j\| \right).$$
\end{lemma}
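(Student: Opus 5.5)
The plan is to write the difference $\dot\chi_{P_{O'}}-\dot\chi_{P_{O'},0}$ as an algebraic combination of the full-data EIF differences, the imputation-function differences and the missingness-propensity differences. Then bound each piece with Lemma~\ref{lem:L2}, Lemma~\ref{lemmab} and the positivity built into $\tilde H_n$. Writing $b=b_2+b_4+b_5$, I would start from the identity
\begin{align*}
\dot\chi_{P_{O'}}-\dot\chi_{P_{O'},0}
=\;& S\Big(\frac{1}{\pi}-\frac{1}{\pi_0}\Big)\dot\chi_{P_f,0}
+\frac{S}{\pi}\big(\dot\chi_{P_f}-\dot\chi_{P_f,0}\big)\\
&-\Big(\frac{S}{\pi}-\frac{S}{\pi_0}\Big)b_0
-\Big(\frac{S}{\pi}-1\Big)(b-b_0).
\end{align*}
It is obtained by adding and subtracting $\frac{S}{\pi}\dot\chi_{P_f,0}$ and $\frac{S-\pi}{\pi}b_0$, and using $\frac{S-\pi}{\pi}=\frac{S}{\pi}-1$. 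I will then take $L_2(P_0)$ norms and apply the triangle inequality, so that it suffices to bound the four terms separately.

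\textbf{Terms one and three.} Both involve $\frac{1}{\pi}-\frac{1}{\pi_0}=-\frac{\Delta\pi}{\pi\pi_0}$. On $\tilde H_n$ we have $\pi,\pi_0>\delta_\pi$ (Assumption~\ref{assump7} and Definition~\ref{defHnrate}), so this factor is bounded pointwise by $\delta_\pi^{-2}|\Delta\pi|$. The issue is that it multiplies $\dot\chi_{P_f,0}$ and $b_0$. The function $b_0$ is uniformly bounded by the envelope argument used in verifying Assumption~\ref{assumption_abc}(c*)(ii), namely $|b|\le 2C_2'+C_5'$. The function $\dot\chi_{P_f,0}$, however, contains $Y$ through $\phi_{1,P_0}$. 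I will condition on $W$ and use MAR (Assumption~\ref{assump6}) to get $\mathbb E_0[\phi_{1,P_0}^2\mid W,S=1]\le K$, which requires a uniform bound on the conditional second moment of $Y$ given $(M,Z,C)$; I will take this as the mild envelope condition of Lemma~\ref{lem:r2}. Since $\Delta\pi$ is a function of $W$ alone, this yields $\|\Delta\pi\cdot\dot\chi_{P_f,0}\|\le K\|\Delta\pi\|$. The other blocks $\phi_{2},\dots,\phi_5$ are uniformly bounded, so they pose no difficulty.

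\textbf{Terms two and four.} Here $|S/\pi|\le\delta_\pi^{-1}$ and $|S/\pi-1|\le 1+\delta_\pi^{-1}$. Term two is therefore at most $\delta_\pi^{-1}\|\dot\chi_{P_f}-\dot\chi_{P_f,0}\|$, which Lemma~\ref{lem:L2} bounds by $K\sum_j\|\Delta\eta_j\|$, with $g_{z'|z}$ absorbed through the bridge regularity and the copula Lipschitz lemma. Term four splits as $\|\Delta b_2\|+\|\Delta b_4\|+\|\Delta b_5\|$. The first two pieces are controlled by Lemma~\ref{lemmab}. For the third, $\Delta b_5=\Delta\mu_4-(\chi(P)-\chi(P_0))$, which is bounded by $K(\|\Delta\mu_1\|+\|\Delta g_M\|+\|\Delta g_{z'|z}\|+\|\Delta g_{V_z}\|)$ via Step 1(ii) of Lemma~\ref{lem:L2}.

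Collecting the four bounds gives $\|\dot\chi_{P_{O'}}-\dot\chi_{P_{O'},0}\|\le K\sum_{j=1}^8\|\Delta\eta_j\|$, where the $\pi$ contribution is included as $\eta_8$; convergence to zero then follows immediately. I expect the main obstacle to be the $\Delta\pi\cdot\dot\chi_{P_f,0}$ term. The difficulty is that $\dot\chi_{P_f,0}$ is not uniformly bounded because of $Y$, and also because of the copula factor $c_0$, which is controlled only through quantile stability holding with probability $1-\zeta_n$. I would first handle it with the conditional-moment argument above. If that fails, the fallback is to use Hölder's inequality with the fourth moment of $Y$, at the cost of an $L_4$ rate on $\Delta\pi$, which is available because $\pi$ is bounded so that $\|\Delta\pi\|_4^4\le\|\Delta\pi\|^2$.
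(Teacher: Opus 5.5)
Your proposal is correct and is essentially the paper's proof. It uses the same add-and-subtract split into an inverse-weighted part and an augmentation part: the $\Delta\pi$ pieces are controlled by positivity and boundedness of $b$, and the rest by Lemma~\ref{lem:L2} and Lemma~\ref{lemmab} (you additionally bound $\Delta b_5$ through $\Delta\mu_4$, which Lemma~\ref{lemmab} does not cover). Your conditioning on $(W,S)$ for the $\Delta\pi\cdot\dot\chi_{P_f,0}$ term is more careful than the paper's step, which uses $\|G_n\|_\infty$ and so tacitly needs bounded $Y$ because $G_n=K_1'|Y|+K_2'$. However, your condition $\sup_w\mathbb E_0[Y^2\mid W=w,S=1]<\infty$ is not among the hypotheses of Lemma~\ref{lem:r2}, so state it explicitly. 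Your copula worry is moot: $\int c(F_{V_z}(v\mid C),w;\rho)\,g_{V_z}(v\mid C)\,dv=\int_0^1 c(u,w;\rho)\,du=1$ already gives $|\phi_{2,P_f}|\le 2C_{\mu_1}/\delta$ without quantile stability.
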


\begin{proof}We use the triangle inequality to bound the difference by its two main components:$$\|\dot{\chi}_{P_{o'}} - \dot{\chi}_{P_{o'},0}\| \le \left\| \frac{S}{\pi} \dot{\chi}_{P_f} - \frac{S}{\pi_0} \dot{\chi}_{P_f,0} \right\| + \left\| \frac{S-\pi}{\pi} b - \frac{S-\pi_0}{\pi_0} b_0 \right\|.$$

\smallskip\textbf{Step 1: Bound the IPW Term.} We add and subtract $\frac{S}{\pi_0} \dot{\chi}_{P_f}$ and use the positivity assumptions ($\pi, \pi_0 > \delta_\pi$) and the envelope $G_n$ for $\dot{\chi}_{P_f}$ from \textbf{Proposition 3}:
\begin{align*}\left| \frac{S}{\pi} \dot{\chi}_{P_f} - \frac{S}{\pi_0} \dot{\chi}_{P_f,0} \right| &\le \left| S \left(\frac{1}{\pi} - \frac{1}{\pi_0}\right) \dot{\chi}_{P_f} \right| + \left| \frac{S}{\pi_0} (\dot{\chi}_{P_f} - \dot{\chi}_{P_f,0}) \right| \\ &= \left| \frac{S (\pi_0 - \pi)}{\pi \pi_0} \dot{\chi}_{P_f} \right| + \left| \frac{S}{\pi_0} (\dot{\chi}_{P_f} - \dot{\chi}_{P_f,0}) \right|
\\& \le \frac{1}{\delta_\pi^2} \|G_n\|_\infty \|\Delta \pi\| + \frac{1}{\delta_\pi} \|\dot{\chi}_{P_f} - \dot{\chi}_{P_f,0}\|.
\end{align*}

By Lemma~\ref{lem:L2}, $\|\dot{\chi}_{P_f} - \dot{\chi}_{P_f,0}\| \le K_\chi \sum_{j=1}^8 \|\Delta \eta_j\|$ for some constant $K_{\chi} > 0$. Thus, the IPW term is bounded by $K_1 (\|\Delta \pi\| + \sum_{j=1}^8 \|\Delta \eta_j\|)$.

\smallskip\textbf{Step 2: Bound the Augmentation Term.} We add and subtract $\frac{S-\pi_0}{\pi_0} b$:
\begin{align*}
\left| \frac{S-\pi}{\pi} b - \frac{S-\pi_0}{\pi_0} b_0 \right| &\le \left| \left(\frac{S-\pi}{\pi} - \frac{S-\pi_0}{\pi_0}\right) b \right| + \left| \frac{S-\pi_0}{\pi_0} (b - b_0) \right| = \left| \frac{S(\pi_0 - \pi)}{\pi \pi_0} b \right| + \left| \frac{S-\pi_0}{\pi_0} (b - b_0) \right|.
\end{align*}

Using positivity assumptions, $|S-\pi_0| \le 1$, and an envelope $C_b$ for $b$ (which is implied by Lemma \ref{lemmab} and the envelope for $\phi_{2,P_f}, \phi_{5,P_f}$):
$$\left| \frac{S-\pi}{\pi} b - \frac{S-\pi_0}{\pi_0} b_0 \right| \le \frac{1}{\delta_\pi^2} C_b \|\Delta \pi\| + \frac{1}{\delta_\pi} \|b - b_0\|.$$
By Lemma \ref{lemmab}, $\|b - b_0\| \le K_b \sum \|\Delta \eta_j\|$. Thus, the augmentation term is bounded by $K_2 (\|\Delta \pi\| + \sum_{j=1}^5 \|\Delta \eta_j\|)$.

\smallskip\textbf{Step 3: Combine Bounds. }Summing the bounds from Step 1 and Step 2:
$$\|\dot{\chi}_{P_{O'}} - \dot{\chi}_{P_{O'},0}\| \le (K_1 + K_2) (\|\Delta \pi\| + \sum_{j=1}^5 \|\Delta \eta_j\|)= K_3\sum_{j=1}^8\|\Delta \eta_j\|.$$
This demonstrates the required Lipschitz continuity. As $\|\Delta \eta_j\| \to 0$ for $\eta_j \in \eta_{O'}$, the entire right-hand side converges to 0.\end{proof}

\begin{lemma}\label{Snentropycontrol}
    If EDP model was controlled by the conditions described in Sieve $S_n$ in Definition~\ref{defSn}, then the log bracketing entropy of $S_n$ has an upper bound.  
\end{lemma}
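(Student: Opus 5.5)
The lemma asks for an upper bound on $\log N_{[]}(\epsilon, S_n, d)$, with $d$ the Hellinger or $L_1$ distance. The target is the bound used in the concentration theorem, namely $C\,H_\theta H_\psi \log(1/\epsilon)$ up to polynomial-in-$\log n$ factors. The approach follows the sieve entropy arguments of \textcite{shen2013adaptive} and \textcite{ghosal2000convergence}. Every $P\in S_n$ is a mixture whose first $H_\theta$ outer and $H_\psi$ inner components have parameters in compact sets, plus tail mass below $\epsilon_\theta+\epsilon_\psi$. So we replace $S_n$ by the finite-component class $S_n^{H}$ obtained by truncating both stick-breaking layers and renormalizing. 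The truncation costs at most $2(\epsilon_\theta+\epsilon_\psi)$ in $L_1$, so with $\epsilon_\theta,\epsilon_\psi\lesssim\epsilon$ an $\epsilon$-bracket cover of $S_n^{H}$ gives a $3\epsilon$-cover of $S_n$.

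The cover of $S_n^H$ factorizes into weights and atoms. For the outer weights, an $\epsilon$-net of the $(H_\theta-1)$-simplex in $\ell_1$ has cardinality at most $(5/\epsilon)^{H_\theta}$. The inner weight vectors $\omega_{\cdot|k}$, one for each of the $H_\theta$ outer clusters, need at most $(5/\epsilon)^{H_\theta H_\psi}$ points in total. For the atoms, the outer parameters satisfy $\|\beta^y_k\|,\|\beta^m_k\|\le a$ and have eigenvalues in $[\sigma_0^2,\sigma_0^2(1+\epsilon_\theta^2/d)^{M_\theta}]$. The inner parameters satisfy $\|\beta^v_{j|k}\|\le b$, $\pi^z,\pi^{c_1}\in(\delta,1-\delta)$, $\|\mu^{c_2}\|\le b$ and $\sigma^{c_2,2}\in[\sigma_0^2,M'_\psi]$. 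In each case the parameters range over a bounded set of fixed dimension $p$, which is $p_\theta$ for outer atoms and $p_\psi$ for inner atoms, so each set admits an $\epsilon$-net of size $(C a/\epsilon)^{p}$ or $(Cb/\epsilon)^{p}$. The scale parameters are covered on a log scale, contributing $M_\theta\log(1/\epsilon)$ or $M_\psi\log(1/\epsilon)$ factors as in \textcite{shen2013adaptive}.

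Next we convert parameter nets into density brackets using Lipschitz bounds. Because all variances are bounded below by $\sigma_0^2$ and the Bernoulli probabilities are bounded away from $0$ and $1$, each kernel satisfies $\|f(\cdot;\theta)-f(\cdot;\theta')\|_{1}\le L\|\theta-\theta'\|$. Here $f$ is a Gaussian GLM for $Y$, $M$ or $V$, or a Bernoulli or Gaussian kernel for $Z$ or $C_q$. The constant $L$ depends on $\sigma_0$, $a$, $b$, $\delta$, $\delta'$ and on moments of the design vectors under $P_0$. For the regression kernels this requires the covariates $(M,V,Z,C)$ to have bounded or integrable design norms; we take these as supplied by the compact-support convention used in the density approximation step.

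The joint density is a sum of products of kernels, and a telescoping sum over the product factors gives
\[
\|p_{P}-p_{P'}\|_1\le \|\omega-\omega'\|_1+\textstyle\sum_k\omega_k\|\omega_{\cdot|k}-\omega'_{\cdot|k}\|_1+L\max_{k}\|\theta_k-\theta'_k\|+L\max_{k,j}\|\psi_{j|k}-\psi'_{j|k}\|.
\]
A Euclidean $\epsilon/(4L)$-net in every coordinate block therefore induces an $\epsilon$-cover in $L_1$. Upper brackets come from integrable Gaussian envelopes obtained by inflating the variance slightly, as in Ghosal and van der Vaart's mixture arguments. Hellinger brackets then follow from $h^2\le\|\cdot\|_1$ at radius $\epsilon^2$, which only changes constants inside the logarithm.

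Multiplying the cardinalities and taking logarithms gives
\[
\log N_{[]}(\epsilon,S_n,d)\le C\big(H_\theta+H_\theta H_\psi\big)\big(\log(1/\epsilon)+\log a+\log b\big)+C(M_\theta H_\theta+M_\psi H_\theta H_\psi)\log(1/\epsilon).
\]
With $a,b$ growing at most polynomially in $n$ and $M_\theta,M_\psi$ growing at most like $\log n$, this is $\lesssim H_\theta H_\psi\,\mathrm{polylog}(n)$. The main obstacle is the Lipschitz and bracketing step for the conditional Gaussian regression kernels, whose means depend on unbounded covariates. I would first try to handle it by bounding $\|\beta-\beta'\|\,|x|$ through $P_0$-moments of the covariates, together with a truncation of the covariate space at radius $\log n$ whose mass is negligible under the tail conditions. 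If that fails, I would instead restrict the sieve to compactly supported covariates, consistent with the approximation result of \textcite{hannah2011dirichlet} invoked earlier.
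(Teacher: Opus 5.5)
There is a genuine gap at the truncation step. You show that an $\epsilon$-bracketing of the truncated class $S_n^H$ yields a $3\epsilon$-\emph{cover} of $S_n$, but your final display claims a bound on the \emph{bracketing} entropy $\log N_{[]}(\epsilon,S_n,d)$. To turn brackets for $S_n^H$ into brackets for $S_n$ you need an integrable envelope for the discarded part $\sum_{k>H_\theta}\omega_kP_k+\sum_{k\le H_\theta}\omega_k\sum_{j>H_\psi}\omega_{j|k}P_{kj}$. Definition~\ref{defSn} puts no constraint on those atoms, and no such envelope exists. Give a tail atom mass $\epsilon_\theta/2$ and let its location run over $\mathbb{R}$, or let its variance shrink to zero at a fixed point. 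In any finite family of brackets, some upper function must then dominate infinitely many well-separated (or increasingly concentrated) bumps of fixed mass, so it is not integrable. Hence the bracketing entropy of $S_n$ as defined is infinite. Your fallback of compactifying the covariates does not cure this, because tail atoms can still slide in the $Y$, $M$, $V$ directions or shrink their variances.

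What your argument does establish is the covering bound $\log N(\epsilon,S_n,d)\lesssim H_\theta H_\psi\,\mathrm{polylog}(n)$ under your growth conditions. That is all the entropy condition of \textcite{ghosal2000convergence} requires, and it is what Proposition~2 of \textcite{shen2013adaptive} actually delivers. The paper's proof quotes that proposition as a bracketing bound, so it has the same mismatch. If brackets are genuinely needed (e.g.\ for the later empirical-process step), the sieve must also constrain the tail atoms; after that, an envelope argument like yours goes through.

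Apart from this, your route differs from the paper's. The paper factorizes the density into $p(C)$, $p(Z\mid C)$, $p(V\mid Z,C)$, $p(M\mid V,Z,C)$, $p(Y\mid M,V,Z,C)$ and a missingness factor. It bounds each factor class separately (Shen et al.'s proposition for continuous factors, a parametric $H_\theta H_\psi\log(1/\epsilon)$ count for binary ones) and adds the log-entropies. You instead net the weights and atoms of the joint mixture and pass to $L_1$ via a telescoping Lipschitz bound.

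Your version has three advantages:
\begin{itemize}
\item It works with the joint density class that the concentration theorem actually uses.
\item It avoids the fact that EDP-induced conditionals are ratios of mixtures with covariate-dependent weights, to which a location-mixture bound does not directly apply.
\item It keeps the factor $H_\theta H_\psi$ visible. The paper's ``polynomial in $\log n$'' conclusion loses it, even though $H_\theta H_\psi\propto n\epsilon_n^2/\log n$ in the paper's own concentration proof.
\end{itemize}
The paper's factorwise classes, in turn, are what it reuses for the nuisance functionals in (c*)(i).

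Finally, the obstacle you flag is milder than you fear. In the joint $L_1$ metric, the Lipschitz constant of a regression kernel involves moments of $(1,M,V,Z,C)$ under the retained component itself, not under $P_0$. After truncation these moments are bounded in terms of $a$, $b$ and the variance caps, so no compact support is needed. Do keep $\log L$ explicit, since that is where $\sigma_0$ and the caps (hence $M_\theta\epsilon_\theta^2$ and $M_\psi\epsilon_\psi^2$) enter. The log-scale variance grid itself has only $M+1$ points, so it costs about $\log(M+1)$ per variance rather than $M\log(1/\epsilon)$.
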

\begin{proof}
    Define $\mathcal F_n = \{\dot \chi_{P_{O'}} - \dot\chi_{P_{O',0}}\mid P\in \tilde H_n\}$. 
The complexity of the functional class is controlled via $S_n$.\\
Based on our model structure, we can factorize the joint distribution as follows:
\begin{align}
    P(O') = P(Y,V\mid W, S=1)^S\cdot P(M\mid Z,C)\cdot P(Z\mid C)\cdot P(C)\cdot P(S\mid W ).
\end{align}
Note the $P(Y,V\mid W,S=1)$ can be further factorized as $P(Y\mid V, W, S=1)\cdot P(V\mid W, S=1)$.

Let's define a function class for each component in the factorization of the joint density of observed data. Take $p(C) = p(C_1)\cdot p(C_2)$, and let $\mathcal F_{C_2,n}$ be the class of functions modeling the continuous density $p(C_2)$, let $\mathcal F_{C_1,n}$ be the class of functions modeling the  probability mass function $p_{C_1}$, where $\mathcal F_{C_2,n} = \{p_{C_2}\mid P\in \tilde H_n\} \subseteq\{p_{C_2}\mid P \in S_n\} =\{p_{C_2}\mid p_{C_2}(c) = \sum_{k=1}^\infty\omega_{k}\sum_{j=1}^{\infty} \omega_{j|k}\prod_{q_2 = 1}^{d^c_2}\phi_{\sigma^c_{j|k,q_2}}(c - \mu^c_{j|k,q_2}),\ \|\mu^c_{j|k,q_2}\|_2 \leq b ,\ \sigma_0^2\leq \sigma^c_{j|k,q_2}\leq {M_\psi'}, \ q_2= 1, \dots, d^c_2 \}$, and $\mathcal F_{C_1, n} = \{p_{C_1}\mid P\in \tilde H_n\}\subseteq \{p_{C_1}\mid P \in S_n\} = \{p_{C_1}\mid p_{C_1}(c) = \sum_{k=1}^\infty \omega_k\sum_{j=1}^\infty \omega_{j|k} (\pi^c_{j|k, q_1})^c(1-\pi^c_{j|k,q_1})^{(1-c)}, \delta'<\pi_{j|k,q_1}^c<1-\delta', q_1 = 1, \dots, d^c_1\}$. We can define the other function classes in a similar way.\\
Our objective is to derive an upper bound for the log bracketing entropy for the joint distribution in (\#). To achieve this, we are going to derive the upper bounds for each function class.\\
Now take $\mathcal F_{C_1,n}$ and $\mathcal F_{C_2,n}$ as an example. As defined by the sieve $S_n$, a density function $p_C \in \mathcal F_C$ is generated by an Enriched Dirichlet Process(EDP) model. To bound the entropy of the entire class, we decompose the model into its source of complexity. We can derive the upper bounds based on the result in Proposition 2 from \textcite{shen2013adaptive},\\
\begin{align*}
    \log N_{[]}(\epsilon, \mathcal F_{C_2,n},\rho ) \leq K\left[d^cH_{\theta}H_{\psi}\log \left(\frac{b}{\sigma_0\epsilon}\right)-H_\theta H_\psi\log \epsilon +H_\theta\left(\log M_\psi+M_\psi(\epsilon_\psi)^2\right) \right]
\end{align*}
For binary class $\mathcal F_{C^{\text{bin}},n}$, the derivation of the entropy bound follows a different principle. Each component of the mixture is a product of Bernoulli distributions, which form a standard parametric family. The parameters for these distributions, $\{e^c_{j|k}\}$ are constrained to a compact set $[\delta', 1-\delta']$ by the definition of the sieve $S_n$. \\
Foundational results in empirical process theory establish that for a class of distributions smoothly parameterized by a finite-dimensional parameter over a compact set, the bracketing entropy is of the order $O(\log(1/\epsilon))$ \parencite{van1996weak, ghosal2017fundamentals}. Since the sieve $S_n$ considers mixtures with up to $H_\theta H_\psi$ effective components, the total bracketing entropy for the class $\mathcal F_{C^{\text{bin}},n}$ is bounded by the number of active components multiplied by the entropy of a single component class. This yields the bound:
\begin{align*}
    \log N_{[]}(\epsilon, \mathcal F_{C_1,n}, \rho) \leq KH_\theta H_\psi\log(1/\epsilon).
\end{align*}
Similarly, we can derive the upper bounds for the function classes $\mathcal F_{Z|C,n},\ \mathcal F_{V|Z,C,n},\ \mathcal F_{M|V,Z,C,n},$ $ \mathcal F_{Y|M,V,Z,C,n}$ and $\mathcal F_{S\mid W,n}$ as follows:
\begin{align*}
 \log N_{[]}(\epsilon, \mathcal F_{Z|C,n},\rho ) &\leq K H_\theta H_\psi \log (1/\epsilon)\\
    \log N_{[]}(\epsilon, \mathcal F_{V|Z,C,n},\rho ) &\leq K\left[d^vH_{\theta}H_{\psi}\log \left(\frac{b}{\sigma_0\epsilon}\right)-H_\theta H_\psi\log \epsilon +H_\theta\left(\log M_\psi+M_\psi(\epsilon_\psi)^2\right) \right]\\
    \log N_{[]}(\epsilon, \mathcal F_{M|V,Z,C,n},\rho ) &\leq K\left[d^mH_{\theta}\log \left(\frac{a}{\sigma_0\epsilon}\right)-H_\theta \log \epsilon +\log M_\theta+M_\theta(\epsilon_\theta)^2 \right]\\
    \log N_{[]}(\epsilon, \mathcal F_{Y|M,V,Z,C,n},\rho ) &\leq K\left[d^yH_{\theta}\log \left(\frac{a}{\sigma_0\epsilon}\right)-H_\theta \log \epsilon +\log M_\theta+M_\theta(\epsilon_\theta)^2 \right]\\
    \log N_{[]}(\epsilon, \mathcal F_{S\mid W,n},\rho ) &\leq K H_\theta H_\psi \log (1/\epsilon),
\end{align*}
and
\begin{align*}
\log N_{[]}(\epsilon, S_n, \rho) =\ & \log N_{[]}(\epsilon, \mathcal F_{C,n},\rho) + \log N_{[]}(\epsilon,\mathcal F_{Z|C,n}, \rho) + \log N_{[]}(\epsilon,\mathcal F_{V|Z,C,n}, \rho) \\ +\ &\log N_{[]}(\epsilon,\mathcal F_{M|V,Z,C,n}, \rho) + \log N_{[]}(\epsilon,\mathcal F_{Y|M,V,Z,C,n}, \rho) + \log N_{[]}(\epsilon,\mathcal F_{S\mid W,n}, \rho).
\end{align*}
Thus, we have the log bracketing entropy of $S_n$ and $\log N_{[]}(\epsilon, S_n,\rho)$ controlled by the polynomial of $\log n$.

\end{proof}

\begin{lemma}[Bound on KL Divergence via $L_2$ Nuisance Errors]\label{KLbounds}
Let $p_P$ denote the density of the data distribution P, which is parameterized by a vector of nuisance functions $\eta_j(P), \quad j=1,\dots,J$. Assume that for any $P$ in a neighborhood of the true distribution $P_0$, the log-likelihood $\log p_P$ is pathwise differentiable twice with respect to the nuisance functions (a concept formally defined in semiparametric theory, see \textcite{bickel1993efficient} and \textcite{tsiatis2006semiparametric}), and that the second derivatives are bounded in $L_1(P_0)$ norm uniformly over this neighborhood. Then there exists a constant $C$ such that for any $P$ in this neighborhood, we have:
\begin{align*}
    \max\left\{K(p_{P_0}, p_P), \ V_2(p_{P_0}, p_P)\right\}\leq C \sum_{j=1}^J\|\eta_j(P) - \eta_j(P_0)\|^2_{L_2(P_0)}.
\end{align*}
\end{lemma}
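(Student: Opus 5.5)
The plan is to Taylor-expand the log-likelihood ratio along a path joining $P_0$ to $P$, and to read both the KL divergence and the KL variance off that expansion. Let $\eta_t=\eta(P_0)+t(\eta(P)-\eta(P_0))$ for $t\in[0,1]$, and let $p_t$ be the induced density. By the assumed twice pathwise differentiability, $\ell(t):=\log p_t$ is twice differentiable in $t$. Each derivative is a bounded linear, respectively bilinear, form in $\Delta\eta_j=\eta_j(P)-\eta_j(P_0)$. Write $\dot\ell_0=\sum_j D_j\ell_0[\Delta\eta_j]$ and $\ddot\ell_t=\sum_{j,k}D^2_{jk}\ell_t[\Delta\eta_j,\Delta\eta_k]$. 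Taylor's theorem with integral remainder then gives
\[
\log\frac{p_{P_0}}{p_P}=-\dot\ell_0-\int_0^1(1-t)\,\ddot\ell_t\,dt .
\]

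For the KL divergence, I take $P_0$-expectations of this identity. The linear term vanishes: $\dot\ell_0$ is a score along a regular submodel, so $P_0\dot\ell_0=0$ (differentiate $\int p_t=1$ at $t=0$). This cancellation is what makes the bound quadratic rather than linear. For the remainder, I use the uniform $L_1(P_0)$ control of the second derivatives on the neighbourhood. I would read that hypothesis as saying the bilinear forms $D^2_{jk}\ell_t$ act through bounded multipliers. Then Cauchy--Schwarz gives $|P_0 D^2_{jk}\ell_t[\Delta\eta_j,\Delta\eta_k]|\le C\|\Delta\eta_j\|\|\Delta\eta_k\|$, and $2ab\le a^2+b^2$ turns this into $C\sum_j\|\Delta\eta_j\|^2$.

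For the KL variance, $V_2\le P_0(\log(p_{P_0}/p_P))^2\le 2P_0\dot\ell_0^2+2P_0\bigl(\int_0^1(1-t)\ddot\ell_t\,dt\bigr)^2$. The first piece is a squared sum of linear functionals of the $\Delta\eta_j$. In the factorisation of the proof of Lemma~\ref{Snentropycontrol} these are score-type terms such as $\Delta g_M/g_{M,0}$ and $\Delta e/e_0$, so on the sieve, where densities are bounded below and propensities satisfy overlap, $P_0\dot\ell_0^2\le C\sum_j\|\Delta\eta_j\|^2$. The second piece is of order $\|\Delta\eta\|^4$, which is smaller than the quadratic bound once $\|\Delta\eta\|\le1$ in the neighbourhood.

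The main obstacle is the multiplier bounds: the bare $L_1$ hypothesis on $\ddot\ell_t$ does not by itself give $L_2$-product bounds. I would first try to get them from the explicit factorised likelihood (Gaussian kernels with variances at least $\sigma_0^2$, Bernoulli parameters in $[\delta',1-\delta']$, and the bound $\pi>\delta_\pi$). On this structure each log-density is a Lipschitz function of its nuisance, with uniformly bounded first and second derivatives in the density arguments. If that fails, I would strengthen the hypothesis to uniform sup-norm bounds on the derivative operators, which is in effect what the sieve $S_n$ supplies.
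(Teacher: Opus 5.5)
Your proposal is essentially the paper's argument: a second-order expansion of $\log p_P$ about $\eta(P_0)$, the mean-zero score removing the linear term in $K$, and Cauchy--Schwarz on the Hessian term. Your remark that the bare $L_1(P_0)$ hypothesis must be upgraded to bounded multipliers is what the paper does tacitly when it says the Hessian components are ``uniformly bounded'', and your explicit handling of $V_2$ (where $P_0\dot\ell_0^2$ survives and is itself quadratic) is more careful than the paper's ``similar argument''.

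One small correction: the squared remainder in $V_2$ is controlled by fourth moments $P_0|\Delta\eta_j|^4$, not by $\|\Delta\eta\|_{L_2(P_0)}^4$. So close that step with the uniform sup-norm bound $B$ on the nuisances over the sieve, which gives $P_0|\Delta\eta_j|^4\le B^2\|\Delta\eta_j\|_{L_2(P_0)}^2$, rather than with $\|\Delta\eta\|\le 1$.
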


\begin{proof}
The Kullback-Leibler (KL) divergence is defined as the expectation of the negative log-likelihood ratio under the true distribution $P_0$, where \begin{align*}
        K(p_{P_0}, p_P) = \mathbb E_{P_0}\left[-\log \left(\frac{p_P}{p_{P_0}}\right)\right] = -\mathbb E_{P_0}\left[\log (p_{P}) - \log (p_{P_0})\right].
    \end{align*}
We consider a second-order Taylor expansion of the log-likelihood $\log p_P$ as a function of the nuisance vector $\eta = \eta(P)$ around the true value $\eta_0 = \eta(P_0)$. The expansion is given by:
\begin{align*}
    \log p_P - \log p_{P_0} = (\eta - \eta_0)^T\nabla_{\eta}\log p_P \big|_{\eta_0}
\;+\; \frac12 (\eta-\eta_0)^{\top}\,
\nabla_{\eta}^{2}\log p_{P}\big|_{\eta_\xi}\,(\eta-\eta_0),
\end{align*}
where $\nabla_\eta \log p_P \big|_{\eta_0}$ is the vector of score functions evaluated at the truth, and $\nabla^2_\eta$ is the Hessian matrix evaluated at some intermediate point $\eta_\xi$ between $\eta$ and $\eta_0$.\\
Taking the expectation under $P_0$, the first-order term vanishes because the expectation of the score function at the true parameter is zero: $\mathbb E_{P_0}\left[\nabla_\eta \log p_{P
}\big|_{\eta_0}\right] = 0$. This leaves the second-order term:
\begin{align*}
    K(p_P,p_{P_0}) =  - \frac{1}{2}\mathbb E_{P_0}\left[(\eta-\eta_0)^{\top}\,
\nabla_{\eta}^{2}\log p_{P}\big|_{\eta_\xi}\,(\eta-\eta_0)\right].
\end{align*}
Under the assumption that the Hessian components are uniformly bounded, this quadratic form can be bounded using the Cauchy-Schwarz inequality. This yields a bound proportional to the sum of the squared $L_2(P_0)$ norms of the nuisance errors:
\begin{align*}
    K(p_P, p_{P_0})\leq C_K \sum_{j=1}^J\|\eta_j - \eta_{j,0}\|^2_{L_2(P_0)}.
\end{align*}
A similar argument based on the expansion of $(\log(p_P/p_{P_0}))^2$ shows that the KL variance, $V_2(p_P,p_{P_0 }) = \text{Var}_{P_0}[\log (p_P/p_{P_0})]$, is also controlled by the same quadratic form in the nuisance errors. Combining these results establishes the lemma.
\end{proof}

\begin{lemma}[Prior Mass Calculation for Categorical Kernel Parameters]\label{discrete}
Let a parameter set consists of $Q$ independent Categorical probability vectors $\bm \pi_1, \dots, \bm \pi_Q$, where each vector $\bm\pi_q$ has $K_q\ge 2$ classes and resides on the simplex $\Delta^{K_q - 1} = \{\mathbf x\in \mathbb R^K\mid x_k > 0,\ \sum_{k=1}^{K_q} x_k = 1\}$. Assume each $\bm \pi_q$ is given an independent Dirichlet($\bm\alpha_q$) prior. Let $(\bm\pi_1^*, \dots, \bm\pi_Q^*)$ be a target parameter set where each component of every vector $\bm\pi^*_q$ is greater than or equal to some $\delta\in (0, 1/\max_q (K_q))$. For a sufficient small $\delta_\pi>0$, the log-prior probability that all drawn vectors fall within a $\delta_\pi$-neighborhood of their respective targets is bounded below. Specifically,
\begin{align*}
    \log \Pi\left(\max_{q=1, \dots, Q}\|\bm\pi_q - \bm\pi^*_q\|_{\infty}<\delta_\pi\right)\geq -C\log(1/\delta_{\pi}),
\end{align*}
for some constant $C>0$ that depends on the prior, $Q$ and dimensions $\{K_q\}_{q=1}^Q$, but not on $\delta_\pi$.
\end{lemma}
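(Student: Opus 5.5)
By the independence of the $Q$ Dirichlet draws, the probability of the joint event factors as a product over $q$:
\[
\Pi\Bigl(\max_q\|\bm\pi_q-\bm\pi_q^*\|_\infty<\delta_\pi\Bigr)=\prod_{q=1}^Q\Pi\bigl(\|\bm\pi_q-\bm\pi_q^*\|_\infty<\delta_\pi\bigr).
\]
It therefore suffices to show $\log\Pi(\|\bm\pi_q-\bm\pi_q^*\|_\infty<\delta_\pi)\ge -C_q\log(1/\delta_\pi)$ for each $q$ and then take $C=\sum_q C_q$. From here on I fix one $q$ and write $K=K_q$, $\bm\alpha=\bm\alpha_q$.

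\textbf{Step 1: inner box.} I work in the coordinates $(x_1,\dots,x_{K-1})$, with $x_K=1-\sum_{k<K}x_k$. Set $\eta=\delta_\pi/(2K)$ and consider the box
\[
B=\prod_{k<K}[\pi_k^*,\pi_k^*+\eta],
\]
shifted if necessary so that it stays in the simplex. On $B$ we have $|x_k-\pi^*_k|\le\eta<\delta_\pi$ for $k<K$, and
\[
|x_K-\pi_K^*|\le (K-1)\eta<\delta_\pi .
\]
So $B$ lies inside the $\delta_\pi$-neighborhood. Since $\pi_k^*\ge\delta$ and $\delta_\pi$ is small (say $\delta_\pi<\delta$), every coordinate on $B$, including $x_K$, is at least $\delta-\delta_\pi\ge\delta/2$. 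Hence $B$ stays uniformly inside the interior of the simplex.

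\textbf{Step 2: density lower bound.} The Dirichlet density is $\Gamma(\sum\alpha_k)/\prod\Gamma(\alpha_k)\cdot\prod x_k^{\alpha_k-1}$. On $B$ every $x_k\in[\delta/2,1]$. For $\alpha_k\ge1$ this gives $x_k^{\alpha_k-1}\ge(\delta/2)^{\alpha_k-1}$, and for $\alpha_k<1$ it gives $x_k^{\alpha_k-1}\ge1$. So the density is bounded below on $B$ by a constant $c_0>0$ that depends only on $\bm\alpha$, $\delta$ and $K$, not on $\delta_\pi$. Multiplying by the Lebesgue volume of $B$,
\[
\Pi(B)\ge c_0\,\eta^{K-1}=c_0\,(2K)^{-(K-1)}\delta_\pi^{K-1}.
\]

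\textbf{Step 3: conclude.} Taking logs,
\[
\log\Pi(B)\ge \log c_0-(K-1)\log(2K)-(K-1)\log(1/\delta_\pi).
\]
For $\delta_\pi$ small enough that $\log(1/\delta_\pi)\ge1$, the constant terms are absorbed into a multiple of $\log(1/\delta_\pi)$. This gives $C_q=K_q-1+|\log c_0|+(K_q-1)\log(2K_q)$. Summing over $q$ yields the claim, with $C$ depending only on the prior, $Q$ and $\{K_q\}$.

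The only delicate point is the geometry in Step 1. The box must stay in the simplex and keep the implied last coordinate close to $\pi_K^*$ while remaining bounded away from zero. This is exactly where the assumption $\pi^*_k\ge\delta$ and smallness of $\delta_\pi$ are used, since they prevent the singular factors $x_k^{\alpha_k-1}$ with $\alpha_k<1$, or vanishing factors, from degrading the bound.
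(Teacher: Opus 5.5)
Your proof is correct and follows essentially the same route as the paper: you factor over $q$ by independence, bound the Dirichlet density below by a $\delta_\pi$-free constant on a region inside the $\delta_\pi$-neighborhood, bound that region's $(K_q-1)$-dimensional volume below by a multiple of $\delta_\pi^{K_q-1}$, and then take logarithms and sum. Your explicit box, with its $\delta/2$ margin on every coordinate including the implied last one, is a slightly cleaner version of the paper's step: the paper asserts that the whole neighborhood lies in the set where all coordinates are at least $\delta$, a containment that does not literally hold.
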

\begin{proof}
The proof of this inequality proceeds by finding a lower bound for the integral of the Dirichlet density over the specified neighborhood. The argument involves two main steps: lower-bounding the integrand and lower-bounding the volume of the integration region.\\
First, we establish a positive lower bound for each of the integrands $g_q(\bm\pi_q)$. The target parameter $\bm\pi_q$ lies in a compact subset of the simplex, $S_{\delta,q} = \{\mathbf x \in \Delta^{K_q-1}\mid x_k\geq \delta_q \text{ for all }k\}$, where all components are bounded away from zero. On this compact set, the Dirichlet density $g_q(\bm\pi_q)$ is continuous and strictly positive. A continuous function on a compact set attains its minimum; therefore, there exists a constant $c_q>0$ such that $g_q(\bm\pi_q)\geq c_q$ for all $\bm\pi_q\in S_{\delta, q}$. For a sufficient small $\delta_\pi$, the entire integration region is contained within this set. By the properties of integration, we can bound the integral by this minimum density multiplied by the volume of the integration region:
\begin{align*}
    \Pi\left(\|\bm\pi_q - \bm\pi_q^*\|_\infty<\delta_\pi\right) &= \int _{B_\infty(\bm\pi^*_q,\delta_\pi)\cap \Delta^{K_q-1}}g_q(\bm\pi_q) \ d\bm\pi_q\\ &\geq  \int _{B_\infty(\bm\pi^*_q,\delta_\pi)\cap \Delta^{K_q-1}}c_q \ d\bm\pi_q
     \\ & = c_q\cdot \text{Volume}\left(B_\infty (\bm\pi_q^*,\delta_\pi)\cap \Delta^{K_q-1}\right).
\end{align*}
Second, we lower-bound the volume. The integration region is the intersection of a hypercube of side length $2\delta_\pi$ with the $(K_q-1)$-dimension simplex. The $(K_q-1)$-volume of such a region is proportional to $\delta_\pi$ raised to the power of the dimension. Thus, there exists a geometric constant $C_q'>0$, which depends on the geometry of the simplex but not on $\delta_\pi$, such that:
\begin{align*}
    \text{Volume}\left(B_\infty (\bm\pi_q^*,\delta_\pi)\cap \Delta^{K_q-1}\right)\geq C_q'\cdot \delta_\pi^{K_q-1}.
\end{align*}
Combining these two bounds, we obtain a lower bound for the probability:
\begin{align*}
    \Pi\left(\|\bm\pi_q-\bm\pi_q^*\|<\delta_\pi\right)\geq c_qC_q'\delta^{K_q-1}.
\end{align*}
Taking the natural logarithm of both sides and applying the properties of logarithms yields the  log-prior probability of falling within a $\delta_\pi$-neighborhood of a target $\bm\pi^*_q$ is bounded below:
    \begin{align*}
        \log\Pi\left(\|\bm \pi_q - \bm\pi_q^*\|_\infty<\delta_\pi\right)\geq \log (c_qC_q') + (K_q-1)\log(\delta_\pi) ,
    \end{align*}
    where $c_q>0$ is he lower bound of the Dirichlet density on the compact subset of the simplex, and $C_q'\delta_\pi^{K_q-1}$ is the lower bound on the volume of the integration region. For a small $\delta_\pi$, this can be simplified to:
    \begin{align*}
        \log\Pi(\|\bm\pi_q-\bm\pi_q^*\|_\infty < \delta_\pi)\geq -C_q\log(1/\delta_\pi),
    \end{align*}
    where the constant $C_q$ is approximately $K_q-1$.\\
    Since the $Q$ parameter vectors are assumed to have independent priors, the joint prior probability of all vectors simultaneously falling within their respective infinity-norm balls is the product of the marginal probabilities:
    \begin{align*}
        \Pi\left(\max_{q=1, \dots, Q}\|\bm\pi_q - \bm\pi_q^*\|_\infty<\delta_\pi\right) = \prod_{q=1}^Q\Pi(\|\bm\pi_q - \bm\pi_q^*\|_\infty<\delta_\pi).
    \end{align*}
    In log-space, this product becomes a sum:
    \begin{align*}
        \log \Pi\left(\max_{q=1,\dots, Q}\|\bm\pi_q - \bm\pi_q^*\|_\infty<\delta_\pi\right) = \sum_{q=1}^Q\log\Pi(\|\bm\pi_q - \bm\pi_q^*\|_\infty<\delta_\pi).
    \end{align*}
    Substituting the lower bound for each term, we get:
    \begin{align*}
    \sum_{q=1}^Q\log\Pi\left(\max_{q=1,\dots, Q}\|\bm\pi_q - \bm\pi_q^*\|_\infty<\delta_\pi\right)\geq \sum_{q=1}^Q(-C_q\log(1/\delta_\pi)) = -\left(\sum_{q=1}^QC_q\right)\log(1/\delta_\pi).
    \end{align*}
    By letting $C = \sum_{q=1}^QC_q$, we obtain the final result:
    \begin{align*}
        \log\Pi\left(\max_{q=1\dots, Q}\|\bm\pi_q-\bm\pi_q^*\|_\infty<\delta_\pi\right)\geq -C\log(1/\delta_\pi).
    \end{align*}
    This demonstrates that the required prior thickness bound holds for a set of multiple independent categorical distributions, preserving the same functional form as the single-parameter case. 
\end{proof}
\begin{remark}
    Notice that Bernoulli parameters with Beta prior is a special case when $K_q = 2$ for $q=1,\dots, Q$.
\end{remark}

\section{One-Step Posterior Correction Algorithm}
\subsection{Algorithm Skeleton}
At the $b$-th MCMC iteration, given the drawn full parameter set $\Theta^{(b)}$ and the global plug-in g-computation estimate $\hat{\chi}^{(b)}$, we evaluate the observed-data efficient influence function (EIF) sequentially for each subject $i = 1, \dots, n$. To accommodate both single-world (ATE) and cross-world (NIE/NDE) causal estimands within a unified computational framework, the algorithm proceeds as follows. Notice that the capital notations represent the observed data.

\paragraph{Initialization.} For subject $i$, extract $(Y_i, M_i, V_i, Z_i, C_i)$ and missingness indicator $S_i = R_{Y,i} R_{V,i}$. Compute the missingness probability $\pi_i = P(S_i=1 \mid W_i)$ which is evaluated using a separate posterior sample drawn at iteration $b$ from an independently fitted Bayesian Additive Regression Trees (BART) model. Compute the exact propensity scores $\pi^z_i = P(Z=z \mid C_i)$, which is derived from the EDPM's conditional mixture distribution
\begin{align*}
    \pi^z_i = P(Z_i=z \mid C_i; \Theta^{(b)}) = \frac{\sum_{k=1}^{M} \omega_k^{\theta, (b)} \sum_{j=1}^{N_k} \omega_{j|k}^{\psi, (b)} P(Z_i=z \mid \psi_{j|k}^{(b)}) f(C_i \mid \psi_{j|k}^{(b)})}{\sum_{k=1}^{M} \omega_k^{\theta, (b)} \sum_{j=1}^{N_k} \omega_{j|k}^{\psi, (b)} f(C_i \mid \psi_{j|k}^{(b)})}.
\end{align*}
\paragraph{Scenario I: Single-world Case ($z = z'$). } 
\begin{enumerate}
    \item {Compute Baseline Terms:} Evaluate $\mu_3^z(C_i)$
    \item Compute the baseline residual projection: $b_5 = \hat\mu_3^z(C_i) - \hat{\chi}^{(b)}$.
    \item Evaluate Conditional Projection ($b_W$):
    \begin{enumerate}
        \item If $Z_i = z$: Compute the single-world projection $b_1 = \mathbb E[Y\mid M,Z,C,S=1]$. 
       
        \item Else: $b_2 = 0$.
    \end{enumerate}
    \item Assemble: $b_W = b_1 + b_5$.
\end{enumerate}
\paragraph{Scenario II: Cross-World Case($z\neq z'$). }
\begin{enumerate}
    \item Compute Baseline Terms: Evaluate the full cross-world marginal integral $\hat\mu_4^{(b)}(C_i)$ and compute the baseline residual projection: $b_5 = \hat\mu_4^{(b)}(C_i) - \hat{\chi}^{(b)}$.
    \item Evaluate Conditional Projections ($b_W$):
    \begin{enumerate}
        \item If $Z_i = z'$: Compute the projection $b_2$ via copula mapping and Gaussian conjugate updates.
        \item Else if $Z_i = z$: Compute the projection $b_4$ via copula mapping and Gaussian conjugate updates.
        \item Assemble: $b_W = b_2 + b_4 + b_5$.
        \item Evaluate Full-Data EIF $\dot\chi_{P_f}$ (only if $S_i = 1$): \begin{enumerate}
            \item If $Z_i = z$:
            \begin{enumerate}
                \item Evaluate analytical $\mu_1^{(b)}$, map to the counterfactual $v'$ via the copula CDF, and compute $\hat\mu_2^{(b)}$.
                \item Compute the full-data EIF \begin{align*}
                    \hat{\dot \chi}_{P_{f}^{(b)}} =\frac{1}{e(C_i)}\{Y_i - \mu_1(M_i;V_{z,i},C_i) + \hat\mu_2(V_{z',i};V_{z,i},C_i) - \hat\mu_4(C_i)\} + \hat\mu_4(C_i) - \hat\chi^{(b)}.
                \end{align*}
            \end{enumerate}  
            \item Else if $Z_i = z'$:
            \begin{enumerate}
                \item  Estimate $\hat\phi_{2,P_f^{(b)}}$ by Monte Carlo Integration. 
                \item Compute the full-data  EIF
                \begin{align*}
                \hat{\dot\chi}^{(b)}_{P_f} = \hat\phi_{2,P_f}^{(b)} + \hat\mu^{(b)}_4(C_i) - \hat\chi^{(b)}.
                \end{align*}
            \end{enumerate}
        \end{enumerate}
    \end{enumerate}
\end{enumerate}
\paragraph{Final AIPW Construction.} The observed-data influence function is constructed as 
$$\hat {\dot \chi}_{P_{O'}^{(b)}}(\mathcal O_i) = \frac{S_i}{\pi_i} \hat{\dot\chi}_{P_f^{(b)}} - \frac{S_i - \pi_i}{\pi_i} b_W.$$
The final one-step corrected estimator for iteration $b$, denoted $\tilde{\chi}^{(b)}$, is computed as:
\begin{equation}
    \tilde{\chi}^{(b)} = \hat{\chi}^{(b)} + \sum_{i=1}^{n} \omega_{BB, i} \cdot \hat {\dot \chi}_{P_{O'}^{(b)}}(\mathcal O_i),
\end{equation}
where $\mathcal{O}_i = (Y_i, M_i, V_{z,i}, V_{z',i}, Z_i, C_i)$ represents the observed data. $\omega_{BB, i}$ are weights sampled from a Bayesian Bootstrap distribution $\text{Dirichlet}(n; 1, \dots, 1)$, independent of the posterior $P^{(b)}_{O'}$. The term $\hat{\dot{\chi}}_{P^{(b)}_{O'}}(\mathcal{O}_i)$ is the EIF evaluated at the $i$-th data point, wherein the nuisance functions are instantiated using the posterior samples of the parameters at the $b$-th MCMC iteration.

\subsection{Details of Computations}

To compute the components of the efficient influence function (EIF) and the targeted causal estimands, we must evaluate a sequence of nested conditional expectations. While the innermost expectation $\mu_1$ admits a closed-form analytical expression under our EDPM framework, the subsequent outer integrals defining $\mu_2, \mu_3, \mu_4$, and the cross-world components (e.g., $\phi_{2}$) are intractable and require Monte Carlo integration. For each MCMC iteration $b$, the computation proceeds as follows:

\textbf{1. Estimation of $\mu_1(m; v, z, c)$:}
This is the conditional expectation of $Y$ under the mixture model:
\begin{align*}
     \mu_{1}^{(b)}(M; V, z, C) = \frac{\sum_{k=1}^{M}\sum_{j=1}^{N_k} W_{kj}^{y,(b)} \cdot \mathbb{M}\beta_{k}^{y,(b)}}{\sum_{k=1}^{M}\sum_{j=1}^{N_k} W_{kj}^{y,(b)}},
\end{align*}
where the weights \begin{equation*}
        W_{kj, i}^{(b)} = \omega_k^{\theta,(b)}\omega_{j|k}^{\psi,(b)} \cdot \mathcal{N}\left(m_i'^{(b)} \mid \mathbb{V}\beta_k^{m,(b)}, \sigma_k^{2,m,(b)}\right) \cdot \mathcal{N}\left(v_i^{(b)} \mid \mathbb{X}\beta_{kj}^{v,(b)}, \sigma_{kj}^{2,v,(b)}\right) \cdot P(z,  {c}_i^{(b)} \mid \Psi_{kj}^{(b)}).
    \end{equation*}

\textbf{2. Estimation of $\mu_2(V_{z'}; V, {C})$:}
Draw $L$ samples $m'_{l,1} \sim P^{(b)}_{M \mid V_{z'}, Z=z', C}(\cdot)$. Then:
\begin{align*}
    \hat{\mu}_{2}^{(b)}(V_{z'}; V_z, C) = \frac{1}{L} \sum_{l=1}^{L} \mu_{1}^{(b)}(m'_{l,1}; V_z, z, C).
\end{align*}

\textbf{3. Estimation of $\mu_3(V_z; C)$:}
Draw $v'_{l,2} \sim P^{(b)}_{V_{z'} \mid V_z, Z=z, C}(\cdot)$ (using the copula sampling step described in Section \ref{app:g_comp_details}), then draw $m'_{l,2} \sim P^{(b)}_{M \mid V_{z'}=v'_{l,2}, Z=z', C}(\cdot)$.
\begin{align*}
\hat{\mu}_{3}^{(b)}(V_z; C) = \frac{1}{L} \sum_{l=1}^{L} \mu_{1}^{(b)}(m'_{l,2}; V_z, z, C).    
\end{align*}

\textbf{4. Estimation of $\mu_4({c})$ and $\chi(P)$:}
Draw $v_{l,3} \sim P^{(b)}_{V_z\mid Z=z, C}(\cdot)$, then proceed sequentially to draw $v'_{l,3}\sim P^{(b)}_{V_{z'}\mid V_z = v_{l,3},Z=z,C}(\cdot)$ and $m'_{l,3}\sim P^{(b)}_{M\mid V_
{z'}= v_{l,3}',Z=z',C}(\cdot)$ as above.
\begin{align}\label{estmu4}
     \hat{\mu}_{4}^{(b)}(C) = \frac{1}{L} \sum_{l=1}^{L} \mu_{1}^{(b)}(m'_{l,3}; v_{l,3}, z, C).
\end{align}
The plug-in estimator $\hat{\chi}^{(b)}$ is obtained by integrating $\hat{\mu}_4^{(b)}(c)$ over $P^{(b)}_{C}(c)$, which is exactly the same algorithm introduced in Section \ref{app:g_comp_details}. \\
\textbf{5. Estimation of $\phi_{2,P_f}^{(b)}$}
\begin{enumerate}
    \item[5.1] Draw $v_{l,4}\sim P^{(b)}_{V_z\mid Z=z,C}(\cdot)$,
    \item[5.2] Calculate $u = F^{(b)}_{V_{z'}\mid Z=z',C}(V_{z'})$ and $w = F^{(b)}_{V_z\mid Z=z,C=c}(v_{l,4})$,
    \item[5.3] Calculate calculate the copula function value $c^{(b)}(u,w;\rho^{(b)})$,
    \item[5.4] Draw $m_{l,4}'$ from $P^{(b)}_{M \mid V_{z'}, Z=z', c}(\cdot)$,
    \item[5.5] Compute  $\phi_{2,P_f}^{(b)}$ $$\hat\phi_{2,P_f}^{(b)} = \frac{1}{1-e^{(b)}(C)}\cdot \frac{1}{L}\sum^{L}_{l=1}\left[\mu_1^{(b)}(M;v_{l,4},z,C) - \mu_1^{(b)}(m_{l,4}';v_{l,4},z,C)\right]c^{(b)}(u,w;\rho^{(b)}),$$
    where $$e^{(b)}(C) = \frac{P^{(b)}(Z=z,C)}{P^{(b)}(C)}.
    $$
\end{enumerate}
\textbf{6. Estimation of $b_2$}
\begin{enumerate}
    \item[6.1] Draw $v_{l,5}'\sim P^{(b)}_{V_{z'}\mid M,Z=z',C}(\cdot)$ and $v_{l,5}\sim P^{(b)}_{V_z\mid Z=z,C}(\cdot)$,
    \item[6.2] Calculate $u' = F^{(b)}_{V_{z'}\mid Z=z',C}(v_{l,5}')$ and $w' = F^{(b)}_{V_z\mid Z=z,C=c}(v_{l,5})$,
    \item[6.3] Calculate calculate the copula function value $c^{(b)}(u',w';\rho^{(b)})$,
    \item[6.4] Draw $m_{l,5}'$ from $P^{(b)}_{M \mid v_{l,5}', Z=z', c}(\cdot)$
    \item[6.5] Compute  $\phi_{2,P_f}^{(b)}$ $$\hat\phi_{2,P_f}^{(b)} = \frac{1}{1-e^{(b)}(C)}\cdot \frac{1}{L}\sum^{L}_{l=1}\left[\mu_1^{(b)}(M;v_{l,5},z,C) - \mu_1^{(b)}(m_{l,5}';v_{l,5},z,C)\right]c^{(b)}(u',w';\rho^{(b)}).$$
    
\end{enumerate}
\textbf{7. Estimation of $b_4$}
\begin{enumerate}
    \item[7.1] Draw $v_{l,6}\sim P^{(b)}_{V_{z}\mid M,Z=z,C}(\cdot)$ 
    \item[7.2] Draw $v_{l,6}'\sim P^{(b)}_{V_{z'}\mid V_z = v_{l,6}, Z=z',C}(\cdot)$.
    \item[7.3] Draw $m'_{l,6}\sim P^{(b)}_{M\mid V_{z'} = v'_{l,6},Z=z,C}(\cdot)$.
    \item[7.4] Compute $b_4$
    \begin{align*}
        \hat b_4 = \frac{Z}{e^{(b)}(C)}\cdot\left[\frac{1}{L}\sum_{l=1}^L \mu^{(b)}_1(m_{l,6}';v_{l,6},z,C) -\hat\mu_{4}(C)^{(b)} \right],
    \end{align*}
    where $\hat{\mu}_4(C)^{(b)}$ is the quantity from (\ref{estmu4}).
\end{enumerate}

\begin{remark}
    For single-world case where $z=z'$, we skip the steps of sampling $v'$ from the conditional distribution $P_{V_{z'}\mid V_z=v, Z=z,C=c}(v')$. Instead, we set $v' = v$. The rest of the steps exactly align with those in the cross-world case.
\end{remark}

\subsection{The mixture form of the distributions in the Truncated EDPM}

\begin{align*}
P^{(b)}_{{C}}({c}) &=  \sum_{k=1}^{M}\omega_k^{(b)}\sum_{j=1}^{N_k}\omega_{j|k}^{(b)}P({c}|\psi_{j|k}^{(b)})\\
P^{(b)}_{V_z|Z=z, {C=c}}(v)& = \frac{\sum_{k=1}^{M}\omega_k^{(b)}\sum_{j=1}^{N_k}\omega_{j|k}^{(b)} P(V_z = v|Z=z, {C=c},\psi_{j|k}^{(b)})\times P(Z=z, {C=c};\psi_{j|k}^{(b)})}{\sum_{k=1}^{M}\omega_k^{(b)}\sum_{j=1}^{N_k}\omega_{j|k}^{(b)} P(Z=z, {C=c};\psi_{j|k}^{(b)})}\\
P^{(b)}_{M|V_z,Z=z,{C=c}}(m) & = \frac{\sum_{k=1}^{M}\omega_k^{(b)}\sum_{j=1}^{N_k}\omega_{j|k}^{(b)} P(M = m|V_z, Z=z, {C=c},\theta_{k}^{(b)},\psi_{j|k}^{(b)})\times P(V_z,Z=z,{C=c};\psi_{j|k}^{(b)})}{\sum_{k=1}^{M}\omega_k^{(b)}\sum_{j=1}^{N_k}\omega_{j|k}^{(b)} P(V_z, Z=z, {C=c};\psi_{j|k}^{(b)})}
\end{align*}
where $P(V_z, Z=z, {C=c};\psi_{j|k}^{(b)}) = P(V_1|Z=1,{C=c},\psi_{j|k}^{(b)})\times P(Z=1,{C=c};\psi_{j|k}^{(b)})$ and $P(Z=1,{C=c};\psi_{j|k}^{(b)}) = P(Z=1;\psi_{j|k}^{(b)})\times P({C};\psi_{j|k}^{(b)})$. $P(Z=1;\psi_{j|k}^{(b)})$ and $P({C};\psi_{j|k}^{(b)})$ are easily accessible.
\begin{align*}
P^{(b)}(Z=z, {C})& =\sum_{k=1}^{M}\omega_k^{(b)}\sum_{j=1}^{N_k}\omega_{j|k}^{(b)}P(Z=z;\psi_{j|k}^{(b)})\times P({C};\psi_{j|k}^{(b)})\\
    P^{(b)}(V_z,Z=z,{C}) & = \sum_{k=1}^{M}\omega_k^{(b)}\sum_{j=1}^{N_k}\omega_{j|k}^{(b)}P(V_z|Z=z,{C},\psi^{(b)}_{j|k})\times P(Z=z; \psi^{(b)}_{j|k})\times P({C};\psi_{j|k}^{(b)})\\
    P^{(b)}(M,V_z,Z=z,{C}) &= \sum_{k=1}^{M}\omega_k^{(b)}\sum_{j=1}^{N_k}\omega_{j|k}^{(b)}P(M|V_z,Z=z,{C},\theta_{j|k}^{(b)})\times P(V_z|Z=z,{C}\psi_{j|k}^{(b)})\times P(Z=z, {C};\psi_{j|k}^{(b)})
\end{align*}
\begin{align*}
    P^{(b)}(V_{z'}|V_z,Z=z, {C=c})  = \frac{P^{(b)}(V_{z'},V_z|Z=z,{C=c})}{P^{(b)}(V_z|Z=z,{C=c})}
\end{align*}
Notice that under the Gaussian Copula Assumption \ref{assump4} we have: $$F(V_{z'},V_z|Z=z,{C=c}) = \Phi_{2}\left[\Phi^{-1}\left(F(V_{z'}|Z=z',{C=c})\right),\Phi^{-1}\left(F(V_z|Z=z,{C=c})\right); \rho\right].$$
Then we can derive the density of $(V_{z'},V_z|Z=z,{C=c})$ on the basis of the copula:
\begin{align*}
f(V_{z'},V_z|Z=z,{C=c})=f(V_{z'}|Z=z',{C=c})\times f(V_z|Z=z,{C=c})\times c(u,v;\rho),
\end{align*}
where $u=F(V_{z'}|Z=z',{C=c})$, $v=F(V_z|Z=z,{C=c})$ are the CDFs of $V_{z'}$, $V_z$, and $c(u,v)$ is the copula density, given by
\begin{align*}
    c(u,v;\rho)=\frac{1}{\sqrt{1-\rho^2}}\mathrm{exp}\left[-\frac{1}{2(1-\rho^2)}\left(\Phi^{-1}(u)^2-2\rho\Phi^{-1}(u)\Phi^{-1}(v)+\Phi^{-1}(v)^2+\frac{\Phi^{-1}(u)^2+\Phi^{-1}(v)^2}{2}\right)\right].
\end{align*}
So for iteration b, 
\begin{align*}
    P^{(b)}(V_{z'},V_z|Z=z,{C=c})=P^{(b)}(V_{z'}|Z=z',{C=c})\times P^{(b)}(V_z|Z=z,{C=c})\times c(u^{(b)},v^{(b)};\rho^{(b)}).
\end{align*}
We sample $\rho^{(b)}$ from its prior. \\
To obtain $u^{(b)},v^{(b)}$, we directly integrate $V_{z},V_{z'}$ respectively on their mixture density derived as above and then we have:
\begin{align*}
    u^{(b)}=F^{(b)}_{V_{z'}|Z=z', {C=c}}(v')& = \frac{\sum_{k=1}^{M}\omega_k^{(b)}\sum_{j=1}^{N_k}\omega_{j|k}^{(b)} F(V_{z'} = v'|Z=z', {C=c},\psi_{j|k}^{(b)})\times P(Z=z', {C=c};\psi_{j|k}^{(b)})}{\sum_{k=1}^{M}\omega_k^{(b)}\sum_{j=1}^{N_k}\omega_{j|k}^{(b)} P(Z=z', {C=c}|\psi_{j|k}^{(b)})},\\
    v^{(b)} = F^{(b)}_{V_z|Z=z, {C=c}}(v)& = \frac{\sum_{k=1}^{M}\omega_k^{(b)}\sum_{j=1}^{N_k}\omega_{j|k}^{(b)} F(V_z = v|Z=z, {C=c},\psi_{j|k}^{(b)})\times P(Z=z, {C=c};\psi_{j|k}^{(b)})}{\sum_{k=1}^{M}\omega_k^{(b)}\sum_{j=1}^{N_k}\omega_{j|k}^{(b)} P(Z=z, {C=c}|\psi_{j|k}^{(b)})}.
\end{align*}

\end{document}